\newcommand{\de}[1]{\left(#1\right)}
\newcommand{\De}[1]{\left[#1\right]}
\newcommand{\DE}[1]{\left\{#1\right\}}
\newcommand{\ie}{\textit{i.e. }}
\newcommand{\eg}{\textit{e.g. }}
\newcommand{\bra}[1]{\left\langle #1 \right|}
\newcommand{\ket}[1]{\left| #1 \right\rangle}
\newcommand{\braket}[2]{\left\langle #1 \middle| #2 \right\rangle}
\newcommand{\ketbra}[2]{\left|#1\middle\rangle\middle\langle#2\right|}
\newcommand{\proj}[1]{|#1\rangle\!\langle #1 |} 
\newcommand{\I}{\mathbb{1}}
\newcommand{\1}{\mathbf{1}}
\newcommand{\Hi}{\mathcal{H}}
\newcommand{\half}{\mbox{$\textstyle \frac{1}{2}$}}
\def\C{{\mathbb{C}}}
\def\R{{\mathbb{R}}}
\def\N{{\mathbb{N}}}
\newcommand{\mean}[1]{\left\langle#1\right\rangle}
\newcommand{\corre}[1]{\langle#1\rangle}
\def\norm#1{ {|\hspace{-.022in}|#1|\hspace{-.022in}|} }
\def\NORM#1{ {\left|\hspace{-.022in}\left| #1 \right|\hspace{-.022in}\right|} }
\newcommand{\boxp}{\vec{P}(a,b|x,y)}
\newtheorem{theorem}{Theorem}[section]
\newtheorem{definition}{Definition}[section]
\newtheorem{proposition}{Proposition}[section]
\newtheorem{lemma}{Lemma}[section]
\newtheorem{corollary}{Corollary}[section]
\newtheorem{Atheorem}{Theorem}[chapter]
\newtheorem{Adefinition}{Definition}[chapter]
\newenvironment{subdefinition}[1]{%
  \def\subdefinitioncounter{#1}%
  \refstepcounter{#1}%
  \protected@edef\theparentnumber{\csname the#1\endcsname}%
  \setcounter{parentnumber}{\value{#1}}%
  \setcounter{#1}{0}%
  \expandafter\def\csname the#1\endcsname{\theparentnumber.\alph{#1}}%
  \ignorespaces
}{%
  \setcounter{\subdefinitioncounter}{\value{parentnumber}}%
  \ignorespacesafterend
}
\newcounter{parentnumber}
\mathchardef\ordinarycolon\mathcode`\:
\def\vcentcolon{\mathrel{\mathop\ordinarycolon}}
 \newcommand{\defeq}{\vcentcolon=}
\newcommand{\eqdef}{=\vcentcolon}
\newcommand*\xbar[1]{%
  \hbox{%
    \vbox{%
      \hrule height 0.2pt 
      \kern0.5ex
      \hbox{%
        \kern-0.1em
        \ensuremath{#1}%
        \kern-0.1em
      }%
    }%
  }%
}
\newcommand{\xor}{{\sc{xor}}}
\newcommand{\bi}{\mathcal{B}}
\DeclareMathOperator{\rank}{rank}
\newcommand{\Tr}{\operatorname{Tr}}
\newcommand{\st}{\;\; \text{s.t.}}
\newcommand{\etal}{\textit{et al}.}
\newcommand{\inp}{Q}
\newcommand{\out}{\mathcal{O}}
\newcommand{\oA}{\mathbf{A}}
\newcommand{\oB}{\mathbf{B}}
\newcommand{\gr}{\mathcal{G}}
\newcommand{\pr}{\mathbf{P}}
\begin{document}
\begin{titlepage}

\begin{center}
{\Huge{\textbf{Bounds on quantum nonlocality}}}\\[7em]


{\Large{PhD Thesis}}\\[9em]

{\large{Gláucia Murta Guimarães}}\\[12em]
\end{center}

\begin{flushright}
\begin{minipage}{6cm}
Tese apresentada ao Programa de Pós-Graduação em Física da 
Universidade Federal de Minas Gerais, como requisito parcial para a
obtenção do título de Doutora em Física.
\end{minipage} 
\end{flushright}

\vfill
\begin{center}
{\large{February 2016}}
\end{center}
\end{titlepage}

\thispagestyle{empty}
\phantom{q}
\vfill

\begin{flushright}
\begin{minipage}{6.5cm}
\textit{``Que beleza é conhecer o desencanto\\
e ver tudo bem mais claro no escuro''}
\\[5pt]
\rightline{{\rm --- Tim Maia}}
\end{minipage} 
\end{flushright}

\frontmatter
\chapter*{Acknowledgements}


Em um país onde educação superior é um provilégio para poucos, começo agradecendo aos meus pais, Nilma e Francisco,
que sempre se empenharam em encontrar os meios  para garantir que eu tivesse todas as oportunidades para chegar até aqui.

Secondly, I thank my supervisor Marcelo Terra Cunha for almost 6 years of supervision. I am grateful for all the  
stimulating discussions we had during this time and for all you taught me about research and life.
For your guidance through my first research steps, for encouragement during important decisions
and for
always being ready to support me in the difficult (academic and personal) moments\footnote{E claro, pelos açaís, pelos cafés, e 
por me apresentar várias músicas que hoje são parte
da minha trilha sonora favorita.}.

Many thanks to Fernando Brandão (again) for the first course in quantum information, and for playing a fundamental rule in the international opportunities 
that I had during the PhD.

I am grateful to Micha\l{} Horodecki for receiving me for a one-year `sandwich' PhD in KCIK. 
This time was extremely valuable for my development as a researcher
and crucial for many of the results presented in this thesis.

From the great researchers I interacted with during my PhD I would like to specially thank:
Daniel Cavalcanti, for being a role model since my Master's;
Adán Cabello, for participating in my first steps 
in science and for all enthusiastic 
discussions;
Marcin Paw\l{}owski, for always transmitting such a great excitement about science, for great trip adventures and funny discussions, 
and for the opportunity to 
come back to KCIK for a few more months\footnote{And of course, for making me watch the Indiana Jones 
trilogy with Polish dubs and English subtitles.}; 
Karol Horodecki, for a really nice collaboration from which I learned a lot; 
and Pawe\l{} Horodecki, for many enlightening discussions.

A lot of the work presented in this thesis would not have been possible if it was not for 
the careful 
guidance of Ravishankar Ramanathan. I am grateful for all I have learned from you during this time. Much of the researcher I
became was shaped by our collaboration.

Dzi\k{e}kuj\k{e} bardzo to all the people from KCIK for such nice atmosphere at work.
In special I thank all the inhabitants of room `sto dwa' for all the physics and specially the non-physics discussions. 
Pankaj Joshi, thanks for all the ``sweet food" you made me try, and Pawe\l{} Mazurek, thanks for the dance!
And to Ania and Czarek, thanks for making me feel home 10.000 km away.
This year in Poland was an amazing time from which I carry many good memories.

A todos os integrantes do Departamento de Física da UFMG, muito obrigada por serem minha segunda casa nos últimos 10 anos. 
À PG Física por todo apoio administrativo e acadêmico. À Shirley por estar sempre pronta para nos assistir e por tornar a Biblioteca
do Departamento de Física um lugar de apoio para a pesquisa e ensino 
realizados no departamento. 

Às mulheres do departamento de Física, obrigada por serem exemplo e ins-piração.

Agradeço aos integrantes do corredor do doutorado e aos vizinhos da astro-física pela ótima convivência, e em especial aos que contribuíram para que as edições do `Bar da Gláucia' fossem um sucesso!
Aos meus colegas de Sala,  Mangos, Mychel e Rapaiz, muito obrigada pela melhor sala de todas (e me desculpem por todas as vezes que troquei vocês
pelo ar condicionado). Em especial agradeço ao Mangos, pelo ombro amigo em muitos momentos difíceis.

Aos Terráqueos contemporâneos: Bárbara Amaral\footnote{Exemplo de irmã mais velha.}, 
Cristhiano Duarte\footnote{Certamente contribuiu para me tornar uma
pessoa menos pura. Que horror!}, Na-tália Móller\footnote{Fico feliz de você ser minha primeira co-autora!}, Leonardo Guerini\footnote{Léo, 
valeu por todas as
nossas conversas não-locais sobre a vida :)}, Gabriel Fagundes\footnote{Gabriel! Nunca dá pra trás num evento, e ainda traz a caixa de som.}, 
Marcello Nery\footnote{Marcellooow, ainda tô esperando você se redimir por
 não ir na(s) minha(s) festa(s).}, Jessica Bavaresco\footnote{Viu, obrigada pela amizade, pelas saídas em BH e pela super força na reta final!},  
Tassius Maciel\footnote{Fonte das histórias mais trolls que eu já ouvi.} e José Roberto Pereira Júnior\footnote{Grande filósofo.},
e os contemporâneos de mestrado Mateus Araújo\footnote{Mateus! É sempre muito massa discutir com você!} e Marco Túlio 
Quintino\footnote{Mais importante que todas as nossas conversar de física, obrigada por sempre tomar conta de mim :)},
muito obrigada por fazerem parte dessa etapa. E por compartilharem tantas discussões, almoços, dúvidas, 
festas, reuniões, cafés, Paratys \ldots

Aos membros do EnLight, professores, pós-docs e alunos, muito obrigada por todo o conhecimento compartilhado durante esses anos.
Em especial, agra-deço ao Carlos Parra por me lembrar ocasionalmente, durante a escrita desta Tese, de manter minha sanidade mental.
E ao Pierre-Louis de Assis,
por ocasionalmente\footnote{Na verdade, frequentemente.} me fazer perdê-la com suas interrupções inconvenientes (das  quais já sinto saudades).
Ao Dudu (Eduardo Mascarenhas), obrigada por cuidar de mim nos momentos difíceis e por ser um exemplo como pesquisador.

I thank Marcus Huber, Fabien Clivaz, and Atul Mantri for the very nice collaborations initiated during my PhD, 
from which I certainly profit a lot.

I am grateful to all the quantum friends I have made during this time. It is always pleasant to meet you somewhere in the world.
In special I thank Alexia Salavrakos, Joe Bowles, and Flavien Hirsch for so many special moments.

À minha irmã Bizy, agradeço por sempre me apoiar. Por ser exemplo e ins-piração. E por me alimentar durante a escrita desta Tese.

This Thesis was significantly improved due to the careful reading of my supervisor Marcelo Terra Cunha and
Jessica Bavaresco. I also thank Mateus Araújo, Marco Túlio Quintino, and Hakob Avetisyan for comments and feedback in earlier versions.
I thank the referees of this Thesis: Daniel Cavalcanti, Reinaldo Oliveira Vianna, Fernando de Melo, Raphael Campos Drumond, and Andreas Winter
for very nice discussions and feedbacks.
I owe a special thanks to Jessica Bavaresco and Thiago Maciel\footnote{Tchê, obrigada também por sempre estar disponível para tirar minhas dúvidas numéricas 
(e foram inúmeras) ao longo desses anos.} for the technical support, making 
it possible to have an international committee
in my PhD defense.

Finally, I acknowledge CNPq for my first year scholarship and I am greatful to Fundação de Amparo à Pesquisa do Estado de Minas Gerais (FAPEMIG) for 
the remaining three years of scholarship, for the sandwich PhD program which opened so many doors for me, and for financial support to many conferences.
I also acknowledge support from NCN grant 2013/08/M/ST2/00626,  Polish  MNiSW  Ideas-Plus
Grant IdP2011000361, ERC Advanced Grant QOLAPS
and  National  Science  Centre  project  Maestro  DEC-
2011/02/A/ST2/00305.

\chapter*{Resumo}

Não-localidade é um dos aspectos mais intrigantes da teoria quântica, que revela que a natureza é 
intrinsecamente diferente da nossa visão clássica do mundo.
Um dos principais objetivos no estudo de não-localidade é determinar a máxima violação obtida por correlações quânticas em 
um cenário de Bell. Entretanto, dada uma desigualdade de Bell, nenhum algoritmo geral é conhecido para calcular esse máximo.
Como um passo intermediário, o desenvolvimento de cotas eficientemente computáveis para o valor quântico de desigualdades de Bell
tem tido um papel importante para o desenvolvimento da área.
Nessa tese, apresentamos nossas contribuições explorando cotas eficientemente computáveis, baseada na norma de certas matrizes, para o
valor quântico de uma classe particular de desigualdades de Bell: os jogos lineares.
Na primeira parte introduzimos os pré-requisitos necessários para os resultados principais: Conceitos e resultados das teorias de 
otimização e complexidade de computação, com foco em problemas de não-localidade; O formalismo de jogos não-locais como um caso particular de desigualdades
de Bell; E a abordagem de grafos para não-localidade.
Na segunda parte apresentamos nossos resultados principais sobre a caracterização de condições necessárias e suficientes para um jogo {\xor} não
ter vantagem quântica, e provamos uma cota eficientemente computável para o valor quântico de jogos lineares. Os principais 
resultados apresentados aqui são: (i) Determinação da capacidade de Shannon para uma nova família de grafos; (ii) Generalização, para funções com $d$ possíveis valores,
do princípio de não-vantagem
em computação não-local; (iii) Um método sistemático de gerar testemunha de 
emaranhamento genuíno independente de dispositivo  para sistemas tripartidos.

\newpage
\chapter*{Abstract}

Nonlocality is one of the most intriguing aspects of quantum theory which reveals that nature is intrinsically different than our classical view of 
the world. One of the main goals in the study of quantum nonlocality is to determine the maximum violation achieved by quantum correlations in a Bell 
scenario. However, given a Bell inequality, there is no general algorithm to perform this task. As an intermediate step, the development of efficiently computable bounds 
has played an important role for the advance of the field.
In this thesis we present our contributions  exploring efficiently computable bounds, based on a norm of some matrices, to the quantum
value of a particular class o Bell inequalities: the linear games.
In the first part of the thesis we introduce the necessary background to follow the main results: Concepts and results of optimization and computational
complexity theories, focusing on nonlocality problems; The framework of nonlocal games as a particular class of Bell inequalities; And the graph-theoretic
approach to nonlocality. In the second part we present our main results concerning the characterization of necessary and sufficient conditions for an XOR
game to have no quantum advantage, 
and we prove an efficiently computable upper bound to the quantum value of linear games.
The main outcomes of the research presented in this thesis are: (i) The determination of the Shannon capacity for a new family of graphs; (ii)
A larger alphabet generalization of the principle of no-advantage for nonlocal computation;
(iii) And a systematic way to design device-independent witnesses of genuine multipartite entanglement for tripartite systems.
\chapter*{List of papers}
\addcontentsline{toc}{chapter}{List of papers}

The content of this Thesis is based on results developed in the following papers:

 \begin{enumerate}
 \item \textit{Characterizing the Performance of {\xor} Games and the Shannon Capacity of Graphs}\\
   R. Ramanathan, A. Kay, \textbf{G. Murta} and P. Horodecki\\
 \href{http://link.aps.org/doi/10.1103/PhysRevLett.113.240401}{\textbf{Phys. Rev. Lett., \textbf{113}, 240401, (2014)}}.
  \item \textit{Generalized {\xor} games with $d$ outcomes and the task of nonlocal computation}\\
  R. Ramanathan, R. Augusiak, and \textbf{G. Murta}\\
 \href{http://link.aps.org/doi/10.1103/PhysRevA.93.022333}{\textbf{Phys. Rev. A, 92, 022333 (2016)}}.
  \item \textit{Quantum bounds on multiplayer linear games and device-independent witness of genuine tripartite entanglement }\\
  \textbf{G. Murta}, R. Ramanathan, N. M\'oller, and M. Terra Cunha\\
  \href{http://link.aps.org/doi/10.1103/PhysRevA.93.022305}{\textbf{Phys. Rev. A, \textbf{93}, 022305, (2016)}}.\\
\end{enumerate}

The author also contributed to the work:
\begin{itemize}
 \item \textit{Bounds on quantum nonlocality via partial transposition}\\
 K. Horodecki and \textbf{G. Murta}\\
  \href{http://link.aps.org/doi/10.1103/PhysRevA.92.010301}{\textbf{Phys. Rev. A, \textbf{92}, 010301(R), (2015)}}.
\end{itemize}
A summary of the results developed in this work is presented in Appendix \ref{chapterstatebound}.

{\hypersetup{linkcolor=black}
\tableofcontents
}

\renewcommand\emph{\textbf}
\chapter*{Prologue}\label{intro} 
\addcontentsline{toc}{chapter}{Prologue}
\markboth{Prologue}{}

One of the most intriguing aspects of quantum theory is the fact that it is intrinsically probabilistic. This probabilistic character
led Einstein, Podolsky, and Rosen, in the remarkable EPR paper of 1935 \cite{EPR}, to question whether quantum theory was an incomplete theory, and 
therefore this probabilistic character would emerge from  our lack of knowledge of some variables.
These questionings were answered  in a negative way by Bell in 1964 \cite{Bell}.
With a mathematical formulation of the EPR paradox, Bell showed that if we were able to complete quantum mechanics in the way proposed by EPR then we
should not observe some phenomenon (the violation of a Bell inequality) that we actually do! 
The work of Bell does not imply that quantum theory is the ultimate theory, however
no such refinement as the one pursued by EPR can exist.

Even worse than this probabilistic character, what is really intriguing about quantum theory
is the fact that, up 
to the moment, there is no set of physical principles that fully characterizes it.
If we consider special relativity, this theory
has some surprising predictions that goes against our daily life experiences. 
However as weird as they seem,
all these predictions 
can be derived from two physical principles: (i)  The laws 
of physics are the same in all inertial
reference frames; (ii) The speed of light in vacuum 
is $c$ in all inertial 
 reference frames. Once we accept these principles
(and I do not claim this is an easy task!) there is no mystery, and 
we are able to 
explain all the phenomena that arise from the theory.

Quantum theory is  very well established by a 
bunch of mathematical axioms that tells us how to predict the statistics of the results of experiments. 
However we still do not have many clues on which are the 
physical principles behind this purely mathematical formulation.
In his famous quotation, Feynman (in the prestigious `Messenger Lectures' at Cornell University \cite{Feynman}) said
\begin{quote}
\textit{``There was a time when the newspaper said that only twelve men understood the theory of relativity. I do not believe there ever was
such a time. There might have been a time when only one man did, because he was the only guy who caught on, before he wrote his paper. But after 
people read the paper a lot of people understood the theory of relativity in some way or other, certainly more than twelve. On the other hand, I think 
I can safely say that nobody understands quantum mechanics.''} 
\end{quote}

This lack of principles receives a clear formulation in the study of nonlocality. 
When defining the sets of local and no-signaling correlations, we have clear mathematical constraints that delimit them, and, 
additionally, these constraints  have a physical
(information theoretic) interpretation. For example, the no-signaling principle states, in an information theoretic language, 
that if Alice and Bob
do not communicate no information can be obtained about the other party by analyzing only the local statistics. 
The additional constraints imposed to the set of local correlations 
also have a physical interpretation. 
However, when it concerns the set of quantum correlations all that we know is that the probability distributions
can be described by positive operator valued measures applied to a trace-one positive operator that acts on a Hilbert 
space $\Hi$. Which, definitely,  does not sound very  physical! 
And this is why Feynman says that \textit{``nobody understands quantum mechanics''}.


Almost a century has passed since the questionings of EPR and we still do not have a satisfactory description of quantum theory in terms
of physical axioms.
However, in the mean time
we have developed technologies based on quantum effects and explored in many different ways the novelties brought by quantum theory. 
In the quantum 
nonlocality domain people found a way to explore Bell inequality violations in order to develop secure
cryptographic protocols that do not 
rely in any assumption about the specific description of the system, but rather only on the statistics of the results of experiments,
the called device-independent
paradigm.
And besides the manipulation of quantum effects, we have also achieved some understanding on the consequences and limitations 
due to the mathematical formulation of the theory.

So at this point I should apologize and  warn the reader that unfortunately the following pages will not make you 
\textit{understand quantum mechanics}. However, if you keep going you might have a glance on the subject of quantum nonlocality, which
highlights one of the weirdest aspects of quantum theory in a very clear and simple scenario: where 
Alice and Bob, space-like separated, perform 
local measurements on their systems, and the only thing that matters is the statistics of their inputs and outputs.
This simple scenario opens space for a rich discussion 
of the fundamental aspects of quantum theory. 
The analysis of the performance of Alice and Bob in some particular tasks when they have access to quantum resources or not gives
us a framework to explore the extent and limitations of the theory.
This thesis is devoted to the study of the task of evaluating the quantum value of a Bell expression.
We will discuss the difficulty of this problem putting it into the 
language of computational complexity and optimization theories. And we will present our contributions concerning bounds on the quantum
value of a particular class of Bell expressions: the linear games.
At the end of the day, I hope the reader \textit{enjoy it}!\vspace{-0.5em}

\subsection*{Outline\footnote{This Thesis was revised in March/2017 and Journal references were updated.}}

In Part I we introduce the necessary background to 
follow the results presented here. 
 In Chapter \ref{chapterNL} we present a brief introduction to nonlocality stating some concepts and
general results.
 Chapter \ref{chaptercomput} introduces optimization and computational complexity theories. 
 Chapter \ref{chaptergames} presents the 
framework of nonlocal games, which can be seen as a particular class of Bell expressions, focusing on linear games which are the main subject of
study of this thesis. 
 In Chapter \ref{chaptergraphNL} we introduce the graph-theoretic approach to nonlocality, showing how some 
graph invariants are related to the classical, quantum and no-signaling values of Bell expressions. 

Part II is devoted to the results developed by the author, together with collaborators, during the last four years.\vspace{-0.5em}
\begin{itemize}
 \item In Chapter \ref{chapterxor} we focus on {\xor} games. We present a necessary and sufficient condition for an
{\xor} game to have no quantum advantage and, exploring this result, we are able to determine the Shannon capacity of a broad new 
family of graphs.\vspace{-0.5em}
\item  In Chapter \ref{chaptergamesd} we present an efficiently computable upper bound to the quantum value of linear games. We
explore it re-deriving a recently discovered bound to the 
CHSH-$d$ game. We also show that these bounds can exclude the 
existence of some no-signaling boxes that would lead to the trivialization of communication complexity. As the main outcome of the 
introduced bound, we derive a larger alphabet generalization of the principle of no-advantage for nonlocal computation.\vspace{-0.5em}
\item In Chapter \ref{chapternplayer} we extend the previous bound to $n$-player linear games. We also derive an upper bound to the quantum 
value of a multipartite version 
of the CHSH-$d$ game and we extend the result concerning no-quantum realization of no-signaling boxes that would lead to the 
trivialization of  communication complexity in a multipartite scenario.
Finally, we present a systematic way to derive device-independent witnesses of genuine multipartite entanglement for tripartite systems.
\end{itemize}

\vfill
\begin{quote}
\textit{``So do not take the lecture too seriously, feeling that you really have to understand in terms of some model what 
I am going to describe, 
but just relax and enjoy it.''} (Feynman \cite{Feynman})
\end{quote}

\mainmatter

\part{Preliminaries}

\chapter{Nonlocality}\label{chapterNL}

Let us analyze the following story:

\begin{center}
\begin{minipage}{12cm}
\textit{Alice and Bob went abroad for their PhD studies and now they are flatmates.
After some months living together Bob noticed a strange behavior of Alice: every time
Bob wakes up looking forward to tell Alice the news from his hometown, she coincidently 
wakes up particularly grumpy, even though in general she is a very easy going and talkative person. 
When Bob realizes that this grumpy behavior of Alice is recurrent, but only happens in the specific days he has some news to tell,
he tries to find out what could be the cause of this strange correlation. \\
He is sure that this cannot be caused by himself, since they meet every evening when they get back home, and everything is fine
before they go to their respective rooms until the next day. Moreover, this situation happens in random days but coincidently
every time Bob wants to tell news during the breakfast.\\
After a long analysis he finally finds out the explanation for this correlation: the phone call to his family the evening before.
Every time he made a phone call, the Internet of the house stopped working.
And this was happening because their wireless router was settled to the same  frequency as the one used by their wireless phone.
Alice, on the other hand, checks her computer simulations at home every evening (accessing her working computer remotely). Because the internet 
fails to work for the hours Bob spend in the phone, she is only able to finish her work very late at night, which causes a big grump!\\
By adjusting the router's frequency, the problem was solved and Alice and Bob lived happily ever after...}
\end{minipage}
\end{center}
\vspace{1em}

At first the correlation between Alice and Bob may sound very strange, 
however when we become aware of the previously ``hidden'' fact that the frequency of their wireless router was interfering with the frequency of their wireless phone,
causing all the trouble,  
everything looks pretty natural.

That is the idea of a \emph{local hidden variable model}: To find an explanation for correlations in terms of some common cause (the term local will become
clear soon). However, as we will see, there exist correlations in nature which cannot be explained by a local hidden variable model, 
these correlations are then called \emph{nonlocal correlations}.
Nonlocal correlations are one of the most intriguing aspects of nature. And besides
their foundational interest, these correlations have also shown to be very useful in cryptographic and information processing tasks as, for example, device-independent 
randomness amplification and expansion \cite{ColbeckRennerAmplify, randexpansion}, device-independent quantum key distribution \cite{Ekert1991,PAB09,MPA11,VV14}, 
and reduction of communication complexity
\cite{vanDam, noisyPRxCC, NLCommCompl}. 

In the study of nonlocality we consider the following scenario: Alice and Bob are far away from each other\footnote{In
technical words, we want Alice and Bob to be space-like separated.} and they are going to observe things that 
happen around them, \ie they are going to ask questions to their systems (or in a more scientific language, they are going to perform experiments/measurements in their respective laboratories). The set 
of possible questions that Alice can ask to her system is denoted $\inp_A$ and the set of possible questions that Bob can ask
is denoted $\inp_B$. 
The sets of possible answers (outcomes) to these questions\footnote{For simplicity, here we focus on 
the case where each experiment that Alice and Bob perform has the same set of outputs, but this need not to be the case. 
Nevertheless, most of the 
results are straightforward generalized to the asymmetric case.}  are denoted respectively $\out_A$ and $\out_B$. 
An example of a question that can be asked is \textit{`Is it raining now?'},
which has two possible outcomes: \textit{`yes'} or \textit{`no'}. They can 
also throw a dice and observe the upper face, which has six possible outcomes: 1, 2, 3, 4, 5 and 6. 

We also consider that Alice and Bob can ask their systems only one question at a time (\ie Alice is not allowed to check if it is raining and throw a dice at the same 
time\footnote{Of course in a classical world there is no restriction in performing this task. One can perfectly go outdoors and trow a dice 
obtaining at the same time a number and the answer about the weather. 
However we cannot assert this for a quantum system, and there exist pairs of questions such an experiment to determine the output of one 
disturbs the output of the other.}). 
The motivation for this restriction is that, when we consider quantum theory, we might deal with incompatible observables, as
for example a measurement of spin in the $\hat{x}$-direction and a measurement of spin in the $\hat{z}$-direction, 
 hence we have questions that cannot be asked together.

Our goal is to analyze the joint probability distribution that Alice 
performs the experiment  $x\in \inp_A$ and obtains the outcome $a \in \out_A$ and Bob observes $y\in \inp_B$ and obtains the outcome $b \in \out_B$:
\begin{align}
 P(a,b|x,y).
\end{align}

Since we are only concerned with the statistics of the outputs  given the inputs, 
and nothing else matters for us, we can model any such experiment 
as a black
box (see Figure \ref{figNL}): which has some buttons as inputs (the possible questions) and a set of light bulbs as outputs
(the possible answers to the question).  This is called a \emph{device-independent} scenario, where we do not make any assumption
over the internal mechanisms of the devices used for the experiment.

\begin{figure}[h]
\begin{center}
 \includegraphics[scale=0.8]{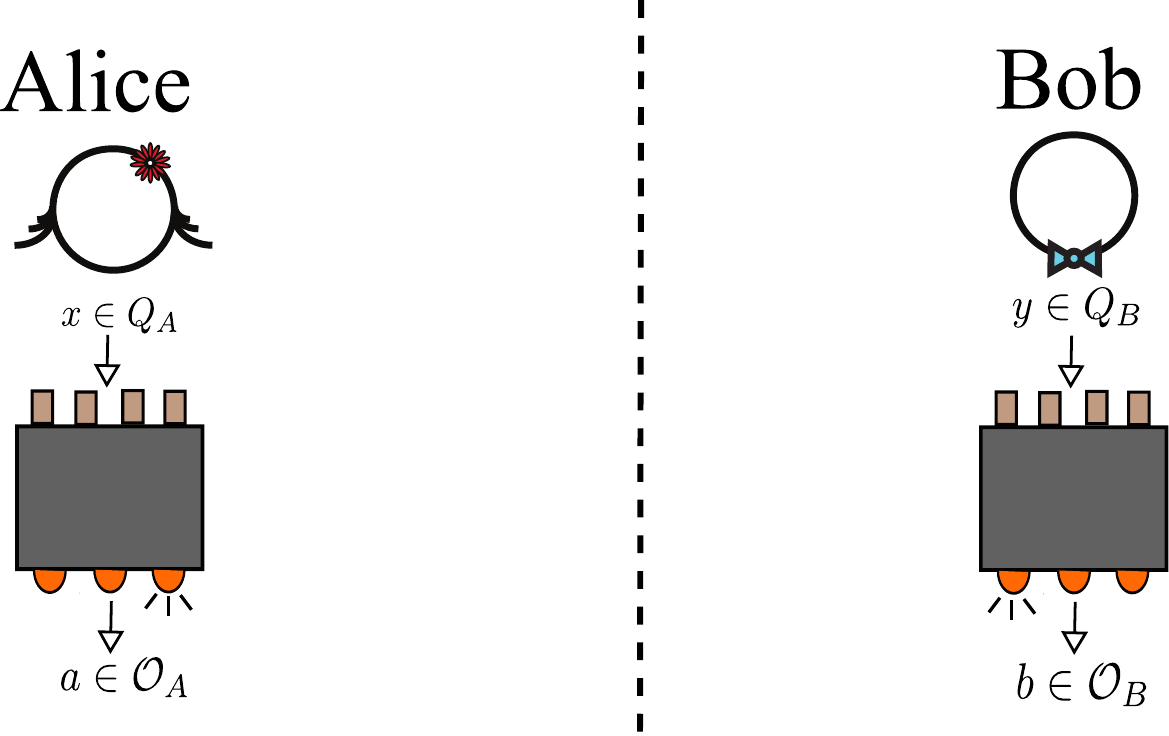}
 \caption{\textbf{Nonlocality scenario:} Alice and Bob are far apart and they are going to perform experiments on their respective laboratories. 
 Their experiments can  be described as black boxes:
 the upper buttons are the possible inputs and the lower light bulbs are the possible outputs.}\label{figNL}
 \end{center}
 \end{figure}
 
A \emph{box} $\vec{P}(a,b|x,y)$ is a vector specifying all the joint probability distributions of a particular scenario\footnote{For the particular
case where Alice and Bob have two possible inputs with two outputs, $a,b,x,y\in \DE{0,1}$, $\vec{P}(a,b|x,y)$ is the sixteen-component vector:
\begin{align*}
 \vec{P}(a,b|x,y)=(P(00|00),P(01|00),P(10|00),P(11|00),P(00|01),\ldots,P(11|11)).
\end{align*}
}. In the 
following sections we are going to analyze which properties the set of boxes  $\vec{P}(a,b|x,y)$ satisfies.\vspace{1em}

In this Chapter we are going to give a brief overview of the concepts and main results in the study of nonlocality. For an introduction to 
nonlocality, with detailed proofs of many results, the reader is referred to Ref. \cite{dissertaGlaucia} (only in Portuguese) or Ref. \cite{dissertaMT}. 
A nice review, from 2014, contains the references for many important results on the field of nonlocality \cite{reviewNL}.

\section{Local correlations}\label{secLocal}

In the study of nonlocality we are considering a scenario where Alice and Bob are far away from each other during the course of their experiments.
We also assume that their choices of which experiment they are going to perform are made when they are already far apart. 
Our classical 
intuition leads us to expect that, whichever correlations
they observe, they have to be explained by a common cause that does not depend on which experiment they chose to perform (since this choice 
was made when they were far apart).
The boxes that capture this classical intuition are called \emph{local} boxes.

\begin{definition}[Local correlations] \label{deflocalrealism}
 Local correlations are the ones that can be explained by a local hidden variable model, \ie a box $\vec{P}(a,b|x,y)$ is local
 if there exists a variable $\lambda \in \Lambda$, independent of the choice of inputs of Alice and Bob, such that
 \begin{align}\label{eqlocal}
  P(a,b|x,y)=\int_{\Lambda}q(\lambda) p(a|x,\lambda) p(b|y,\lambda) d\lambda
 \end{align}
 where $q(\lambda)$ is a probability distribution.
\end{definition}

The definition of a local box states that all the correlations observed by Alice and Bob in their experiments are due to the lack
of knowledge of some hidden variable $\lambda \in \Lambda$.
Note that in Definition \ref{deflocalrealism} we do not make any assumption over the nature of the variable $\lambda$, it can be a continuous variable, 
it can be a set of variables and so on \ldots The only assumption is that $\lambda$ is not correlated with the choices
of inputs of Alice and Bob\footnote{This assumption is also referred as \emph{free will}, as, if Alice and Bob can freely make their choices, this 
assumption will be satisfied. Since here we do not want to make any metaphysical discussion, let us 
just assume that this independence (between the hidden variable and the choice of inputs) holds, no matter what justifies it.} (\emph{measurement independence}). 
Another important assumption in Definition \ref{deflocalrealism} is that,
conditioning on all variables that could have a causal relation with a particular event, the probability of this event is independent of any other variable. This is often refereed to as \emph{local causality}. Therefore
 for each value of $\lambda$ the local probability distribution on Alice's outcome 
is independent of Bob's experiment, \ie $p(a|b,x,y,\lambda)=p(a|x,\lambda)$, and the same holds for Bob's local distribution.
This captures
the interpretation that the hidden variable $\lambda$ would be a common cause in the past that is responsible for generating the correlations.





There are other equivalent ways to formulate Definition \ref{deflocalrealism}, see, for example, Ref. \cite{Fine}.
Moreover, Eq. \eqref{eqlocal} can be derived from a slightly different set of assumptions. But it is important to have in mind that 
there is always a set of assumptions (measurement independence and local causality in the previous discussion) present in the definition of local correlations, and the violation of a Bell inequality do not tell us which particular assumption is being invalidated.
For a 
very nice discussion on the assumptions implicit in the definition of \textit{locality} we refer the reader to Ref. \cite{Mateusblog}\footnote{This nonlocal
reference was
added in the revised version of the thesis.}. 
Therefore whenever  we use the term \emph{nonlocal} in this thesis we refer to the impossibility of writing a joint probability distribution in the form of
Eq. \eqref{eqlocal}.

Definition \ref{deflocalrealism} reflects the intuition we learn from our daily life experience and it also 
expresses the predictions of classical theories (as classical mechanics and special 
relativity), which were the theories that prevailed before the advent of quantum theory.

\subsection*{The local polytope}

The set of all boxes $\boxp$ that can be written in the form \eqref{eqlocal} is called the \emph{local set} of correlations, $\mathcal{L}$. For any
scenario that we consider, \ie for any finite set of inputs $\inp_A$ and $\inp_B$ and any finite set of outputs $\out_A$ and $\out_B$, the local set is a \emph{polytope}.
A polytope is a convex set with a finite number of extremal points. The extremal points of the local polytope are the deterministic local boxes:
\begin{align}\label{eqdeterministic}
 P(a,b|x,y)=D(a|x)D(b|y),
\end{align}
where $\vec{D}(a|x), D(a|x)  \in \DE{0,1}$, is a deterministic probability distribution, and analogously
for $\vec{D}(b|y)$. 
One can actually derive that 
every box $\boxp$ that satisfies Eq. \eqref{eqlocal} can be written as a convex combination of deterministic probability distributions, 
Eq. \eqref{eqdeterministic}\footnote{For this reason Definition \ref{deflocalrealism} is also called \emph{local realism} or \emph{local determinism}.}.

\begin{proposition}[Local polytope] 
The local polytope is the convex hull of the deterministic local boxes:
\begin{align}
 \mathcal{L}:=\DE{\boxp \mid P(a,b|x,y)=\sum_{i}c_i D_i(a|x)D_i(b|y)}
\end{align}
where $c_i\geq 0$, $\sum_i c_i=1$, and $i$ runs over all possible deterministic boxes of the scenario.
 \end{proposition}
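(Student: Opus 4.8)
The plan is to establish the two set inclusions separately; the substantive content is showing that every box admitting the integral representation \eqref{eqlocal} is in fact a \emph{finite} convex combination of deterministic boxes \eqref{eqdeterministic}. The reverse inclusion is immediate: given $\sum_i c_i D_i(a|x)D_i(b|y)$ with $c_i\geq 0$ and $\sum_i c_i=1$, one recovers the form \eqref{eqlocal} by taking $\Lambda$ to be the (finite) index set, letting $q$ be the discrete distribution assigning weight $c_i$ to $i$, and setting $p(a|x,\lambda{=}i)\defeq D_i(a|x)$ and $p(b|y,\lambda{=}i)\defeq D_i(b|y)$.

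For the forward inclusion, the key idea is that the \emph{local} randomness carried by each party's response function can itself be absorbed into the hidden variable, so that the responses may be taken deterministic without loss of generality. Concretely, for fixed $\lambda$, Alice's collection of conditional distributions $\{p(a|x,\lambda)\}_{a,x}$ is a point in a product of simplices (one simplex per input $x\in\inp_A$), whose vertices are exactly the deterministic assignments $D_f(a|x)=\delta_{a,f(x)}$ indexed by functions $f\colon \inp_A\to\out_A$. I would decompose this point \emph{explicitly} through the product measure: setting $r_\lambda(f)\defeq\prod_{x'\in\inp_A} p(f(x')|x',\lambda)$, a one-line computation (summing over all $f$ with $f(x)$ held fixed) gives $p(a|x,\lambda)=\sum_f r_\lambda(f)\,D_f(a|x)$, where the weights $r_\lambda(f)$ are non-negative, sum to one, and are measurable in $\lambda$. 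The analogous decomposition $p(b|y,\lambda)=\sum_g s_\lambda(g)\,D_g(b|y)$ holds for Bob.

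I would then substitute both decompositions into \eqref{eqlocal}. Since Alice's weights and responses do not depend on $(b,y)$ and Bob's do not depend on $(a,x)$, the product factorizes as $p(a|x,\lambda)p(b|y,\lambda)=\sum_{f,g} r_\lambda(f)s_\lambda(g)\,D_f(a|x)D_g(b|y)$, already a convex combination of deterministic boxes for each $\lambda$. Integrating against $q(\lambda)$ and exchanging the finite sums with the integral yields
\begin{align}
 P(a,b|x,y)=\sum_{f,g} c_{f,g}\, D_f(a|x)D_g(b|y),\qquad c_{f,g}\defeq\int_\Lambda q(\lambda)\,r_\lambda(f)\,s_\lambda(g)\,d\lambda.
\end{align}
It remains to check that $\{c_{f,g}\}$ is a probability distribution: each $c_{f,g}\geq 0$ as an integral of non-negative functions, and $\sum_{f,g} c_{f,g}=\int_\Lambda q(\lambda)\big(\sum_f r_\lambda(f)\big)\big(\sum_g s_\lambda(g)\big)\,d\lambda=\int_\Lambda q(\lambda)\,d\lambda=1$. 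Relabelling the pairs $(f,g)$ by a single index $i$ gives precisely the claimed form.

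I expect the main obstacle to be conceptual rather than computational: recognizing that a party's intrinsic stochastic response can be traded for a deterministic one at the cost of enlarging the shared variable, and exhibiting this decomposition in a way that is simultaneously explicit and measurable in $\lambda$, so that the exchange of sum and integral is justified. The product-measure choice of $r_\lambda$ (and $s_\lambda$) above is exactly what makes both the normalization and the measurability transparent, avoiding any appeal to a non-constructive selection of vertex decompositions.
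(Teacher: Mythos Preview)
Your proof is correct. The paper actually states this proposition without proof, merely remarking beforehand that ``one can actually derive that every box $\boxp$ that satisfies Eq.~\eqref{eqlocal} can be written as a convex combination of deterministic probability distributions'' and pointing to Fine's work for equivalent formulations. Your argument---absorbing each party's local randomness into the hidden variable via the product-measure weights $r_\lambda(f)=\prod_{x'}p(f(x')|x',\lambda)$---is precisely the standard constructive proof of this fact, and the verification that $\sum_f r_\lambda(f)D_f(a|x)=p(a|x,\lambda)$ and that the resulting $c_{f,g}$ form a probability distribution is clean and complete. There is nothing to compare against in the paper itself; your write-up fills the gap the thesis left open.
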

 
For a scenario with $|\inp_A|=|\inp_B|=m$ and  $|\out_A|=|\out_B|=d$, the local set is a convex polytope of dimension\footnote{The dimension of the 
polytope is determined by taking into account the normalization of the probability distributions, and the no-signaling condition that we 
are going to specify soon.} $m^2(d-1)^2+2m(d-1)$
with $d^{2m}$ vertices \cite{PironioPhD}.

A convex polytope is fully characterized by its vertices, but an equivalent characterization is given  by its facets\footnote{This is the 
\emph{Main Theorem for polytopes}, see  Theorem 1.1 in Ref. \cite{Lecpolytopes}.}.
The facets are hyperplanes that delimit the set. 
The nontrivial facets\footnote{The trivial ones are the positivity condition of the probabilities: $P(a,b|x,y)\geq 0$ $\forall\; a,b,x,y$.} of
the local polytope are called \emph{tight Bell inequalities} \cite{Bell}. 
A \emph{Bell inequality} is a condition that is necessarily satisfied by all local
correlations. It can be a tight condition and correspond to a facet of the local polytope, or else it may correspond to faces of the local polytope
with lower dimension, or may even not touch the polytope.

In the scenario where Alice and Bob each can chose one between two possible inputs $\inp_A=\inp_B=\DE{0,1}$ and each
input has two possible outputs $\out_A=\out_B=\DE{0,1}$, the local polytope has as its unique nontrivial facet (up to relabeling of inputs and outputs)
the notorious CHSH inequality \cite{CHSH}.

Consider the following expression:
\begin{align}\label{eqCHSH}
 \mathcal{S}^{CHSH}=\mean{A_0,B_0}+\mean{A_1,B_0}+\mean{A_0,B_1}-\mean{A_1,B_1},
\end{align}
where $\mean{A_x,B_y}=P(a=b|x,y)-P(a\neq b|x,y)$. 
A substitution of Eq. \eqref{eqlocal} into the RHS of Eq. \eqref{eqCHSH} shows that
\begin{align}\label{eqCHSHL}
 \mathcal{S}^{CHSH} \leq 2 
\end{align}
for any local box.

The CHSH inequality \eqref{eqCHSHL} is the simplest and most well explored of all the Bell inequalities. It was introduced in 1969 by Clauser, Horne, Shimony and Holt \cite{CHSH}.
After the work of Bell \cite{Bell}, which finally opened the possibility to formalize in a
mathematical way the concepts of local realism first discussed by Einstein, Podolsky and Rosen \cite{EPR},
the CHSH inequality was proposed
as a condition that could be  experimentally tested. 
Inequality  \eqref{eqCHSHL} was used in the first experiment that closed the locality loophole \cite{expAspect}, performed
 by Aspect's group,
and also in the recent groundbreaking loophole-free Bell experiment by Hensen \etal \cite{bellHensen}. A variant of the CHSH inequality (the CH-Eberhard
inequality\footnote{The CH-Eberhard inequality is a reformulation of the CHSH inequality which is more suitable for taking into account detection 
efficiencies.} \cite{CH, Eberhard}) was used in the subsequent
experiments by Giustina \etal \cite{bellGiustina} and Shalm \etal \cite{bellShalm}.
These last three experiments have finally ruled out local realism in nature\footnote{Up to some 
stronger loopholes, as the super-determinism, that by definition cannot scientifically be ruled out.}.


\section{No-signalling correlations}\label{secNS}
We may be less picky and not seek a local hidden variable model
to explain our correlations, but we want to keep some minimal assumptions
about the possible boxes: 
If Alice and Bob are far away from each other and do not communicate during their experiments, 
it is reasonable to expect that Bob can get no information about what happens in Alice's laboratory and vice-versa. This is the no-signaling principle
and the most general boxes we are going to deal with are the ones that at least satisfy this constraint.
More formally, the no-signaling 
principle states that the marginals of the local experiments do not depend on the other part's experiment.

\begin{definition}[No-signaling condition]
 A box $\boxp$ is no-signaling iff
 \begin{subequations}\label{eqNS}
 \begin{align}
  \sum_b P(a,b|x,y) &=\sum_b P(a,b|x,y') \; \;\;\forall \; y,y' \in \inp_B\;,\; \forall x \in \inp_A\;, \;\forall a \in \out_A,\\
   \sum_a P(a,b|x,y) &=\sum_a P(a,b|x',y)\; \;\;\forall \; x,x' \in \inp_A\;,\; \forall y \in \inp_B\;,\; \forall b \in \out_B.
 \end{align}
 \end{subequations}
\end{definition}

Note that the no-signaling condition implies that local marginal probabilities are well defined:
 \begin{subequations}
 \begin{align}
  P(a|x):=\sum_b P(a,b|x,y)\;\; \forall y, \\
  P(b|y):= \sum_a P(a,b|x,y)\;\; \forall x.
 \end{align}
  \end{subequations}
  
The boxes that satisfy the no-signaling condition \eqref{eqNS} form the set of no-signaling correlations $\mathcal{NS}$.  $\mathcal{NS}$
is also a convex polytope (as only linear constraints were made to the probability distributions), which contains the classical polytope. 
This can be easily seen by checking
that local boxes \eqref{eqlocal} satisfy the no-signaling
condition \eqref{eqNS}, hence
\begin{align}
 \mathcal{L}\subseteq \mathcal{NS}.
\end{align}
Later we are going to see (Section \ref{secCHSH}) that, in general, this inclusion can be a strict relation.

The no-signaling polytope is much easier to characterize than the local polytope since  $\mathcal{NS}$ is fully characterized
by Eqs. \eqref{eqNS} (and by the trivial conditions of positivity and normalization of the probability distributions) which are linear
constraints that can be easily checked. 
Although $\mathcal{L}$ is also delimited by linear constraints, the tight Bell inequalities 
are not easy to derive and we are left with a description in terms of the deterministic points, which is an integer quadratic problem (see Section 
\ref{secopt}).  

\section{Quantum correlations}\label{sec.QuantumCorr}
Quantum correlations are boxes $\boxp$ that can be described as quantum local measurements being performed in a shared quantum state
(see Appendix \ref{Aquantum} for an overview of concepts and definitions in quantum theory).

\begin{definition}[Quantum correlation]\label{defQcorr}
 A box $\boxp$ is quantum if there exist a quantum state $\rho \in D(\Hi_A \otimes \Hi_B)$ and 
 local POVMs $\DE{M_x^a}_a$ and $\{M_y^b\}_b$ acting on $\Hi_A$ and $\Hi_B$ respectively, such that
\begin{align}\label{eqQcorr}
 P(a,b|x,y)=\Tr \de{M_x^a \otimes M_y^b \;\rho},
\end{align}
for arbitrary Hilbert spaces $\Hi_A$ and $\Hi_B$.
\end{definition}

The set of all boxes $\boxp$ that admit a description as Eq. \eqref{eqQcorr} is the set of quantum correlations $\mathcal{Q}$.
Note that in Definition \ref{defQcorr} we do not put any restriction on the dimension of the system.

The set of quantum correlations contains the local polytope $\mathcal{L}$. This is expected by the fact that quantum theory is a generalization 
of classical theory, hence $ \mathcal{L}\subseteq \mathcal{Q}$.
Some facts concerning the relation of $\mathcal{Q}$ and $\mathcal{L}$ are: 
\begin{itemize}
\item Local measurements in \emph{separable} quantum states only generate correlations in $\mathcal{L}$.
\item If the local measurements of one of the parties are \emph{jointly measurable}\footnote{Two sets of POVMs
$\DE{E^i}_{i=1}^m$ and $\DE{F^j}_{j=1}^n$ are 
jointly measurable if there exists a third POVM $\DE{G^{i,j}}_{i,j=1}^{m,n}$ such that 
\begin{align*}
 \sum_j \Tr  G^{i,j}\rho=\Tr  E^{i}\rho\;\;\;\; \text{and}\;\;\;\;  \sum_i \Tr G^{i,j}\rho=\Tr F^{j}\rho
\end{align*}
for every quantum state $\rho$. This means that the statistics of the original measurements can be obtained by the marginals of the 
statistics for the POVM $\DE{G^{i,j}}_{i,j=1}^{m,n}$.} then the correlations generated are in $\mathcal{L}$.
\end{itemize}

Therefore, in order to observe correlations beyond the classical polytope one necessarily needs entanglement and not joint measurability.
Whether these conditions are sufficient to generate nonlocal correlations is a fruitful field of research.
 In the standard Bell scenario, it is known
that some entangled quantum states can only generate classical correlations \cite{Werner, Barrett} (a systematic method to 
check whether entangled states admit a local model  
was recently derived in Refs. \cite{LeoLocal, FlavienLocal}).
Partial results concerning joint measurability can be found in Refs. \cite{WPF09,MTJMlocal}.
 More general scenarios were introduced in the study of nonlocality: The \emph{hidden nonlocality scenario} \cite{hiddenPopescu,hiddenZHHH}
 where Alice and Bob are allowed to  pre-process one copy of their state by a local filtering operation 
 before starting the Bell test;
 The \emph{many-copy scenario} where many copies
 of a state are shared between Alice and Bob \cite{PalazuelosActivation, DaniActivation}; And the \emph{network scenario} \cite{DaniNet1,DaniNet2} where copies of a bipartite
 quantum state $\rho$ are distributed in a network of arbitrary shape and number of parties.
   These general scenarios were shown to be more powerful than
 the standard one \cite{genuinehidden,DaniNet1,DaniNet2} and even 
 the phenomena of super-activation of nonlocality was exhibited \cite{PalazuelosActivation, DaniActivation}.
 However, whether nonlocality, entanglement and not joint measurability are equivalent
in these general scenarios remains an open problem.

In Ref. \cite{GMstatebound} we show that the value achieved
by a quantum  state in a Bell scenario is bounded by a term related to its distinguishability
from the set of separable states by means of a restricted
class of operations.
We also propose quantifiers for the nonlocality of a quantum state in the asymptotic scenarios where many copies
and filter operations are allowed, and we show
that these quantities can be bounded by the relative entropy of entanglement of the state (or the partially transposed state, in the case of PPT states).
A summary of the results of Ref. \cite{GMstatebound} is presented in Appendix \ref{chapterstatebound}.

\vspace{1em}

Concerning the relation between $\mathcal{Q}$ and $\mathcal{NS}$, we can straightforwardly verify that quantum correlations 
satisfy the no-signaling condition \eqref{eqNS}:
\begin{align}
 \sum_b  P(a,b|x,y)&= \Tr \de{M_x^a \otimes \de{\sum_b M_y^b} \rho}\nonumber\\
 &= \Tr \de{M_x^a \otimes \I\; \rho}\\
 &= \Tr \de{M_x^a \;\rho_A}\nonumber\\
 &\eqdef P(a|x),\nonumber
\end{align}
where $\rho_A$ is the reduced state of Alice (as defined in \eqref{eqredstate}), and analogously for Bob's marginal.

In summary, we have
\begin{align}
 \mathcal{L}\subseteq \mathcal{Q} \subseteq \mathcal{NS},
\end{align}
and we are going to see in the next Section that all these inclusions can be strict in a general Bell scenario.

Even though the quantum set lies in between two polytopes, in general $\mathcal{Q}$ is not a polytope. The characterization of the quantum set 
of correlations is the main open problem in the field of nonlocality, and it is not even known for the simplest scenario of two inputs
and two outputs\footnote{A partial result characterizes the border of the quantum set in the simplest two-input two-output scenario in the correlation 
representation \cite{chshMasanes}, \ie when we consider only the correlators $\mean{A_xB_y}=P(a=b|x,y)-P(a\neq b|x,y)$ instead of the probabilities $P(a,b|x,y)$.}.
We know that $\mathcal{Q}$ is a convex set\footnote{It is not hard to show that the convex combination of two quantum boxes 
can be 
expressed as a quantum box with measurements and state in a Hilbert space of higher dimension.},
but it is not known if this set is closed\footnote{A set $X$ is closed if every converging sequence of points in $X$ converges to a point of $X$.}. 
An alternative way to define the quantum set of correlations is to impose commutativity of every 
measurement of Alice with every measurement of  Bob, in place of the tensor product structure. The set of correlations generated by these assumptions
is denoted $\mathcal{Q}'$. It is clear that $\mathcal{Q} \subseteq \mathcal{Q'}$, and for finite dimensional Hilbert spaces we have equivalence,
but whether or not these two sets\footnote{Actually Tsirelson's statement is concerned with the equivalence of the closure of the sets.} are equivalent for the infinite dimensional case is known as the
Tsirelson's problem\footnote{See Tsirelson's comments on the problem in: \url{http://www.tau.ac.il/~tsirel/Research/bellopalg/main.html}.} \cite{TsirelsonProblem} (this problem is equivalent 
to a long standing open problem in $\C^*$-algebra, called the Connes' embedding conjecture, see \cite{Conneembedding}). 
An infinite hierarchy of well characterized sets that converges to  the set $\mathcal{Q'}$, 
called \emph{NPA hierarchy},
was introduced by Navascués, Pironio and Acín in Ref. \cite{NPA} (see more in  Section \ref{secNPA}). This constitutes one of the most powerful tools
to deal with problems in the field of quantum nonlocality.

When we are dealing with a particular nonlocality scenario and given a particular Bell expression, as for example $\mathcal{S}^{CHSH}$, 
we might be interested in knowing which value can be achieved if Alice and Bob have access to quantum boxes (as we will see later, they can reach 
$\mathcal{S}^{CHSH}=2\sqrt{2}>2$). Due to the lack of characterization of the quantum set of correlations, this is in general a very hard problem.
More than that: it is not even known whether the quantum value of a Bell inequality is computable in general,
since there is a priori no restriction on the dimension of the Hilbert space for the quantum state and measurements. Only for some
particular instances it is
possible to compute the value exactly or to find efficient approximations. The NPA hierarchy \cite{NPA} is typically
used to get upper bounds on the quantum bound of Bell expressions. However the quality of approximation achieved by these bounds remains unknown
and the number of parameters to be optimized in each level of the hierarchy increases exponentially.
Lower bounds are usually obtained by the called \textit{see-saw} iterative method, where we fix the dimension of the system and recursively 
optimize over a small set of the variables (the quantum state or one of the party's measurements) fixing
the value of the other variables as obtained in the previous step (see Ref. \cite{refseesaw}). 
Each step of the see-saw is an SDP and can be efficiently solved, however this procedure is not guaranteed to converge not 
even to the global maximum of the fixed dimension.
Hence a central problem of great importance in nonlocality theory is to find easily computable and good bounds to handle general classes of Bell inequalities.
In Chapters \ref{chapterxor}, \ref{chaptergamesd} and \ref{chapternplayer} we present our contributions in this direction.

\section{The CHSH scenario}\label{secCHSH}
We now illustrate the concepts introduced in the previous Sections exploring the simplest scenario that can exhibit nonlocal correlations:
the CHSH scenario \cite{CHSH}.
In the CHSH scenario Alice and Bob each has two possible inputs $\inp_A=\inp_B=\DE{0,1}$ and each
input has two possible outputs $\out_A=\out_B=\DE{0,1}$.

The local polytope $\mathcal{L}$ for this scenario can be characterized by the 16 deterministic local boxes or equivalently by its facets. Up to 
relabel of inputs and outputs the only nontrivial facet of the local polytope is the CHSH inequality
$
 \mathcal{S}^{CHSH}\leq 2.
$
Let us write  
\begin{align}
 \mathcal{S}^{CHSH}_c = 2,
\end{align}
 to denote the maximum value attainable by classical (local) theories for the CHSH expression.
We have already introduced the CHSH expression $\mathcal{S}^{CHSH}$ in Eq. \eqref{eqCHSH} and now we evaluate it for quantum and no-signaling boxes.

In quantum theory, in order to calculate the expected values $\mean{A_xB_y}$, we can associate an observable to the measurements of Alice and 
Bob in the following way
\begin{align}\label{eqdefobservable}
 \mathbf{A}_x\defeq \;M_x^0-M_x^1,\\
 \mathbf{B}_y\defeq \;M_y^0-M_y^1, \nonumber
\end{align}
where $\DE{M_x^0,M_x^1}$ are the POVM elements associated to experiment $x$ performed by Alice, and analogously for $\mathbf{B}_y$.
Hence we have that the correlator $\mean{A_xB_y}$ is equivalent to the expected value of the operator $\mathbf{A}_x \otimes \mathbf{B}_y $:
\begin{align}\label{eqobservable}
 \mean{A_xB_y}\equiv \mean{\mathbf{A}_x \otimes \mathbf{B}_y }=\Tr \De{(\mathbf{A}_x \otimes \mathbf{B}_y) \,\rho}.
\end{align}

Now, consider that Alice and Bob share the maximally entangled singlet state
\begin{align}
 \ket{\psi^-}=\frac{1}{\sqrt{2}}(\ket{01}-\ket{10}),
\end{align}
and they perform the measurements associated with the following observables:
\begin{align}
 \mathbf{A}_0=\sigma_Z \;\;,&\;\;\mathbf{A}_1=\sigma_X \nonumber\\
 \mathbf{B}_0=\frac{1}{\sqrt{2}}\sigma_Z+\frac{1}{\sqrt{2}}\sigma_X\;\;,&\;\;\mathbf{B}_1=\frac{1}{\sqrt{2}}\sigma_Z-\frac{1}{\sqrt{2}}\sigma_X.
\end{align}
A direct calculation gives $ \mathcal{S}^{CHSH}=2\sqrt{2}$
for this experiment.

It was shown by Tsirelson \cite{Tsirelsonbound} that this is actually the maximum value we can achieve with 
quantum correlations, hence we have
\begin{align}  
 \mathcal{S}^{CHSH}_q=2\sqrt{2},
\end{align}
where $ \mathcal{S}^{CHSH}_q$ denotes the maximum value attainable by quantum theory for the CHSH expression.

Now let us consider the following box $\boxp$:
\begin{align}\label{eqprbox}
 P(a,b|x,y)=\begin{cases}
             \frac{1}{2} \;\; \text{if}\;\;a\oplus b=x\cdot y,\\
             0\;\; \text{otherwise}.
            \end{cases}
\end{align}
All the marginals are well defined $P(a|x)=P(b|y)=1/2$ $\forall\; a,b,x,y$ and hence this box is no-signaling. However this box 
is not quantum since one can straightforwardly verify
that the value achieved in the CHSH expression is
$
 \mathcal{S}^{CHSH}=4.
$
This is actually the maximum possible value (note that the expected values $\mean{A_xB_y}$ are numbers in the interval $\De{-1,1}$), therefore
\begin{align}
 \mathcal{S}^{CHSH}_{NS}=4.
\end{align}

The box \eqref{eqprbox} was first introduced in Ref. \cite{TsirelsonPRbox} and it became well known after the work of
 Popescu and Rorlich (and hence denoted PR-box) \cite{prbox}, where they discussed whether
the no-signaling
principle was sufficient to limit the nonlocality of quantum theory, showing that actually no-signaling correlations can go far beyond.

\vspace{1em}
So in the simplest nontrivial scenario we have seen that there exist quantum correlations that can violate the locality assumption, hence they
cannot be explained by a local hidden variable model.
Also we can conclude that the no-signaling principle \eqref{eqNS} is not enough to set the limits 
of quantum theory, as it can give rise
to correlations much more general than the ones restricted by the quantum formalism. In Chapters \ref{chaptergamesd} and \ref{chapternplayer} we are 
going to discuss a bit of the implications of these extremal no-signaling boxes in the scenario of communication complexity.

\section{Multipartite scenarios}\label{secNLn}

In the study of nonlocality we can also consider scenarios with many parties involved, $A_1, \ldots, A_N$, all of them performing experiments
far away from each other. In these scenarios, our objects of study are the multipartite boxes
$\vec{P}(a_1,\ldots,a_N|x_1,\ldots,x_N)$, where $a_i\in \out_{A_i}$ represent the output of part $i$ when she/he performs the experiment $x_i\in \inp_{A_i}$

The \emph{locality} condition is straightforwardly generalized for the case of $N$ parties:

\begin{definition} \label{deflocalN}
 A multipartite box $\vec{P}(a_1,\ldots,a_N|x_1,\ldots,x_N)$ is local if there exists a local hidden variable model that reproduces the correlations,
 \ie if there exists a variable $\lambda \in \Lambda$, independent of the choice of inputs of the parties, such that
 \begin{align}\label{eqlocalN}
  P(a_1,\ldots,a_N|x_1,\ldots,x_N)=\int_{\Lambda}q(\lambda) p(a_1|x_1,\lambda) \ldots p(a_N|x_N,\lambda) d\lambda
 \end{align}
 where $q(\lambda)$ is a probability distribution.
\end{definition}

The no-signaling condition is a bit trickier, since now we want to assure no-signaling among all parties.
\begin{definition}
 A multipartite box $\vec{P}(a_1,\ldots,a_N|x_1,\ldots,x_N)$ is no-signaling if the no-signaling condition is satisfied by all bi-partition of
 the parties. More formally, consider a subset of the parties $S \subset \DE{A_1, \ldots, A_N}$, hence the no-signaling condition states that
 \begin{subequations}\label{eqNSN}
 \begin{align}
  \sum_{\vec{a}_{S^c}} P(\vec{a}_S,\vec{a}_{S^c}|\vec{x}_{S},\vec{x}_{S^c}) &=\sum_{\vec{a}_{S^c}} P(\vec{a}_S,\vec{a}_{S^c}|\vec{x}_{S},\vec{x'}_{S^c}):= P(\vec{a}_S|\vec{x}_{S}),
  \end{align}
 \end{subequations}
for all $\vec{x}_{S^c},\vec{x'}_{S^c} \in {\inp}_{S^c}$, $\vec{x}_{S} \in  {\inp}_{S}$, $\vec{a}_S \in {\out}_S$, and all proper subset $S$ of the parties,
where $\vec{x}_S=(x_{i_1},\ldots,x_{i_k})$ and $ {\inp}_S=\inp_{i_1}\times \ldots \times \inp_{i_k}$, for $A_{i_j}\in S$.
\end{definition}

Concerning multipartite nonlocality, we have now different 
levels of correlations.  Analogously to the case of multipartite entanglement (see Appendix \ref{Amultipartiteentang}), 
where we have the concept of \emph{genuine multipartite entanglement} (GME), for multipartite Bell scenarios
we have the concept of \emph{genuine multipartite nonlocality}.

\begin{definition}
 A $N$-partite box $\vec{P}(a_1,\ldots,a_N|x_1,\ldots,x_N)$ is genuine $N$-partite nonlocal if  it \emph{cannot} be written as
 \begin{align}\label{eqbilocal}
 P(a_1,\ldots,a_N|x_1,\ldots,x_N)=\sum_{i} \tilde{q}(i)  \int_{\Lambda}q_i(\lambda)P(\vec{a}_{S_i}|\vec{x}_{S_i},\lambda)P(\vec{a}_{S_i^c}|\vec{x}_{S_i^c},\lambda)\,d\lambda
 \end{align}
 for any hidden variables $\lambda_i$, where $\tilde{q}(i)$ and $q_i(\lambda_i)$ are probability distributions, and 
 $i$ runs over all proper subset of the parties. Moreover, each $P(\vec{a}_S|\vec{x}_{S_i},\lambda_i)$ are no-signaling probability distributions\footnote{In the first time 
 the concept of 
genuine multipartite nonlocality was introduced \cite{Svetlichny} no assumptions was made about the joint probability distributions. 
Nowadays, different definitions are considered, see \cite{GMNL} for a discussion.}.
\end{definition}

Bi-separable quantum states (see Eq. \eqref{eqbiseparable}) can only generate correlations of the form \eqref{eqbilocal}, hence if an 
$N$-partite quantum state $\rho$ exhibits genuine $N$-partite nonlocality, we can conclude that $\rho$ is genuinely $N$-partite
entangled. However the converse is not true, and there exist genuinely $N$-partite entangled states that do not exhibit genuine multipartite 
nonlocality \cite{RemikGMExNL,JoeGMExNL}.
In Ref. \cite{Svetlichny} Svetlichny proposed a method of detecting genuine multipartite nonlocality, designing a tripartite ``Bell-like'' inequality
that was satisfied for all correlations of the form \eqref{eqbilocal} but could be violated for genuinely nonlocal correlations. These results  
were later generalized for multipartite systems in Refs. \cite{GMNLnbody1,GMNLnbody2}. 
Other references on the subject are \cite{GMNL,GMNL2,ACSA10}.

Multipartite nonlocality is still poorly explored and the characterization of these scenarios is less known than the bipartite case.
In Chapter \ref{chapternplayer} we are going to present bounds for the quantum value of a particular class of multipartite Bell inequalities and, as an
application of these bounds, we present a systematic way to design device independent 
witnesses of genuine tripartite entanglement.

\chapter{A glance at Optimization and Complexity theories}\label{chaptercomput}
\chaptermark{Optimization and Complexity} 

In our daily life we are constantly dealing with constrained optimization problems. As for example, 
when we go to the cinema with a group of friends. We want to have the best seats (the more central ones in the upper
rows of the cinema room, so that we do not have to tilt our heads to watch the movie), but at the same time we want to seat all together, so 
this is not always an easy problem to solve. And while we choose among the vacant
seats, taking into account the pros and cons of the available options and choosing the one that will give a higher gain (which can be accounted by the 
number of happy people minus the number of unsatisfied people), we are mentally solving this hard optimization 
problem.

In science the situation is not different and many of the interesting problems can be phrased as an optimization problem.
In Chapter \ref{chapterNL} we have discussed the concept of nonlocality and how linear expressions (Bell expressions) can be designed to differentiate 
classical theories (with a local hidden variable model), to quantum theory, and these ones from no-signaling theories.
Therefore in the study of nonlocality an important question we recurrently ask is: 
\begin{center}
\begin{minipage}{12cm}
\textit{Given a Bell expression, what is the maximum value one can achieve if subjected to local/quantum/no-signaling correlations?}
\end{minipage}
\end{center}
The answers to these optimization problems have fundamental importance, since the gaps between the classical and quantum, and the quantum and no-signaling
optimal values show an intrinsic difference between these theories.
Also these gaps have practical applications for the development of quantum algorithms for information processing tasks.
 
In this chapter we introduce some concepts and theoretical results in optimization and computational complexity theories.

\section{Computability and computational complexity}

In this Section we present some basic concepts of computability and computational complexity. Our goal is only
to give an intuitive idea on the subject. 
 For a formal introduction see \cite{CompIntract, CCBarak} (and also \cite{defPNP} 
for a quick overview).

\subsection*{Uncomputability/Undecidability}

Given an  problem that we want to solve, the first step in computability theory is to try to design an algorithm
which is a systematic way to deal with the problem. For any instance\footnote{An instance is a particular input of the problem. In our
example of the cinema problem, the problem itself is specified by the parameters: number of people and available seats. 
A particular instance of the cinema problem is, for example, four people and the two first rows available.} (input) of the problem,
the algorithm follows a number of specified steps
in order to reach the final answer. 
 A problem is said to be computable if there exists an algorithm that, for every input, returns the (approximately) right answer in a
 finite number of steps.

\begin{definition}[Computability]\label{defcompute}
A  problem (P) is computable if there exists an algorithm such that, for every instance $\mathcal{I}$ and for every $\epsilon >0$, there
exists an integer $N_0=N_0(\mathcal{I},\epsilon)$ such that for $N>N_0$ steps the algorithm returns a value $\epsilon$-close\footnote{By $\epsilon$-close we mean
$|p^*-p_N|\leq \epsilon$, where $p^*$ is the optimal value and $p_N$ is the value obtained after $N$ steps.} to the correct
value.
\end{definition}

In Definition \ref{defcompute} we consider that the problem might have a continuum of possible answers. For problems with a finite set of possible 
solutions, $\epsilon$-closeness is reduced to exact computation. A very important class of problems with a finite set 
of solutions are the \emph{decision problems}. A decision
problem is a  problem with only two possible answers:``YES'' or ``NO''.

A remarkable result is that there exist \emph{uncomputable/undecidable}\footnote{Decidability is the term used for the particular 
case of decision problems.} problems, 
 \ie there exist problems for which it is impossible to
construct a single algorithm that for every input will compute the answer in a finite number of
steps. 

One of the first problems shown to be undecidable was the \emph{halting problem}.
The halting problem is the problem of determining, for a given algorithm $\mathcal{A}$ and input $x$, whether the algorithm stops
(\ie it gives the output in a finite number of steps), 
or if it continues running forever.
The proof of undecidability was presented by Turing \cite{Turing1937} in the same work where he 
introduced the idea of a universal computing machine: the Turing machine (for a nice presentation and discussion of the 
halting problem, see \cite{Penrose}).

\subsection*{Computational complexity}

Computable problems can be classified according to the amount of resources required to solve them. 
And by resources we mean time, memory, energy, and so on. The problems are then classified according to the minimum amount of resources
required by the best possible algorithm that solves it. 

The classes of computational complexity are usually defined in terms of decisions problems. Every optimization problem has a counterpart decision problem
associated to it, for example, instead of asking \textit{`What is the maximum value of a function $f$?'}, we could ask \textit{`Is the maximum value of 
 $f$ greater than $c$?'}. The associated decision problem can be no more difficult than the optimization problem itself (since we could simply solve the 
optimization problem finding the maximum of $f$ and then compare it with $c$), but interestingly many decision problems can be shown to be no easier 
than their corresponding optimization problems \cite{CompIntract}. Therefore the restriction to decision problems does not lose much generality.

Here we are going to consider the classification of the problems in terms of the time required for the solution
of the problem.
Given an input of length $n$, the \emph{time complexity} function of an algorithm, $T(n)$, is the largest amount of time needed by the algorithm to solve
a problem with input size $n$. Usually time complexity is expressed in the `big O notation' which describes the limiting behavior of a function. We say
$T(n)=\mathcal{O}(g(n))$ if there exists
 $n_0$ and a constant $c$ such that
$
  T(n)\leq cg(n)$  for all $ n\geq n_0$.

A problem is considered \emph{easy}, \emph{tractable} or \emph{feasible} if there exists an algorithm that solves the problem using a polynomial in $n$
amount of time. In case there is no such polynomial time algorithm, the problem is
 said to be \emph{hard}, \emph{intractable} or \emph{infeasible}. 

The first complexity class we are going to define is the \emph{class P}, which is the class of problems that can 
be solved by an algorithm with time complexity polynomial in the size of the input.

\begin{definition}[The complexity class P]
 A decision problem $\pr$ belongs to the complexity class P if
 there exists an algorithm $\mathcal{A}$, with time complexity
 $T(n)=\mathcal{O}(p(n))$ (where $p(n)$ is a polynomial in $n$), such that for any instance  $x$ of the problem, $|x|=n$,
 \begin{itemize}
  \item if $\pr(x)=$``YES'' then $\mathcal{A}(x)=$``YES'',
  \item if $\pr(x)=$``NO''  then $\mathcal{A}(x)=$``NO''.
 \end{itemize}
\end{definition}

The class P was introduced by Cobham in 1964 \cite{Cobham64} and suggested to be a reasonable definition of an efficient algorithm.
 A similar suggestion was made by Edmonds in Ref. \cite{Edmonds1965}.
The belief that the class P constitutes the class of efficiently computable
 problems\footnote{Note that a polynomial time algorithm of complexity $n^{100}$ seconds would take many orders of magnitude more than the age of the 
 Universe for an input of size 10, while the exponential algorithm of complexity $2^n$ would take only about 17 minutes for the 
 same input size. However the Cobham–Edmonds thesis is supported by many examples of natural problems and how they scale with the input size.
 Furthermore, polynomial time algorithms involve a deep knowledge of the structure of the problem, in contrast with exponential
 time algorithms which  are usually a mere brute-force search over all possibilities. Here we 
 are just going to assume that the class P is a reasonable definition of efficient (for more discussion on this point, see \cite{CCBarak, CompIntract}). }
 is called the \emph{Cobham–Edmonds thesis}.

Another important class is the \emph{class NP}. NP is the class of decision problems that can be efficiently verified\footnote{Originally the class NP was defined in terms of non-deterministic Turing machines (a very abstract computational model),
and only later it was recognized as the class of problems that can be easily verified (see \cite{defPNP}).}, \ie 
once a proof $y$ is provided together with the input $x$, one can check in polynomial time whether the 
answer is ``YES''.

\begin{definition}[The complexity class NP]
 A decision problem $\pr$ belongs to the complexity class NP if there exists
 an algorithm $\mathcal{V}$,  of time  complexity  $T(n)=\mathcal{O}(p(n))$ (where $p(n)$ is a polynomial in $n$), 
 such that for any instance of the problem $x$, $|x|=n$,
 \begin{itemize}
  \item if $\pr(x)=$``YES'' then there is a proof $y$ such that $\mathcal{V}(x,y)=$``YES'',
  \item if $\pr(x)=$``NO'' then for all proofs $y$  $\mathcal{V}(x,y)=$``NO'',
 \end{itemize}

\end{definition}

It is easy to see that P$\subseteq$ NP, since for a problem in P one can simply ignore the proof and solve the problem in polynomial time.
But whether or not P$=$NP is one of the biggest open problems in computer science.

\subsection*{Complete and hard problems}

Many researches believe P$\neq$NP, based on the fact that some problems in NP seems to be intrinsically more difficult than the problems in P.
However up to now no formal proof in any direction was ever found. An intermediate advance in the classification of NP problems was made by the 
introduction of the concept of a \emph{polynomial time reduction}, which allowed to select the hardest problems of the class NP. These hardest problems 
are the ones for which it is most unlikely to find an efficient algorithm, and in case P$\neq$NP these problems 
definitely belong to the non-intersecting region.

\begin{definition}[Polynomial time reduction]\label{defreduction}
A problem $\pr_1$ is polynomial time reducible to $\pr_2$, if there exists a polynomial time algorithm  $\mathcal{R}$ such that for every input $x$
of problem $\pr_1$
\begin{align}
 \pr_1(x)=\pr_2(\mathcal{R}(x)),
\end{align}
and in this case we say $\pr_1 \preceq_{\mathcal{R}} \pr_2$.
\end{definition}

The idea of a reduction is to map a problem $\pr_1$ into another problem $\pr_2$, such that by solving $\pr_2$ one is able to get the solution of $\pr_1$.
However, since we are concerned with efficiency, a good reduction is one that can be performed in polynomial time. With that in mind, 
by Definition \ref{defreduction}
we see that if problem $\pr_2$ is efficiently solvable then $\pr_1$ is also efficiently solvable\footnote{We simply have to apply
the reduction algorithm $\mathcal{R}$ which takes polynomial time, and then we solve $\pr_2$ which also takes polynomial time.}. And if $\pr_1$ is not efficiently solvable, we can conclude that $\pr_2$ 
cannot be efficiently solvable, otherwise we would have a contradiction. Therefore if $\pr_1 \preceq_{\mathcal{R}} \pr_2$, then 
we can say that $\pr_2$ is at least as hard as $\pr_1$.

\begin{definition}[NP-hard problems]
 A problem $\pr$ is NP-hard if there exists a polynomial time reduction of every problem $\pr' \in$ NP to problem $\pr$:
 \begin{align}
  \pr' \preceq_{\mathcal{R}} \pr\;\; \forall \pr' \in \text{NP}.
 \end{align}
\end{definition}

\begin{definition}[NP-complete problems]
 A problem $\pr$ is NP-complete if $\pr$ is NP-hard and if $\pr \in$ NP.
\end{definition}

The NP-complete problems are the hardest problems  of the NP class, since by finding a polynomial time algorithm for solving an NP-complete problem one 
automatically solves any problem in NP in polynomial time (and then would have proved P$=$NP!).

Note that once we identify an NP-complete problem $\pr_1$, by reducing it to an NP problem $\pr_2$, we automatically prove that $\pr_2$ is also NP-complete.
Hence the concept of reduction opens the possibility of many proofs of hardness in the field of computational complexity.
The first proof of NP-completeness was given by Stephen Cook in Ref. \cite{Cook1971}, where he showed that the SAT problem\footnote{The SAT (satisfiability 
problem) is the problem of determining whether there exists a consistent assignment for the variables of a particular Boolean circuit such that
the whole expression is evaluated as true. For example, the Boolean circuit $(x_1\vee x_2)\wedge x_3$ can be evaluated as true with the 
assignment $x_1=$true, $x_2=$false, and $x_3=$true.} is
NP-complete (known as the Cook-Levin theorem{\cite{Cook1971,Levin1973}). 

In Ref. \cite{Karp21},  Richard Karp uses Cook-Levin theorem in order to show that there is a polynomial reduction from the SAT 
problem to each of 21 combinatorial and graph theoretical computational problems. In particular a $\DE{0,1}$-integer programming and the calculus of the 
independence number of a graph (that we are going to discuss later) are NP-complete problems.


\section{Optimization problems}\label{secopt}

In the previous Section we presented the concepts of computability and uncomputability. Also, we have seen
that the computable problems can be divided in classes of complexity which classify the problems according to how many resources 
are necessary to solve it.
In this section we discuss a bit of the theory of optimization following approaches of Refs. \cite{BentalConvexOpt} and \cite{Boyd}.

Let us consider an optimization problem where we want to minimize a function $f_0$ subjected to some constraints:
\begin{align}\label{eqOptimization}
 (P):\begin{cases}
      \min & f_0(x)\\
      \st & f_i(x)\geq 0 \;,\;i=1,\ldots,m \\
      & h_i(x)=0\;,\;i=1,\ldots,p
     \end{cases}
\end{align}
where
\begin{itemize}
 \item $x=(x_1,\ldots,x_n)\in \R^n:$ is the \emph{optimization variable},
 \item $f_0:$ is an $n$-variable real function called \emph{objective function},
 \item $f_i,\;h_i:$ are $n$-variable real functions called \emph{constraint functions}.
\end{itemize}

The set of points for which the objective and constraint functions are defined is called the \emph{domain}
of the problem $(P)$:
\begin{align}
 \mathcal{D}=\bigcap_{i=0}^{m} \text{dom } f_i \cap \bigcap_{i=1}^{p} \text{dom } h_i.
\end{align}
A point $x \in \mathcal{D}$ is \emph{feasible} if it satisfy all the constraints, \ie $f_i(x)\geq 0 \;,\; i=1,\ldots,m$ and $h_i(x)=0\;,\;i=1,\ldots,p$.
The set
of all feasible points is called the \emph{feasible set} $\mathcal{F}$,
\begin{align}
 \mathcal{F}=\DE{x\mid x\in \mathcal{D}, \;f_i(x)\geq 0\; , \; i=1,\ldots,m\;,\; h_i(x)=0\;,\;i=1,\ldots,p}.
\end{align}

The \emph{optimal value} $p^*$ of problem $(P)$ is the infimum of $f_0$ over the feasible points
\begin{align}
 p^*=\inf_{x\in \mathcal{F}} f_0(x).
\end{align}
If the optimal value is achieved by a feasible point then the problem is said to be \emph{solvable}. However, for some problems the 
optimal value may not be achieved by any feasible point.\vspace{1em}

The simplest optimization problem is the \emph{linear programming} (LP) where the objective and constraints are affine functions\footnote{A function $f$ 
is affine if $f(ax+by) =af(x)+bf(y)$
$\forall$ $a+b=1$.}. 
For an LP the numerical method of \emph{interior-point}\footnote{For details of the interior-point method see Ref. \cite{Boyd}.}, developed in the 1980s,
can solve it efficiently with $\mathcal{O}(nm^2)$ operations, where $n$ is the number of variables and $m$ the number
of inequality constraints. Therefore
LP $\in$ P.

Many advances in numerical methods for solving optimization problems are due to the recognition that the interior-point method
can also be used to solve other \emph{convex optimization} problems efficiently. 
Convex optimization problems are the ones where the objective and constraint functions are 
convex\footnote{A function $f$ is convex if $f(ax+by) \leq af(x)+bf(y)$
$\forall$ $a,b \geq 0$, $a+b=1$.}. 
Hence a convex optimization problem is usually considered a tractable one, whereas non-convex problems are in general hard.
Fortunately many interesting problems in many areas: physics, mathematics, engineering
and so on, can be phrased as a convex optimization problem.

A  particular case of convex optimization problem is the semidefinite programming (SDP). For SDPs, algorithms which utilize 
the method of interior-point are 
well established, therefore, these problems can also be solved efficiently (in polynomial time in the number of variables).
For more general convex problems the numerical methods are not so well established as for LP and SDP, still the interior-point 
methods work well in practice. 

As nicely pointed by Boyd and Vandenberghe \cite{Boyd} these numerical methods for solving these problems 
are so well structured that they can be considered a technology:

\begin{center}
\begin{minipage}{12cm}
 \textit{``Solving} [LP and SDP] \textit{is a technology that can be reliably used by many people who do not know, and do not need to know, 
 the details.''}
\end{minipage}
\end{center}

 In this Section, we present the formal definitions of linear programming (LP), semidefinite programming (SDP), and integer programming (IP).

\subsection*{Linear optimization}

A linear programming (LP) is an optimization problem \eqref{eqOptimization} where the objective and constraint functions are affine.
An LP can be expressed as
\begin{align}\label{eqLP}
 (LP):\begin{cases}
       \min & \braket{c}{x} \\
       \st & A\ket{x}\geq \ket{b}
      \end{cases}
\end{align}
where all the constraint functions are expressed in a unique vector inequality\footnote{And remember that an equality constraint $a=b$ can always be expressed
as two inequality constraints: $a\geq b$ and $a\leq b$. }, $A\ket{x}\geq \ket{b}$, which represents a component-wise relation
$\ket{a} \geq \ket{b} \Leftrightarrow a_i \geq b_i, \;\forall\; i$. 
$\ket{x} \in \R^n$, $\ket{c}\in \R^n$, $\ket{b} \in \R^m$, and
$A$ is a $m \times n$ matrix. We are making use of Dirac's notation\footnote{The called ``braket'' notation, used in quantum theory, was introduced by Dirac in
Ref. \cite{braket}.} for consistency with the other chapters.

\subsection*{Semi-definite programming}
In order to generalize an LP one can relax the linearity condition of the objective or the constraint functions. However another way to 
generalize an LP that leads to a class of very interesting problems is to keep the objective and constraint functions linear 
but to relax the meaning of $\geq$ in the inequality constraints.

The order relation $\geq$ in an LP, Eq. \eqref{eqLP}, is a coordinate-wise relation where $\ket{a}=(a_1,\ldots,a_n)$ and $\ket{b}=(b_1,\ldots,b_n)$ satisfy
\begin{align}
 \ket{a} \geq \ket{b} \Leftrightarrow \DE{a_i \geq b_i, \;\forall\; i=1,\ldots,m}.
\end{align}
However a partial order relation $\geq$ can be defined in a more general framework.
A good partial ordering is completely determined by a subset $K$, of
a vector space $E$, where the relation $\geq_K$ is defined as:
\begin{align}
 a\geq_K b \Leftrightarrow a-b \geq_K 0 \Leftrightarrow a-b \in K,
\end{align}
and $K$ determines the set of positive elements:
\begin{align}
 K=\DE{a \in E \mid a \geq 0}.
\end{align}
In order to satisfy some expected properties (see \cite{BentalConvexOpt} for more details) the set $K$ cannot be arbitrary and it has to be a 
\emph{pointed cone}, \ie
\begin{enumerate}[(i)]
 \item $K$ is nonempty and closed under addition: $a,a' \in K \Rightarrow a+a' \in K$.\label{1}
 \item $K$ is a conic set: $a \in K, \lambda \geq 0\Rightarrow \lambda a \in K$.\label{2}
 \item $K$ is pointed: $a\in K$ and $-a \in K\;\Rightarrow\; a=0$.\label{3}
\end{enumerate}

An optimization problem whose constraints are defined by the partial ordering $\geq_K$, for a set $K$ satisfying properties \eqref{1}-\eqref{3},
is called a \emph{conic problem}.

Now let $K$ be the cone of symmetric positive semidefinite $m \times m$ matrices $S_+^m$, this defines a semidefinite programming (SDP):
\begin{align}\label{eqSDP}
 (SDP):\begin{cases}
        \min & \braket{c}{x}\\
        \st & \mathcal{A}(\ket{x})-B \geq_{S_+^m}0
       \end{cases}
\end{align}
where $\mathcal{A}: \R^n \longrightarrow S^m$ is a linear map from vectors in $\R^n$ to the space of symmetric $m \times m$ matrices $S^m$.
$B \in S^m$, and $\ket{c}$ and $\ket{x}$ are vectors in $\R^n$.

The standard form of an SDP (and the one we are going to deal with in the following chapters) is
\begin{align}\label{eqSDPpadrao}
 (SDP):\begin{cases}
        \min & \Tr (CX)\\
        \st & \Tr (A_iX)=b_i, \; i=1,\ldots,r\\
        & X \geq_{S_+^m} 0        
       \end{cases}
\end{align}
where $X, C, A_i \in S^m$.

The formulations \eqref{eqSDP} and \eqref{eqSDPpadrao} are equivalent, and a problem in form \eqref{eqSDP} can 
always be rephrased into the from \eqref{eqSDPpadrao} and vice-versa\cite{Boyd}. 

From now on we are going to omit the subscript in the ordering relation $\geq_{S_+^m}$, but from the context it will be clear 
which ordering relation is being applied.

\subsection*{Integer programming}

An integer programming (or integer linear programming) is a problem where the objective and constraint functions are affine functions but the variables are restricted to be integers\footnote{
This restriction can be seen as a non-linear constraint, for example, if $x$ is restricted to assume values $\DE{-1,1}$, this is equivalent to require the 
quadratic constraint $x^2=1$ to be satisfied.}. $\DE{0,1}$-integer programming is the particular case of integer programming where the variables are 
restricted to assume the values 0 or 1. A general $\DE{0,1}$-integer programming (IP) can be written as

\begin{align}\label{eqIPpadrao}
 (SDP):\begin{cases}
        \min & \braket{c}{x}\\
        \st & A\ket{x}-\ket{b} \geq 0\\
        & \vec{x} \in \DE{0,1}^n
       \end{cases}
\end{align}
where $\ket{b} \in \R^m$ and $A$ is an $m\times n$ real matrix. 

Problems of the form \eqref{eqIPpadrao} appear in the study of nonlocality, in the calculus of the classical value of a Bell expression.
Note that a Bell expression is a linear function of the joint probability distributions $P(a,b|x,y)$, and one can easily 
 check that imposing  no-signaling constraints (which are linear constraints) together with determinism, \ie $P(a,b|x,y) \in \DE{0,1}$, is equivalent to 
 impose deterministic locality: $ P(a,b|x,y)=D(a|x)D(b|y)$.
Consequently, the search  over all deterministic local boxes, which
is sufficient to obtain the classical value of a Bell expression (see Chapter \ref{chapterNL}),  is a $\DE{0,1}$-integer programming.

As previously mentioned, $\DE{0,1}$-integer programming was shown to be an NP-complete problem \cite{Karp21}, therefore the task of obtaining the 
classical value of a Bell inequality is hard in general.

\section{Duality}\label{secduality}

The idea of the dual of a problem $(P)$ is to play with the constraint inequalities (summing equations, adding trivial inequalities $1>0$ and so on)
in order to obtain a quantity that is always smaller then the optimal value of problem $(P)$. In this Section we will study the theory 
of \emph{Lagrange duality} \cite{Boyd}.

Given an optimization problem $(P)$:
\begin{align}
 (P):\begin{cases}
      \min & f_0(x)\\
      \st & f_i(x)\geq 0 \;,\;i=1,\ldots,m \\
      & h_i(x)=0\;,\;i=1,\ldots,p
     \end{cases}
\end{align}
the \emph{Lagrangian function}, $L:\mathcal{D} \times \R^m \times \R^p \longrightarrow \R$, is defined as
\begin{align}\label{eqLagrangian}
 L(x,\lambda, \nu) =f_0(x)-\sum_{i=1}^m \lambda_i f_i(x) +\sum_{i=1}^p\nu_i h_i(x).
\end{align}
where $\lambda=(\lambda_1,\ldots,\lambda_m)$ and $\nu=(\nu_1,\ldots,\nu_p)$.
Note that if $\lambda \geq 0$, for every feasible point $x \in \mathcal{F}$ of $(P)$ we have $L(x, \lambda, \nu) \leq f_0(x)$.

Now we define the \emph{dual function}, $g: \R_+^m \times \R^p \longrightarrow \R$ as
\begin{align}
 g(\lambda,\nu)=\inf_{x \in \mathcal{D}}L(x, \lambda, \nu).
\end{align}
The dual function $g$ is the point-wise infimum of a family of affine functions of the variables $(\lambda,\nu)$, and 
 hence $g(\lambda,\nu)$ is always  concave\footnote{A point-wise infimum of an affine
 function over variable $\lambda$ can in general be written as $ g(\lambda)=\inf_{x} a(x)\lambda+b(x)$. Now, taking $c_1,c_2 \geq 0$,
 $c_1+c_2=1$ we have
 \begin{align*}
  g(c_1\lambda_1+c_2\lambda_2)&=\inf_{x} \De{c_1(a(x)\lambda_1+b(x))+c_2(a(x)\lambda_2+b(x))}\\
  &\geq c_1\inf_{x} \De{a(x)\lambda_1+b(x)}+c_2\inf_{x}\De{a(x)\lambda_2+b(x)}\\
  &=c_1g(\lambda_1)+c_2g(\lambda_2).
  \end{align*}
} even though no structure was assumed about the problem $(P)$.

Let $p^*$ be the optimal solution of the problem $(P)$. By construction we have that 
\begin{align}
 g(\lambda, \nu)=\inf_{x \in \mathcal{D}}L(x, \lambda, \nu)\leq \inf_{x \in \mathcal{F}}L(x, \lambda, \nu)\leq \inf_{x \in \mathcal{F}}f_0(x)= p^* .
\end{align}
And therefore $ g(\lambda, \nu)$ is always a lower bound to the value of problem $(P)$. One can then look for the best lower bound that can be 
obtained from $g$, and that is the 
idea of the \emph{Lagrange dual problem} $(D)$:
\begin{align}
 (D):\begin{cases}
      \max &  g(\lambda, \nu)\\
      \st & \lambda \geq 0
     \end{cases}.
\end{align}
{Note that the Lagrange dual problem is the maximization of a concave function subjected to linear
constraints. The maximization of $g$ is equivalent to the minimization of $-g$, which is then a convex function. Therefore 
$(D)$ is a  convex optimization problems.} 

\begin{theorem}[Weak duality]\label{thmweakdual}
 Let $p^*$ be the optimal value of a problem $(P)$, and $(D)$ be the corresponding Lagrange dual problem with optimal value $d^*$. It holds that
 \begin{align}\label{eqweakdual}
  d^* \leq p^*.
 \end{align}
 Where $p^*-d^*\geq 0$ is the \emph{duality gap}.
\end{theorem}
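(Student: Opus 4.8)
The plan is to prove weak duality directly from the definitions already assembled in the excerpt, since the key inequality has in fact been established line by line in the text immediately preceding the theorem statement. The proof is essentially a matter of organizing those observations into a clean chain.

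First I would fix any dual-feasible pair $(\lambda, \nu)$ with $\lambda \geq 0$, and recall the defining inequality for the dual function, namely $g(\lambda, \nu) = \inf_{x \in \mathcal{D}} L(x, \lambda, \nu)$. The central chain of inequalities is
\begin{align}
 g(\lambda, \nu) = \inf_{x \in \mathcal{D}} L(x, \lambda, \nu) \leq \inf_{x \in \mathcal{F}} L(x, \lambda, \nu) \leq \inf_{x \in \mathcal{F}} f_0(x) = p^*.
\end{align}
The first inequality holds because $\mathcal{F} \subseteq \mathcal{D}$, so taking the infimum over the smaller set $\mathcal{F}$ can only increase (or leave unchanged) the value. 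The second inequality is the crucial step: for every feasible $x \in \mathcal{F}$ and every $\lambda \geq 0$ one has $L(x, \lambda, \nu) \leq f_0(x)$, which I would justify by inspecting the Lagrangian in Eq. \eqref{eqLagrangian}. Indeed, at a feasible point the equality constraints give $h_i(x) = 0$ so those terms vanish, while each $f_i(x) \geq 0$ combined with $\lambda_i \geq 0$ makes every subtracted term $\lambda_i f_i(x) \geq 0$, hence $L(x, \lambda, \nu) = f_0(x) - \sum_i \lambda_i f_i(x) \leq f_0(x)$.

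Having shown $g(\lambda, \nu) \leq p^*$ for every dual-feasible $(\lambda, \nu)$, I would then take the supremum over all such pairs. Since $p^*$ is a fixed upper bound independent of the dual variables, the supremum of the left-hand side also satisfies the bound:
\begin{align}
 d^* = \sup_{\lambda \geq 0,\, \nu} g(\lambda, \nu) \leq p^*.
\end{align}
This is exactly the assertion $d^* \leq p^*$, and the nonnegativity of the duality gap $p^* - d^*$ follows immediately.

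I do not anticipate a genuine obstacle here, as the result is foundational and all the ingredients are already present in the preceding exposition. The only point requiring mild care is bookkeeping about infima over $\mathcal{D}$ versus $\mathcal{F}$ and the sign conventions in the Lagrangian; in particular one must note the convention used in Eq. \eqref{eqLagrangian}, where the inequality constraints are written as $f_i(x) \geq 0$ and enter the Lagrangian with a minus sign, so that $\lambda \geq 0$ is precisely the condition ensuring $L \leq f_0$ on the feasible set. I would also remark that the argument assumes nothing about convexity or structure of $(P)$, which is the whole point of weak duality and matches the earlier observation that $g$ is always concave regardless of $(P)$.
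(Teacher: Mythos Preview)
Your proof is correct and takes essentially the same approach as the paper, which simply remarks that ``The Weak duality Theorem follows by construction of the dual problem'' and relies on the chain of inequalities already displayed in the text preceding the theorem. You have done exactly what the paper intends: organized those preceding observations (that $L(x,\lambda,\nu)\leq f_0(x)$ on $\mathcal{F}$ when $\lambda\geq 0$, and hence $g(\lambda,\nu)\leq p^*$) into an explicit argument, then taken the supremum over dual variables.
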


The Weak duality Theorem follows by construction of the dual problem. 
Weak duality holds in general for any kind of optimization problem, as no restriction on the nature of the objective and constraint functions was made
for the construction of the Lagrangean. However for many convex problems an even stronger result holds, that the optimal value of the dual is 
actually equal to the optimal value of the problem $(P)$. This is stated by Slater's condition.

\begin{theorem}[Strong duality- Slater's condition]\label{thmstrongdual}
 Given a convex optimization problem $(P)$ of the form
 \begin{align}
  (P):\begin{cases}
      \min & f_0(x)\\
      \st & f_i(x)\geq_K 0 \;,\;i=1,\ldots,m \\
      & Ax-b=0\;
     \end{cases}
\end{align}
where $f_0$ is a convex function bounded below and $f_1, \ldots, f_m$ are $K$-convex functions\footnote{A function $f$ is $K$-convex if
$ f(cx+(1-c)y)\leq_K cf(x)+(1-c)f(y)$, for $0\leq c\leq 1$.
}.  If there exists a \emph{strictly feasible} point\footnote{A point $x$ belongs to the relative interior of a set $C\subseteq \R^n$, $x\in\text{relint}(C)$ if
\begin{align}
\exists r>0 \;\text{s.t.} \;\forall y\in \text{Aff}(C), |y-x|< r\; \text{then}\; y\in C.
\end{align}
where $\text{Aff}(C)=\DE{x|x=\lambda x_1+(1-\lambda)x_2, \; \text{for}\;  x_1,x_2 \in C, \lambda \in \R}$.
}
\begin{align}\label{eqslater}
 x \in \text{relint}(\mathcal{D}) \st\;\; f_i(x)>_K0\;,\;i=1,\ldots,m \;\; \text{and}\;\;Ax-b=0 ,
\end{align}
then $ d^*=p^*$.
\end{theorem}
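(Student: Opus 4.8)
The plan is to prove Slater's condition (strong duality) by exhibiting a separating hyperplane between a carefully constructed convex set and a point that cannot lie in its interior, which is the classical geometric route to strong duality. First I would define the set of achievable constraint-value/objective-value pairs
\begin{align}
 \mathcal{A} = \DE{(u,v,t) \mid \exists\, x\in\mathcal{D},\; f_i(x)\leq_K u_i,\; A x - b = v,\; f_0(x)\leq t},
\end{align}
where the $u_i$ range over the cone $K$ and $v\in\R^p$. The key structural fact I would establish is that $\mathcal{A}$ is convex: this follows from convexity of $f_0$, the $K$-convexity of the $f_i$, and affineness of $Ax-b$, by taking convex combinations of feasible witnesses $x$. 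The optimal value $p^*$ is then characterized geometrically as the lowest $t$ for which $(0,0,t)\in\mathcal{A}$, so the point $(0,0,p^*)$ sits on the boundary of $\mathcal{A}$ and the open downward ray below it does not meet $\mathcal{A}$.

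Next I would invoke the supporting hyperplane theorem to obtain a nonzero functional $(\lambda,\nu,\mu)$, with $\lambda\geq 0$ (dual cone positivity coming from the cone constraint) and $\mu\geq 0$, such that
\begin{align}
 \mu\, t + \mean{\lambda,u} + \mean{\nu,v} \geq \mu\, p^* \quad \text{for all } (u,v,t)\in\mathcal{A}.
\end{align}
Specializing $(u,v,t)$ to values attained at feasible $x$ and pushing the slack variables to their extremes shows this functional, after dividing by $\mu$, reconstructs exactly the Lagrangian lower bound $g(\lambda,\nu)\geq p^*$; combined with weak duality (Theorem \ref{thmweakdual}) this forces $d^*=p^*$. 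The point where Slater's condition does its real work, and what I expect to be the main obstacle, is ruling out the degenerate case $\mu=0$: a priori the supporting hyperplane could be ``vertical,'' carrying no information about the objective. Here I would use the strictly feasible point of Eq. \eqref{eqslater}, for which $f_i(x)>_K 0$ strictly and $Ax-b=0$, to derive that if $\mu=0$ the inequality would read $\mean{\lambda, f_i(x)}\leq 0$ while strict feasibility and $\lambda\geq 0$, $\lambda\neq 0$ give a strict positive value, a contradiction. This is precisely why the relative-interior hypothesis cannot be dropped.

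I would be careful about two technical points that make the argument rigorous rather than merely heuristic. One is that the separation must be taken in the relative interior / affine hull governed by the equality constraints $Ax-b=0$, since otherwise the hyperplane theorem can fail to give a nontrivial separator when $\mathcal{A}$ has empty interior in the ambient space; this is exactly why the definition in Eq. \eqref{eqslater} uses $\text{relint}(\mathcal{D})$ rather than the topological interior. The other is normalization: once $\mu>0$ is secured, rescaling so that $\mu=1$ is legitimate and yields the stated Lagrange multipliers. With these in hand the conclusion $p^*\leq g(\lambda,\nu)\leq d^*$ together with weak duality closes the inequality, giving $d^*=p^*$ as claimed.
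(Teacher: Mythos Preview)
The paper does not actually supply a proof of this theorem: it is stated as background (Theorem~\ref{thmstrongdual}) with an implicit reference to standard convex-optimization texts such as \cite{Boyd}, and the text immediately moves on. So there is nothing in the paper to compare your argument against.

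That said, your proposal is the standard and correct geometric proof of Slater's condition via the supporting-hyperplane theorem, essentially the argument in Boyd--Vandenberghe. The structure is right: convexity of the epigraph-type set $\mathcal{A}$, separation at the boundary point $(0,0,p^*)$, and the use of the strictly feasible point to exclude the degenerate case $\mu=0$. One small caution on sign conventions: the paper writes the inequality constraints as $f_i(x)\geq_K 0$ and the Lagrangian as $L(x,\lambda,\nu)=f_0(x)-\sum_i\lambda_i f_i(x)+\sum_i\nu_i h_i(x)$ with $\lambda\geq 0$, whereas you set up $\mathcal{A}$ with $f_i(x)\leq_K u_i$; you would need to flip a sign consistently (e.g.\ work with $-f_i$) so that the dual-cone positivity of $\lambda$ matches the paper's convention when you specialize the separating functional. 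With that adjustment your outline goes through.
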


There are other results which establishes conditions for strong duality for non-convex problems. These conditions are in general called 
\emph{constraint qualifications} \cite{Boyd}.

\section{SDP relaxations of hard problems}

The idea of a relaxation is the following: In an optimization problem $(P)$ we want to find the optimum value of a function $f_0(x)$ searching over the 
domain $\mathcal{F}$, which is determined by the constraints of the problem. However even when the function $f_0(x)$ is simple (a linear function for example),
it might be the case (and it is the case in many interesting problems) that the domain $\mathcal{F}$ is extremely hard to characterize. 
An alternative way to deal with this difficulty is to consider the problem $(P')$, where, instead of searching for the minimum of $f_0(x)$ over the set $\mathcal{F}$, we make
the search over a bigger (relaxed) set $\mathcal{F}'$ (see Figure \ref{figrelaxation}), $\mathcal{F}\subseteq\mathcal{F}'$, which is simpler to describe.

\begin{figure}[h]
\begin{center}
 \includegraphics[scale=0.6]{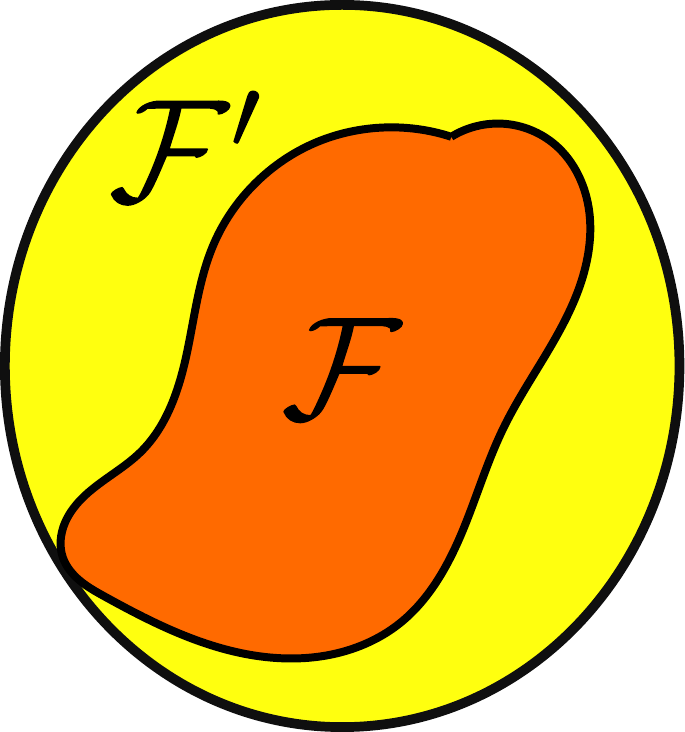}
 \caption{ The idea of a relaxation: Instead of optimizing $f_0(x)$ over the set $\mathcal{F}$, one considers the problem of finding the optimum 
 value of $f_0(x)$ over the simpler 
 set $\mathcal{F}'$.}\label{figrelaxation}
 \end{center}
 \end{figure}
 
Since $\mathcal{F}\subseteq\mathcal{F}'$, the optimal value obtained for problem $(P')$ is smaller than or equal the optimal value obtained for problem $(P)$.
Hence a relaxation is a way to get lower bounds for an optimization problem $(P)$.

A relaxation is wanted to satisfy two features: it should be \textit{efficiently solvable} (\ie the relaxed set has to be nicely characterized), and at 
the same time it should be \textit{good} meaning that the value obtained in the relaxation is close to the actual value (we do not want a relaxation
that gives a completely non-informative result).

As an example let us consider the problem of finding the independence number of a graph $\gr(V,E)$\footnote{The independence number of a graph is the 
maximum number of vertices such that no two of each are connected by an edge (see Chapter \ref{chaptergraphNL}).} with $n$ vertices, which can be formulated as the 
following $\DE{0,1}$-integer programming:
\begin{align}\label{eqalpha}
 \alpha =\begin{cases}
          \max &\sum_{i=1}^n x_i\\
          \st & x_i+x_j\leq 1 \;\text{if}\; \DE{x_i,x_j}\in E\\
          & x_i \in \DE{0,1}\; \forall i.
         \end{cases}
\end{align}

As we have argued before, this is an NP-complete problem and hence considered intractable for large $n$. A simple relaxation can be derived by just turning
the nonlinear constraint $ x_i \in \DE{0,1}$ into a linear one
\begin{align}\label{eqalpharelax}
 \alpha' =\begin{cases}
          \max &\sum_{i=1}^n x_i\\
          \st & x_i+x_j\leq 1 \;\text{if}\; \DE{x_i,x_j}\in E\\
          & 0\leq x_i \leq 1\; \forall i.
         \end{cases}
\end{align}

We have that $ \alpha' \geq \alpha$ for every graph $\gr(V,E)$, since we now allow $x_i$ to assume all the values between 0 and 1. 
Moreover, problem \eqref{eqalpharelax}
is a linear program which can be efficiently solved.

For a long time the only known practical relaxations were the LP ones. However, with the advent, over the last decades, of techniques for efficiently solving 
semidefinite programs, it came the possibility of exploring semidefinite relaxations, which has become a fruitful area
of research\footnote{For a detailed discussion see the Preface of Ref. \cite{BentalConvexOpt}.}.
Semidefinite relaxations have been shown to be particularly useful for combinatorial
problems. We are going to see, in Section \ref{seczeroerror}, an SDP relaxation for the independence number problem \eqref{eqalpha} (the Lovász number).


\subsection*{NPA hierarchy}\label{secNPA}

We have stated that calculating the classical value of a Bell inequality is a $\DE{0,1}$-integer
programming, and thus it is NP-hard. 
For the quantum value the situation is even worse! Note that in Chapter \ref{chapterNL}, when we define the quantum boxes, we make no restriction over
the dimension of the system, and hence to obtain the quantum value one should optimize over all possible states and measurements in all possible
dimensions. There is no known algorithm to determine the quantum value of a general Bell inequality\footnote{We are going to see in Chapter 
\ref{chaptergames} that for a particular class of Bell inequalities, the {\xor} games, the quantum value can be determined efficiently by an SDP.} in a 
finite number of steps and therefore
this problem may even be uncomputable (In Ref. \cite{BelenCombinatorial} the authors conjecture that it is actually non-computable.).

The most general method known to deal with this intractable problem was introduced by Navascués, Pironio and Acín, in Ref. \cite{NPA}: the \emph{NPA hierarchy}.
The NPA hierarchy is a hierarchy of semidefinite programs where each level corresponds to an optimization over a tighter relaxation of the quantum set of correlations.
These sets are nicely constrained by a $\geq_{S_m^+}$ relation and therefore calculating
the optimal value of a Bell inequality over a level of the hierarchy is a semidefinite program.
The hierarchy is proved to converge to the set $Q'$ which is defined as following:
\begin{definition}
 The set $Q'$ is the set of boxes $\boxp$ such that
 \begin{align}
  P(a,b|x,y)= \bra{\psi}E_x^aE_y^b \ket{\psi}
 \end{align}
 for some state $\ket{\psi} \in \Hi$ and projective measurements\footnote{Since we do not fix dimension, there is no loss of generality in restrict to 
 pure states and projective measurements. This is due to Naimark's Theorem, see 
\url{https://cs.uwaterloo.ca/~watrous/CS766/LectureNotes/05}.} $\DE{E_x^a}$ and $\{E_y^b\}$, acting on $\Hi$, satisfying
\begin{align}
\De{E_x^a,E_y^b}=0\; \forall a,b,x,y.
\end{align}
\end{definition}

Note that the set of quantum correlations $Q$ is contained in $Q'$, since all local measurements of Alice commutes with local measurements of Bob.
But whether or not $Q=Q'$ is an open problem known as Tsirelson's problem \cite{TsirelsonProblem} (see discussion in Section \ref{sec.QuantumCorr}). 

The NPA hierarchy constitutes one of the most powerful tools in the field of nonlocality, and it has led to the derivation of innumerous results.
However the quality of the approximation achieved by these bounds remains unknown in general.
Moreover for a Bell inequality with $m$ inputs and $d$ outputs per party, the $n$-th level of the hierarchy involves  a
matrix of size $O((2md)^n)$ as an SDP variable, so, in general, the complexity
increases exponentially with the level of the hierarchy.




\chapter{Nonlocal games}\label{chaptergames}

Some Bell inequalities can be naturally phrased in the framework of a game. A \emph{nonlocal game} is a cooperative task where the players 
receive questions from a referee and they are supposed to give answers in order to maximize some previously defined payoff function.
Upon starting the game, the players are not allowed to communicate anymore, therefore any strategy has to be agreed in advance. 

Nonlocal games have a wide range of applications. 
They play an important role in the study of communication complexity \cite{NLCommCompl,BruknerCommComplexPhys}  (and vice-versa) and in the 
{formulation of device-independent cryptographic protocols \cite{Ekert1991, ColbeckRennerAmplify}}. 

In a computer science language, a nonlocal game with $n$ players can be seen as the particular case of 
\emph{multiprover interactive proof systems}  with $n$ provers and one round.
An interactive proof system consists of an all powerful\footnote{Powerful in a computational sense, meaning that the prover has unbounded
resources, although only classical resources, and unlimited computational power.}, but untrusted, prover
who wants to convince a verifier, who has limited computational
power, of the truth of some statement by exchanging messages in many rounds. A multiprover interactive proof system is an interactive proof system with
many provers, who may be bounded not to communicate during the proof.
Multiprover interactive proof systems were introduced in Ref. \cite{MIP88} as an alternative to allow for the performance of some cryptographic 
tasks without relying on extra assumptions, such as the existence of one-way functions\footnote{A one-way function $f$ is a function
that can be computed in polynomial
time for any input $x$, however the function $f$ is hard to invert. The existence of one-way functions would imply 
that P$\neq$NP.}
or limitations on the computational power. With the  introduction of many provers
these extra assumptions can be replaced by the condition of physical separation of the provers during the course of the protocols.
For further remarks on the connection of interactive proof systems with entangled provers and the quantum value of 
nonlocal games see Ref. \cite{CleveNLgames}.


In this chapter we present definitions and results on nonlocal games.
In the first sections we restrict the presentation to the case of $2$-player games. The case of $n$-player games is 
discussed in Section \ref{secnplayer}. For a nice introduction to nonlocal games see Ref. \cite{CleveNLgames}.

\section{Definitions}

 \begin{definition}[Nonlocal Game]
A nonlocal game $g(V,p)$ is a cooperative task 
where $2$ players, Alice and Bob, {who} are not allowed to communicate after 
the game starts, receive respectively questions $x \in \inp_A$ and $y \in \inp_B$, chosen
from a probability distribution $p(x,y)$ by a referee. 
 {Upon} receiving the questions, Alice is supposed to give an answer $a \in \out_A$ and Bob $b \in \out_B$. 
 The winning condition of the game is defined by the payoff function $V(a,b|x,y)$ which
 assumes value $1$ to indicate when the players win and value $0$ to indicate when they lose. 
 \end{definition}
 
 Given a particular strategy applied by the players, which is specified by a box $\boxp$, 
 the figure of merit that we are interested in analyzing is the \emph{average probability of success}  given by
 \begin{align}\label{eqw}
  \omega(g)= \sum_{a,b,x,y} p(x,y) V(a,b|x,y) P(a,b|x,y).
 \end{align}
 Note that $\omega(g)$ can be regarded as a Bell expression, since it is a linear function of the joint probability distributions $P(a,b|x,y)$.

 \subsection*{Classical strategies}
%
 
 The maximum average probability of success optimizing over all possible classical strategy is the \emph{classical value} of the game, 
denote $\omega_c(g)$. In order to obtain $\omega_c(g)$ we have to optimize over the local boxes of the particular Bell scenario
defined by the game. As we argued before, the maximum value of $\omega(g)$ is attained by a deterministic strategy, hence:
\begin{align}
 \omega_c(g)=\max_{\vec{D}(a|x),\vec{D}(b|y)}\,\sum_{a,b,x,y} p(x,y) V(a,b|x,y) D(a|x)D(b|y), 
\end{align}
where $\vec{D}(a|x)$ and $\vec{D}(b|y)$ are deterministic probability distributions.
The number of possible deterministic strategies for a particular game is ${|\out_A|}^{|\inp_A|}\times {|\out_B|}^{|\inp_B|}$,
which increases exponentially with the number
of inputs.

\subsection*{Quantum strategies}

A general quantum strategy is described by the players sharing a bipartite quantum state $\rho$ of arbitrary dimension and giving their
answers according to the result of
 local measurements, $\DE{M_{x}^{a}}$ and $\{M_{y}^{b}\}$, that they perform in their systems:
 \begin{align}
    P(a,b|x,y)= \Tr \De{(M_{x}^{a}\otimes M_{y}^{b}) \rho}.
 \end{align}

Since we do not make any restriction on the Hilbert space dimension of the system we can actually
restrict ourselves to pure states and projective measurements\footnote{It is a consequence of Naimark's Theorem that an arbitrary POVM  on 
a system of
Hilbert space $\Hi$ is equivalent to a global projective measurement in $\Hi \otimes \Hi'$, where $\Hi'$ is the Hilbert space of an auxiliary system. See 
\url{https://cs.uwaterloo.ca/~watrous/CS766/LectureNotes/05}.}. 
Therefore, the \emph{quantum value} of the game, $\omega_q(g)$, which is the maximum average probability of success of 
players applying a quantum strategy, is given by:
  \begin{align}\label{eqwqgeral}
\omega_q(g)=\sup_{\ket{\psi},\{M_{x}^{a}\},\{M_{y}^{b}\}}\;\sum_{a,b,x,y} p(x,y) V(a,b|x,y) \bra{\psi}M_{x}^{a} \otimes M_{y}^{b}\ket{\psi}.
 \end{align}
where $\ket{\psi} \in \Hi_A \otimes \Hi_B$, for arbitrary Hilbert spaces $\Hi_A$ and $\Hi_B$, and $\DE{M_{x}^{a}}$ and $\{M_{y}^{b}\}$ are projective measurements.

As discussed before, we do not put any restriction on the dimension of the system
and, as a particular class of Bell inequalities,
it is not known if the quantum value of a nonlocal game is computable in general (see Section \ref{secNPA}).

\subsection*{No-signaling strategies}

We also define the \emph{no-signaling value} of the game, denoted $\omega_{NS}$ 
  \begin{align}
\omega_{NS}(g)=\max_{\boxp\in \mathcal{NS}}\;\sum_{a,b,x,y} p(x,y) V(a,b|x,y) P(a,b|x,y).
 \end{align}
 
 The no-signaling value is easily calculated by a linear programming, since we have a linear function of the variables $P(a,b|x,y)$, subjected
 to linear constraints given by the no-signaling conditions \eqref{eqNS}.

\section{{\xor} games}

In this Section we focus on the so-called {\xor} games, introduced in Ref. \cite{CleveNLgames}.
{\xor} games are the simplest class of nonlocal games where the players have two possible answers, $a,b \in \DE{0,1}$, and the 
payoff function only depends on the sum modulo two\footnote{The sum modulo 2 is equivalent to the logical operation exclusive or, also 
denoted {\xor}, that is the reason for the name of the games.} of 
their outputs:  
\begin{align}
 V(a,b|x,y)=\begin{cases}
             1 & \text{if}\; a \oplus b = f(x,y)\\
             0 &\text{otherwise},
            \end{cases}
\end{align}
for some function $f: \inp_A\times \inp_B \longrightarrow \DE{0,1}$.

\begin{definition}[{\xor} games]\label{defxor}
 An {\xor} game, $g^{\oplus}(f,p)$ is a nonlocal game where Alice and Bob receive 
 respectively questions $x \in \inp_A$ and $y \in \inp_B$, chosen
from a probability distribution $p(x,y)$ by a referee. 
 {Upon} receiving the questions, Alice outputs a bit $a$ and Bob outputs bit $b$. 
 The players  win the game if $a \oplus b = f(x,y)$. 
\end{definition}

The \emph{average probability of success} \eqref{eqw} for an {\xor} game can be written as\footnote{Where we have used the fact that 
 \begin{align}
  P(a \oplus b=f(x,y)|x,y)=\frac{1}{2} + \frac{P(a \oplus b=f(x,y)|x,y)-  P(a \oplus b\neq f(x,y)|x,y)}{2}.
 \end{align}}
 \begin{align}\label{wxor}
  \omega(g^{\oplus})= \frac{1}{2}+\frac{1}{2}\de{\sum_{x,y} p(x,y)(-1)^{f(x,y)}\De{ P(a\oplus b=0|x,y)-P(a \oplus b=1|x,y)}}.
 \end{align}
The first $\frac{1}{2}$ on the RHS of Eq. \eqref{wxor} can be interpreted as the probability of success of the players when they apply a totally 
random strategy (\ie if upon receiving her input Alice tosses a coin to determine her output $a$, 
and Bob does the same).

The bias of the game represents how much a particular strategy is better (or worse) than the completely random strategy, and is defined as:
\begin{align}\label{bias}
\epsilon \defeq& \;2 \omega - 1\\
=&\sum_{x,y} p(x,y)(-1)^{f(x,y)}\De{ P(a\oplus b=0|x,y)-P(a \oplus b=1|x,y)}. \nonumber
\end{align}

In expression \eqref{wxor} we see that $\omega(g^{\oplus})$ is related to the expected value of binary 
observables (see Eq. \eqref{eqobservable}), $\mean{A_xB_y}=P(a\oplus b=0|x,y)-P(a \oplus b=1|x,y)$. Therefore {\xor} games 
are equivalent to the important class of Bell expressions that involve only terms with correlators $\mean{A_xB_y}$, the
\emph{full-correlation Bell inequalities}, which is a widely studied class of Bell inequalities \cite{StephanieSDP,WWfullcorre}. The
CHSH inequality \eqref{eqCHSH} being the most remarkable example of a full-correlation Bell inequality.

\subsection*{The game matrix}

To every {\xor} game we can associate a $|\inp_A|\times |\inp_B|$ matrix, the \emph{game matrix}
$\Phi$, which carries all the information
necessary to describe the game: the inputs' probability distribution and the winning condition specified by $f(x,y)$:
\begin{align}\label{eqxormatrix}
 \Phi=\sum_{x,y}p(x,y)(-1)^{f(x,y)}\ketbra{x}{y},
\end{align}
where $\DE{\ket{x}}$ and $\DE{\ket{y}}$ define orthonormal basis and $x$ and $y$ run over the inputs of Alice and Bob respectively.

\subsection*{SDP characterization of the quantum value of an {\xor} game}

For the particular class of bipartite {\xor} games, a theorem due to Tsirelson \cite{Tsirelsonthm1,Tsirelsonthm2}, leads to the result that the quantum 
value of these games can be computed efficiently by a semidefine program \cite{CleveNLgames, StephanieSDP}.


\begin{theorem}[Tsirelson \cite{Tsirelsonthm1}]\label{thmTsirelson}
 Let $\mathbf{A}_1,\ldots,\mathbf{A}_m$ and $\mathbf{B}_1,\ldots,\mathbf{B}_n$ be observables with eigenvalues in the interval $\De{-1,1}$ acting on $\Hi_A$ and 
 $\Hi_B$ respectively. Then, for any state $\ket{\psi}\in \Hi_A \otimes \Hi_B$, there exist unit vectors $\ket{u_1},\ldots,
 \ket{u_m},\ket{v_1},\ldots,\ket{v_n} \in \R^{m+n}$ such that
 \begin{align}
  \bra{\psi}\mathbf{A}_x \otimes \mathbf{B}_y\ket{\psi}=\braket{u_x}{v_y}.
 \end{align}
 
 Conversely, let $\DE{\ket{u_x}}_{x=1}^m$, $\DE{\ket{v_y}}_{y=1}^n \in \R^{N}$ be unit vectors. Then, for the maximally entangled state
 $\ket{\Phi^+} \in \Hi_A \otimes \Hi_B$,
 with $\Hi_A=\Hi_B=\C^d$, $d=2^{\frac{\lceil N\rceil}{2}}$, there exist  $\pm 1$-observables, $\DE{\mathbf{A}_x}$ acting on $\Hi_A$ and $\DE{\mathbf{B}_y}$ acting on $\Hi_B$,
 such that 
 \begin{align}
 \braket{u_x}{v_y}= \bra{\Phi^+}\mathbf{A}_x \otimes \mathbf{B}_y\ket{\Phi^+}.
 \end{align}
\end{theorem}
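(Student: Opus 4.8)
The plan is to prove the two directions separately, starting with the forward direction, which is the more conceptually delicate one. For the forward direction, given observables $\mathbf{A}_x$ and $\mathbf{B}_y$ with spectrum in $[-1,1]$ and a state $\ket{\psi}$, I want to produce real unit vectors reproducing $\bra{\psi}\mathbf{A}_x\otimes\mathbf{B}_y\ket{\psi}$ as inner products. First I would reduce to the case of $\pm1$-observables, i.e. Hermitian operators squaring to the identity: an observable with eigenvalues in $[-1,1]$ can be dilated to a $\pm 1$-observable on a larger space, or alternatively one checks that the extremal case suffices since the general bound is a convex combination. The crucial structural tool is a representation of the Clifford algebra: I would introduce anticommuting $\pm1$-observables $\Gamma_1,\ldots,\Gamma_{m+n}$ on some auxiliary Hilbert space satisfying $\{\Gamma_i,\Gamma_j\}=2\delta_{ij}\I$. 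Then, writing $\mathbf{A}_x=\sum_i (u_x)_i\,\Gamma_i$ for a real unit vector $\ket{u_x}$ (and similarly $\mathbf{B}_y$ built from a disjoint block of generators so that the two families commute), the anticommutation relations force $\mathbf{A}_x^2=\I$ and $\mathbf{A}_x\mathbf{A}_{x'}+\mathbf{A}_{x'}\mathbf{A}_x=2\braket{u_x}{u_{x'}}\I$. The key computation is then that on the maximally entangled state these Clifford-built operators reproduce exactly the inner product $\braket{u_x}{v_y}$, because the expectation $\bra{\Phi^+}\Gamma_i\otimes\Gamma_j\ket{\Phi^+}$ collapses to $\delta_{ij}$ (up to transposition conventions on one side).

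For the forward direction specifically, the subtlety is that one is handed an \emph{arbitrary} state and arbitrary observables and must manufacture the vectors. Here the standard route is to define $\ket{u_x}$ and $\ket{v_y}$ directly from the state: set $\ket{u_x}\propto (\mathbf{A}_x\otimes\I)\ket{\psi}$ and $\ket{v_y}\propto(\I\otimes\mathbf{B}_y)\ket{\psi}$ as vectors in $\Hi_A\otimes\Hi_B$, so that $\braket{u_x}{v_y}$ is automatically $\bra{\psi}\mathbf{A}_x\otimes\mathbf{B}_y\ket{\psi}$; these are unit vectors precisely when the observables are $\pm1$-valued. The remaining work is to show these complex vectors can be replaced by \emph{real} vectors in dimension $m+n$ without changing the relevant inner products: one takes real and imaginary parts, or uses that only the real symmetric Gram data $\braket{u_x}{v_y}$ matters, and a positive-semidefinite Gram matrix of rank at most $m+n$ always admits a real realization in $\R^{m+n}$.

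For the converse I would run the construction above in the explicit Clifford representation. Given real unit vectors $\ket{u_x},\ket{v_y}\in\R^N$, I build $\mathbf{A}_x=\sum_{i=1}^N (u_x)_i\,\Gamma_i$ and $\mathbf{B}_y=\sum_{i=1}^N (v_y)_i\,\Gamma_i^{T}$ (or the conjugate block) on $\C^d\otimes\C^d$ with $d=2^{\lceil N\rceil/2}$, this dimension being exactly what is needed to carry a faithful irreducible representation of $N$ Clifford generators. Then a direct evaluation on $\ket{\Phi^+}$, using $\bra{\Phi^+}(G\otimes\I)\ket{\Phi^+}=\bra{\Phi^+}(\I\otimes G^{T})\ket{\Phi^+}=\frac{1}{d}\Tr G$ and the trace-orthogonality $\Tr(\Gamma_i\Gamma_j)=d\,\delta_{ij}$, yields $\bra{\Phi^+}\mathbf{A}_x\otimes\mathbf{B}_y\ket{\Phi^+}=\braket{u_x}{v_y}$, establishing that the $\mathbf{A}_x,\mathbf{B}_y$ are genuine $\pm1$-observables realizing the prescribed inner products. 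The main obstacle I anticipate is bookkeeping the Clifford-algebra dimension count and the transposition/complex-conjugation conventions that make the two commuting families act correctly on $\ket{\Phi^+}$; the algebraic identities themselves are routine once the correct representation is fixed.
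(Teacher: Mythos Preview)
The paper does not actually prove Tsirelson's theorem: it is stated with a citation to Tsirelson's original papers and then \emph{used} as a black box in the proof of the subsequent SDP characterization (Theorem~\ref{thmxorSDP}). The only fragment of the argument that appears in the paper is the definition $\ket{u_x}=(\mathbf{A}_x\otimes\I)\ket{\psi}$, $\ket{v_y}=(\I\otimes\mathbf{B}_y)\ket{\psi}$ inside that later proof, which coincides with what you propose for the forward direction. So there is no ``paper's own proof'' to compare against here.

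That said, your sketch is the standard one and is correct in outline. Two remarks on presentation. First, your opening paragraph conflates the two directions: you introduce the Clifford generators $\Gamma_i$ and the expansion $\mathbf{A}_x=\sum_i (u_x)_i\Gamma_i$ as if this were part of the forward argument, but that construction belongs entirely to the converse; the forward direction needs only the state-vector trick and the Gram-matrix dimension reduction, with no Clifford algebra at all. Second, in the forward direction you should be explicit that the inner products $\braket{u_x}{v_y}=\bra{\psi}\mathbf{A}_x\otimes\mathbf{B}_y\ket{\psi}$ are automatically \emph{real} (the operator is Hermitian), so the $(m+n)\times(m+n)$ Gram matrix built from $\{\ket{u_x}\}\cup\{\ket{v_y}\}$ has real off-diagonal blocks; the diagonal blocks $\braket{u_x}{u_{x'}}$ may be complex, but one can symmetrize or simply observe that any PSD matrix with the required real entries in the cross block admits a real factorization of rank at most $m+n$. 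The converse via Clifford generators, the dimension count $d=2^{\lceil N/2\rceil}$, and the trace identity on $\ket{\Phi^+}$ are all as you describe.
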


Tsirelson's Theorem establishes a one-to-one relation between quantum strategies and the inner product of real vectors.

From now on let us set $|\inp_A|=m_A$ and $|\inp_B|=m_B$. 
The optimal quantum strategy of an {\xor} game is given, in general, by Alice and Bob measuring $\pm 1$ observables $\mathbf{A}_{x}$ and $\mathbf{B}_{y}$ on a shared pure quantum state 
$\ket{\psi}$ of arbitrary dimension. 
By Tsirelson's theorem \ref{thmTsirelson} the expected value of these observables can be
replaced by the inner product $\braket{u_x}{v_y}$ of unit vectors in $\mathbb{R}^{m_A + m_B}$. This implies
that calculating the quantum value of an {\xor} game can be formulated as a 
semidefinite program $(\mathcal{P})$ \cite{StephanieSDP}.

\begin{theorem}\label{thmxorSDP}
 The optimal quantum bias, $\epsilon_q$, of an {\xor} game with game matrix $\Phi$ is given by
\begin{align}\label{eqxorSDP}
\epsilon_q =\begin{cases}         
 \max& \Tr{\Phi}_{s} \mathcal{X}	\\
\text{s.t.} &\text{diag}(\mathcal{X}) = |\1 \rangle, \\
         &\mathcal{X} \geq 0,
         \end{cases}
\end{align}
where $\text{diag}(\mathcal{X})$ is a vector whose entries are the diagonal elements of matrix $\mathcal{X}$, $|\1 \rangle$ is the all 1's vector, 
 ${\Phi}_{s} = \left(\begin{smallmatrix}
0& \frac{1}{2} {\Phi} \\ \frac{1}{2} {\Phi}^T& 0
\end{smallmatrix} \right)$ and $\mathcal{X} = \left(\begin{smallmatrix}A& S\\ S^T& B\end{smallmatrix} \right)$. $S$ is the \emph{strategy matrix} defined as $S_{x,y} = \langle u_{x} | v_{y} \rangle$. 
The matrices $A, B$ with $A_{x, x'} = \langle u_{x} | u_{x'} \rangle$ and $B_{y,y'} = \langle v_{y} | v_{y'} \rangle$ represent local terms. 
 \end{theorem}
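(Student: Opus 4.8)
The plan is to rewrite the quantum bias as an optimization over real unit vectors using Tsirelson's Theorem, and then to recognize that vector optimization as exactly the stated semidefinite program by passing to the Gram matrix of those vectors.

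First I would record that for an {\xor} game the bias is a full-correlation Bell expression. Combining the bias \eqref{bias} with the game matrix \eqref{eqxormatrix} and writing each correlator as the expectation of a pair of observables \eqref{eqobservable}, one gets
\begin{align}
 \epsilon_q = \sup_{\ket{\psi}, \mathbf{A}_x, \mathbf{B}_y} \sum_{x,y} \Phi_{x,y}\, \bra{\psi}\mathbf{A}_x \otimes \mathbf{B}_y \ket{\psi},
\end{align}
where the supremum runs over all pure states $\ket{\psi}$ of arbitrary dimension and all observables $\mathbf{A}_x,\mathbf{B}_y$ with spectrum in $[-1,1]$ (for an {\xor} game we may restrict to projective measurements, so $\mathbf{A}_x = M_x^0 - M_x^1$ are genuine $\pm1$ observables, and likewise for $\mathbf{B}_y$).

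Next I would invoke Tsirelson's Theorem \ref{thmTsirelson} in both directions. The forward direction replaces each expectation $\bra{\psi}\mathbf{A}_x \otimes \mathbf{B}_y\ket{\psi}$ by an inner product $\braket{u_x}{v_y}$ of unit vectors in $\R^{m_A+m_B}$; the converse guarantees that any such family of unit vectors is realized by some quantum strategy on a maximally entangled state, so no value is lost. Hence the two optimizations coincide,
\begin{align}
 \epsilon_q = \max_{\substack{\ket{u_x},\ket{v_y}\in\R^{m_A+m_B}\\ \text{unit vectors}}} \; \sum_{x,y} \Phi_{x,y}\, \braket{u_x}{v_y},
\end{align}
with the supremum becoming a maximum because the feasible set of Gram matrices appearing below is compact. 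The final step is the dictionary between unit vectors and the SDP variable: I collect the $m_A+m_B$ vectors $\ket{u_1},\dots,\ket{u_{m_A}},\ket{v_1},\dots,\ket{v_{m_B}}$ and form their Gram matrix, which is exactly $\mathcal{X}=\left(\begin{smallmatrix}A& S\\ S^T& B\end{smallmatrix}\right)$ with $A_{x,x'}=\braket{u_x}{u_{x'}}$, $B_{y,y'}=\braket{v_y}{v_{y'}}$ and $S_{x,y}=\braket{u_x}{v_y}$. A matrix is a Gram matrix of some vectors iff it is positive semidefinite, and those vectors are unit vectors iff its diagonal is the all-ones vector; this is precisely the feasible region $\mathcal{X}\geq 0$, $\text{diag}(\mathcal{X})=\ket{\1}$. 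To match the objective, I carry out the block multiplication of $\Phi_s$ and $\mathcal{X}$ and take the trace, obtaining $\Tr(\Phi_s \mathcal{X}) = \tfrac12\Tr(\Phi S^T) + \tfrac12\Tr(\Phi^T S) = \sum_{x,y}\Phi_{x,y}\braket{u_x}{v_y}$, which is exactly the objective of the vector optimization. Chaining the three equalities gives the claimed program.

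The hard part will be the honest use of Tsirelson's Theorem: one must check that the forward and converse directions dovetail so that the supremum over all arbitrary-dimensional quantum strategies equals the maximum over unit vectors in the fixed space $\R^{m_A+m_B}$ — in particular that truncating the vector dimension to $m_A+m_B$ costs nothing and that the single maximally entangled state of the converse already saturates the bound. By contrast, the Gram-matrix correspondence and the trace computation are routine linear algebra and I would not belabor them.
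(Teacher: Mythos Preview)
Your proposal is correct and follows essentially the same route as the paper: invoke Tsirelson's Theorem in both directions to replace quantum correlators by inner products of unit vectors, then pass to the Gram matrix to obtain the SDP. The only cosmetic difference is that the paper writes out the explicit vectors $\ket{u_x}=\mathbf{A}_x\otimes\I\ket{\psi}$, $\ket{v_y}=\I\otimes\mathbf{B}_y\ket{\psi}$ (which is the forward direction of Tsirelson's Theorem unpacked) before forming $\mathcal{X}$, whereas you cite the theorem as a black box; the substance is identical.
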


 \begin{proof}
 Let us consider an {\xor} game with $|\inp_A|=m_A$ questions for Alice, $|\inp_B|=m_B$ questions for Bob, 
 associated game matrix $\Phi$, and winning condition determined by function $f(x,y)$.
 We have that
 \begin{align}
  \epsilon =\sum_{x,y} p(x,y)(-1)^{f(x,y)}\De{ P(a\oplus b=0|x,y)-P(a \oplus b=1|x,y)},
 \end{align}
and for a quantum strategy where Alice and Bob perform the measurements $\DE{{M}_x^0,{M}_x^1}$ and $\{{M}_y^0,{M}_y^1\}$, 
respectively, in a 
quantum state $\ket{\psi}$ 
\begin{align}
P(a\oplus b=0|x,y)-P(a \oplus b=1|x,y)=\bra{\psi}\mathbf{A}_x \otimes \mathbf{B}_y \ket{\psi},
\end{align}
where $\mathbf{A}_x$ and $\mathbf{B}_y$ are defined as in Eq. \eqref{eqdefobservable}.
Hence 
\begin{align}
 \epsilon =\sum_{x,y} p(x,y)(-1)^{f(x,y)} \bra{\psi}\oA_x \otimes \oB_y \ket{\psi}.
\end{align}

Now we can define the vectors
\begin{align}\label{vectorsTsirelson}
 \begin{split}
  \ket{u_x}&\defeq \mathbf{A}_x \otimes \I \ket{\psi},\\
  \ket{v_y}&\defeq \I \otimes \mathbf{B}_y \ket{\psi},
 \end{split}
\end{align}
so that
\begin{align}
 \epsilon=\sum_{x,y} p(x,y)(-1)^{f(x,y)}\braket{u_{x}}{v_{y}}.
\end{align}

Let
\begin{align}
S\defeq \sum_{x,y}\braket{u_{x}}{v_{y}}\ketbra{x}{y}.
\end{align}
be the \emph{strategy matrix} that represents this particular strategy.
Together with the game matrix given by Eq. \eqref{eqxormatrix} we have that
\begin{align}
 \epsilon =\Tr \Phi S^T=\Tr \Phi^T S.
\end{align}

Now let us define the matrix whose columns are composed by the vectors $\DE{\ket{u_x}}$ and  $\DE{\ket{v_y}}$:
\begin{align}
 X=\begin{pmatrix}
 & & & & & \\
    \ket{u_1}& \ldots & \ket{u_{m_A}}& \ket{v_1}& \ldots & \ket{v_{m_B}}\\
    & & & & &
   \end{pmatrix},
\end{align}
and set $ \mathcal{X}=X^{\dagger}X$. $\mathcal{X}$ is the so-called \emph{Gram matrix} of the set of vectors $\DE{\ket{u_x},\ket{v_y}}$, 
and $\mathcal{X} = \left(\begin{smallmatrix}A& S\\ S^T& B\end{smallmatrix} \right)$. Note that if the  vectors $\DE{\ket{u_x},\ket{v_y}}$ are
normalized, all the diagonal elements of $\mathcal{X}$ are equal to 1.
By defining $\tilde{\Phi}_{s} = \left(\begin{smallmatrix}
0& \frac{1}{2} {\Phi} \\ \frac{1}{2} {\Phi}^T& 0
\end{smallmatrix} \right)$ we have
\begin{align}
 \Tr\tilde{\Phi}_{s} \mathcal{X}=\frac{1}{2}\Tr \Phi S^T+\frac{1}{2}\Tr \Phi^T S= \epsilon.
\end{align}

Finally, we use the fact that 
$\mathcal{X}\geq 0$ iff it is the Gram matrix of a set of vectors \cite{matrix}, in order to 
 guarantee that, for every feasible $\mathcal{X}$ in problem $(\mathcal{P})$, there exist a set of normalized vectors $\DE{\ket{u_x}}$, 
$\DE{\ket{v_y}}$ $\in \R^{N}$, for some $N$. 
And by Tsirelson's theorem \ref{thmTsirelson}, each solution $\mathcal{X}$ can then be described by
$\pm 1$-observables 
applied to a quantum state $\ket{\psi}$. Which ends the proof.
\end{proof}

For a classical deterministic strategy, all vectors $|u_{x} \rangle$ and $|v_{y} \rangle$ are equal 
to $\pm|w\rangle$, for a single unit vector $|w\rangle$, since the 
expected values $\mean{A_xB_y}$ assume values $\pm 1$. 
The strategy matrix $S_c$ of a classical deterministic strategy is thus a matrix with $\pm 1$ entries with all columns (and rows) being proportional to each other.

\subsection*{An upper bound to the quantum value of {\xor} games}

We now show that the quantum value of an {\xor} game can be upper bounded by a quantity related to the spectral norm of the game matrix $\Phi$.

\begin{theorem}\label{thmboundxor}
Given a bipartite \xor-game, with $m_A$ inputs for Alice, $m_B$ inputs for Bob, and an associated game matrix $\Phi$, the quantum value is upper bounded by
\begin{align}\label{eqxorbound}
 \omega_{q}^{\oplus}\leq \frac{1}{2}\de{1+\sqrt{m_A m_B}\norm{ \Phi}},
\end{align}
where $\norm{ \Phi}$ is the maximum singular value, or the spectral norm, of the matrix $\Phi$.
\end{theorem}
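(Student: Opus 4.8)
The plan is to start from the exact SDP characterization of the optimal bias already established in Theorem \ref{thmxorSDP} and to turn the bias into a bilinear form over the Tsirelson vectors, which can then be controlled by the operator norm of $\Phi$. Since $\omega_q^\oplus=\tfrac{1}{2}(1+\epsilon_q)$, it suffices to prove that $\epsilon_q\le\sqrt{m_A m_B}\,\norm{\Phi}$. By Tsirelson's Theorem \ref{thmTsirelson}, every quantum strategy can be replaced by unit vectors $\ket{u_x},\ket{v_y}\in\R^N$ (for some $N$) satisfying $\bra{\psi}\oA_x\otimes\oB_y\ket{\psi}=\braket{u_x}{v_y}$, so that the bias reads $\epsilon=\sum_{x,y}p(x,y)(-1)^{f(x,y)}\braket{u_x}{v_y}=\sum_{x,y}\Phi_{x,y}\braket{u_x}{v_y}$. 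The goal reduces to bounding this expression uniformly over all families of unit vectors, independently of the dimension $N$.

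The key step is to expand the inner products coordinate by coordinate. Writing $(u_x)_k$ and $(v_y)_k$ for the $k$-th components, I would introduce for each $k$ the vectors $\ket{a_k}\in\R^{m_A}$ and $\ket{b_k}\in\R^{m_B}$ with entries $(u_x)_k$ and $(v_y)_k$, so that $\epsilon=\sum_k\bra{a_k}\Phi\ket{b_k}$. Each term is bounded by $|\bra{a_k}\Phi\ket{b_k}|\le\norm{\Phi}\,\norm{\ket{a_k}}\,\norm{\ket{b_k}}$ directly from the definition of the spectral norm, and summing over $k$ followed by the Cauchy--Schwarz inequality gives $|\epsilon|\le\norm{\Phi}\sqrt{\sum_k\norm{\ket{a_k}}^2}\sqrt{\sum_k\norm{\ket{b_k}}^2}$. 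The two sums are precisely $\sum_x\norm{\ket{u_x}}^2=m_A$ and $\sum_y\norm{\ket{v_y}}^2=m_B$, because each $\ket{u_x}$ and $\ket{v_y}$ is a unit vector; this yields $|\epsilon|\le\sqrt{m_A m_B}\,\norm{\Phi}$ and hence the claimed bound on $\omega_q^\oplus$.

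I expect the only genuinely delicate point to be organizing the double sum so that the dimension $N$ drops out cleanly: the coordinate-wise reindexing is what converts a statement about arbitrary-dimensional Tsirelson vectors into a finite bilinear estimate on the fixed $m_A\times m_B$ matrix $\Phi$, and the unit-norm constraints from Tsirelson's Theorem are exactly what make the two Gram-type sums collapse to $m_A$ and $m_B$. Everything else is a direct application of the operator-norm inequality together with Cauchy--Schwarz. An alternative packaging I would keep in mind is to write $\epsilon=\Tr(\Phi S^T)$ and bound it via the duality $|\Tr(\Phi S^T)|\le\norm{\Phi}\,\|S\|_{\mathrm{tr}}$ between the spectral and trace norms, but controlling $\|S\|_{\mathrm{tr}}$ by $\sqrt{m_A m_B}$ is less transparent than the coordinate-wise argument, so I would present the latter.
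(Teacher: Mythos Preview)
Your argument is correct. The coordinate-wise reindexing works exactly as you describe: $\epsilon=\sum_k\bra{a_k}\Phi\ket{b_k}$, each term is bounded by $\norm{\Phi}\,\norm{\ket{a_k}}\,\norm{\ket{b_k}}$, Cauchy--Schwarz on the sum over $k$ gives the $\sqrt{m_Am_B}$ factor, and the unit-norm constraint on the Tsirelson vectors closes the estimate.

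The paper takes a closely related but differently packaged route. Rather than first invoking Tsirelson's theorem and then expanding in coordinates, it works directly with the optimal quantum strategy $\ket{\psi'},\{\oA'_x\},\{\oB'_y\}$ and builds two auxiliary unit vectors
\[
\ket{\alpha}=\frac{1}{\sqrt{m_A}}\sum_x(\oA'_x\otimes\I_B)\ket{\psi'}\otimes\ket{x},\qquad
\ket{\beta}=\frac{1}{\sqrt{m_B}}\sum_y(\I_A\otimes\oB'_y)\ket{\psi'}\otimes\ket{y},
\]
so that the bias becomes $\sqrt{m_Am_B}\,\bra{\alpha}(\I_{AB}\otimes\Phi)\ket{\beta}$, which is bounded by $\sqrt{m_Am_B}\,\norm{\Phi}$ in one stroke. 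Your coordinate expansion and the subsequent Cauchy--Schwarz are precisely what is hidden inside that single operator-norm inequality; the paper's auxiliary input registers $\ket{x},\ket{y}$ play the role of your index $k$. The trade-off: the paper's version is more compact and generalizes verbatim to the multiplayer and $d$-outcome settings treated later (Theorems~\ref{thmLGwqbound} and \ref{thmnormn}), whereas your version is more elementary and makes the dependence on the unit-norm constraints completely explicit without introducing ancillary Hilbert spaces.
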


\begin{proof}
We follow the approach of Ref. \cite{NLC}.
 Given a bipartite \xor-game, $g^{\oplus}(f,p)$, its quantum value is given by
 \begin{align}
  \omega_{q}&= \max_{\ket{\psi},\DE{\oA_x},\{\oB_y\}}  \frac{1}{2} \de{1+ \sum_{x,y} p(x,y)(-1)^{f(x,y)} \bra{\psi}\oA_x \otimes \oB_y \ket{\psi}}.
 \end{align}
 Note that we have replaced the supremum of Eq. \eqref{eqwqgeral} by a maximum, since Tsirelson's Theorem \ref{thmTsirelson}
 guarantees that
 the optimal quantum value for {\xor} games  is always achieved by a quantum strategy (and even more, that this strategy involves a finite dimensional system).
 
 Now let $\ket{\psi'},\DE{\oA'_x},\{\oB'_y\}$ be the quantum state and observables corresponding to the optimal strategy for the game $g^{\oplus}(f,p)$, and 
 let us define the unit vectors
 \begin{subequations}
 \begin{align}
  \ket{\alpha}&= \frac{1}{\sqrt{m_A}}\sum_{x=1}^{m_A} \oA'_x\otimes \I_B \otimes \I_x \ket{\psi'}\otimes \ket{x},\\
  \ket{\beta}&=\frac{1}{\sqrt{m_B}}\sum_{y=1}^{m_B} \I_A\otimes \oB'_y\otimes \I_y \ket{\psi'}\otimes \ket{y}.
  \end{align}
  \end{subequations}
  
  So that  we have
 \begin{align}
   \omega_{q}&= \frac{1}{2}\de{1+\sqrt{m_A m_B} \bra{\alpha} \I_{AB} \otimes \Phi \ket{\beta} }\nonumber\\
   &\leq \frac{1}{2}\de{1+\sqrt{m_A m_B}\,\norm{ \I_{AB} \otimes \Phi}}\\
    &=\frac{1}{2}\de{1+\sqrt{m_A m_B}\,\norm{ \Phi}}.\nonumber
 \end{align}
\end{proof}

The bound \eqref{eqxorbound} was also derived and studied in details in Ref. \cite{Epping}, where the authors provide 
necessary and sufficient conditions for the bound to be tight.
For the case of $m_A=m_B=m$, Theorem \ref{thmboundxor} can be derived
from the dual $(\mathcal{D})$ of problem \eqref{eqxorSDP}. 
One can show that $m\norm{ \Phi}$ is a feasible solution to the dual problem $(\mathcal{D})$, and hence, by the weak duality theorem \ref{thmweakdual}, it is
an upper bound on the quantum bias $\epsilon_q$ \cite{StephanieSDP}
(see Appendix \ref{A-xorgames}).

\subsection*{No-signaling value of an {\xor} game}

An \xor-game can always be won with certainty by a no-signaling strategy, \ie 
$ \omega_{NS}^{\oplus}=1$.

In order to see that, consider the strategy defined by
\begin{align}\label{eqboxwns1}
P(a,b|x,y) \defeq \begin{cases}
 \frac{1}{2} & \text{if } a \oplus b = f(x,y), 	\\
0 & \text{otherwise}.	\\                   
                  \end{cases}
\end{align}
This strategy is no-signaling, because all the marginals are uniform
\begin{align}
P(a|x) = P(b|y) = \frac{1}{2} \; \forall a,b,x,y,
\end{align}
and by construction it wins the game with certainty. The no-signaling boxes defined by Eq. \eqref{eqboxwns1} are the generalization of the PR-boxes 
(Eq. \eqref{eqprbox}), introduced in Chapter \ref{chapterNL}, for the case
of more inputs per party.

\section{Linear games}

Another class of games that we are going to consider are the called \emph{linear games}.
They are a generalization of \textsc{xor} games to a larger alphabet output size.
Linear games constitute a particular case of a more general class of nonlocal games, the \emph{unique games}. 

Unique games have
been extensively used in the 
study of hardness of approximation of some NP-complete problems, 
in attempts to identify the existence of polynomial time algorithms
to approximate the optimal solution of the problem to within a constant factor \cite{Hastad, Khot}.
Unique games is the class of nonlocal games where for each pair of questions, $(x,y)$, there is an associated 
permutation and the players win the game iff Bob's output corresponds to the permutation of Alice's output. 
Linear games constitute the particular case
of unique games where all the associated permutations commute and therefore it can be defined in terms of an Abelian group $(G,+)$.

\begin{definition}\label{defLG}
A two-player linear game $g^{\ell}(G,f,p)$ is a nonlocal game where two players Alice and Bob receive questions $x$, $y$ 
from sets $\inp_A$ 
and $\inp_B$ respectively, chosen from a probability distribution $p(x,y)$ by a referee. They reply with respective answers $a, b \in G$ where $(G,+)$ is
a finite Abelian group with associated operation $+$. The winning condition of the game is defined by a function
$f : Q_A \times Q_B \rightarrow G$, such that $V(a,b|x,y)=1$ if $a + b = f(x,y)$ and 0 otherwise.  
\end{definition}

Some concepts and formal definitions on groups are stated  in Appendix \ref{A-groups}.

The average probability of success of the players in a linear game $g^{\ell}(G,f,p)$ can be written as
\begin{align}
 \omega(g^{\ell})=\sum_{x,y}p(x,y)P(a+b=f(x,y)|x,y).
\end{align}

An {\xor} game can be seen as a linear game with  $(G,+)=\mathbb{Z}_2$. We  can also define
the generalized {\xor} games, \emph{\textsc{xor}-$d$ games}, denoted $g^{\oplus_d}$, as the 
class of linear games associated 
to the cyclic group $\mathbb{Z}_d$ (the set $[d]=\DE{0,\ldots, d-1}$ with the operation 
of addition modulo $d$). 


\subsection*{Perfect no-signaling strategy for linear games}

The existence of a perfect no-signaling strategy that wins the game with probability 1 also holds for linear games.
For every linear game $g^{\ell}(G,f,p)$
there exist a no-signaling strategy  that perfectly wins the game: 
\begin{align}  
 \omega_{NS}(\textsl{g}^{\ell}) = 1.
\end{align}
Such a strategy is defined as
\begin{align}\label{eqprboxLG}
P(a,b|x,y) = \begin{cases}
              \frac{1}{|G|} &\text{if}\; a + b = f(x,y) \\
              \; 0 &\text{otherwise} .
             \end{cases}
\end{align}
The strategy \eqref{eqprboxLG} clearly wins the game, and also it is no-signaling since
all the marginals are fully random:
\begin{align}  
P(a|x) =\sum_b P(a,b|x,y) = \sum_b \frac{1}{|G|}\delta_{b,f(x,y)-a} =\frac{1}{|G|} \;\; \forall \;y ,
\end{align}
and analogously for $P(b|y)$.

\section{\texorpdfstring{$n$}{n}-player games}\label{secnplayer}

In this Section we present the generalization of the previous definitions for games with $n$ players.

 \begin{definition}[$n$-player nonlocal Game]
An $n$-player nonlocal game $g_n(V,p)$ is a cooperative task 
where $n$ players $A_1,\ldots,A_n$,  {who} are not allowed to communicate after 
the game starts, receive respectively questions $x_1,\ldots,x_n$, where $x_i \in \inp_i$, chosen
from a probability distribution $p(x_1,\ldots,x_n)$ by a referee. 
 {Upon} receiving question $x_i$ player $A_i$ is supposed to answer $a_i \in \out_i$. 
 The winning condition of the game is defined by a payoff function $V(a_1,\ldots,a_n|x_1,\ldots,x_n)$ which
 assumes value $1$ to indicate when the players win and value $0$ to indicate when they lose. 
 \end{definition}
 
The probability of success of the players for a particular strategy defined by the box $\vec{P}(a_1,\ldots,a_n|x_1,\ldots,x_n)$ 
is given by
 \begin{align}
  \omega(g_n)= \sum_{\vec{a}\in {\out},\vec{x}\in {\inp}} p(\vec{x}) V(\vec{a}|\vec{x}) P(\vec{a}|\vec{x}),
 \end{align}
 where $\vec{x}=(x_1,\ldots,x_n)$ denotes the input string, ${\inp}=\inp_1\times \ldots \times \inp_n$  and analogously for $\vec{a}$ and $\out$.

The definition of linear games can be straightforwardly generalized for an $n$-player game in the following way:

\begin{definition}\label{defLGn}
An $n$-player linear game $g_n^{\ell}(G,f,p)$ is a 
nonlocal game where the players answer with 
$a_1,\ldots,a_n \in G$, where $(G,+)$ is a finite Abelian group with associated operation $+$, and the predicate function $V$ only 
depends on the sum of the players outputs:
\begin{align}
V(\vec{a}|\vec{x})=\begin{cases}
                    1 \;,\;\text{if}\; a_1 + \ldots + a_n=f(x_1,\ldots,x_n)\\
                    0 \;,\; \text{otherwise}
                        \end{cases}
\end{align}
for $f:{\inp}\longrightarrow G$.
\end{definition}

The probability of success of a particular strategy in a $n$-player linear game $g_n^{\ell}$ is given by
 \begin{align}\label{eqwLG}
 \omega(g_n^{\ell})=\sum_{\vec{x}\in {Q}}p(\vec{x})P(a_1+ \ldots + a_n=f(\vec{x})|\vec{x}).
\end{align}

\subsection*{The classical value of an $n$-player game}

 The classical value of an $n$-player nonlocal game $\omega_c(g_n)$ is obtained by an optimization over deterministic local 
 strategies
 \begin{align}
\omega_c(g_n)=\max_{\DE{\vec{D}(a_i|x_i)}}\sum_{\vec{a},\vec{x}} p(\vec{x}) V(\vec{a}|\vec{x}) D(a_1|x_1)\ldots D(a_n|x_n)
 \end{align}
where $\vec{D}(a_i|x_i)$ represents a deterministic probability distribution.

The number of possible deterministic strategies also increases exponentially with the number of parties. For a game with $n$ players, $|\inp|$ questions
per player and $|\out|$ outputs per question, the number of deterministic strategies is $|\out|^{n \times |\inp|}$.

\subsection*{The quantum value of an $n$-player game}

A general $n$-partite quantum strategy can be described by the players sharing an $n$-partite pure state $\ket{\psi}$ of arbitrary dimension and
 performing local measurements $\DE{M_{x_i}^{a_i}}$ on it. The quantum value of the $n$-player game  $g_n$ is then given by
  \begin{align}
\omega_q(g_n)=\sup_{\ket{\psi},\{M_{x_i}^{a_i}\}}\,\sum_{\vec{a},\vec{x}} p(\vec{x}) V(\vec{a}|\vec{x}) \bra{\psi}M_{x_1}^{a_1} \otimes \ldots \otimes M_{x_n}^{a_n}\ket{\psi}.
 \end{align}

 The case of multiplayer games is even more challenging than the bipartite case. 
For bipartite XOR games, we have seen  that Tsirelson's theorem (Theorem \ref{thmTsirelson}) 
guarantees that the best performance of quantum players can be calculated exactly and efficiently using a
semidefinite program, Eq.  \eqref{eqxorSDP}. For three players, even for the case of {\xor} games the quantum value is known
to be NP-hard\footnote{In terms of the number of possible questions.} to approximate \cite{Vidick3xorhard}.
\vspace{1em}

In chapter \ref{chapterxor} we deal with bipartite {\xor} games and present results of Ref. \cite{GMxorShannon} concerning games with no-quantum advantage and
results on the Shannon capacity of {\xor}-game graphs. In Chapter \ref{chaptergamesd} we provide an upper bound to the quantum value of 
a 2-player linear game and present results of Ref. \cite{GMxord}. In Chapter 
 \ref{chapternplayer} we generalize the results for $n$-player linear games presenting the results of Ref. \cite{GMmultiplayer}.

\chapter{Graph theoretic approach to nonlocality}\label{chaptergraphNL}

The graph theoretic approach to quantum correlations was introduced by Cabello, Severini and Winter \cite{csw0,csw} in the framework  
of contextuality scenarios\footnote{In 
a contextuality scenario we do not necessarily have parties. The hypothesis in question is whether an observable $\mathbf{O}$ that belongs to two
different sets of mutually commuting observables, called the \emph{contexts}, $\mathbf{O}\in\mathcal{C}_1$ 
and $\mathbf{O}\in\mathcal{C}_2$, can have a description independent of the context. In a Bell scenario the commutation of observables that form each context is
guaranteed by the tensor product structure of the parties' experiments.}, 
where Bell scenarios can be seen as a particular case of it.
Further refinements to the particular case of nonlocality scenarios where made in Refs. \cite{BelenCombinatorial, xorgraphs, multilovasz}.
In this chapter, we will see how the classical, quantum, and no-signaling values of a Bell expression can be associated to graph parameters.

\section{A bit of zero-error information theory}\label{seczeroerror}

In 1956 Shannon \cite{Shannon0Error} studied the concept of zero-error capacity of a communication channel.
The zero-error capacity $C_0$ is defined as the maximum rate at which it is possible to transmit 
information with zero probability of error through a channel $\mathcal{C}$. 

A channel is described by a set of input letters $i \in \mathcal{I}$, a set of output letters $o \in \mathcal{O}$, and the transition
probabilities $p_i(o)$ that represents the probability that an input $i$ will generate an output $o$. 
In the analysis of zero-error capacity we are not interested in the particular values of the transition probabilities but only whether they are zero or
not. In order to deal with the important properties of the zero-error capacity $C_0$, we can associate a graph $\mathcal{G}$  to the channel
in consideration.

\begin{definition}[Confusability Graph]
 The \textit{confusability graph} associated to a channel $\mathcal{C}$ is a graph $\mathcal{G}(V,E)$ 
 where each input letter of the channel corresponds to a vertex
 \begin{align}
  V=\DE{i \,|\, i \in \mathcal{I}},
 \end{align}
 and there is an edge connecting two vertices $i$ and $j$ if these inputs can be confused by the channel 
 \begin{align}
 \DE{i,j}\in E \;\; \text{iff} \;\;\exists \; o \in \mathcal{O}\;\;\text{s.t.}\;\; p_i(o) \neq 0 \;\; \text{and}\;\;p_j(o) \neq 0.
 \end{align}
\end{definition}

An alternative quantity that carries all the information contained in the graph $\mathcal{G}$ is the \emph{adjacency matrix}.

\begin{definition}[Adjacency Matrix]
 The \textit{adjacency matrix} $[\mathcal{A}_{ij}]$ associated to a graph $\mathcal{G}(V,E)$  is defined as
 \begin{align}
  \mathcal{A}_{i,j}=\begin{cases}
           1\;\;& \text{if } \DE{i,j}\in E ,\; \\
           0\;\;& \text{otherwise}.
          \end{cases}
 \end{align}
\end{definition}

The maximal number of 1-letter messages, $M_0(1)$, which can be sent through a channel $\mathcal{C}$ without confusion can be extracted 
directly from a graph invariant. A bit of thinking\footnote{Or maybe a few bits. But note that since the independence number picks the maximum number 
of vertices such that no two of which are adjacent, this represents the maximum number of letters such that none of them can be confused.} leads us to conclude that $M_0(1)$ is
equal to the independence number of the confusability graph $\mathcal{G}$:
\begin{align}
 M_0(1)=\alpha(\mathcal{G}).
\end{align}
The independence number of a graph is the cardinality of the maximal independent set, which is a subset of vertices such that none of which are adjacent .
It is described by
 the optimization problem \eqref{eqalpha}.

Formally, the zero-error capacity $C_0$ is defined as
\begin{align}\label{eqC0}
 C_0=\sup_{n} \frac{1}{n} \log M_0(n),
\end{align}
where $M_0(n)$ is the largest number of $n$-letter messages that can be sent through the channel without confusion.
The number of non-confusable $n$-letter messages is given by the independence number of the graph $\mathcal{G}^n$:
\begin{align}
 M_0(n)=\alpha(\gr^n),
\end{align}
where $\gr^n$ denotes the \emph{strong product} of graph $\gr$ with itself $n$ times.


\begin{definition}\label{defstrongprod}
 The strong product $\gr \boxtimes \mathcal{F}$ of graphs $\gr$ and $\mathcal{F}$ is such that
 \begin{itemize}
  \item $V(\gr \boxtimes \mathcal{F})= V(\gr)\times V(\mathcal{F})$,
  \item $\DE{(u,u'),(v,v')} \in E(\gr \boxtimes \mathcal{F}) \Leftrightarrow \DE{u,v} \in E(\gr)$ and $\DE{u',v'}\in E(\mathcal{F})$.
 \end{itemize}
\end{definition}

In terms of a channel $\mathcal{C}$, Definition \ref{defstrongprod} captures the idea that two 2-letter words are confusable if they are confusable in the 
first and in the second letter.

So we have that
\begin{align}\label{C0}
 C_0=\sup_{n} \frac{1}{n} \log \alpha(\mathcal{G}^n)\defeq\; \log \Theta(\gr),
\end{align}
where
\begin{align}\label{eqTheta}
 \Theta(\gr)=\sup_{n} \sqrt[n]{\alpha(\mathcal{G}^n)}.
 \end{align}
From now on we refer to $\Theta(\gr)$ as the \emph{Shannon capacity} of graph\footnote{The supremum   in Eq. \eqref{eqC0} can actually be replaced
by a limit, since the strong product satisfies
\begin{align}
 \alpha(\mathcal{G}^2)\geq  \alpha(\mathcal{G})^2,
\end{align}
which can be proved by noting that given an independent set $A$ of graph $\mathcal{G}$, one can
generate an independent set for $\mathcal{G}^2$
by taking $(i,j) \st \;\;i,j\in A$.
} $\mathcal{G}$.

As an example let us consider the channel with confusability graph given by the pentagon $C_5$:

\begin{center}
 \includegraphics[scale=0.4]{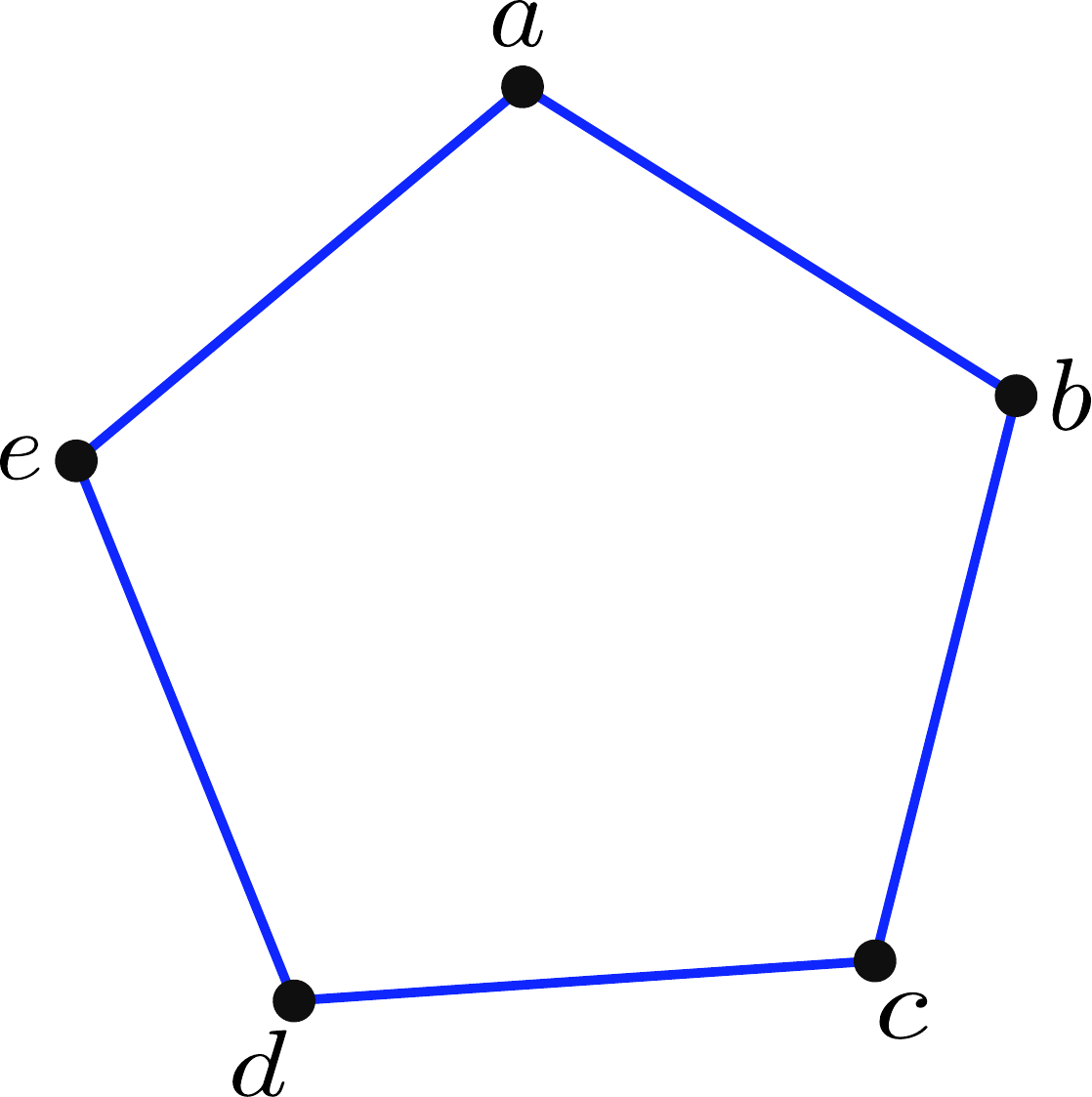}
\end{center}

The maximum number of 1-letter messages we could send through this channel without confusion is 2, and we have $\alpha(C_5)=2$.
We could for example send letter  $a$ or $c$. 
Now, considering
 2-letter messages, we could obviously make the four non-confusable words: $aa, ac, ca$, and $cc$. 
 However one can check that the following five words of size two:
$aa,bc,ce,db,ed$, are such that no two of them can be confused, and actually $\alpha({C_5}^2)=5$. 
Therefore, we conclude that $\Theta(C_5) \geq \sqrt{5}$.

In contrast to the ordinary channel capacity where a vanishing probability of error is allowed \cite{Shannon1948} and the capacity can be efficiently
calculate, it is not know whether the 
zero-error capacity is computable. By the definition given in Eq. \eqref{eqTheta},  we have a supremum over $n$ and at the moment there 
is no algorithm to decide the value of $\Theta$ in a finite number of steps for an arbitrary graph.

In 1979 Lovász \cite{Lovasznumber} introduced an efficiently computable upper bound to the Shannon capacity $\Theta$ (and therefore also for the 
independence number $\alpha$). The Lovász number $\vartheta$ is a graph invariant
defined as follows:

\begin{definition}[Lovász Number]
 Consider an $n$-vertex graph $\gr$. Let the set of vectors $\de{\ket{v_1},\ldots,\ket{v_n}}$ be an orthonormal representation  of $\bar{\gr}$ 
 and $\ket{\psi}$ be a unit vector. The Lovász number is given by
 \begin{align}\label{eqlovasz}
  \vartheta(\gr)=\max_{\DE{\ket{v_i}},\ket{\psi}} \sum_{i=1}^n \de{\braket{\psi}{v_i}}^2.
 \end{align}
\end{definition}

An orthonormal representation of the complementary graph $\bar{\gr}$  is a set of real vectors $\de{\ket{v_1},\ldots,\ket{v_n}}$
associated to the vertices of graph $\gr$ such that $\braket{v_i}{v_i}=1\; \forall\, i$ and
$\DE{i,j}\in E \Rightarrow \braket{v_i}{v_j}=0$.

By calculating $\vartheta(C_5)$, 
Lovász proved \cite{Lovasznumber} that the previously known lower bound for the 
Shannon capacity of the pentagon, $\sqrt{5}$,
is actually the exact value: $\Theta(C_5) =\sqrt{5}$.

There are many equivalent characterizations of the Lovász number \cite{LovaszNotes,KnuthlovaszS}, here we consider one
 that will be useful in Chapter
\ref{chapterxor}:

\begin{theorem}[\cite{Lovasznumber}]\label{thmlovaszA}
 Given a graph $\gr$, consider the family of $n \times n$ symmetric matrices $A$ such that
 \begin{align}\label{eqmatrixA}
  A_{ij}=1 \;\; \text{if}\;\; i=j \;\; \text{or}\;\; \DE{i,j} \notin E.
 \end{align}
 The Lovász number is the minimum of the maximum eigenvalue of such matrices: 
 \begin{align}\label{eqlovaszA}
  \vartheta(\gr)=\min_{A}\lambda_{\max}(A).
 \end{align}
\end{theorem}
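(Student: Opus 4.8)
The plan is to read both sides of \eqref{eqlovaszA} as the optimal values of a primal--dual pair of semidefinite programs and to close the gap between them with the duality machinery of Section~\ref{secduality}. The natural bridge is the auxiliary SDP
\begin{align}
 \vartheta_{\mathrm{sdp}}(\gr)=\begin{cases}\max & \Tr(JX)\\ \st & \Tr(X)=1,\\ & X_{ij}=0 \text{ if } \DE{i,j}\in E,\\ & X\succeq 0,\end{cases}
\end{align}
where $J$ is the all-ones matrix. First I would show that $\vartheta_{\mathrm{sdp}}(\gr)$ equals the handle form \eqref{eqlovasz}; then that $\vartheta_{\mathrm{sdp}}$ and $\min_A\lambda_{\max}(A)$ form a Lagrange dual pair; and finally invoke strong duality to identify the two values.

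The weak-duality direction is clean, so I would do it first. Fix any feasible $X$ and any feasible $A$ (so $A_{ij}=1$ on the diagonal and on non-edges, with free edge entries). Since $J$ and $A$ agree everywhere except on the edge entries, and $X$ vanishes precisely there, we have $\Tr(JX)=\Tr(AX)$. As $A\preceq\lambda_{\max}(A)\I$ and $X\succeq 0$ with $\Tr X=1$,
\begin{align}
 \Tr(JX)=\Tr(AX)\le\lambda_{\max}(A)\,\Tr(X)=\lambda_{\max}(A).
\end{align}
Taking the maximum over $X$ and the minimum over $A$ gives $\vartheta_{\mathrm{sdp}}(\gr)\le\min_A\lambda_{\max}(A)$, which is exactly the weak-duality inequality (Theorem~\ref{thmweakdual}) for this pair.

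For the reverse inequality I would write $\min_A\lambda_{\max}(A)$ as the epigraph SDP ``$\min t$ subject to $t\I-A\succeq 0$, $A_{ii}=1$, $A_{ij}=1$ for non-edges,'' compute its Lagrange dual, and check that the multiplier attached to $t\I-A\succeq 0$ is a PSD matrix $X$ whose stationarity conditions reproduce exactly $\Tr X=1$ (from varying $t$) and $X_{ij}=0$ on edges (from varying the free entries $a_{ij}$), recovering the constraints of $\vartheta_{\mathrm{sdp}}$. Strong duality (Slater, Theorem~\ref{thmstrongdual}) then forces equality: a strictly feasible dual point is $X=\tfrac1n\I$ (positive definite, trace one, zero off the diagonal and hence on all edges), and a strictly feasible primal point is any feasible $A$ together with a $t$ large enough that $t\I-A\succ 0$, so the duality gap vanishes and $\vartheta_{\mathrm{sdp}}(\gr)=\min_A\lambda_{\max}(A)$.

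The main obstacle is the remaining identity $\vartheta_{\mathrm{sdp}}(\gr)=\vartheta(\gr)$ relating the SDP to the handle form \eqref{eqlovasz}. One direction starts from an optimal $X$: a Cholesky factorization $X_{ij}=\braket{g_i}{g_j}$ yields vectors with $g_i\perp g_j$ on edges and $\sum_i\norm{g_i}^2=1$, while $\Tr(JX)=\norm{\sum_i g_i}^2$; normalizing to unit $v_i$ and taking the handle along $\sum_i g_i$ should produce an orthonormal representation of $\bar\gr$ with $\sum_i\braket{\psi}{v_i}^2=\Tr(JX)$. The delicate point --- and the reason this is more than a change of variables --- is that \eqref{eqlovasz} maximizes a \emph{convex} quadratic in the overlaps $\braket{\psi}{v_i}$, so proving that its optimum equals the (concave) SDP value requires showing the maximum is attained at vectors arising from an optimal $X$. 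I expect the matching of normalizations (unit handle and unit $v_i$ against the trace-one constraint on $X$), together with the degenerate vertices where $\braket{\psi}{v_i}=0$, to be where the real work lies.
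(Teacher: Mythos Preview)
The paper does not supply its own proof of Theorem~\ref{thmlovaszA}: it is quoted as one of Lov\'asz's equivalent characterizations, immediately rephrased as the SDP \eqref{eqlovaszsdp}, and the reader is referred to \cite{Lovasznumber,LovaszNotes,KnuthlovaszS} for the argument. So there is no in-text proof to compare against.

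Your plan is the standard one, and the duality half is clean: the trace identity $\Tr(JX)=\Tr(AX)$ plus $A\preceq\lambda_{\max}(A)\I$ gives weak duality, and your Slater points ($X=\tfrac{1}{n}\I$ on one side, any feasible $A$ with $t$ large on the other) are exactly right, so $\vartheta_{\mathrm{sdp}}(\gr)=\min_A\lambda_{\max}(A)$ goes through as written. The only real gap is the one you flag yourself, $\vartheta_{\mathrm{sdp}}=\vartheta$. For $\vartheta\le\vartheta_{\mathrm{sdp}}$ your idea works once made precise: take $w_i=\braket{\psi}{v_i}\,v_i$, set $X_{ij}=\braket{w_i}{w_j}/T$ with $T=\sum_k\braket{\psi}{v_k}^2$, and use Cauchy--Schwarz on $\braket{\psi}{\sum_i w_i}=T$ to get $\Tr(JX)=\norm{\sum_i w_i}^2/T\ge T$. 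The reverse direction, however, does \emph{not} follow from merely normalizing the Gram vectors of an optimal $X$ and pointing the handle along $\sum_i g_i$ --- the resulting $\sum_i\braket{\psi}{v_i}^2$ need not recover $\Tr(JX)$. Lov\'asz's original argument instead lifts the $g_i$ into a space of one higher dimension, padding each with an extra coordinate chosen so that the $v_i$ become unit vectors while preserving edge-orthogonality and the handle overlap; your sketch is missing this lifting step, and without it the ``matching of normalizations'' you worry about cannot be completed.
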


Theorem \ref{thmlovaszA} can be formulated as the semidefinite program $(P)$:
\begin{align}\label{eqlovaszsdp}
 (P)=\begin{cases}
      \min &t\\
      \st& t\,\I-A\geq 0\\
      & A_{ij}=1 \;\;\text{if} \;\;i=j\;\; \text{or}\;\; \DE{i,j} \notin E\\
       \end{cases}
\end{align}
where $t\,\I-A\geq 0$ imposes that $t$ is greater than or equal to the maximum eigenvalue of a symmetric matrix $A$ satisfying \eqref{eqmatrixA}, and the
minimization picks the
smallest of the maximum eigenvalues. Characterization \eqref{eqlovaszsdp} makes it clear (see Section \ref{secopt}) 
that the Lovász number can be efficiently determined.

\section{The exclusivity graph}

The graph theoretic approach to contextuality and nonlocality, that we call CSW approach,  was first introduced in Ref.\cite{csw0} (and 
published later in Ref. \cite{csw}).
The idea is to associate with a given Bell scenario a graph called the \emph{exclusivity graph}. The exclusivity graph has the vertices labeled by the 
events\footnote{By event $(a,b|x,y)$ we denote the occurrence of Alice measuring $x$ and obtaining outcome $a$, and Bob measuring $y$ and 
obtaining outcome $b$.} of the particular Bell scenario $(a,b|x,y)$, and the edges connect vertices which correspond to exclusive events 
(\textit{e.g.}, Alice measuring $x$ 
and obtaining $a=0$ is exclusive with any event where Alice measures the same $x$ and obtains $a=1$). The idea of the CSW approach is to take into account that
the probabilities of exclusive events should sum up to no more than one.

\begin{definition}[Exclusivity graph]
 The exclusivity graph associated to a Bell scenario is a graph $\gr$ with vertices representing the possible events: $(a,b|x,y)$, and 
 adjacencies defined by
 \begin{align}
  (a,b|x,y)\sim (a',b'|x',y')\Leftrightarrow (x=x'\;\wedge\; a\neq a') \;\vee\;(y=y'\; \wedge\; b \neq b'),
 \end{align}
 where $\wedge$ and $\vee$ denotes respectively the logical operations AND and OR.
\end{definition}

Given a Bell scenario with boxes $\vec{P}(a,b|x,y)$, we denote a Bell expression by $\mathcal{S}=\DE{s^{a,b}_{x,y}}$ such that its value for a 
particular box is given by $\mathcal{S}(\vec{P})=\sum_{a,b,x,y}s^{a,b}_{x,y}P(a,b|x,y)$. Note that every Bell expression  can
be written with non-negative coefficients only\footnote{One just has to use the normalization condition of probability 
distributions, $\sum_{a,b}P(a,b|x,y)=1$ $\forall x,y$, in order to avoid the negative coefficients.}, $s^{a,b}_{x,y}\geq0$.

Now, given a Bell expression with positive coefficients $\mathcal{S}$, we can consider the \emph{weighted exclusivity graph} 
 $\gr(\mathcal{S})$, where the weight $s^{a,b}_{x,y}$  is attributed to each vertex 
$(a,b|x,y)$ of the exclusivity graph $\gr$. What is shown in the CSW approach \cite{csw} is that the local,
quantum and no-signaling values of a Bell expression $\mathcal{S}$, respectively  $\mathcal{S}_c$, $\mathcal{S}_q$, and $\mathcal{S}_{NS}$, are 
related with invariants of the associated weighted exclusivity graph.

\begin{theorem}[\cite{csw}]\label{thmcsw}
Given a Bell expression $\mathcal{S}$, and the weighted exclusivity graph $\gr(\mathcal{S})$, it holds
that
\begin{align}
 \alpha(\gr(\mathcal{S}))=\mathcal{S}_c \leq \mathcal{S}_q \leq \vartheta(\gr(\mathcal{S}))\leq \mathcal{S}_{NS}=\alpha^*(\gr(\mathcal{S})),
\end{align}
where $ \alpha(\gr(\mathcal{S}))$, $\vartheta(\gr(\mathcal{S}))$, and $\alpha^*(\gr(\mathcal{S}))$ are respectively the weighted independence number,
the weighted Lovász number, and the weighted fractional packing number.
\end{theorem}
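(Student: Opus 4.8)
The plan is to establish the chain by proving each (in)equality separately, reading from left to right. Throughout I write the Bell expression with non-negative coefficients, $\mathcal{S}(\vec{P})=\sum_{a,b,x,y}s^{a,b}_{x,y}P(a,b|x,y)$, and I identify each event $(a,b|x,y)$ with the corresponding vertex of $\gr(\mathcal{S})$ carrying weight $s^{a,b}_{x,y}$.

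First I would prove $\mathcal{S}_c=\alpha(\gr(\mathcal{S}))$. Since the classical value is attained by a deterministic box $P(a,b|x,y)=D(a|x)D(b|y)$, the set of events with $P=1$ is exactly the set of events that ``occur'' under the assignment $D$. Two such events can never be exclusive: an edge of $\gr(\mathcal{S})$ requires a shared input with distinct outputs, which a function $D$ forbids. Hence the occurring events form an independent set and $\mathcal{S}(\vec{P})=\sum_{i\in I}s_i$. Conversely, any independent set assigns at most one output to each input and therefore extends (using $s_i\geq 0$, so that adding occurring events never hurts) to a full deterministic box. Maximising over deterministic strategies thus coincides with maximising the total weight over independent sets, which is precisely $\alpha(\gr(\mathcal{S}))$. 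The inequality $\mathcal{S}_c\leq\mathcal{S}_q$ is then immediate from $\mathcal{L}\subseteq\mathcal{Q}$.

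Next I would bound $\mathcal{S}_q\leq\vartheta(\gr(\mathcal{S}))$. By Naimark's theorem I may take the optimal strategy to consist of projective measurements $\{M_x^a\}$, $\{M_y^b\}$ on a pure state $\ket{\psi}$, so that $P(a,b|x,y)=\norm{(M_x^a\otimes M_y^b)\ket{\psi}}^2$. For each event with $P(a,b|x,y)>0$ define the unit vector $\ket{w_{a,b,x,y}}=(M_x^a\otimes M_y^b)\ket{\psi}/\sqrt{P(a,b|x,y)}$. For exclusive events the product of the corresponding projectors vanishes (e.g.\ $M_x^aM_x^{a'}=0$ when $a\neq a'$), so $\braket{w_i}{w_j}=0$ whenever $\{i,j\}\in E$; that is, $\{\ket{w_i}\}$ is an orthonormal representation of $\overline{\gr(\mathcal{S})}$. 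Taking $\ket{\psi}$ itself as the handle and using that $\bra{\psi}(M_x^a\otimes M_y^b)\ket{\psi}=P(a,b|x,y)$ is real, one gets $(\braket{\psi}{w_i})^2=P(a,b|x,y)$, whence $\mathcal{S}_q=\sum_i s_i(\braket{\psi}{w_i})^2$ is a feasible value of the maximisation defining the weighted Lovász number (whose real and complex versions coincide), so $\mathcal{S}_q\leq\vartheta(\gr(\mathcal{S}))$; events with $P=0$ contribute nothing and can be assigned any compatible unit vector.

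Finally I would treat the two rightmost relations. The no-signaling value is the linear program $\max\sum_i s_iP_i$ over boxes obeying \eqref{eqNS} and normalisation. Restricted to a set of mutually exclusive events (a clique $C$ of $\gr(\mathcal{S})$), normalisation forces $\sum_{i\in C}P_i\leq 1$; conversely, every weight assignment meeting all clique constraints is realisable by some no-signaling box, so this LP is exactly the fractional packing program, giving $\mathcal{S}_{NS}=\alpha^*(\gr(\mathcal{S}))$. The bound $\vartheta\leq\alpha^*$ is the graph-theoretic half of the Lov\'asz sandwich: the SDP \eqref{eqlovaszsdp} computing $\vartheta$ tightens the LP defining $\alpha^*$, so weak duality (Theorem \ref{thmweakdual}) yields the inequality. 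The main obstacle I anticipate is the converse direction of $\mathcal{S}_{NS}=\alpha^*$: checking that the clique (normalisation) inequalities are not merely necessary but also \emph{sufficient} to be saturated by a genuine no-signaling box requires an explicit realisation argument rather than a one-line estimate.
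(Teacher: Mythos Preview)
The paper does not actually prove Theorem~\ref{thmcsw}; it is quoted from \cite{csw} and only an informal paragraph of intuition is given after the definitions. Your sketch fleshes out exactly that intuition, so there is no ``different route'' to compare---your arguments for $\mathcal{S}_c=\alpha$, $\mathcal{S}_q\le\vartheta$, and the no-signaling/clique correspondence are the standard ones the paper alludes to, just written out more carefully.

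One step is muddled, though. For $\vartheta\le\alpha^*$ you invoke the minimisation SDP \eqref{eqlovaszsdp} and say it ``tightens the LP defining $\alpha^*$, so weak duality yields the inequality.'' That does not parse: \eqref{eqlovaszsdp} is a \emph{minimisation} computing $\vartheta$, while $\alpha^*$ is a \emph{maximisation}; they are not a primal--dual pair, and ``tightening'' does not relate a min-program to a max-program in the direction you need. The clean argument uses the maximisation formulation \eqref{eqlovaszS} you already exploited: for any orthonormal representation $\{\ket{v_i}\}$ and handle $\ket{\psi}$, set $\omega_i=|\braket{\psi}{v_i}|^2$. On any clique $C$ the $\ket{v_i}$ are pairwise orthogonal, so Bessel's inequality gives $\sum_{i\in C}\omega_i\le\|\psi\|^2=1$; hence $(\omega_i)$ is feasible for \eqref{eqfracS} and $\sum_i s_i|\braket{\psi}{v_i}|^2\le\alpha^*$. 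Your honest flag on the converse direction of $\mathcal{S}_{NS}=\alpha^*$ is well placed: that realisability is where the actual work in \cite{csw} lies, and the paper here simply defers to the reference.
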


The graph invariants that appear in Theorem \ref{thmcsw} are defined bellow.
\begin{definition}
 The weighted version of the independence number $\alpha(\gr(\mathcal{S}))$, the Lovász number $\vartheta(\gr(\mathcal{S}))$, and the 
 fractional packing number $\alpha^*(\gr(\mathcal{S}))$
 of the weighted graph $\gr(\mathcal{S})$ are defined by the following optimization problems:
 
 \begin{align}\label{eqalphaS}
 \alpha(\gr(\mathcal{S}))=\begin{cases}
          \max &\sum_{i\in V} s_i \omega_i\\
          \st & \omega_i+\omega_j\leq 1 \;\text{if}\; \DE{i,j}\in E\\
          & \omega_i \in \DE{0,1}\; \forall i \in V.
         \end{cases}
\end{align}
 \begin{align}\label{eqlovaszS}
 \vartheta(\gr(\mathcal{S}))=\begin{cases}
          \max &\sum_{i\in V} s_i \de{\braket{\psi}{v_i}}^2\\
          \st & \braket{v_i}{v_j}=0 \;\text{if}\; \DE{i,j}\in E\\
          & \braket{v_i}{v_i}=1 \; \forall i \in V\\
          &\braket{\psi}{\psi}=1.
         \end{cases}
\end{align}
\begin{align}\label{eqfracS}
 \alpha^*(\gr(\mathcal{S}))=\begin{cases}
      \max &\sum_{i\in V}s_i\omega_i\\
      \st&  \omega_i \geq 0 \; \forall i \in V\\
      & \sum_{i\in K} \omega_i \leq 1\;\; \forall \; \text{clique} \; K\;\text{of}\; \gr(\mathcal{S}).
       \end{cases}
\end{align}
\end{definition}

In order to understand Theorem \ref{thmcsw}, note that the weighted independence number \eqref{eqalphaS} captures the idea that a deterministic probability $\omega_i$ 
is associated to vertex $i$, and therefore $\alpha(\gr(\mathcal{S}))$
gives the classical value of a Bell inequality $\mathcal{S}$. Concerning the Lovász number \eqref{eqlovaszS}, 
the term ${\braket{\psi}{v_i}}^2$ can be interpreted as the probability resultant of a
projective measurement (determined by the vector $\ket{v_i}$) in the quantum state $\ket{\psi}$. However note that no tensor product structure was imposed
for these measurements and state, which justifies the upper bound $\mathcal{S}_q \leq \vartheta(\gr(\mathcal{S}))$.
For the fractional packing number \eqref{eqfracS}, the constraint that $\sum_{i\in K} \omega_i \leq 1$ for all clique\footnote{A
clique in the graph $\gr$ is a subset of vertices $K\subseteq V$ such that every two vertices in $K$ are adjacent.} $K$ captures the 
no-signaling conditions \eqref{eqNS} and normalization of the probabilities ($\sum_{a,b}P(a,b|x,y)=1 \forall (x,y)$), and therefore it justifies
$\mathcal{S}_{NS}=\alpha^*(\gr(\mathcal{S}))$.

The weighted Lovász number \eqref{eqlovaszS} was introduced and studied in Ref. \cite{KnuthlovaszS}, and it also admits an SDP characterization.
The weighted independence number \eqref{eqalphaS} is a $\DE{0,1}$-integer program which is in general NP-complete, and the weighted 
fractional packing number\footnote{The fractional packing number is easy to determine once the cliques of the graph are known, 
however if this is not the case, the problem of finding the cliques of a graph  is also NP-complete.} \eqref{eqfracS} is defined by an LP.

The CSW approach had great impact in the field of nonlocality and contextuality, allowing us to use techniques of
combinatorics to derive many results in these fields (see for example \cite{barbaragraph,xorgraphs} and references therein).
The CSW approach was further refined in Ref. \cite{BelenCombinatorial} where the authors consider hypergraphs from which it is possible 
to extract information 
about the complete nonlocality/contextuality scenario, and not only about the Bell inequalities.

For contextuality it holds that given an exclusivity graph $\gr$ there 
always exists a contextuality scenario and an inequality whose quantum value 
is equal to the Lovász number of the graph $\gr$. This can be seen by the definition \eqref{eqlovasz}, where one can interpret the vectors $\ket{v_i}$
as projective measurements being performed on the quantum state $\ket{\psi}$. However for nonlocality, equality does not hold in general
since the tensor product structure is not captured by the Lovász number.
In Ref. \cite{multilovasz} the authors introduce the \emph{exclusivity multigraph} with 
the aim to capture the structure 
presented in the nonlocality scenario.
In the exclusivity multigraph, exclusive events are connected by colored edges, which make explicit if the exclusivity is due to
Alice's measurement or due to Bob's measurement. They defined the \emph{multigraph Lovász number} $\theta(\gr(\mathcal{S}))$ which satisfies
$\mathcal{S}_q=\theta(\gr(\mathcal{S}))$ for some Bell inequality $\mathcal{S}$. Therefore, $\theta(\gr(\mathcal{S}))$ is
a quantity that may be uncomputable, but 
an hierarchy of SDPs 
(based on the NPA hierarchy \cite{NPA}) can be used to derive upper bounds on  $\theta(\gr(\mathcal{S}))$.

\part{Results}


\chapter{{\xor} games with no-quantum advantage and the Shannon capacity of graphs}\label{chapterxor}
\chaptermark{{\xor} games and Shannon capacity of graphs}

In this chapter we present results of Ref. \cite{GMxorShannon}:
\begin{center}
\begin{minipage}{12.5cm}
\noindent\textit{Characterizing the Performance of {\xor} Games and the Shannon Capacity of Graphs}\\
   R. Ramanathan, A. Kay, \textbf{G. Murta} and P. Horodecki\\
 \href{http://link.aps.org/doi/10.1103/PhysRevLett.113.240401}{\textbf{Phys. Rev. Lett., \textbf{113}, 240401, (2014)}}.
\end{minipage}
\end{center}
Our main result is to use insights from the field of nonlocality in order to derive a result
in classical information theory. More specifically:
We study the performance of quantum players in two-player {\xor} games, and we derive a 
set of necessary and sufficient conditions such that quantum players cannot perform any better than classical players. 
We then consider the exclusivity graph associated to an {\xor} game and examine
its Shannon capacity. This allows us to specify new families of graphs for which the Shannon capacity can be determined. 

\section{Motivation}

In Chapter \ref{chaptergraphNL} we have seen that, according to the CSW approach \cite{csw}, we can associate a weighted graph $\gr(\mathcal{S})$
to a Bell inequality $\mathcal{S}= \DE{s^{a,b}_{x,y}}$ such that the classical, 
quantum, and no-signaling values of $\mathcal{S}$ are related to graph invariants.

For an {\xor} game with $m$ questions per player, uniformly chosen by the referee, 
$g^{\oplus}(f,\frac{1}{m^2})$, we
can associate the non-weighted graph $\gr$ which only contains the vertices representing events satisfying
the winning condition ($(a,b|x,y)$ s.t. $a\oplus b=f(x,y)$), and we have the relation
\begin{align}\label{eqrelgame}
m^2\omega_c(g^{\oplus})=\alpha(\gr)\leq m^2\omega_q(g^{\oplus}) \leq\vartheta(\gr).
\end{align}
At the same time we have seen that the confusability graph $\gr$ associated to a classical channel $\mathcal{C}$ satisfies
\begin{align}\label{eqrelgraph}
 \alpha(\gr) \leq \Theta(\gr) \leq \vartheta(\gr).
\end{align}

In what concerns the Shannon capacity of a graph $\gr$, very few classes of graphs are known for which $\Theta(\gr)$ has been
established analytically. Among these classes are:
\begin{itemize}
 \item perfect graphs\footnote{A graph is perfect if the chromatic number of every induced subgraph equals the 
 size of the largest clique of that subgraph.}:
 have $ \alpha(\gr) =\alpha^*(\gr)$ (see \cite{Cabelloperfectgraph}).
 \item Kneser graphs\footnote{Kneser graphs $KG_{n,k}$ are graphs whose vertices correspond to $k$-element 
 subsets of a set of $n$ elements 
 ($S_i\subset\DE{1,\ldots,n}$, $|S_i|=k \;\forall i$), and vertices $i$ and $j$ are adjacent
 $i\sim j \Leftrightarrow S_i\cap S_j = \emptyset$.}$KG_{n,k}$:
 satisfy $\alpha(KG_{n,k})=\vartheta(KG_{n,k})$ \cite{Lovasznumber}.
 \item vertex-transitive self-complementary graphs: satisfy $\Theta(\gr)=\vartheta(\gr)$  \cite{Lovasznumber}.
 \item K\"onig-Egerv\'ary graphs\footnote{A graph $\gr$ is a K\"onig-Egerv\'ary graph if it satisfies $\alpha(\gr)+\nu(\gr)=|V|$, where
 $\nu(\gr)$ is the maximum size of a matching (for the definition of a matching, see footnote \ref{ftmatching}).}:  satisfy $\alpha(\gr)=\alpha^*(\gr)$ \cite{KEgraphs}.
\end{itemize}


In Section \ref{secresultsxor} we are going to discuss a bit more about some properties of these families of graphs. However we are not going to give any
detailed presentation, but rather state only the properties that will be useful for our 
discussion. For the reader interested in learning more about graphs we refer to Ref. \cite{Diestelgraph}.

Note that, except for the vertex-transitive self-complementary graphs, all the classes satisfy $\alpha(\gr)=\vartheta(\gr)$.
And actually this constitutes a simple way to determine the Shannon capacity of a graph: to prove that $\alpha(\gr)=\vartheta(\gr)$.
From Eq. \eqref{eqrelgame}, we see 
that for an {\xor}-game exclusivity graph $\gr$ this is only possible if 
$\omega_c(g^{\oplus})=\omega_q(g^{\oplus})$. Our goal in this chapter is to characterize {\xor} games with no-quantum advantage, 
$\omega_c(g^{\oplus})=\omega_q(g^{\oplus})$, and, with their corresponding graphs in hand,  to study the value
of $\vartheta(\gr)$. 

\section{{\sc xor} games and their graphs} 

Let us consider an {\sc xor} game $g^{\oplus}$ where each player receives one among $m$ possible questions, $|\inp_A|=|\inp_B|=m$, chosen by 
the referee with probability $p(x,y)$. The 
associated game matrix \eqref{eqxormatrix} is given by 
\begin{equation}
\Phi=\sum_{x,y\in[m]}(-1)^{f(x,y)}p(x,y)\ketbra{x}{y},
\end{equation}
where $a\oplus b=f(x,y)$ is the winning condition of the game, and $[m]$ denotes the set $\DE{1,\ldots,m}$.

If the referee choses questions with uniform distribution, $p(x,y)=\frac{1}{m^2}$, we can simply consider the non-normalized game matrix:
\begin{equation}\label{eqphiequi}
\tilde{\Phi} = \sum_{x,y \in [m]} \Phi_{xy}\ketbra{x}{y},
\end{equation}
where $\Phi_{xy} \defeq (-1)^{f(x,y)}$.


Following the CSW approach \cite{csw}, every {\sc xor} game has an associated graph $\gr$ \cite{xorgraphs,csw,BelenCombinatorial}, where the 
vertices are the events $(a,b|x,y)$ that satisfy the winning condition\footnote{Note that we could include in the graph all the events
$(a,b|x,y)$ and
just give a weight zero to the ones that do not satisfy $a\oplus b=f(x,y)$.} of the game (\ie such that $a\oplus b=f(x,y)$),
and the edges are determined by
\begin{align}
 (a,b|x,y)\sim(a',b'|x',y') \Leftrightarrow \de{x=x' \wedge a \neq a'} \vee \de{y=y' \wedge b \neq b'}.
\end{align}

Now, by making use of the winning relation $(-1)^{a\oplus b}=\Phi_{xy}$, we can pa\-ram\-e\-ter\-ize the vertices of an 
{\xor}-game graph with only
three parameters $(x,y,a)$. 
\begin{definition}\label{defxorgraph}
The graph $\mathcal{G}(\Phi)$ associated to the {\sc xor} game with game matrix $\Phi$ consists of $2m^2$ vertices, which 
can be labeled as $(x,y,a)$ where $x,y\in\DE{1,\ldots,m}$ and $a\in\{0,1\}$. Two vertices $(x,y,a),(x',y',a')\in V$ form 
an edge of the graph iff
\begin{align}
(x=x' \wedge a\neq a') \;\vee\; (y=y' \wedge (-1)^{a\oplus a'} \neq \Phi_{xy}\Phi_{x'y'}).
\end{align}
\end{definition}

Some properties of the {\xor}-game graphs are given in the following proposition.

\begin{proposition}
An {\xor} game-graph $\mathcal{G}(\Phi)$  has the following properties:
\begin{itemize}
 \item $\mathcal{G}(\Phi)$  is $(2m-1)$-regular\footnote{A 
 graph $\gr$ is said to be \emph{$k$-regular} if every vertex has degree $k$, \ie every vertex is connected to $k$
 other vertices.}.
 \item  $\mathcal{G}(\Phi)$ is triangle-free\footnote{A triangle is a set of three vertices such that all of them are connected. 
 A graph is 
 triangle-free if it has no subset of vertices forming a triangle.}.
\item $\mathcal{G}(\Phi)$  has perfect matching\footnote{A \emph{matching} in $\gr$ is a set of edges such that no two of them have a vertex in common. A 
perfect matching is a matching that covers all the vertices of the graph.\label{ftmatching}}.
\end{itemize}
Moreover, the adjacency matrix of $\mathcal{G}(\Phi)$ can be expressed as
\begin{align} \label{eqadjmatrix}
\mathcal{A}(\mathcal{G}(\Phi)) =& \I_m \otimes(\proj{\1}-\I_m)\otimes \sigma_X+\half\proj{\mathbf{1}}\otimes\I_m\otimes(\I_2+\sigma_X)	\\
&-\half [D(\proj{\1}\otimes\I_m)D]\otimes(\I_2-\sigma_X)\nonumber
\end{align}
where $\sigma_X$ is the Pauli-$X$ matrix, $\ket{\1}$ is the all-ones vector $\ket{\1}=\sum_{x\in[m]}\ket{x}$, and the matrix $D$ is defined as
\begin{equation}
D=\sum_{x,y\in[m]}\Phi_{xy}\proj{x,y}.
\end{equation}
\end{proposition}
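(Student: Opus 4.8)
The plan is to read off all four claims directly from the labelling of the vertices by triples $(x,y,a)$ and the edge rule of Definition~\ref{defxorgraph}. I identify the vertex space with $\C^m\otimes\C^m\otimes\C^2$, writing $\ket{x}\otimes\ket{y}\otimes\ket{a}$, and abbreviate the two disjuncts of the edge rule as $(A)\colon x=x'\wedge a\neq a'$ and $(B)\colon y=y'\wedge(-1)^{a\oplus a'}\neq\Phi_{xy}\Phi_{x'y'}$, so that adjacency is $A\vee B$. For regularity I fix a vertex $(x,y,a)$ and count its $A$- and $B$-neighbours separately: the $A$-neighbours are the $m$ vertices $(x,y',1-a)$, $y'\in[m]$, and the $B$-neighbours are the $m$ vertices $(x',y,a')$, $x'\in[m]$, with $a'$ forced by the sign condition for each $x'$. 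Their only common vertex is $(x,y,1-a)$, so inclusion--exclusion gives degree $m+m-1=2m-1$. For the perfect matching I exhibit one explicitly: the pairs $\DE{(x,y,0),(x,y,1)}$ over $x,y\in[m]$ cover every vertex exactly once and each is an $A$-edge.

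For triangle-freeness the guiding observation is that \emph{every} edge joins vertices agreeing in $x$ (type $A$) or in $y$ (type $B$). In a putative triangle I colour each of its three edges by a coordinate in which its endpoints agree; with three edges and two colours some two edges share a colour, and since any two edges of a triangle meet in a vertex, all three vertices then agree in that coordinate. Hence any triangle lies entirely within a fixed $x$ or within a fixed $y$. Within a fixed $x$ the rule reduces to adjacency $\iff a\neq a'$; within a fixed $y$ it reduces to adjacency $\iff \ell(x,a)\neq\ell(x',a')$ where $\ell(x,a):=(-1)^a\Phi_{xy}\in\DE{+1,-1}$ (one checks the reduced rule is uniform by substituting $x=x'$ into $(B)$, which again yields $a\neq a'$). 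Either way adjacency is governed by a single two-valued label, so among three vertices two must share the label and be non-adjacent, excluding the triangle.

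The adjacency matrix is the computational heart of the statement, and I expect the bookkeeping in the $B$-term to be the main obstacle. I write the entry as $\mathbf 1[A\vee B]=\mathbf 1[A]+\mathbf 1[B]-\mathbf 1[A\wedge B]$. Because $(A)$ leaves $y,y'$ free, $\mathbf 1[A]=\I_m\otimes\proj{\1}\otimes\sigma_X$; because $A\wedge B$ forces $x=x'$, $y=y'$, $a\neq a'$, one gets $\mathbf 1[A\wedge B]=\I_m\otimes\I_m\otimes\sigma_X$, whose difference is the first displayed term $\I_m\otimes(\proj{\1}-\I_m)\otimes\sigma_X$. For $\mathbf 1[B]$ I set $s:=\Phi_{xy}\Phi_{x'y}$ and note the $a$-block is $\sigma_X$ when $s=+1$ and $\I_2$ when $s=-1$, which I package as $\tfrac{1+s}{2}\sigma_X+\tfrac{1-s}{2}\I_2=\tfrac12(\I_2+\sigma_X)+\tfrac{s}{2}(\sigma_X-\I_2)$. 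The $s$-independent half, together with $\delta_{yy'}$ and no constraint on $x,x'$, gives $\half\proj{\1}\otimes\I_m\otimes(\I_2+\sigma_X)$; for the $s$-dependent half the key recognition is that the weight $\Phi_{xy}\Phi_{x'y'}\delta_{yy'}$ is precisely the $(x,y),(x',y')$ entry of $D(\proj{\1}\otimes\I_m)D$, since $D$ is diagonal with entries $\Phi_{xy}$ and $\proj{\1}\otimes\I_m$ has entry $\delta_{yy'}$. Using $\sigma_X-\I_2=-(\I_2-\sigma_X)$ this produces $-\half[D(\proj{\1}\otimes\I_m)D]\otimes(\I_2-\sigma_X)$, completing the claimed identity. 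As a final sanity check I would confirm that the diagonal vanishes, the $+\half$ contributed by the second term cancelling the $-\half$ from the third, as it must for an adjacency matrix.
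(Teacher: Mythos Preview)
Your proposal is correct and follows essentially the same route as the paper. The regularity count, the explicit perfect matching $\{(x,y,0),(x,y,1)\}$, and the adjacency-matrix decomposition via $\mathbf 1[A\vee B]=\mathbf 1[A]+\mathbf 1[B]-\mathbf 1[A\wedge B]$ all match the paper's argument term for term (the paper phrases the split as ``Alice-only'' plus ``Bob-or-both'', which is algebraically identical to your inclusion--exclusion). Your triangle-freeness argument---colouring edges by the shared coordinate, forcing all three vertices into a single $x$- or $y$-fibre, then observing adjacency there is governed by a two-valued label---is in fact more explicit than the paper, which merely asserts that a ``straightforward analysis'' yields a contradiction.
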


 In order to see that an {\xor}-game graph $\mathcal{G}(\Phi)$  is $(2m-1)$-regular note that every vertex has $m$ neighbors due to exclusivity with 
 respect to Alice's
 measurement, and $m$ neighbors due to exclusivity with respect to Bob's measurement, 
 where one of these neighbors is exclusive due to both. Therefore, each 
 vertex has $2m-1$ neighbors.
 
By taking two adjacent vertices $(x,y,a)\sim(x',y',a')$, a straightforward analysis shows that if there exists a third vertex $(x'',y'',a'')$ which
 is adjacent to $(x,y,a)$ and to $(x',y',a')$ that would lead to a contradiction, hence 
 there can be no triangle in $\mathcal{G}(\Phi)$.
 
 A perfect matching for $\mathcal{G}(\Phi)$ is obtained by taking the edges corresponding to the same inputs for both parties $\DE{(x,y,0),(x,y,1)} \; \forall x,y$.

Concerning the adjacency matrix \eqref{eqadjmatrix}, the first term, $\I_m \otimes(\proj{\1}-\I_m)\otimes \sigma_X$,
accounts for the edges representing exclusivity
only due to Alice's measurement since it corresponds to
\begin{align}
 \sum_{x,y,y',a}\ketbra{xya}{xy'(a\oplus 1)} -\sum_{x,y,a}\ketbra{xya}{xy(a\oplus1)}.
\end{align}
The remaining terms, $\half\proj{\1}\otimes\I_m\otimes(\I_2+\sigma_X)-\half [D(\proj{\1}\otimes\I_m)D]\otimes(\I_2-\sigma_X)$, 
 accounts for exclusivity due to Bob and due to both, and corresponds to:
 \begin{align}
  \sum_{x,x',y,a,a'}\frac{1}{2}\de{1-(-1)^{a\oplus a'}\Phi_{xy}\Phi_{x'y}}\ketbra{xya}{x'ya'}.
 \end{align}

Finally we present the spectrum of the adjacency matrix, which will be very important in the proof of the main result of this chapter.

\begin{theorem}\label{thmAeigen}
The adjacency matrix of an {\xor}-game graph $\mathcal{G}(\Phi)$, $\mathcal{A}(\mathcal{G}(\Phi))$, has the following spectrum
and corresponding degeneracies:
\begin{equation}
\label{adj-spec}
\text{spec}(\mathcal{A}(\mathcal{G}(\Phi)))=\begin{cases}
2m-1 & \times  1 	\\
m-1 & \times 2m-2	\\
-1 & \times (m-1)^2 \\
1-m\pm \lambda_z & \times 1 \\
1 &  \times m(m-2) 
\end{cases}.
\end{equation}
where $\lambda_z$ denotes the $m$ singular values of $\tilde{\Phi}$. 
 \end{theorem}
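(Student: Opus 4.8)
The plan is to exploit the explicit tensor form of the adjacency matrix in Eq.~\eqref{eqadjmatrix}, whose only dependence on the $\C^2$ output factor is through $\sigma_X$, $\I_2+\sigma_X$ and $\I_2-\sigma_X$. First I would pass to the eigenbasis $\ket{+},\ket{-}=\frac{1}{\sqrt2}(\ket 0\pm\ket 1)$ of $\sigma_X$, using $\sigma_X\ket{\pm}=\pm\ket{\pm}$, $\I_2+\sigma_X=2\proj{+}$ and $\I_2-\sigma_X=2\proj{-}$. Since the second term of \eqref{eqadjmatrix} contains only $\I_2+\sigma_X$ and the third only $\I_2-\sigma_X$, the matrix block-diagonalizes into two $m^2\times m^2$ blocks on $\C^m_x\otimes\C^m_y$: on the $\ket{+}$ sector $M_+=\I_m\otimes(\proj{\1}-\I_m)+\proj{\1}\otimes\I_m$, and on the $\ket{-}$ sector $M_-=-\I_m\otimes(\proj{\1}-\I_m)-D(\proj{\1}\otimes\I_m)D$ (first factor indexing $x$, second indexing $y$, as in \eqref{eqadjmatrix}). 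The full spectrum is $\mathrm{spec}(M_+)\cup\mathrm{spec}(M_-)$, so the task reduces to diagonalizing these two blocks.

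The block $M_+$ is the easy part: both summands are tensor products of operators that are simultaneously diagonal in the orthogonal splitting $\C^m=\mathrm{span}\{\ket{\1}\}\oplus\ket{\1}^{\perp}$ of each factor, since $\proj{\1}$ has eigenvalue $m$ on $\ket{\1}$ and $0$ on $\ket{\1}^{\perp}$, while $\proj{\1}-\I_m$ has eigenvalues $m-1$ and $-1$. Evaluating $M_+$ on the four product sectors $\ket{\1}_x\otimes\ket{\1}_y$, $\ket{\1}_x\otimes\ket{\1}^{\perp}_y$, $\ket{\1}^{\perp}_x\otimes\ket{\1}_y$ and $\ket{\1}^{\perp}_x\otimes\ket{\1}^{\perp}_y$ gives $2m-1$ $(\times 1)$, then $m-1$ and $m-1$ (total multiplicity $2m-2$), and $-1$ $(\times(m-1)^2)$. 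These reproduce the first three lines of \eqref{adj-spec}, so the entire $\tilde{\Phi}$-independent part of the spectrum comes from the $\ket{+}$ sector.

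The content of the theorem lives in $M_-$. Writing $-\I_m\otimes(\proj{\1}-\I_m)=\I_{m^2}-\I_m\otimes\proj{\1}$, I would set $M_-=\I_{m^2}-N$ with $N=\I_m\otimes\proj{\1}+D(\proj{\1}\otimes\I_m)D$, so $\mathrm{spec}(M_-)=1-\mathrm{spec}(N)$. The key reformulation is that $N$ is, up to a factor $m$, a sum of two rank-$m$ orthogonal projectors. With $\ket{\hat{\1}}=\ket{\1}/\sqrt m$ and $\ket{d_y}=\sum_x\Phi_{xy}\ket{x}$ (which is a unit vector after dividing by $\sqrt m$, because $\norm{d_y}^2=\sum_x\Phi_{xy}^2=m$), one checks $\I_m\otimes\proj{\1}=m\,\Pi_\phi$ and $D(\proj{\1}\otimes\I_m)D=\sum_y\proj{d_y}\otimes\proj{y}=m\,\Pi_\psi$, where $\Pi_\phi=\sum_x\proj{\phi_x}$ and $\Pi_\psi=\sum_y\proj{\psi_y}$ project onto the orthonormal families $\ket{\phi_x}=\ket{x}\otimes\ket{\hat{\1}}$ and $\ket{\psi_y}=\tfrac1{\sqrt m}\ket{d_y}\otimes\ket{y}$. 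Thus $N=m(\Pi_\phi+\Pi_\psi)$.

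To diagonalize a sum of two projectors I would invoke the two-subspace (Jordan / principal-angle) decomposition: the nonzero eigenvalues of $\Pi_\phi+\Pi_\psi$ are $1\pm\sigma_z$, where $\sigma_z$ are the singular values of the cross-Gram matrix $G$ with entries $G_{xy}=\braket{\phi_x}{\psi_y}$, while its kernel $(\mathrm{range}\,\Pi_\phi+\mathrm{range}\,\Pi_\psi)^{\perp}$ has dimension $m^2-2m$. A direct computation gives $G_{xy}=\tfrac1{\sqrt m}\braket{x}{d_y}\braket{\hat{\1}}{y}=\tfrac1m\Phi_{xy}$, i.e. $G=\tilde{\Phi}/m$, so $\sigma_z=\lambda_z/m$ with $\lambda_z$ the singular values of $\tilde{\Phi}$. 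Hence $N$ has eigenvalue $0$ with multiplicity $m^2-2m$ and eigenvalues $m(1\pm\sigma_z)=m\pm\lambda_z$, and $M_-=\I_{m^2}-N$ has eigenvalue $1$ with multiplicity $m(m-2)$ together with $1-m\pm\lambda_z$, which are the remaining lines of \eqref{adj-spec}. I would make the projector identity concrete from the singular value decomposition $G=\sum_z\sigma_z\ket{a_z}\bra{b_z}$ by forming $\ket{A_z}=\sum_x(a_z)_x\ket{\phi_x}$ and $\ket{B_z}=\sum_y(b_z)_y\ket{\psi_y}$; one verifies $\braket{A_z}{B_{z'}}=\sigma_z\delta_{zz'}$ and that $\Pi_\phi+\Pi_\psi$ preserves each $\mathrm{span}\{\ket{A_z},\ket{B_z}\}$, acting there as $\left(\begin{smallmatrix}1&\sigma_z\\ \sigma_z&1\end{smallmatrix}\right)$ with eigenvectors $\ket{A_z}\pm\ket{B_z}$ and eigenvalues $1\pm\sigma_z$. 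The main obstacle I anticipate is the bookkeeping at degenerate singular values: when $\sigma_z=0$ the pair $\ket{A_z},\ket{B_z}$ is already orthonormal (two eigenvalues equal to $1$), and when $\sigma_z=1$ — which forces $\tilde{\Phi}$ to be rank one, i.e. the ``local'' matrices $\Phi_{xy}=\pm u_xv_y$ — the two vectors coincide and a $2$-dimensional block collapses to a single eigenvalue $2$ of $\Pi_\phi+\Pi_\psi$. I must check that in each case the multiplicities still sum to the stated values and to the total dimension $1+(2m-2)+(m-1)^2+2m+m(m-2)=2m^2$.
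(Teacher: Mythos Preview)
Your proof is correct and follows essentially the same route as the paper: both block-diagonalize via the $\ket{\pm}$ eigenbasis of $\sigma_X$, treat the $\ket{+}$ block $M_+$ through the splitting $\C^m=\mathrm{span}\{\ket{\1}\}\oplus\ket{\1}^{\perp}$ on each tensor factor, and for the $\ket{-}$ block build the nontrivial eigenvectors out of the singular vectors of $\tilde{\Phi}$. The paper writes down the eigenvectors $\ket{\eta_z^{\pm}}\propto \ket{\lambda_z^A}\otimes\ket{\1}\pm D(\ket{\1}\otimes\ket{\lambda_z^B})$ directly, verifies them, and then checks an operator identity to show that the remaining $m(m-2)$ eigenvalues equal $1$; your $\ket{A_z}\pm\ket{B_z}$ are exactly these vectors (up to normalization), so the constructions coincide. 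Your packaging of $M_-=\I_{m^2}-m(\Pi_\phi+\Pi_\psi)$ as the identity minus a sum of two rank-$m$ projectors with cross-Gram matrix $G=\tilde{\Phi}/m$ is a slightly more conceptual route: the Jordan principal-angle lemma then delivers both the $1-m\pm\lambda_z$ eigenvalues and the kernel multiplicity $m(m-2)$ in one stroke, without a separate identity check, and your degeneracy bookkeeping at $\sigma_z\in\{0,1\}$ resolves correctly (the ``lost'' eigenvalue $1-\sigma_z=0$ at $\sigma_z=1$ is recovered by the enlarged kernel). The difference is expository rather than substantive.
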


 The proof of Theorem \ref{thmAeigen} is presented in Appendix \ref{A-proofs}.

\section{No-quantum advantage} \label{secnoqadvantage}

We have seen in Chapter \ref{chapterNL} that quantum mechanics can lead to the violation of Bell inequalities,
contradicting the hypothesis of local realism  (Definition \ref{deflocalrealism}). This fact opens the possibility to 
explore quantum systems in many tasks, going beyond what can be achieved with classical systems.
In the framework of nonlocal games, quantum theory can make the players to succeed with higher probability.
However, it is not only the tasks in which quantum theory brings advantage that are of interest. The tasks in which
quantum systems can perform
no better than classical systems also tell us something about nature. Most of the proposed principles to explain the set of quantum correlations were based on 
tasks where quantum theory brings no advantage, as for example the principle of \emph{Information Causality} \cite{IC} and the principle of 
\emph{Local Orthogonality} \cite{LO} (which was based on the guess your neighbor's input (GYNI) multiplayer game \cite{GYNI}).
Another task which brings no quantum advantage, and also constitutes a class of {\xor} games, is 
the \emph{Nonlocal Computation} (NLC) task investigated in Ref. \cite{NLC}.

\begin{definition}[Nonlocal computation]\label{defNLC}
Consider that the referee picks a string of $n$ bits $\vec{z}=(z_1,\ldots,z_n)$ with an arbitrary probability distribution $p(\vec{z})$, and for each
bit $z_i$ he chooses randomly $x_i$ and $y_i$ such that $z_i=x_i\oplus y_i$. Then the  referee gives the input bit string $\vec{x}=(x_1,\ldots, x_n)$ for Alice
and $\vec{y}=(y_1,\ldots, y_n)$ for Bob. Upon receiving their inputs, Alice and Bob give respective binary outputs $a$ and $b$. Their goal
is to satisfy the winning condition
\begin{align}
 a\oplus b=f(\vec{z})=f(x_1 \oplus y_1, \ldots, x_n \oplus y_n)
\end{align}
for an arbitrary function $f(\vec{z})$.
\end{definition}

The constraint that the function $f$ depends only on the bits $z_i$, which are distributed between Alice and Bob by the 
relation $x_i\oplus y_i =z_i$, gives to this game the interpretation of
a distributed computation of function $f$ (nonlocal computation). Moreover, given a string $\vec{z}$, the referee
chooses with uniform distribution, among the $x_i$'s and $y_i$'s that satisfy $z_i=x_i \oplus y_i$, \ie
\begin{align}
 p(\vec{x},\vec{y})= \frac{1}{2^n} p(\vec{z}).
\end{align}
Therefore, even upon receiving their respective inputs Alice and
Bob get no information about the string $\vec{z}$.

In Ref. \cite{NLC}, the authors have shown that no matter with which probability distribution the referee chooses among strings $\vec{z}$,
the players sharing quantum resources cannot perform any better in the computation of function $f(\vec{z})$ than players with
only classical resources. Interestingly enough, as this is an {\xor} game, there always exist a no-signaling strategy which can 
compute $f(\vec{z})$ perfectly.
Given the lack of advantage even with the freedom of the referee's choice, and by the fact that more general probabilistic theories
can perform this task perfectly, the \emph{no-advantage for nonlocal computation} was also considered one of the principles
to distinguish the quantum set of correlations.

It was known that the nonlocal computation inequalities, for 2-bit and 3-bit input strings, do not constitute facets of 
the local polytope \cite{GYNI}, and recently it was proved that this is also
the case for any number of inputs \cite{GMfaces}\footnote{This result
was derived before the conclusion of this Thesis, but the full work was completed and 
published only a few months later. However I am allowing myself to violate causality and include
the journal reference here.}. Therefore, all the non-local computation inequalities are faces of lower dimension.
A facet defining Bell inequality with no-quantum advantage would also be a facet of the  set of quantum correlations.
It is an open problem whether there exist such facets in the set of bipartite quantum correlations (for multipartite scenarios
the GYNI game constitutes
facets of $\mathcal{Q}$ \cite{GYNI}).

\section{Results}\label{secresultsxor}

We now present the main results derived in Ref. \cite{GMxorShannon}.

\subsection*{{\xor} games with no quantum advantage}

We are interested in characterizing {\sc xor} games which have the property of no quantum advantage.
In the following we present a necessary and sufficient condition for $\omega_c = \omega_q$, and a very simple sufficient 
condition to guarantee no-quantum advantage.


%
\begin{theorem}\label{thmxoriff}
Consider a two-party {\sc xor} game with game matrix ${\Phi}$ (with no all-zero row or column)
for which $S_c=\ket{s^A}\bra{s^B}$ represents the optimal classical strategy matrix. 
Let $\Sigma = \text{Diag}(\{\bra{i}{\Phi}\ket{s^B}\braket{s^A}{i}\}_{i=1}^m)$ 
and $\Lambda=\text{Diag}(\{\braket{i}{s^B}\bra{s^A}{\Phi}\ket{i}\}_{i=1}^m)$.
There is no quantum advantage for ${\Phi}$ if and only if $\Sigma, \Lambda > 0$ and 
\begin{equation}
\label{nec-suff-1}
\rho(\Lambda^{-1} {\Phi}^T \Sigma^{-1} {\Phi}) = 1,
\end{equation}
where $\rho(.)$ denotes the spectral radius\footnote{The spectral radius is the maximum eigenvalue in modulus of a matrix.}. 
\end{theorem}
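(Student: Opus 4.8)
The plan is to establish the equivalence through semidefinite duality, building on the SDP characterization of the quantum bias from Theorem \ref{thmxorSDP}; since the value and bias are affinely related, no quantum advantage means exactly $\epsilon_c = \epsilon_q$. Recall $\epsilon_q$ is the optimum of $\max \Tr(\tilde\Phi_s\mathcal{X})$ over $\mathcal{X}\succeq 0$ with unit diagonal, where $\tilde\Phi_s = \left(\begin{smallmatrix} 0 & \frac12\Phi \\ \frac12\Phi^T& 0\end{smallmatrix}\right)$. First I would write the dual: minimize $\Tr(Y)$ over \emph{diagonal} $Y = \left(\begin{smallmatrix} Y_A & 0 \\ 0 & Y_B\end{smallmatrix}\right)$ subject to $Y \succeq \tilde\Phi_s$. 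Because $\mathcal{X} = \I$ is strictly feasible for the primal, Slater's condition (Theorem \ref{thmstrongdual}) guarantees strong duality, so $\epsilon_q$ equals the dual optimum.

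Next, the classical strategy matrix $S_c = \ket{s^A}\bra{s^B}$ with $\pm1$ entries yields the rank-one primal-feasible point $\mathcal{X}^* = vv^T$, where $v = \left(\begin{smallmatrix}\ket{s^A}\\ \ket{s^B}\end{smallmatrix}\right)$ has all-ones diagonal and objective value $\bra{s^A}\Phi\ket{s^B} = \epsilon_c$. Complementary slackness then dictates that any matching dual optimum must annihilate $v$, i.e. $(Y - \tilde\Phi_s)v = 0$, which reads $Y_A\ket{s^A} = \frac12\Phi\ket{s^B}$ and $Y_B\ket{s^B} = \frac12\Phi^T\ket{s^A}$. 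Since $\ket{s^A}$ and $\ket{s^B}$ have only nonzero ($\pm1$) entries and $Y$ is diagonal, these equations pin $Y$ down uniquely as $Y = \frac12\,\mathrm{Diag}(\Sigma,\Lambda)$, and a direct summation gives $\Tr(Y) = \frac12(\Tr\Sigma + \Tr\Lambda) = \epsilon_c$, since $\Tr\Sigma = \Tr\Lambda = \bra{s^A}\Phi\ket{s^B}$.

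With this candidate in hand the statement reduces to deciding when $Y$ is dual-feasible, i.e. when $\left(\begin{smallmatrix}\Sigma & -\Phi \\ -\Phi^T& \Lambda\end{smallmatrix}\right)\succeq 0$. For the ``if'' direction I would invoke the Schur-complement criterion: assuming $\Sigma,\Lambda > 0$, positivity is equivalent to $\Lambda - \Phi^T\Sigma^{-1}\Phi \succeq 0$, which after the congruence by $\Lambda^{-1/2}$ amounts to $\lambda_{\max}(\Lambda^{-1/2}\Phi^T\Sigma^{-1}\Phi\Lambda^{-1/2}) \leq 1$; as this matrix shares its nonnegative spectrum with $\Lambda^{-1}\Phi^T\Sigma^{-1}\Phi$, feasibility holds exactly when $\rho(\Lambda^{-1}\Phi^T\Sigma^{-1}\Phi)\leq 1$. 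But $v$ already lies in the kernel of $Y - \tilde\Phi_s$ by construction, forcing $\ket{s^B}$ into the kernel of the Schur complement and hence the spectral radius to equal $1$ precisely. Feasibility of $Y$ then gives $\epsilon_q \leq \Tr(Y) = \epsilon_c$ via weak duality (Theorem \ref{thmweakdual}), which with $\epsilon_c \leq \epsilon_q$ yields no quantum advantage.

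For the converse I would run the argument backwards: no advantage makes $\mathcal{X}^*$ primal-optimal, so strong duality supplies a dual optimum $Y^*$ with $\Tr(Y^*)=\epsilon_c$; complementary slackness gives $(Y^*-\tilde\Phi_s)\mathcal{X}^*=0$, the uniqueness argument above identifies $Y^* = \frac12\mathrm{Diag}(\Sigma,\Lambda)$, and its feasibility delivers the block-PSD condition and therefore \eqref{nec-suff-1}. The main obstacle I anticipate is upgrading mere semidefiniteness to the strict positivity $\Sigma,\Lambda > 0$: here I would use that optimality of $S_c$ forces Alice to best-respond, $s^A_i = \mathrm{sign}(\bra{i}\Phi\ket{s^B})$, so that $\Sigma_{ii} = |\bra{i}\Phi\ket{s^B}|$, and the hypothesis that $\Phi$ has no all-zero row or column (together with non-degeneracy of the optimal response) rules out vanishing diagonal entries; the symmetric statement handles $\Lambda$. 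This strictness is precisely what makes $\Sigma^{-1}$ and $\Lambda^{-1}$ well-defined throughout, which is why the strict inequalities appear in the hypothesis.
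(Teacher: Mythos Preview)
Your proposal follows essentially the same route as the paper: SDP duality for the quantum bias, strong duality via Slater (with $\mathcal{X}=\I$ strictly feasible), complementary slackness pinning the dual diagonal to $\frac12\,\mathrm{Diag}(\Sigma,\Lambda)$, and the block-PSD condition $\left(\begin{smallmatrix}\Sigma & -\Phi \\ -\Phi^T & \Lambda\end{smallmatrix}\right)\succeq 0$ translated into the spectral-radius criterion. Where you spell out the Schur complement and congruence by $\Lambda^{-1/2}$, the paper simply cites the corresponding matrix-analysis result from Horn--Johnson; otherwise the arguments coincide.

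The one weak spot is your handling of $\Sigma,\Lambda>0$ in the converse direction. Your best-response argument correctly yields $\Sigma_{ii}=|\bra{i}\Phi\ket{s^B}|\geq 0$, but the hypothesis ``no all-zero row'' does \emph{not} by itself prevent $\bra{i}\Phi\ket{s^B}=0$ for some $i$; you end up smuggling in an extra non-degeneracy assumption that is not part of the theorem. The paper's argument is cleaner and is in fact already available to you from what you have shown: once dual feasibility gives the block-PSD condition, any $\Sigma_{ii}=0$ would force the entire $i$-th row of $-\Phi$ to vanish (look at the $2\times 2$ principal minors containing that diagonal zero), contradicting the no-all-zero-row hypothesis. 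That is the correct place to invoke this hypothesis, and it closes the argument without any additional non-degeneracy.
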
 

\begin{proof}
We have seen in Chapter \ref{chaptergames} that the quantum bias $\epsilon_q$ of an {\xor} game can be 
calculated by the semidefinite program $(\mathcal{P})$  \eqref{eqxorSDP}:
\begin{align}
\epsilon_q =\begin{cases}         
 \max& \; \Tr {\Phi}_{s} \mathcal{X}	 \\
\text{s.t.} &\text{diag}(\mathcal{X}) = |\1 \rangle \oplus |\1 \rangle, \\
         &\mathcal{X} \geq 0.
         \end{cases}
\end{align}
where $\text{diag}(\mathcal{X})$ is a vector whose entries are the diagonal elements of matrix $\mathcal{X}$, 
 ${\Phi}_{s} = \left(\begin{smallmatrix}
0& \frac{1}{2} {\Phi} \\ \frac{1}{2} {\Phi}^T& 0
\end{smallmatrix} \right)$ and $\mathcal{X} = \left(\begin{smallmatrix}A& S\\ S^T& B\end{smallmatrix} \right)$. $S$ is the strategy matrix, $S_{x,y} = \langle u_{x} | v_{y} \rangle$, and  
$A, B$ are local terms, $A_{x, x'} = \langle u_{x} | u_{x'} \rangle$ and $B_{y,y'} = \langle v_{y} | v_{y'} \rangle$. 

By the Lagrange duality theory,
the bias can be bounded from above by a feasible solution of the Lagrange dual problem $(\mathcal{D})$. The application 
of Lagrange duality, presented in Section \ref{secduality}, leads us to: 
\begin{align}\label{eqxordual}
(\mathcal{D})\begin{cases}         
 \min \;& \sum_{i=1}^{2m}y_i \\
\text{s.t.} &\text{Diag}(y) \geq  {\Phi}_s.
         \end{cases}
\end{align}
where the $y_i$ are $2m$ variables and Diag($y$) denotes the diagonal matrix with entries $y_i$. 

Problem $(\mathcal{P})$ satisfies strong duality (Theorem \ref{thmstrongdual}) since $\mathcal{X}=\I$ 
is a strictly feasible point and, therefore, Slater's conditions \eqref{eqslater} are satisfied. 
As such, we need to derive the conditions under which the solution of $(\mathcal{D})$ \eqref{eqxordual} achieves the classical value
$\bra{s^B} {\Phi}\ket{s^A}$, which may also be written as $\bra{s^s} \Phi_s\ket{s^s}$,
for $\ket{s^s}=\ket{s^A}\oplus\ket{s^B}$ being the direct sum of vectors $\ket{s^A}$ and $\ket{s^B}$.
So we require that
\begin{equation}
\Tr\left((\text{Diag}(y)-\Phi_s)\proj{s^s}\right)=0.
\end{equation}
By the semi-definite condition $\text{Diag}(y) -  {\Phi}_s\geq 0$, this means that $\ket{s^s}$ is an eigenvector, with zero eigenvalue,
of $\text{Diag}(y)- \Phi_s$:
\begin{equation}\label{eq:equality}
\text{Diag}(y)\ket{s^s}=\Phi_s\ket{s^s}.
\end{equation}
Now, since  $\ket{s^s}$ is a vector of $\pm 1$ entries (as it corresponds to a classical deterministic strategy), we have that
\begin{align}
\bra{i} \text{Diag}(y)\ket{s^s}\braket{s^s}{i}=y_i (\De{s^s}_i)^2=y_i.
\end{align}
Therefore, an element by element comparison in Eq. \eqref{eq:equality} gives us
\begin{align}
 \bra{i} \text{Diag}(y)\ket{s^s}\braket{s^s}{i}= \bra{i}\Phi_s\ket{s^s}\braket{s^s}{i},
\end{align}
from which we can derive that whenever a classical strategy achieves the optimal quantum value, we have:
\begin{align}
\text{Diag}(y)=\left(\begin{matrix}
\half\Sigma &0\\ 0& \half\Lambda
\end{matrix} \right).
\end{align}

The constraint $\text{Diag}(y) \geq  {\Phi}_s$ can be rewritten as
 $\left(\begin{smallmatrix}
\Sigma & -{\Phi}\\ - {\Phi}^T& \Lambda
\end{smallmatrix} \right) \geq 0$. And since $\Phi$ has no all-zero row or column, this condition is satisfied 
only if $\Sigma,\Lambda>0$ (see observation 7.1.10 in Ref. \cite{matrix}).
Theorem 7.7.9 in Ref. \cite{matrix} 
states that in these conditions
\begin{align}\label{eqiifmatrix}
 \left(\begin{matrix}
\Sigma & -{\Phi}\\ - {\Phi}^T& \Lambda
\end{matrix} \right) \geq 0 \Leftrightarrow \rho( {\Phi}^T \Sigma^{-1}  {\Phi} \Lambda^{-1}) \leq 1.
\end{align}
Finally, when the optimal solution of the dual problem is given by the classical strategy, the inequality on the LHS of Eq. \eqref{eqiifmatrix} is saturated. Therefore, condition
$\rho( {\Phi}^T \Sigma^{-1}  {\Phi} \Lambda^{-1}) \leq 1$ can be replaced by equality.
\end{proof}

When $S_c = S_c^T$ and $ {\Phi} = {\Phi}^T$, the condition of Theorem \ref{thmxoriff} 
reduces to $\Sigma > 0$ and $\rho(\Sigma^{-1} {\Phi}) = 1$.

\begin{corollary} \label{corxorif}
If the (non-normalized) singular vectors corresponding to the maximum singular value of $ {\Phi}$ can be written with entries $\pm 1$, 
then there is no quantum advantage 
for players of the game $ {\Phi}$.
\end{corollary}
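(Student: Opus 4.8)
The plan is to trap the quantum bias $\epsilon_q$ between the classical bias $\epsilon_c$ and the spectral-norm bound of Theorem~\ref{thmboundxor}, and to use the hypothesis to force the two extremes to agree. Writing the bound of Theorem~\ref{thmboundxor} in terms of the bias $\epsilon=2\omega-1$, I would establish the chain
\[
\sqrt{m_A m_B}\,\norm{\Phi}\;\le\;\epsilon_c\;\le\;\epsilon_q\;\le\;\sqrt{m_A m_B}\,\norm{\Phi},
\]
in which the middle inequality holds because any classical strategy is a special quantum strategy, and the rightmost is exactly Theorem~\ref{thmboundxor}. Once the leftmost inequality is in place every term is equal, so in particular $\epsilon_c=\epsilon_q$, which is the claimed absence of quantum advantage.

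The content is therefore the leftmost inequality, and this is where the hypothesis enters. Recall from the discussion after Theorem~\ref{thmxorSDP} that an optimal classical deterministic strategy has a rank-one strategy matrix $\ketbra{s^A}{s^B}$ with $\ket{s^A}\in\{\pm1\}^{m_A}$ and $\ket{s^B}\in\{\pm1\}^{m_B}$, so that
\[
\epsilon_c=\max_{\ket{s^A}\in\{\pm1\}^{m_A},\,\ket{s^B}\in\{\pm1\}^{m_B}}\bra{s^A}\Phi\ket{s^B}.
\]
Let $\sigma_{\max}=\norm{\Phi}$ be the largest singular value of $\Phi$ with normalized left and right singular vectors $\ket{\hat u},\ket{\hat v}$, chosen so that $\bra{\hat u}\Phi\ket{\hat v}=+\sigma_{\max}$. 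By hypothesis these may be represented by $\pm1$ vectors $\ket{u}\in\{\pm1\}^{m_A}$, $\ket{v}\in\{\pm1\}^{m_B}$; since a $\pm1$ vector of length $m$ has Euclidean norm $\sqrt{m}$, we have $\ket{\hat u}=\ket{u}/\sqrt{m_A}$ and $\ket{\hat v}=\ket{v}/\sqrt{m_B}$, whence $\bra{u}\Phi\ket{v}=\sqrt{m_A m_B}\,\sigma_{\max}$. Feeding $\ket{s^A}=\ket{u}$, $\ket{s^B}=\ket{v}$ into the maximization above gives $\epsilon_c\ge\sqrt{m_A m_B}\,\norm{\Phi}$, completing the sandwich.

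The calculation is short once the right objects are matched up; the only subtle points are fixing the overall sign so that $\bra{\hat u}\Phi\ket{\hat v}$ is $+\sigma_{\max}$ rather than $-\sigma_{\max}$, and, when the top singular value is degenerate, ensuring the hypothesis is read as providing an actual $\pm1$ pair inside the maximal singular subspace. As a cross-check, the same conclusion follows from the symmetric reduction noted just before the corollary: when $\Phi=\Phi^T$ a $\pm1$ top eigenvector $\ket{s}$ with $\Phi\ket{s}=\sigma_{\max}\ket{s}$ makes $\Sigma=\Lambda=\sigma_{\max}\I>0$, so that $\rho(\Sigma^{-1}\Phi)=\rho(\Phi)/\sigma_{\max}=1$ and the necessary and sufficient condition of Theorem~\ref{thmxoriff} for $\omega_c=\omega_q$ is met.
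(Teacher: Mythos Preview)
Your proof is correct and takes a genuinely different, more elementary route than the paper. The paper proves the corollary by directly verifying the necessary and sufficient condition of Theorem~\ref{thmxoriff}: taking $\ket{s^A}=\ket{\lambda^A}$, $\ket{s^B}=\ket{\lambda^B}$ to be the $\pm1$ singular vectors, it computes $\Sigma=\Lambda=\lambda\I$ (where $\lambda=\norm{\Phi}$) and checks that $\rho(\Lambda^{-1}\Phi^T\Sigma^{-1}\Phi)=\lambda^{-2}\rho(\Phi^T\Phi)=1$. Your sandwich argument, by contrast, bypasses the SDP-dual machinery of Theorem~\ref{thmxoriff} entirely and relies only on the spectral-norm upper bound of Theorem~\ref{thmboundxor}, together with the observation that the $\pm1$ singular vectors furnish a classical strategy that already saturates that bound. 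Your route is shorter and extends transparently to the rectangular case $m_A\neq m_B$; the paper's route, on the other hand, places the corollary inside the dual-SDP framework that is the main technical engine of the section and illustrates how Theorem~\ref{thmxoriff} is meant to be used in practice. Your cross-check at the end (the symmetric $\Phi$ case with $\Sigma=\sigma_{\max}\I$ and $\rho(\Sigma^{-1}\Phi)=1$) is essentially the paper's argument specialised to that situation.
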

\begin{proof}
Let the (unnormalised) maximum singular vectors with $\pm1$ elements be $\ket{\lambda^A}$ and $\ket{\lambda^B}$ such 
that $\bra{\lambda^A} {\Phi}\ket{\lambda^B}=\lambda m$, where $\lambda$ is the maximum singular value of $\Phi$ and $m$ is the square of the norm of 
$\ket{\lambda ^A}$. In this case, $S_c=\ket{\lambda^A}\bra{\lambda^B}$. 
Then $ {\Phi}S_c^T=\lambda\proj{\lambda^A}$ such that $\Sigma=\Lambda=\lambda\I$. Evidently, these are positive and
\begin{equation}
\rho( {\Phi}^T \Sigma^{-1}  {\Phi} \Lambda^{-1})=\frac{1}{\lambda^2}\rho( {\Phi}^T {\Phi})=1.
\end{equation}
\end{proof} 
This is only a sufficient condition and not a necessary one. For example, the maximum eigenvector of
\begin{equation}	\label{eqn:ex}
 {\Phi}_{ex} = \frac{1}{16} \left(\begin{matrix}
1 & -1 & -1 & 1 \\ -1 & -1 & 1 & -1 \\ -1 & 1 & -1 & -1 \\ 1 & -1 & -1 & 1
\end{matrix} \right)
\end{equation} 
does not consist of $\pm 1$ elements, and yet it can be verified  that $\omega_q( {\Phi}_{ex}) = \omega_{c}( {\Phi}_{ex})=\frac{7}{8}$.

\subsection*{Examples of no-quantum advantage}

Corollary \ref{corxorif} gives us a simple way to construct games with no quantum advantage. 
In the uniform probability case, it suffices to construct any symmetric matrix $\Phi\in\{\pm 1\}^{m\times m}$ for which the total 
of every row is the same, and at least $\half m$, which ensures that the all 1's vector $\ket{\1}$ is a maximum eigenvector. 

The NLC game (Definition \ref{defNLC}) is a trivial example of Corollary \ref{corxorif} because the associated game matrix $ {\Phi_{NLC}}$ is diagonal
in the Hadamard basis for any number of input bits \cite{NLC}. 
So now we address the question posed in Ref. \cite{NLC}: \textit{`Finding families of {\sc xor} games that differ from NLC, but with no 
quantum advantage'}.
Let us consider an anti-circulant matrix\footnote{An anti-circulant matrix is a matrix where each row has the same elements, but shifted one position
to the left, with respect to the previous row.} $\Phi$, then for any $m$, $\ket{\1}$ is an eigenvector, and if $m$ is
even, so is the alternating signs vector. All we have to do is to restrict the matrix elements to guarantee that one of these two eigenvectors
yields the eigenvalue of maximum modulus. For $m=4$, let 
$(\gamma_0,\gamma_1,\gamma_2,\gamma_3)$, subject to $\sum_i|\gamma_i|=\frac{1}{4}$, specifies the first row of $\Phi$. If
\begin{equation}
\max\left(\left(\sum_{i=0}^3\gamma_i\right)^2,\left(\sum_{i=0}^3\gamma_i(-1)^i\right)^2\right)\geq (\gamma_0-\gamma_2)^2+(\gamma_1-\gamma_3)^2,
\end{equation}
we have a game for which there is no quantum advantage. A sufficient condition for this to happen is that
\begin{align}\label{eqcondgamma}
\gamma_0\gamma_2+\gamma_1\gamma_3\geq 0.
\end{align}

Many different patterns for the probability distribution satisfy  condition \eqref{eqcondgamma}, as for example
\begin{equation}\label{eqexcor}
\Phi=\left(\begin{array}{cccc}
p & q & q & -p \\
q & q & -p & p \\
q & -p & p & q \\
-p & p & q & q
\end{array}\right),
\end{equation}
where $|p|+|q|=1/8$ and $p,q\in\mathbb{R}$. Which leads to matrices that (except for the uniformly distributed case) are 
not diagonal in the Hadamard basis.


In order to construct a new families of games with no quantum advantage, we can base on the following observation: 
\begin{enumerate}[(\textasteriskcentered)]
\item  If $ \Phi^1$ and $ \Phi^2$ are two game matrices 
satisfying Corollary \ref{corxorif}, then it follows that $ \Phi^1\otimes \Phi^2$ also satisfies Corollary \ref{corxorif}.
\end{enumerate}
This observation allows us to extend any examples for small input size to examples with an arbitrarily large number of inputs. 

Actually observations (\textasteriskcentered) can be extended for any two games satisfying Theorem \ref{thmxoriff} by 
the additivity property of the quantum value
of these games.
In Ref. \cite{parallelrep},
it was shown that the quantum value of {\xor} games satisfy additivity: Consider two {\xor} games $g_1^{\oplus}(f_1,p_1)$
and  $g_2^{\oplus}(f_2,p_2)$, the sum of these 
games is the {\xor} game 
\begin{align}  \label{eqsumgame}
 g_{1+2}^{\oplus}(f_1\oplus f_2, p_1p_2),
\end{align}
 where the referee picks questions $((x_1,x_2),(y_1,y_2))\in (\inp_{A_1} \times \inp_{A_2})\times (\inp_{B_1} \times \inp_{B_2})$
with probability $p_1(x_1,y_1)p_2(x_2,y_2)$ and the winning condition of the game is defined by
\begin{align}
 a\oplus b =f_1(x_1,y_1)\oplus f_2(x_2,y_2).
\end{align}
The game is said to be additive if the optimal strategy is achieved when the players play each game individually and Alice outputs bit $a=a_1\oplus a_2$ 
and Bob outputs bit $b=b_1\oplus b_2$. The classical value does not satisfy additivity in general, however the quantum value does \cite{parallelrep}.
And the game matrix of game $g_{1+2}^{\oplus}(f_1\oplus f_2, p_1p_2)$ is the tensor product of the individual game matrices 
$\Phi(g_1^{\oplus})\otimes \Phi(g_2^{\oplus})$.

\subsection*{Shannon capacity of game graphs} 
Game graphs for which $\omega_c=\omega_q$ are good candidates 
to have a Shannon capacity $\Theta(\gr)=\alpha(\gr)$. We now prove that it is the case for games $g^{\oplus}(f,\frac{1}{m^2})$ 
satisfying the hypothesis of Corollary \ref{corxorif}.

\begin{theorem}\label{thmxorshannon}
Every two-party {\sc xor} game with $m$ uniformly chosen inputs for each party and satisfying the hypothesis of Corollary \ref{corxorif} has a 
game graph for which $\Theta(\gr) = \alpha(\gr)$.
\end{theorem}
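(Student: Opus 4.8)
The plan is to sandwich the Shannon capacity between the independence number and the Lovász number and then collapse the sandwich by proving $\vartheta(\mathcal{G}) = \alpha(\mathcal{G})$. Recall from Eq.~\eqref{eqrelgraph} that $\alpha(\mathcal{G}) \leq \Theta(\mathcal{G}) \leq \vartheta(\mathcal{G})$ holds for every graph, so it suffices to establish the single inequality $\vartheta(\mathcal{G}) \leq \alpha(\mathcal{G})$; the reverse is automatic. The hypothesis enters through Corollary~\ref{corxorif}: since the maximal singular vectors of $\tilde{\Phi}$ can be taken with $\pm 1$ entries, the game has no quantum advantage, and the CSW relation \eqref{eqrelgame} gives $\alpha(\mathcal{G}) = m^2\omega_c(g^{\oplus}) = m^2\omega_q(g^{\oplus})$. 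Thus the target $\vartheta(\mathcal{G}) \leq \alpha(\mathcal{G})$ is equivalent to $\vartheta(\mathcal{G}) \leq m^2\omega_q(g^{\oplus})$, and combined with the generic CSW bound $m^2\omega_q \leq \vartheta$ this forces all three quantities to coincide.

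First I would set up the upper bound on $\vartheta$ through the eigenvalue characterization of Theorem~\ref{thmlovaszA}: $\vartheta(\mathcal{G}) = \min_A \lambda_{\max}(A)$ over symmetric $A$ with $A_{ij} = 1$ whenever $i=j$ or $\{i,j\}\notin E$. Every such feasible matrix can be written as $A = J - \mathcal{A}'$, where $J$ is the all-ones matrix on the $2m^2$ vertices and $\mathcal{A}'$ is an arbitrary symmetric matrix with zero diagonal supported on the edge set of $\mathcal{G}(\Phi)$; feasibility is immediate since $J$ already realizes the required pattern off the edges. The task then reduces to choosing the edge-weighting $\mathcal{A}'$ so that $\lambda_{\max}(J-\mathcal{A}') = \alpha = \frac{1}{2}m(m+\sigma)$, where $\sigma$ denotes the maximal singular value of $\tilde{\Phi}$ (the value of $\alpha$ following from $\omega_c=\frac{1}{2}(1+\sigma/m)$ under the $\pm 1$ hypothesis).

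The crucial structural input is the explicit adjacency matrix \eqref{eqadjmatrix} together with its full spectrum (Theorem~\ref{thmAeigen}). Because $\mathcal{G}(\Phi)$ is $(2m-1)$-regular, the Perron vector of $\mathcal{A}(\mathcal{G})$ is the global all-ones vector $\ket{\1}\otimes\ket{\1}\otimes\ket{\1}$, which is also the only nonzero eigendirection of $J$; this shared direction is what makes the two terms of $A$ interact controllably. Rather than the uniform choice $\mathcal{A}' = c\,\mathcal{A}(\mathcal{G})$ — which merely reproduces the Hoffman/ratio bound and is \emph{not} tight here — I would weight the three families of edges appearing in \eqref{eqadjmatrix} separately (Alice-exclusivity edges, Bob-exclusivity edges, and the perfect-matching ``both'' edges), obtaining a three-parameter certificate. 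Using the block/tensor decomposition of \eqref{eqadjmatrix} and the singular value decomposition of $\tilde{\Phi}$, the spectrum of $A$ can be computed sector by sector in terms of the singular values $\lambda_z$, and the $\pm1$ maximal singular vector hypothesis ensures that the subspace carrying $\lambda_z = \sigma$ aligns with the combinatorial $\pm 1$ structure, so that the eigenvalue of $A$ in that direction can be tuned to exactly $\alpha$ while all remaining eigenvalues stay $\leq \alpha$.

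The main obstacle is precisely this last step: building the certificate $\mathcal{A}'$ and verifying that its largest eigenvalue equals $\alpha$. The difficulty is that $J$ and a non-uniform $\mathcal{A}'$ do not share a common eigenbasis away from the Perron direction, so the eigenvalue bookkeeping cannot be done by a single diagonalization; one must instead split the space into the $\ket{\1}$-symmetric and $\ket{\1}$-orthogonal sectors on the Alice and Bob factors and on the output qubit, and control each sector using Theorem~\ref{thmAeigen}. It is here, and only here, that the $\pm 1$ property of the maximal singular vector — as opposed to the strictly weaker condition $\omega_c=\omega_q$, which the example $\Phi_{ex}$ of Eq.~\eqref{eqn:ex} satisfies \emph{without} yielding a $\pm 1$ singular vector — becomes indispensable: it forces the maximal singular direction of $\tilde{\Phi}$ to correspond to a genuine deterministic strategy, which is what lets the optimal $\vartheta$-certificate be matched to a maximum independent set and pins $\lambda_{\max}(A)=\alpha$. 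Once $\vartheta(\mathcal{G}) \leq \alpha(\mathcal{G})$ is secured, the sandwich collapses and $\Theta(\mathcal{G}) = \alpha(\mathcal{G})$ follows.
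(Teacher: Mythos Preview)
Your overall strategy --- prove $\vartheta(\mathcal{G}) \leq \alpha(\mathcal{G})$ via the eigenvalue characterization of Theorem~\ref{thmlovaszA} by exhibiting a feasible $A$ with $\lambda_{\max}(A) = \alpha(\mathcal{G}) = \tfrac{1}{2}m(m+\norm{\tilde{\Phi}})$ --- is exactly the paper's, and your observation that the one-parameter Hoffman certificate $J - c\,\mathcal{A}(\mathcal{G})$ is not tight here is correct. But the proposal stops short of an actual proof: you identify ``the main obstacle'' (diagonalizing a non-commuting combination) and gesture at a sector-by-sector analysis without carrying it out, and the key idea that removes the obstacle entirely is missing.

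The paper does \emph{not} weight the three edge families separately. Its certificate is the two-parameter family
\[
A \;=\; \underbrace{\proj{\1}\otimes\proj{\1}\otimes(\I_2+\sigma_X)}_{=J}\;+\; a\,\mathcal{A}(\mathcal{G}) \;+\; b\,\I_m\otimes\I_m\otimes\sigma_X,
\]
i.e.\ the all-ones matrix, a uniform multiple of the \emph{entire} adjacency matrix, and an extra weight on the perfect-matching edges only. The point you miss is that these three terms \emph{commute}: $J$ commutes with $\mathcal{A}(\mathcal{G})$ by regularity, and every term in~\eqref{eqadjmatrix} has third tensor factor commuting with $\sigma_X$. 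Hence the spectrum of $A$ is read off directly from Theorem~\ref{thmAeigen}, and the choice $a=-m$, $b=\alpha(\mathcal{G})-m$ gives $\lambda_{\max}(A)=\alpha(\mathcal{G})$ by inspection. Your three-parameter ansatz genuinely destroys commutativity (indeed $\I_m\otimes\proj{\1}$ and $D(\proj{\1}\otimes\I_m)D$ fail to commute unless $\tilde{\Phi}$ has rank one), which is why you get stuck. Finally, the role of the $\pm 1$ hypothesis is simpler than your account: the certificate above yields $\vartheta(\mathcal{G}) \leq \tfrac{1}{2}m(m+\norm{\tilde{\Phi}})$ for \emph{every} {\sc xor}-game graph; Corollary~\ref{corxorif} is invoked only to guarantee that $\alpha(\mathcal{G})$ actually reaches this value.
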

\begin{proof}
To establish the Shannon capacity, our strategy is to find both $\alpha(\gr)$ and $\vartheta(\gr)$, and to show that they are equal. 
$\alpha(\gr)$ is straightforward, it coincides
with the optimal strategy, and by Corollary \ref{corxorif} it is specified by the maximum singular value\footnote{In terms
of the normalized game matrix $\Phi$ we have $\omega_c=\frac{1}{2}\de{1+m\norm{\Phi}}$. Note, however, that we are using the 
unnormalized game matrix $\tilde{\Phi}$.} of $\tilde{\Phi}$:
\begin{equation}
\alpha(\gr)=m^2\omega_c=\half m(m+\norm{\tilde{\Phi}}).
\end{equation}

In order to compute $\vartheta(\gr)$, we use the characterization of the Lov\'asz number given by 
Theorem \ref{thmlovaszA} \cite{Lovasznumber},
in which $\vartheta(\gr)$ is upper bounded by the 
largest eigenvalue of any symmetric 
matrix $(A_{i,j})_{i,j=1}^{N}$  such that
 \begin{align}\label{eqcondlovasz}
  A_{ij}=1 \;\; \text{if}\;\; i=j \;\; \text{or}\;\; \DE{i,j} \notin E.
 \end{align}
Our goal is to find a symmetric matrix $A$ whose maximum eigenvalue matches $\alpha(\gr)$ and
since $\alpha(\gr)\leq\Theta(\gr)\leq \vartheta(\gr)$ we would finish the proof. 

Consider the matrix
\begin{equation}
A \defeq \proj{\1}\otimes\proj{\1}\otimes(\I+\sigma_X)+a \mathcal{A}(\gr)+b\I \otimes\I \otimes \sigma_X.
\end{equation}
The matrix $A$ satisfies the conditions \eqref{eqcondlovasz}, and moreover, all of the three terms in $A$
commute with each other. Therefore, the diagonalization is the same as for the adjacency matrix $\mathcal{A}(\gr)$,
and the eigenvalues can be readily obtained: (see proof of Theorem \ref{thmAeigen} in Appendix \ref{A-proofs}):
\begin{equation}
\text{spec}(\mathcal{A}(\mathcal{G}(\Phi)))=\begin{cases}
2m^2+a(2m-1)+b & \times  1 	\\
2m+a(m-1)+b & \times 2m-2	\\
-a+b & \times (m-1)^2 \\
a(1-m\pm \lambda_z)-b & \times 1 \\
a-b &  \times m(m-2) 
\end{cases}.
\end{equation}

It is now our task to 
select $a,b$ such that the largest eigenvalue is  $\alpha(\gr)$. If we set $a=-m$ and $b=\alpha(\gr)-m$, this 
yields a maximum eigenvalue equal to $\alpha(\gr)$. We conclude then that $\alpha(\gr)=\vartheta(\gr)$, and therefore $\Theta(\gr)=\alpha(\gr)$.
\end{proof}
This proof automatically covers all NLC games, but also includes many other XOR games (Eq. \eqref{eqexcor}, for example).
A further consequence that follows from the proof of Theorem \ref{thmxorshannon} is that whenever
$\omega_q=\half (1+ \frac{1}{m}\|\Phi\|)$, we know that $m^2 \omega_q=\vartheta(\gr)$. The CHSH game \cite{CHSH} is an example of this.

\subsection*{Previously known families of graphs with $\Theta(\gr)=\alpha(\gr)$}

We now come back to the previously known families of graphs for which the Shannon capacity is equal to the independence number,
and we compare it with the graphs specified by Theorem \ref{thmxorshannon}:
\begin{itemize}
\item For \emph{Perfect graphs}, it was shown in Ref. \cite{Cabelloperfectgraph} that 
the classical value coincides with the no-signalling value of any Bell expression associated to a perfect graph, this implies $\omega_c=1$
for {\xor} games associated to perfect graphs. 
\item For \textbf{Kneser graphs} on $n$ vertices   it holds that \cite{Lovasznumber}
\begin{equation}
\vartheta(\gr) = \frac{-n \lambda_{\min}}{\lambda_{\max} - \lambda_{\min}} 
\end{equation}
where $\lambda_{\max/\min}$ are the corresponding maximum and minimum eigenvalues of the adjacency matrix. 
From Eq. \eqref{adj-spec}, we have that
\begin{align}
\lambda_{\min}=1-m-\norm{\tilde{\Phi}} \;\;\; \text{and}\;\;\; \lambda_{\max}=2m-1,
\end{align}
and from the hypothesis of Corollary \ref{corxorif} $\vartheta(\gr)=\frac{1}{2}m\de{m+\norm{\tilde{\Phi}}}$.
So we have that in this case $\norm{\tilde{ \Phi}}= m$, which again implies $\omega_c = 1$.
\item \textbf{K\"onig-Egerv\'ary (KE) family}  satisfies $\alpha(\gr) + \nu(\gr)=|V|$.
Since the {\sc xor} game graphs have a perfect matching, $\nu(\gr)=m^2$, they can only belong to
the KE family in the case of $\alpha(\gr)=m^2 \Rightarrow\omega_c = 1$.
  \end{itemize}

So we see that the previously known families of graphs for which $\alpha(\gr)=\vartheta(\gr)$ could only
correspond to an {\xor} game 
in the trivial case where $\omega_c=1$.
Therefore the non-trivial {\xor} games  satisfying Corollary \ref{corxorif} define  new families of graphs for which $\alpha(\gr)=\vartheta(\gr)$,
the NLC class \cite{NLC} being a remarkable example.

\section{Discussion and open problems} 

We have presented a necessary and sufficient condition for a bipartite {\sc xor} game 
to have no quantum advantage. And with this characterization we could single out 
new families of games with no 
quantum advantage,
even when the referee has some freedom in the probability of choosing the inputs. 
However, the generated examples (based on condition \eqref{eqcondgamma} and observation (\textasteriskcentered))  
rely on ensuring that the optimal 
classical strategy coincides with the vectors of the maximum singular value of the game matrix. It would be interesting to know
whether there exists any such classes which do not require this condition.

We have also shown that for games satisfying the condition of Corollary \ref{corxorif}, the associated graphs have $\Theta(\gr)=\alpha(\gr)$.
This is an entirely classical result derived as a consequence of 
insights provided from the study of quantum nonlocality. 
The proof of this result required Corollary \ref{corxorif} to hold. However, we believe that this restriction can be dropped, 
and that a necessary and  sufficient condition for a game graph to have $\alpha(\gr)=\vartheta(\gr)$ is given by Theorem \ref{thmxoriff}.
So far, the proof of this statement remains an open point.

A more challenging goal would be to prove similar results concerning no quantum advantage and the Shannon capacity of the 
corresponding graphs for games with more outputs and more players. However, the difficulty in 
extending the techniques used here relies mainly on the fact that there is no equivalent of Tsirelson's Theorem \ref{thmTsirelson} for these
more general scenarios.
In Ref. \cite{GMxord} we were able to find a generalization of the principle of no-advantage for nonlocal computation for a class 
of functions with  $d$ possible values. This result will be presented in the next Chapter.
 
 Finally, we want to remark that relations \eqref{eqrelgame} and \eqref{eqrelgraph} show that the quantum value of a Bell inequality and 
 the Shannon capacity of the corresponding exclusivity graph are both bounded by the same graph parameters ($\alpha$ and $\vartheta$). Moreover, as
 we have discussed previously, both quantities fall into the class of problems not known to be computable. We point out
 as an interesting question whether there is a fundamental relation between the quantum value of a Bell inequality and the Shannon capacity of the 
 corresponding graph. In the affirmative case, the physical interpretation of this relation can shed light on foundational aspects of quantum 
 theory.

\chapter{Linear games and the task of nonlocal computation}\label{chaptergamesd}
\chaptermark{Linear games and the task of NLC}

In this chapter we present results of Ref. \cite{GMxord}:
\begin{center}
\begin{minipage}{12.5cm}
 \textit{Generalized {\xor} games with $d$ outcomes and the task of nonlocal computation}\\
  R. Ramanathan, R. Augusiak, and \textbf{G. Murta}\\
 \href{http://link.aps.org/doi/10.1103/PhysRevA.93.022333}{\textbf{Phys. Rev. A, 92, 022333 (2016)}}.
\end{minipage}
\end{center}
We now focus on bipartite linear games, $g^{\ell}(G,f,p)$. Our main result is to prove an efficiently computable upper 
bound to the quantum value of a linear game.
We then use the bound to derive, in a straightforward way, an upper bound to the quantum value of the CHSH-$d$ game. 
We also show that boxes that would lead to trivialization of communication complexity are not realized in quantum theory.
As the main application of the bound, we prove a larger alphabet generalization of the principle of no-advantage for nonlocal
computation discussed in Section \ref{secnoqadvantage}. 

\section{Motivation}

As we have discussed in Chapters \ref{chaptercomput} and \ref{chaptergames}, it is not known, in general, whether the quantum value 
of a Bell inequality is computable or not (since there is a priori no restriction on the dimension of the Hilbert space 
for the quantum states and measurements). 
Only in some instances it is possible to compute the value efficiently, as it is the case for 2-player {\xor} games. 
Apart from this, typically
the NPA hierarchy \cite{NPA} is
used to get upper bounds to the quantum value. However, the quality of the approximation achieved by these bounds is not known in general, 
and the size of these programs increases exponentially with the 
level of the hierarchy. 


In Ref. \cite{UGeasy} it was shown that the quantum value of unique games can be efficiently approximated. Formally, the authors present 
an efficient algorithm such that, given a unique game with quantum value $\omega_q(g)=1-\epsilon$,
it outputs a specific
quantum strategy that achieves a value $\omega'(g)\geq1-6\epsilon$.
This remarkable result was very useful to prove parallel repetition for the quantum value of this class of games \cite{UGeasy}. 
However, this  approximation is only good 
when the quantum value of the game is close to unity, which is not the case for simple examples like the CHSH-$d$ game that we 
are going to discuss here.

Recently, there has been an increasing interest in developing applications 
of higher-dimensional entanglement (see, for example, \cite{BT00, angularmomentEntang, qutritcrypto, WL06, Qudit-randomness, q16dit-key-dist}) for which 
Bell inequalities with more than two 
outcomes are naturally suited. 
Moreover, in Ref. \cite{MatthiasAdan}  it was shown that, in general, $d$-chotomic measurements cannot be explained
as a classical selection of intrinsically 
 dichotomic measurements.
Therefore, both for fundamental reasons as well as for these applications, the study of Bell
 inequalities with more outcomes
is crucial.

\section{An efficiently computable bound to the quantum value of linear games}

Let us recall that a linear game $g^{\ell}(G,f,p)$ is defined by Alice and Bob answering with elements of an
Abelian group $(G,+)$, $a,b \in G$, and the winning condition depending 
only on the group operation $+$ of their outputs (Definition \ref{defLG}).
In order to obtain the main result of this Chapter, which is an upper bound to the quantum value of a linear 
game, we are going to make use of the 
 Fourier transform on finite Abelian groups \cite{FourierTerras} and introduce the \emph{generalized correlators}.

\begin{definition}[Generalized correlators]\label{defFourier}
Let $a,b \in G$ be elements of a finite Abelian group $(G,+)$, where $+$ is the associated group operation. 
The generalized correlators $\langle A_{x}^{i} B_{y}^{j} \rangle$ are defined via the Fourier transform of the probabilities $P(a,b|x,y)$ as
 \begin{align}
\langle A_{x}^{i} B_{y}^{j} \rangle = \sum_{a,b \in G} \bar{\chi}_{i}(a) \bar{\chi}_j(b) P(a,b|x,y),
\end{align}  
where ${\chi_i}$ are the characters of the Abelian group $(G,+)$ and  $\bar{\chi}_{i}$ is the conjugate character.
\end{definition}

The characters of an Abelian group are complex numbers which satisfy the following relations (see Appendix \ref{A-groups}):
\begin{align}\label{eqcharacprop}
\begin{cases}
\text{Homomorphism:} & \chi_i(a+b)=\chi_i(a)\chi_i(b) \; \; \forall a,b \in G\\
\text{Reflexivity:} & \bar{\chi}_i(a)=\chi_i(-a)\\
\text{Orthogonality:} & \sum_{a \in G}\chi_i(a)\bar{\chi}_j(a)=|G|\,\delta_{i,j}
\end{cases}.
\end{align}
More details on groups and the Fourier transform on Abelian groups are presented in Appendix \ref{A-groups} (see also Ref. \cite{FourierTerras}).

Given Definition \ref{defFourier}, the probabilities are recovered by the inverse Fourier transform:
\begin{equation}
P(a,b|x,y) = \frac{1}{|G|^2} \sum_{i,j \in G} \chi_{i}(a) \chi_{j}(b) \langle A_{x}^{i} B_{y}^{j} \rangle.
\end{equation}
And in terms of the correlators, normalization is expressed as
\begin{align}
\langle A_{x}^{e} B_{y}^{e} \rangle = 1 \;\; \forall \; (x,y) \in \inp_A \times \inp_B.
\end{align}

The one-party correlators $\langle A_{x}^{i} \rangle$ are defined as
\begin{equation}
\langle A_{x}^{i} \rangle \defeq \langle A_{x}^{i} B_{y}^{e} \rangle = \sum_{a,b \in G} \bar{\chi}_{i}(a) \bar{\chi}_{e}(b) P(a,b|x,y) = \sum_{a \in G}\bar{\chi}_{i}(a) P(a|x),
\end{equation}
where $e$ denotes the identity element of the group with $\chi_{e}$ being the trivial character ($\chi_{e}(b) = 1 \; \forall b \in G$) 
and we have used the no-signaling condition $\sum_{b \in G}P(a,b | x,y) = P(a|x)$. An analogous expression holds for 
$\langle B_{y}^{j}\rangle$.

In order to determine the success probability in a linear game, we are only interested in terms of the
form $P(a+b=f(x,y)|x,y)$. Consequently, we can use the characters' properties
in order to get a very simplified expression.
\begin{lemma}\label{lemmacorre}
 The average probability of success for a particular box $\boxp$, in a linear game $g^{\ell}(G,f,p)$, can be written as
 \begin{align}\label{eqwcorrelators}
  \omega(g^{\ell})=\sum_{x,y} p(x,y)\frac{1}{|G|}\de{1+ \sum_{k \in G \setminus \DE{e}} \chi_{k}(f(x,y)) \langle A_{x}^{k} B_{y}^{k} \rangle}. 
 \end{align}
\end{lemma}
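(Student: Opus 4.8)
The plan is to reduce the entire statement to a single character-sum identity and then let the properties collected in \eqref{eqcharacprop} handle all the bookkeeping. Starting from the definition of the average success probability in a linear game,
\[
\omega(g^{\ell})=\sum_{x,y}p(x,y)\,P\bigl(a+b=f(x,y)\mid x,y\bigr),
\]
I would first expand the conditional success probability as a sum over the winning outcome pairs,
\[
P\bigl(a+b=f(x,y)\mid x,y\bigr)=\sum_{a,b\in G}\delta_{f(x,y),\,a+b}\;P(a,b|x,y).
\]
The whole content of the lemma is then to turn this Kronecker delta into a weighted sum of generalized correlators.

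The key step is to invoke the dual (``column'') orthogonality of the characters of a finite Abelian group, namely $\frac{1}{|G|}\sum_{k\in G}\chi_k(g)=\delta_{g,e}$ for every $g\in G$, where the index $k$ ranges over the dual group, which for a finite Abelian group is isomorphic to $G$ itself. Applied with $g=f(x,y)-a-b$ (so that $g=e$ is exactly the winning condition $a+b=f(x,y)$), this yields $\delta_{f(x,y),\,a+b}=\frac{1}{|G|}\sum_{k\in G}\chi_k\bigl(f(x,y)-a-b\bigr)$. I would flag that this identity is \emph{not} the orthogonality relation listed in \eqref{eqcharacprop} — that one is the ``row'' orthogonality $\sum_{a}\chi_i(a)\bar\chi_j(a)=|G|\,\delta_{i,j}$ — but follows from it together with the self-duality $\widehat{G}\cong G$; supplying this cleanly is the one genuine subtlety of the argument, after which everything is mechanical. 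Using the homomorphism and reflexivity properties I factorize $\chi_k(f(x,y)-a-b)=\chi_k(f(x,y))\,\bar\chi_k(a)\,\bar\chi_k(b)$, substitute, and interchange the finite sums to obtain
\[
P\bigl(a+b=f(x,y)\mid x,y\bigr)=\frac{1}{|G|}\sum_{k\in G}\chi_k(f(x,y))\sum_{a,b\in G}\bar\chi_k(a)\bar\chi_k(b)\,P(a,b|x,y).
\]
The inner double sum is precisely the generalized correlator $\langle A_{x}^{k} B_{y}^{k}\rangle$ of Definition \ref{defFourier}, taken with $i=j=k$, so the expression collapses to $\frac{1}{|G|}\sum_{k\in G}\chi_k(f(x,y))\,\langle A_{x}^{k} B_{y}^{k}\rangle$.

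Finally I would isolate the trivial-character term $k=e$: there the trivial character gives $\chi_e(f(x,y))=1$, and the normalization condition $\langle A_{x}^{e} B_{y}^{e}\rangle=1$ forces this term to equal $1$. Separating it off leaves
\[
P\bigl(a+b=f(x,y)\mid x,y\bigr)=\frac{1}{|G|}\Bigl(1+\sum_{k\in G\setminus\{e\}}\chi_k(f(x,y))\,\langle A_{x}^{k} B_{y}^{k}\rangle\Bigr),
\]
and substituting this back into $\omega(g^{\ell})=\sum_{x,y}p(x,y)\,P(a+b=f(x,y)\mid x,y)$ gives exactly \eqref{eqwcorrelators}. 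No restriction on the strategy (classical, quantum, or no-signaling) is used beyond normalization, so the identity holds for an arbitrary box $\boxp$, as claimed. I expect the only care-requiring point to be the justification of the column-orthogonality identity and the consistency of the character indexing; the rest is routine algebra with the relations in \eqref{eqcharacprop}.
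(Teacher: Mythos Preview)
Your argument is correct. The route differs slightly from the paper's: the paper substitutes the \emph{inverse} Fourier transform $P(a,b|x,y)=\frac{1}{|G|^2}\sum_{i,j}\chi_i(a)\chi_j(b)\langle A_x^iB_y^j\rangle$ into $\sum_a P(a,f(x,y)-a|x,y)$ and then uses the \emph{row} orthogonality $\sum_a\chi_i(a)\chi_j(-a)=|G|\,\delta_{i,j}$ to collapse $i=j$, whereas you Fourier-expand the Kronecker delta via the \emph{column} orthogonality $\frac{1}{|G|}\sum_k\chi_k(g)=\delta_{g,e}$ and then read off the correlator from its defining formula. The two are dual unpackings of the same identity. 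Your version is arguably more transparent, since the correlator appears directly from Definition~\ref{defFourier} with $i=j=k$; the paper's version has the minor bookkeeping advantage that it uses only the orthogonality relation actually listed in \eqref{eqcharacprop}, so no appeal to self-duality of $\widehat G\cong G$ is needed. You were right to flag that point as the one place requiring care.
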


\begin{proof}
 In order to prove the lemma we start by evaluating the probability:
 \begin{align}\label{eqcharacters}
 \begin{split}
P(a + b = f(x,y) | x,y) &=\sum_a P(a, f(x,y)-a|x,y)\\
			&=\sum_a \frac{1}{|G|^2} \sum_{i,j \in G} \chi_{i}(a) \chi_{j}(f(x,y)-a) \langle A_{x}^{i} B_{y}^{j}\rangle \\
			&=\frac{1}{|G|^2} \sum_{i,j \in G} \chi_{j}(f(x,y)) \de{\sum_a \chi_{i}(a) \chi_{j}(-a)} \langle A_{x}^{i} B_{y}^{j}\rangle \\
			&=\frac{1}{|G|} \sum_{j \in G} \chi_{j}(f(x,y)) \langle A_{x}^{j} B_{y}^{j}\rangle 
			\end{split}
			\end{align}
where we have used the characters' properties \eqref{eqcharacprop}.

Finally, taking the average sum over the inputs $x$ and $y$, and using normalization ($\langle A_{x}^{e} B_{y}^{e} \rangle = 1$),
we have the desired result.
\end{proof}

Lemma \ref{lemmacorre} inspire us to define a set of $|G|-1$ game matrices associated to the linear game $g^{\ell}(G,f,p)$ which carry
information about the probability distribution of the inputs and the winning condition of the game. 
These matrices are the analogue of the {\xor}
game matrices, Eq. \eqref{eqxormatrix},
to the case of linear games.

\begin{definition}[Linear game matrices]
Given a linear game $g^{\ell}(G,f,p)$, the associated game matrices are
\begin{align}
\Phi_k =\sum_{x,y} p(x,y) \chi_k(f(x,y)) \ketbra{x}{y} \;\; \text{for}\;\; k\in G\setminus \DE{e},
\end{align}
where $\DE{\ket{x}}$ and $\DE{\ket{y}}$ are orthonormal basis in $\C^{|\inp_A|}$ and $\C^{|\inp_B|}$ respectively. 
\end{definition}

Now, let us analyze the meaning of the generalized correlators for a quantum strategy. In a quantum strategy, local projective measurements 
$\DE{M_x^a}$ and $\{M_y^b\}$ are performed by each player in a shared quantum state $\ket{\psi}$. 
Now if we define the (in general non-Hermitian) `observables'
\begin{align}
 A_x^i=\sum_{a\in G}\bar{\chi}_i(a)M_x^a \;\; \text{and} \;\;  B_y^j=\sum_{b\in G}\bar{\chi}_j(b)M_y^b,
\end{align}
we have that the generalized correlators correspond to: 
\begin{align}
\langle A_{x}^{i} B_{y}^{j} \rangle = \bra{\psi} A_{x}^{i} \otimes B_{y}^{j} \ket{\psi}.
\end{align}
And therefore, the average success probability of a quantum strategy, specified by $\DE{\DE{M_x^a},\{M_y^b\},\ket{\psi}}$, is given by
 \begin{align}\label{eqwquantumLG}
  \omega(g^{\ell})=\frac{1}{|G|}\de{1+\sum_{x,y} p(x,y) \sum_{k \in G\setminus\DE{e}} \chi_{k}(f(x,y))\bra{\psi} A_{x}^{k}\otimes B_{y}^{k} \ket{\psi}}. 
 \end{align}

 Now we are ready to state the main result of this Chapter.
 
 \begin{theorem}\label{thmLGwqbound}
\label{norm-bound}
The quantum value of a linear game $g^{\ell}(G,f,p)$, with input sets $\inp_A$ and $\inp_B$,  can be bounded as 
\begin{align}\label{eqLGbound-2}
\omega_{q}(g^{\ell}) \leq \frac{1}{|G|} \de{ 1 + \sqrt{|\inp_A| |\inp_B|} \sum_{k \in G\setminus \{e\}} \norm{\Phi_{k}} },
\end{align}
where $\Phi_{k}$ are the game matrices, $\chi_{k}$ are the characters of the group $(G,+)$, and $\norm{\Phi_k}$ denotes the maximum
singular value of matrix $\Phi_k$ (the spectral norm).
\end{theorem}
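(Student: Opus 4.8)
The plan is to adapt the operator-norm argument used for {\xor} games in the proof of Theorem \ref{thmboundxor} (following \cite{NLC}) to the larger-alphabet setting, treating each nontrivial character sector $k \in G\setminus\DE{e}$ separately. Starting from the expression \eqref{eqwquantumLG} for the quantum success probability, I would write $\omega_q(g^{\ell}) = \frac{1}{|G|}\de{1 + \sum_{k} T_k}$ with $T_k = \sum_{x,y} p(x,y)\chi_k(f(x,y))\bra{\psi}A_x^k\otimes B_y^k\ket{\psi}$, and bound each $T_k$ individually by $\sqrt{|\inp_A||\inp_B|}\,\norm{\Phi_k}$. Since (as discussed after Definition \ref{defQcorr}) we may restrict to pure states and projective measurements without loss of generality, this sets up the cleanest version of the computation.

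The key preliminary step is to show that the generalized observables $A_x^k = \sum_{a\in G}\bar{\chi}_k(a)M_x^a$ are unitary. Because $\DE{M_x^a}$ is a projective measurement ($M_x^a M_x^{a'} = \delta_{a,a'}M_x^a$ and $\sum_a M_x^a = \I$) and the characters of a finite Abelian group are unimodular ($|\chi_k(a)| = 1$), a short computation gives $A_x^k (A_x^k)^\dagger = \sum_a |\chi_k(a)|^2 M_x^a = \I$, and likewise for $B_y^k$. Unitarity is the crucial property: it guarantees $\norm{A_x^k\ket{\psi}} = 1$, which is exactly what will normalize the vectors below.

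Next, for each fixed $k$ I would adjoin index registers $\C^{|\inp_A|}$ and $\C^{|\inp_B|}$ and define $\ket{\alpha_k} = \frac{1}{\sqrt{|\inp_A|}}\sum_x \de{(A_x^k)^\dagger\otimes\I_B}\ket{\psi}\otimes\ket{x}$ and $\ket{\beta_k} = \frac{1}{\sqrt{|\inp_B|}}\sum_y \de{\I_A\otimes B_y^k}\ket{\psi}\otimes\ket{y}$. Orthonormality of the registers together with unitarity makes both unit vectors. Expanding $(\I_{AB}\otimes\Phi_k)\ket{\beta_k}$ using $\Phi_k = \sum_{x,y}p(x,y)\chi_k(f(x,y))\ketbra{x}{y}$ and pairing against $\bra{\alpha_k}$ yields the identity $T_k = \sqrt{|\inp_A||\inp_B|}\,\bra{\alpha_k}\de{\I_{AB}\otimes\Phi_k}\ket{\beta_k}$. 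Cauchy--Schwarz, combined with the fact that the maximal singular value satisfies $\norm{\I_{AB}\otimes\Phi_k} = \norm{\Phi_k}$, then delivers $|T_k| \leq \sqrt{|\inp_A||\inp_B|}\,\norm{\Phi_k}$.

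Finally, I would assemble the pieces. The total $\sum_k T_k = |G|\,\omega_q(g^{\ell}) - 1$ is real (it is the deviation of the success probability from the uniform-guess value $1/|G|$), so $\sum_k T_k \leq \big|\sum_k T_k\big| \leq \sum_k |T_k| \leq \sqrt{|\inp_A||\inp_B|}\sum_k \norm{\Phi_k}$, which rearranges to the claimed bound \eqref{eqLGbound-2}. The main subtlety I expect is the bookkeeping around the complex-valued individual correlators: each $T_k$ is in general complex, so the norm bound must be applied to $|T_k|$ and the reality of the full sum invoked only at the end; verifying unitarity of $A_x^k$ carefully (including the direction of the dagger in $\ket{\alpha_k}$, needed to recover $A_x^k$ rather than its adjoint in $T_k$) is the other point that deserves care.
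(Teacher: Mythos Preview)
Your proposal is correct and follows essentially the same route as the paper: defining the per-character unit vectors $\ket{\alpha_k}$ and $\ket{\beta_k}$, rewriting the $k$-th correlator term as $\sqrt{|\inp_A||\inp_B|}\,\bra{\alpha_k}(\I_{AB}\otimes\Phi_k)\ket{\beta_k}$, and bounding by the spectral norm. You are in fact slightly more explicit than the paper about two points it leaves implicit --- the unitarity of $A_x^k$ (which justifies that $\ket{\alpha_k},\ket{\beta_k}$ are unit vectors) and the reality of $\sum_k T_k$ --- but the argument is the same.
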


\begin{proof}
In order to derive the upper bound to $\omega_q(\textsl{g}^{\ell})$, let us consider a 
quantum strategy
given by the measurements $\DE{M_x^a}$ and $\{M_y^b\}$, from which we derive the observables $A_{x}^{i}$ and $B_{y}^{j}$,
being applied to  the pure state $\ket{\psi}$. Now, let us define the unit vectors:
\begin{align}\label{vecqstrategy}
\begin{split}
\ket{\alpha_{k}}&= \sum_{x \in \inp_A} \frac{1}{\sqrt{|\inp_A|}}\de{{A_{x}^{k}}^{\dagger} \otimes \I_B \otimes \I_x} \ket{\psi}\otimes \ket{x} ,  \\
\ket{\beta_{k}}&= \sum_{y \in \inp_B} \frac{1}{\sqrt{|\inp_B|}}\de{\I_A \otimes B_{y}^{k} \otimes \I_y} \ket{\psi}\otimes \ket{y} . 
\end{split}
\end{align}
By substituting into Eq. \eqref{eqwquantumLG} we have 
\begin{align}
\begin{split}
\omega_{q}(\textsl{g}^{\ell}) &= \sup_{\DE{\ket{\alpha_k}},\DE{\ket{\beta_k}}} \frac{1}{|G|} \de{1 + \sqrt{|\inp_A||\inp_B|}\sum_{k \in G\setminus \DE{e}} \langle \alpha_{k} | \I_{AB} \otimes \Phi_{k} | \beta_{k} \rangle} \\
& \leq  \frac{1}{|G|} \de{ 1 + \sqrt{|\inp_A||\inp_B|} \sum_{k \in G\setminus \{e\}} \Vert \I_{AB} \otimes \Phi_{k}\Vert } \\
                    &= \frac{1}{|G|} \de{ 1 + \sqrt{|\inp_A||\inp_B|} \sum_{k \in G\setminus \{e\}} \Vert \Phi_{k}\Vert }  ,         
\end{split}
                    \end{align}
where the supremum in the first equation is taken over all vectors $\DE{\ket{\alpha_k}}$ and $\DE{\ket{\beta_k}}$ 
that can be expressed as in Eq. \eqref{vecqstrategy}.
\end{proof}

The particular case of linear games corresponding to the cyclic group $\mathbb{Z}_d$ (the set  $[d]=\DE{0,\ldots,d-1}$ with the 
operation of sum modulo $d$) we denote by \emph{generalized {\xor} games}, or simply \textsc{xor}-$d$ games. The characters 
of the cyclic group $\mathbb{Z}_d$ correspond to the $d$-th roots of unity $\chi_j(a)=\zeta^{ja}$, where $\zeta = \exp{(2 \pi \mathrm{i}/d)}$.
For an {\xor}-$d$, the Eq. \eqref{eqLGbound-2} reduces to 
\begin{eqnarray}
\label{xor-d-bound}
\omega_{q}(\textsl{g}^{\oplus_d}) \leq \frac{1}{d} \de{ 1 + \sqrt{|\inp_A| |\inp_B|} \sum_{k= 1}^{d-1} \norm{ \Phi_{k}}},
\end{eqnarray}
with 
\begin{align}
 \Phi_k = \sum_{x, y} p(x,y) \zeta^{k f(x,y)} \ketbra{x}{y}.
\end{align}

\subsection*{The computational complexity of our bound}

Theorem \ref{thmLGwqbound} states an upper bound to the quantum value of linear games based on the spectral norm of the game matrices.
Although we still do not know how good the bound is in general,
it satisfies one of the requirements of a good relaxation: the bound is easy to compute.

The spectral norm of a matrix $A$, $\norm{A}$, is equal to its maximum singular value. 
The singular value decomposition (SVD) of an $m\times n$ matrix $A$ is the decomposition of $A$ into the form
\begin{align}\label{eqSVD}
 A= U \Sigma V^{\dagger},
\end{align}
where
\begin{itemize}
 \item $U$: is an $m\times m$ matrix whose columns are composed by a set of orthonormal vectors which are called the \emph{left singular 
 vectors} of $A$.
 \item  $V$: is an $n\times n$ matrix whose columns are composed by a set of orthonormal vectors which are called the \emph{right singular 
 vectors} of $A$.
 \item $\Sigma$: is an $m\times n$ matrix with nonnegative elements in the principal diagonal, the \emph{singular values of $A$}, and zero 
 elsewhere.
\end{itemize}

Many algorithms are known for the singular value decomposition (see Ref. \cite{SVDcomplex}).
The best known SVD algorithms have polynomial complexity in terms of the size of the matrices. If one is interested in determining only the 
singular values of an $n \times n$ matrix, there exists an algorithm with time complexity $T(n)=\mathcal{O}(n^3)$ \cite{SVDcomplex}.
In a linear game $g^{\ell}(G,f,p)$ with $d$ possible outcomes and $m$ questions per player, we have
$(d-1)$ $m\times m$ game matrices. Therefore the time complexity of computing our bound is $T(d,m)=\mathcal{O}(dm^3)$, which increases
polynomially in the number of inputs and outputs. 

\section{Applications of the bound}\label{secapplicationsxord}

\subsection*{The CHSH-$d$ game}

As a direct application of Theorem \ref{thmLGwqbound}, we consider 
 a $d$-output generalization, for $d$ prime or power of a prime, of the CHSH game. The CHSH-$d$ game is defined 
 with the operations of sum and multiplication
over the finite field $\mathbb{F}_d$ (For more details on finite fields see Appendix \ref{A-finitefield}).

\begin{definition}\label{defchshd}
The CHSH-$d$ game is a linear game with $d$ inputs and $d$ outputs per player, defined for $d$ prime or power of a prime.
In the CHSH-$d$ game, Alice and Bob receive questions $x$ and $y$ and output answers
$a$ and $b$ respectively, $a,b,x,y \in \mathbb{F}_d$, with the goal to satisfy
\begin{align}
 a +b = x \cdot y
\end{align}
where $+$ and $\cdot$ are operations defined over the field $\mathbb{F}_d$. 
\end{definition}

Definition \ref{defchshd} is the generalization of the CHSH game for more outputs considered in Ref.
\cite{Bavarian}. Similar definitions were previously considered in 
Refs. \cite{chshmodd,chshcorre,Liangchshd}. It is interesting to note that these games have recently found application in the security analysis of a 
relativistic bit commitment protocol \cite{BCJedrelativistic}.

In Ref. \cite{Bavarian}, an intensive study of this game was performed. The authors present results on
the asymptotic classical and
quantum values of the game. They also prove, for the first time, an upper bound on the quantum value of the CHSH-$d$ game. Their proof
is based on reducing these games to other information theoretic principles like no-advantage for nonlocal computation \cite{NLC} and
information causality \cite{IC}.
We now apply Theorem \ref{thmLGwqbound} to re-derive in a different way 
the upper
bound for the quantum value of the CHSH-$d$ game obtained in Ref. \cite{Bavarian}. 

\begin{theorem}[see also \cite{Bavarian}]The quantum value of the CHSH-$d$ game, for $d$ prime or power of a prime, is upper bounded by
\begin{equation}
\label{Bavarian-bound-app}
\omega_q(\text{CHSH-}d) \leq \frac{1}{d} + \frac{d-1}{d \sqrt{d}}.
\end{equation} 
\end{theorem}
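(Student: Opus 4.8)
The plan is to apply the general norm bound of Theorem \ref{thmLGwqbound} directly, taking $|\inp_A|=|\inp_B|=d$, and to reduce the whole computation to evaluating the spectral norm of each game matrix $\Phi_k$. Since the referee chooses inputs uniformly, $p(x,y)=1/d^2$, the CHSH-$d$ game matrices are
\begin{align}
\Phi_k = \frac{1}{d^2}\sum_{x,y\in \mathbb{F}_d} \chi_k(x\cdot y)\,\ketbra{x}{y},\qquad k\in G\setminus\DE{e},
\end{align}
where $x\cdot y$ is the field product and $G=(\mathbb{F}_d,+)$, so that $\chi_k$ ranges over the $d-1$ nontrivial characters of the additive group (for $d$ prime simply $\chi_k(a)=\zeta^{ka}$ with $\zeta=\exp(2\pi\mathrm{i}/d)$; for a prime power one takes the additive characters built from the field trace to $\mathbb{F}_p$). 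First I would observe that, up to the scalar $1/d^2$, each $\Phi_k$ is proportional to a generalized Fourier matrix, and the entire argument rests on showing that this matrix is proportional to a unitary.

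The key step is this computation. Set $M_k\defeq d^2\Phi_k$, whose entries are $\chi_k(x\cdot y)$. Then, using the homomorphism property of the characters \eqref{eqcharacprop},
\begin{align}
\de{M_k M_k^{\dagger}}_{x,x'} = \sum_{y\in\mathbb{F}_d}\chi_k(x\cdot y)\,\overline{\chi_k(x'\cdot y)} = \sum_{y\in\mathbb{F}_d}\chi_k\big((x-x')\cdot y\big).
\end{align}
Because $d$ is a prime or a prime power, $\mathbb{F}_d$ is a field, so for $x\neq x'$ the map $y\mapsto (x-x')\cdot y$ is a bijection of $\mathbb{F}_d$; hence that sum equals $\sum_{z\in\mathbb{F}_d}\chi_k(z)=0$ by the orthogonality relation \eqref{eqcharacprop} (taking $j=e$), while for $x=x'$ it equals $d$. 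Therefore $M_k M_k^{\dagger}=d\,\I$, so $M_k/\sqrt{d}$ is unitary and $\norm{\Phi_k}=\tfrac{1}{d^2}\sqrt{d}=d^{-3/2}$ for every $k\neq e$.

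Finally I would substitute $\sqrt{|\inp_A||\inp_B|}=d$ and the $d-1$ equal norms into Theorem \ref{thmLGwqbound}:
\begin{align}
\omega_q(\text{CHSH-}d)\leq \frac{1}{d}\de{1+d\sum_{k=1}^{d-1} d^{-3/2}}=\frac{1}{d}\de{1+(d-1)d^{-1/2}}=\frac{1}{d}+\frac{d-1}{d\sqrt{d}},
\end{align}
which is exactly the claimed bound. The main (and essentially only) obstacle is establishing the unitarity of the matrices $M_k$: this is precisely where the hypothesis that $d$ be a prime or a power of a prime is used, since it guarantees that multiplication by a nonzero element permutes $\mathbb{F}_d$ and makes the nontrivial character sums vanish. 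For composite non-prime-power $d$ this structure fails and the $\Phi_k$ need not be proportional to unitaries, so the clean evaluation $\norm{\Phi_k}=d^{-3/2}$ would no longer hold; everything else in the argument is routine.
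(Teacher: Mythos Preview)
Your proof is correct and follows essentially the same route as the paper: both compute the game matrices $\Phi_k$ for the CHSH-$d$ game, show via the character orthogonality relations that $\Phi_k\Phi_k^{\dagger}$ (equivalently $\Phi_k^{\dagger}\Phi_k$) equals $\I/d^3$, deduce $\norm{\Phi_k}=d^{-3/2}$, and substitute into Theorem~\ref{thmLGwqbound}. Your explicit emphasis on the field structure (the bijectivity of $y\mapsto (x-x')\cdot y$ for $x\neq x'$) and on how the prime-power case is handled via additive characters is a welcome clarification but does not constitute a different argument.
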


\begin{proof} 
The proof follows from the explicit analysis of the game matrices for the CHSH-$d$ game.
For the CHSH-$d$ game, the inputs are uniformly distributed and the winning condition is defined by $f(x,y) = x \cdot y$. Therefore 
the  game matrices are:
\begin{align}
 \Phi_k=\sum_{x,y=0}^{d-1}\frac{1}{d^2}\chi_k(x\cdot y)\ketbra{x}{y}
\end{align}
 where $\chi_k$ is the character of
the additive group formed by the elements of the field $\mathbb{F}_d$.

Now we evaluate $\Phi_k^{\dagger}\Phi_k$ using the characters relations \eqref{eqcharacprop}:
\begin{align}
\begin{split}
 \Phi_k^{\dagger}\Phi_k &=\frac{1}{d^4} \sum_{x,y=0}^{d-1}\sum_{x',y'=0}^{d-1} \xbar{\chi}_k(x \cdot y)\chi_k(x' \cdot y') \ket{y}{\braket{x}{x'}}\bra{y'}\\
 &=\frac{1}{d^4} \sum_{x,y=0}^{d-1}\sum_{y'=0}^{d-1}  \chi_k(- x \cdot y)\chi_k(x \cdot y') \ketbra{y}{y'}\\
 &=\frac{1}{d^4} \sum_{y,y'=0}^{d-1} \underbrace{\de{\sum_{x=0}^{d-1}  \chi_k(x\cdot (y'-y))}}_{d \delta_{y,y'}} \ketbra{y}{y'}\\
 &=\frac{1}{d^3} \sum_{y=0}^{d-1} \ketbra{y}{y}.
 \end{split}
\end{align}
Therefore $\Phi_k^{\dagger} \Phi_k =  \I/d^3$, 
so that $\norm{ \Phi_k} = 1/d \sqrt{d}, \; \; \forall k \in \DE{1,\ldots,d-1}$. 
Substitution into Eq. \eqref{xor-d-bound}, with $|\inp_A| = |\inp_B| = d$, yields the desired result.
\end{proof}


Comparison with the numerical results of Ref. \cite{Liangchshd} (see Table III in Ref. \cite{Liangchshd}) 
indicates that the bound \eqref{Bavarian-bound-app} is not tight in general\footnote{For $d=3$, results of Ref. \cite{Liangchshd} show that the optimal 
value is smaller than the bound \eqref{Bavarian-bound-app}. And moreover it is
attained with the maximally entangled state of dimension 3.} but might correspond to the 
value obtained for the first level of the NPA hierarchy.
Note that in Ref. \cite{Liangchshd} the authors only present 
 the value attained at the first level of the hierarchy up to $d=7$.
 This is probably due to the fact that the NPA hierarchy becomes impractical  for dealing with Bell inequalities with high number of inputs and outputs,
 which shows that our simple bound can be very powerful for these cases.

\subsection*{No trivialization of communication complexity}

We now address the question of whether there exist linear games that can be won perfectly with a quantum strategy, \ie 
if there exist  games $g^{\ell}(G,f,p)$ for which $\omega_q(g^{\ell})=1$.
 The interest in this question comes from communication complexity. 
In Ref. \cite{vanDam} it was shown that if Alice and Bob had unlimited access to PR-boxes \eqref{eqprbox}, they
could compute any distributed Boolean function with only one bit of communication. This result was later generalized \cite{PRd} to 
functions with $d$ possible values. In Ref. \cite{PRd} it was shown that the called \emph{functional boxes}, which are no-signaling boxes
that win perfectly some \xor-$d$ games, would lead to a trivialization of communication complexity, where a distributed function
could be computed with a single $d$it of communication (See Appendix \ref{A-CC} for more details).

The \xor-$d$ games for which a perfect no-signaling strategy can trivialize communication complexity are
the uniformly distributed total function games with the winning condition given by a non-additively-separable function. A total function game is 
one for which all inputs
have a probability strictly greater than zero to be chosen by the referee, $p(x,y)>0\; \forall x,y$. And a function $f(x,y)$ is 
additively separable if it can be decomposed into the form $f(x,y)=f_1(x)+f_2(y)$.


We now make use of Theorem \ref{thmLGwqbound} to show that, for \textsc{xor}-$d$ games with uniformly chosen inputs, 
there is a quantum strategy that wins the game with probability one if and only if this game is trivial, \ie 
when $\omega_c(g^{\oplus_d})=1$.

\begin{theorem}
\label{comm-comp-lem}
For \textsc{xor}-$d$ games $g^{\oplus_d}$  with $m$ questions per player and uniformly distributed  
inputs, $p(x,y) = 1/m^2$, $\omega_q(g^{\oplus_d}) = 1$ iff $\omega_c(g^{\oplus_d}) = 1$.
\end{theorem}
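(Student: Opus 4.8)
The plan is to combine the norm bound of Theorem \ref{thmLGwqbound} with an elementary estimate on the singular values of the game matrices. The \emph{if} direction is immediate: a classical deterministic strategy is a particular quantum strategy, so $\omega_c(g^{\oplus_d}) \leq \omega_q(g^{\oplus_d})$, and since $\omega_q \leq 1$ always holds, $\omega_c(g^{\oplus_d}) = 1$ forces $\omega_q(g^{\oplus_d}) = 1$.

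For the \emph{only if} direction I would start from the specialized bound \eqref{xor-d-bound}, which for $m$ uniformly chosen inputs per player reads
\begin{equation}
\omega_q(g^{\oplus_d}) \leq \frac{1}{d}\de{1 + m\sum_{k=1}^{d-1}\norm{\Phi_k}}, \qquad \Phi_k = \frac{1}{m^2}\sum_{x,y}\zeta^{kf(x,y)}\ketbra{x}{y}.
\end{equation}
Each $\Phi_k$ is an $m\times m$ matrix all of whose entries have modulus $1/m^2$, so its Frobenius norm equals $1/m$. Since the spectral norm never exceeds the Frobenius norm, $\norm{\Phi_k} \leq 1/m$, with equality if and only if $\Phi_k$ has a single nonzero singular value, \ie is rank one. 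Substituting $\norm{\Phi_k} \leq 1/m$ for every $k$ already shows $\omega_q(g^{\oplus_d}) \leq 1$; the point is that $\omega_q = 1$ forces every inequality used to be saturated. In particular $\sum_{k} \norm{\Phi_k} = (d-1)/m$, and since each summand is at most $1/m$, each $\norm{\Phi_k} = 1/m$, so every $\Phi_k$ (in particular $\Phi_1$) is rank one.

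It then remains to show that rank-one game matrices encode an additively separable winning function. Rank one of $\Phi_1$ means $\zeta^{f(x,y)} = \zeta_x \bar{\omega}_y$ for suitable unit-modulus numbers $\zeta_x,\omega_y$, these being the entries of the normalized left and right singular vectors; the entries of each singular vector must share a common modulus precisely because $|(\Phi_1)_{xy}| = 1/m^2$ is independent of $(x,y)$. Fixing a reference pair $(x_0,y_0)$ and eliminating $\zeta_x,\omega_y$ yields the identity
\begin{equation}
\zeta^{f(x,y)} = \zeta^{\,f(x,y_0) + f(x_0,y) - f(x_0,y_0)},
\end{equation}
whence $f(x,y) = f_1(x) + f_2(y) \pmod d$ with $f_1(x) = f(x,y_0) - f(x_0,y_0)$ and $f_2(y) = f(x_0,y)$. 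For such a separable $f$, Alice outputting $a = f_1(x)$ and Bob outputting $b = f_2(y)$ wins with certainty, so $\omega_c(g^{\oplus_d}) = 1$.

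The main obstacle I anticipate is this last step: justifying cleanly that the rank-one condition makes the unit-modulus singular vectors factor the character $\zeta^{f(x,y)}$, and carrying the conclusion out at the level of $\mathbb{Z}_d$ rather than through real phase arguments. The reference-point elimination above is designed to keep all bookkeeping inside the group and avoid arguments about real-valued phases; the only routine verification left is that every entry of a given singular vector has the same modulus, which follows directly from the constancy of $|(\Phi_1)_{xy}|$.
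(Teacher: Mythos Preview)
Your proposal is correct and follows the same overall strategy as the paper: use the norm bound \eqref{xor-d-bound}, force $\norm{\Phi_k}=1/m$ for all $k$ by saturation, deduce that $\Phi_1$ has rank one, and read off an additively separable $f$ yielding a perfect classical strategy. The technical execution differs slightly. The paper bounds $\norm{\Phi_k}\le 1/m$ via the row/column-sum inequality and then analyzes the eigenvector equation for $\Phi_1^{\dagger}\Phi_1$ entry-by-entry to obtain the phase relations $f(x,y)-f(x,y')=\theta_{y'}-\theta_y$. Your Frobenius-norm argument is a cleaner shortcut: since all entries of $\Phi_k$ have modulus $1/m^2$, $\norm{\Phi_k}_F=1/m$ exactly, and equality of spectral and Frobenius norms immediately gives rank one without any eigenvector computation. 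The reference-point elimination you use to pass from $\zeta^{f(x,y)}=\zeta_x\bar{\omega}_y$ to separability in $\mathbb{Z}_d$ is equivalent to the paper's construction of the explicit strategy $a_x=f(x,0)-b_0$, $b_y=b_0+(\theta_0-\theta_y)$, just organized more symmetrically. Both routes are complete; yours avoids the component-wise eigenvector analysis at the cost of invoking one extra standard fact about matrix norms.
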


\begin{proof}
The constraint of uniformly distributed questions, $p(x,y) = 1/m^2$ for all $(x,y)$, 
is equivalent to $\norm{\Phi_k} \leq 1/m$ since both the maximum absolute value column sum and row sum of the matrix are equal to $1/m$.
Hence, from our bound, Eq. \eqref{xor-d-bound}, we have that 
\begin{align}
 \omega_q(\textsl{g}^{\oplus_d}) = 1\Rightarrow \norm{\Phi_k}=\frac{1}{m}\; \forall k \in \{1, \dots, d-1\}.
\end{align}

Now, let us consider the matrix $\Phi_1^{\dagger} \Phi_1$:
\begin{align}
\De{\Phi_1^{\dagger} \Phi_1}_{y,y'} = \sum_{x=0}^{m-1} \frac{1}{m^4} \zeta^{-f(x,y) + f(x,y')},
\end{align}
where $\zeta = \exp{(2 \pi \mathrm{i}/d)}$. 
Let $\ket{ \lambda}=\de{\lambda_0,\ldots,\lambda_{m-1}}$ be an eigenvector corresponding to the maximum eigenvalue $1/m^2$ of $\Phi_1^{\dagger} \Phi_1$, 
with complex entries $\lambda_j = \vert \lambda_j \vert \zeta^{{\theta}_j}$. Assume, without loss of generality, that
the entries of the eigenvector are ordered by absolute value, 
$\vert \lambda_0 \vert \geq \dots \geq \vert \lambda_{m-1} \vert$. From the eigenvalue equation corresponding to the first entry of
$\ket{\lambda}$ we have 
\begin{equation}
\sum_{x, y = 0}^{m-1} |\lambda_y| \zeta^{-f(x,0) + f(x,y) + \theta_y} = m^2 |\lambda_0| \zeta^{\theta_0}.
\end{equation}
Since $|\lambda_0| \geq |\lambda_j| \; \forall j$, the above equation can only be satisfied when 
\begin{subequations}
\begin{align}
 \vert \lambda_j \vert& = \vert \lambda_{0} \vert \;\; \forall j\\
 f(x,y) - f(x,0) + \theta_y &= f(x',y') - f(x',0) + \theta_{y'}\; \; \forall x,y,x',y'
\end{align}
\end{subequations}
 in particular choosing $x = x'$ we get 
 \begin{align}
 f(x,y) - f(x,y') = \theta_{y'} - \theta_{y} \; \forall x,y,y',
 \end{align}
 where all the operations are modulo $d$.
 With all $|\lambda_{j}|$ equal, the rest of the eigenvalue equations
(for $j \neq 0$) lead to similar consistent constraint equations. 
We then deduce that $\omega_q(\textsl{g}^{\oplus_d}) = 1$ only when the columns of the game matrix $\Phi_1$ are 
proportional to each other, the proportionality 
factor between columns $y, y'$ being $\zeta^{f(x,y) - f(x,y')} = \zeta^{\theta_{y'} - \theta_y}$, and therefore the game 
matrix has $\text{rank}(\Phi_1) = 1$. 
Now consider $a_0$ and $b_0$ satisfying $a_0+ b_0=f(0,0)$, by setting
\begin{align}
 a_x&=f(x,0)-b_0\\
 b_x&=b_0 +(\theta_0-\theta_y)
\end{align}
we have a classical strategy that wins the game with probability 1.
\end{proof}

A more general result was recently proved in Ref. \cite{Ravinoextpoint}, showing that all the extremal points of the no-signaling
polytope, in any Bell scenario, cannot be realized
within quantum theory.  
Here, by a direct application of the norm bound (Theorem \ref{thmLGwqbound}) we are able to exclude the quantum realization of those
boxes corresponding to \xor-$d$ games that would lead to trivialization of communication complexity.

\section{\texorpdfstring{\xor-$d$}{XOR-d} games and the task of nonlocal computation}

In Section \ref{secnoqadvantage} we have introduced the principle of no-advantage for nonlocal computation proposed in Ref. \cite{NLC}, which 
corresponds to a class of {\xor} games for which $\omega_q(g^{\oplus})=\omega_c(g^{\oplus})$.
The question of the generalization of this class to a larger alphabet size was also left posed as an 
open question in Ref. \cite{NLC}. Here we use Theorem \ref{thmLGwqbound} in order to characterize a 
class of \xor-$d$ games, that resembles NLC, for which there is no quantum advantage.


Consider the following generalization of the non-local computation task to the computation of 
 a particular function $f(z_1, \dots, z_n)$ on $n$ $d$its, $z_i \in \mathbb{F}_d$ for $d$ prime.

\begin{definition}[$NLC_d$]
The generalized nonlocal computation of a $d$-nary function, $NLC_d$, for $d$ prime, is the task where
Alice and Bob each receives a $n$-$d$it string from a referee, $\vec{x}_n = (x_1, \dots, x_n)$ 
and $\vec{y}_n = (y_1, \dots, y_n)$, $\vec{x}_n,\vec{y}_n \in \mathbb{F}_d^n$, which obey $x_i + y_i = z_i$.
They output respectively dits $a, b\in \mathbb{F}_d$ with the goal to 
satisfy
\begin{align}\label{func-NLC}
a + b = h(\vec{x}_{n-1} + \vec{y}_{n-1}) \cdot (x_n+ y_{n}),
\end{align}
for a previously agreed function $h:\mathbb{F}_d^{n-1} \times \mathbb{F}_d^{n-1} \longrightarrow \mathbb{F}_d$, where $+,\cdot$ are sum and multiplication
modulo $d$. Moreover, the input strings are chosen by the 
referee according to the distribution
\begin{align}\label{probdistr}
p(\vec{x}_n,\vec{y_n})=\frac{1}{d^{n+1}} \tilde{p}(\vec{x}_{n-1} + \vec{y}_{n-1})
\end{align}
for $\tilde{p}(\vec{z}_{n-1})$ being an arbitrary probability distribution. 
\end{definition}

We now prove that the games $NLC_d$, as defined above, exhibit no quantum advantage. The idea behind the proof 
is to show that the matrices $\Phi_k^{\dagger} \Phi_k$ for these games are diagonal in
a basis composed of tensor products of the Fourier vectors of dimension $d$. We then present a classical strategy which 
achieves the quantum value, which is essentially given by the maximum singular vectors of $\Phi_1$. 
\begin{theorem}
\label{thm-nlc}
The games $NLC_d$ for arbitrary prime $d$ and input distribution satisfying \eqref{probdistr} have no quantum advantage, i.e., $\omega_c(NLC_d) = \omega_q(NLC_d)$.
\end{theorem}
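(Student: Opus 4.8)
The plan is to combine the norm bound of Theorem~\ref{thmLGwqbound} with an explicit classical strategy that saturates it. First I would write down the $d-1$ game matrices. With inputs $\vec{x}_n,\vec{y}_n\in\mathbb{F}_d^n$ (so $|\inp_A|=|\inp_B|=d^n$), the winning function \eqref{func-NLC}, and the distribution \eqref{probdistr}, the matrices are
\begin{align}
\Phi_k=\frac{1}{d^{n+1}}\sum_{\vec{x}_n,\vec{y}_n}\tilde{p}(\vec{x}_{n-1}+\vec{y}_{n-1})\,\zeta^{k\,h(\vec{x}_{n-1}+\vec{y}_{n-1})(x_n+y_n)}\ketbra{\vec{x}_n}{\vec{y}_n},
\end{align}
with $\zeta=\exp(2\pi\mathrm{i}/d)$. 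The central step is to diagonalize $\Phi_k^{\dagger}\Phi_k$. Forming this product involves a sum over Alice's full input $\vec{x}_n$; carrying out the sum over the last coordinate $x_n$ first, via the orthogonality relation in \eqref{eqcharacprop}, enforces the collision $h(\vec{x}_{n-1}+\vec{y}_{n-1})=h(\vec{x}_{n-1}+\vec{y}_{n-1}')$, and the remaining sum over $\vec{x}_{n-1}$ then reveals that the $(\vec{y}_n,\vec{y}_n')$ entry depends on $\vec{y}_n,\vec{y}_n'$ only through the difference $\vec{y}_n'-\vec{y}_n\in\mathbb{F}_d^n$. Hence $\Phi_k^{\dagger}\Phi_k$ is group-circulant over $\mathbb{F}_d^n$ and is therefore diagonalized by the tensor-product Fourier basis $\DE{\ket{\hat{\vec{s}}}}$, confirming the structure anticipated in the preamble.

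Reading off the eigenvalues as the Fourier transform of the circulant kernel, a second application of orthogonality in the $(x_n,y_n)$ variables should give, for $\vec{s}=(\vec{s}_{n-1},s_n)$,
\begin{align}
\lambda_{\vec{s}}=\frac{1}{d^{2n}}\left|\sum_{\vec{u}:\,h(\vec{u})=k^{-1}s_n}\tilde{p}(\vec{u})\,\zeta^{\vec{s}_{n-1}\cdot\vec{u}}\right|^{2},
\end{align}
where invertibility of $k$ modulo the prime $d$ is used. This is maximized by aligning all phases, $\vec{s}_{n-1}=\vec{0}$, and choosing $s_n$ so that $k^{-1}s_n=c^*$, the most likely value of $h$ under $\tilde{p}$. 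Thus $\norm{\Phi_k}=\tfrac{1}{d^{n}}\,p_{\max}$ for every $k\in\DE{1,\ldots,d-1}$, where $p_{\max}=\max_{c\in\mathbb{F}_d}\Pr_{\tilde{p}}\De{h(\vec{Z}_{n-1})=c}$. Substituting into Theorem~\ref{thmLGwqbound} with $|\inp_A||\inp_B|=d^{2n}$ yields
\begin{align}
\omega_q(NLC_d)\leq\frac{1}{d}\de{1+(d-1)\,p_{\max}}.
\end{align}

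It then remains to exhibit a classical strategy reaching this value. Guided by the fact that the optimal singular vectors are Fourier modes whose entries are pure phases ($d$-th roots of unity)---the larger-alphabet analogue of Corollary~\ref{corxorif}---I would take the deterministic strategy in which Alice answers $a=c^*x_n$ and Bob answers $b=c^*y_n$, so that $a+b=c^*(x_n+y_n)=c^*z_n$. This wins precisely when $\de{c^*-h(\vec{z}_{n-1})}z_n=0$, i.e.\ when $h(\vec{z}_{n-1})=c^*$ or $z_n=0$. A short marginalization of \eqref{probdistr} shows $\vec{z}_{n-1}\sim\tilde{p}$ and $z_n$ uniform on $\mathbb{F}_d$ \emph{independently}, so the success probability is $p_{\max}+(1-p_{\max})\tfrac1d=\tfrac1d\de{1+(d-1)p_{\max}}$, matching the bound. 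Since $\omega_c\le\omega_q$ always holds, the two values coincide and there is no quantum advantage.

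The main obstacle I anticipate is the diagonalization step: establishing rigorously that $\Phi_k^{\dagger}\Phi_k$ depends only on $\vec{y}_n'-\vec{y}_n$ (so the Fourier basis applies uniformly in $k$) while correctly tracking the two nested character-orthogonality sums---one over $x_n$ enforcing the $h$-collision and one over $(x_n,y_n)$ producing the squared-modulus eigenvalue. Verifying that the single maximizer $\vec{s}_{n-1}=\vec{0}$ works simultaneously for all $k$, together with the independence of $\vec{z}_{n-1}$ and $z_n$ needed for the classical win probability, are the remaining points requiring care.
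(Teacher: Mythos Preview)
Your proposal is correct and follows the same overall strategy as the paper: apply the norm bound of Theorem~\ref{thmLGwqbound}, diagonalize $\Phi_k^{\dagger}\Phi_k$ in the tensor Fourier basis, and exhibit the classical strategy $a=c^*x_n$, $b=c^*y_n$ that matches the bound. The paper arrives at the same conclusion but through a more structural route: it decomposes $\Phi_k^{(n)}$ into ``building-block'' matrices $\tilde{\Phi}_k^{(1)}(t)$ coming from the one-dit games $g(t):\{a+b=t(x+y)\}$, derives their orthogonality relations, and then uses the resulting block-circulant structure to locate the maximal eigenvector among $|v_0\rangle^{\otimes n-1}\otimes|v_{i_n}\rangle$; it also treats the uniform case first before generalizing. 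Your route is more direct---you observe that $(\Phi_k)_{\vec{x}_n,\vec{y}_n}$ depends only on $\vec{x}_n+\vec{y}_n$, so $\Phi_k^{\dagger}\Phi_k$ is group-circulant over $\mathbb{F}_d^n$ and its spectrum is read off in one Fourier step---and handles the general distribution \eqref{probdistr} from the start. The quantity $p_{\max}$ you use is exactly $d^{n+1}\tilde{\Lambda}$ in the paper's notation, and the two bounds and classical strategies coincide. The building-block viewpoint buys some structural insight (the $NLC_d$ game as a probabilistic mixture of trivially winnable sub-games $g(t)$), while your argument is shorter and avoids that machinery.
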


\begin{proof}
We first consider the case of uniformly chosen inputs. The games $NLC_d$ consider functions of the following form 
\begin{equation}
\label{eqwinNLCd}
a + b = h(x_1 + y_1, \dots, x_{n-1} + y_{n-1}) \cdot (x_n+ y_{n}),
\end{equation}
where $+$ and $\cdot$ are sum and multiplication modulo $d$, and
 $h$ is an arbitrary function. Given the winning condition \eqref{eqwinNLCd}, 
 the game matrices of $NLC_d$ are composed of ``building-block games" $g(t)$:
\begin{equation}
\label{single-game}
g(t):=\DE{a + b= t \cdot (x + y)},
\end{equation}
with $t \in \{0, \dots, d-1\}$, i.e., $f(x,y) = t \cdot (x + y)$. 

There are $d$ different games $g(t)$, each with a single $d$it input for each party (which we will take to be $x_n$ and $y_n$).
Every game $g(t)$ has the classical value $\omega_c(g(t))= 1$. Explicitly, the classical strategy
\begin{align}
 a = t \cdot x \;\; \text{and}\;\; b = t \cdot y
\end{align}
wins the game $g(t)$ with probability one.
The corresponding (non-normalized) game matrices $\tilde{\Phi}^{(1)}_{k}(t)$ for game $g(t)$ are given by 
\begin{equation}
\label{single-game-mat}
\tilde{\Phi}^{(1)}_{k}(t) := \sum_{x,y \in [d]} \zeta^{k t (x + y)} \ketbra{x}{y},
\end{equation}
with $\zeta = \exp{(2\pi \mathrm{i}/d)}$. Here the superscript $(1)$ denotes that these matrices correspond to the $NLC_d$ 
game matrices for $n=1$. 

Let us state some properties of the matrices $\tilde{\Phi}^{(1)}_{k}(t)$:
\begin{enumerate}[(i)]
 \item ${\tilde{\Phi}^{(1)}_{k}(t)}^{\dagger} \tilde{\Phi}^{(1)}_{k}(t)$ for any $k, t$ is diagonal in the Fourier basis defined by the Fourier vectors $|v_j\rangle$ with
\begin{equation}
|v_{j} \rangle = \left(1, \zeta^j, \zeta^{2j}, \dots, \zeta^{(d-1)j}\right)
\end{equation}
with $j \in \{0, \dots, d-1\}$.
\item Each ${\tilde{\Phi}^{(1)}_{k}(t)}^{\dagger} \tilde{\Phi}^{(1)}_{k}(t)$ has only one eigenvalue (=$d^2$) different from zero and
this corresponds to the eigenvector $\ket{v_{d-k \cdot t}}$. 
\end{enumerate}

Properties (i) and (ii) imply the 
orthogonality 
\begin{align}
 {\tilde{\Phi}^{(1)}_{k}(t)}^{\dagger} \tilde{\Phi}^{(1)}_{k'}(t') = 0\;\; \text{for}\;\;k \cdot t \neq k' \cdot t'.
\end{align}
Since, we will be concerned with finding the maximum singular
vectors corresponding to a fixed $k$, we can encapsulate the above properties by the equation 
\begin{equation}
\label{single-game-prop}
\de{ {\tilde{\Phi}^{(1)}_{k}(t)}^{\dagger} \tilde{\Phi}^{(1)}_{k}(t') } |v_{j} \rangle = d^2 \delta_{t, t'} \delta_{j, d- k.t}  \ket{v_{j}}.
\end{equation}

Now we use the properties of $\tilde{\Phi}^{(1)}_k(t)$ in order to analyze 
the game matrices $\tilde{\Phi}^{(n)}_k$ for the general $NLC_d$ games with $n$-$d$it strings of input.
Due to the structure of the function in Eq. \eqref{func-NLC}, namely the fact that the winning condition
depends only on the $d$it-wise sum 
of the $n$ dits, and moreover the dependence on the last $d$its, $x_n,y_n$, is given by the games $g(n)$,
 we see that ${{\tilde{\Phi}}_k^{(n)\dagger}} \tilde{\Phi}^{(n)}_k$ acquires a block circulant structure
 (for $1 \leq i \leq n$ the corresponding matrices ${\tilde{\Phi}^{(i)\dagger}_k} \tilde{\Phi}^{(i)}_k$ for each $k$ 
 are block-wise circulant matrices). For example, if for $n=2, d=3$ an unnormalized game matrix $\tilde{\Phi}^{(2)}$  has the form
\begin{equation}
\tilde{\Phi}^{(2)}\defeq\scalebox{0.8}{ \begin{tabular}{| l |  l | r  | }
  \hline
    $ \tilde{\Phi}^{(1)}(0)$ & $\tilde{\Phi}^{(1)}(1)$  & $\tilde{\Phi}^{(1)}(2)$   \\ \hline
    $ \tilde{\Phi}^{(1)}(1)$ & $\tilde{\Phi}^{(1)}(2)$  & $\tilde{\Phi}^{(1)}(0)$  \\  \hline
     $\tilde{\Phi}^{(1)}(2)$  & $\tilde{\Phi}^{(1)}(0)$ & $\tilde{\Phi}^{(1)}(1)$  \\ \hline
  \end{tabular}  }
 \end{equation}
with $\tilde{\Phi}^{(1)}(t)$ defined as in Eq. \eqref{single-game-mat}, we would have $\tilde{\Phi}^{(2)\dagger} \tilde{\Phi}^{(2)}$ equals to 
\begin{equation}
\tilde{\Phi}^{(2)\dagger} \tilde{\Phi}^{(2)}=\scalebox{0.8}{ \begin{tabular}{| l  | l | l  | }
  \hline
     $\sum_{i} {\tilde{\Phi}^{(1)}(i)}^{\dagger} {\tilde{\Phi}^{(1)}(i)}$ & $\sum_{i} {\tilde{\Phi}^{(1)}(i)}^{\dagger} {\tilde{\Phi}^{(1)}(i+1)}$  & $\sum_{i} {\tilde{\Phi}^{(1)}(i)}^{\dagger} {\tilde{\Phi}^{(1)}(i+2)}$   \\ \hline
     $\sum_{i} {\tilde{\Phi}^{(1)}(i)}^{\dagger}{\tilde{\Phi}^{(1)}(i+2)}$ & $\sum_{i} {\tilde{\Phi}^{(1)}(i)}^{\dagger} {\tilde{\Phi}^{(1)}(i)}$  & $\sum_{i} {\tilde{\Phi}^{(1)}(i)}^{\dagger}{\tilde{\Phi}^{(1)}(i+1)}$  \\  \hline
     $\sum_{i} {\tilde{\Phi}^{(1)}(i)}^{\dagger}{\tilde{\Phi}^{(1)}(i+1)}$  & $\sum_{i} {\tilde{\Phi}^{(1)}(i)}^{\dagger} {\tilde{\Phi}^{(1)}(i+2)}$ & $\sum_{i} {\tilde{\Phi}^{(1)}(i)}^{\dagger} {\tilde{\Phi}^{(1)}(i)}$ \\ \hline
  \end{tabular}  }
\end{equation}
which is a block-wise circulant matrix. 
In general, the entries of ${\tilde{\Phi}^{(n){\dagger}}_k} \tilde{\Phi}^{(n)}_k$ are explicitly given by 
\begin{align}
 \begin{split}
\De{{\tilde{\Phi}^{(n){\dagger}}_k} \tilde{\Phi}^{(n)}_k}_{\vec{x}_{n-1}, \vec{y}_{n-1}} =& \\
\sum_{u_1, \dots, u_{n-1} =0}^{d-1} &{\tilde{\Phi}^{(1)}_{k}(h(\vec{x}_{n-1}+ \vec{u}_{n-1}))}^{\dagger} \tilde{\Phi}^{(1)}_{k}(h(\vec{u}_{n-1}+ \vec{y}_{n-1})).
\end{split}
\end{align}
Due to this block circulant structure, we have that ${\tilde{\Phi}^{(n)\dagger}_k} \tilde{\Phi}^{(n)}_k$ for any $n$ and $k$ 
is diagonal in the basis formed by the tensor products of the Fourier vectors 
$\{|v_{i_1}\rangle \otimes \ldots \otimes|v_{i_{n}} \rangle\}$ with $i_1, \dots, i_n \in \{0, \dots, d-1\}$. 

We now proceed to  find the eigenvector corresponding to the maximum eigenvalue of ${\tilde{\Phi}^{(n)\dagger}_{k}} \tilde{\Phi}^{(n)}_{k}$ 
among the basis formed by $\{|v_{i_1}\rangle \otimes \ldots \otimes |v_{i_{n}} \rangle\}$.
Using the properties of the game matrices $\tilde{\Phi}^{(1)}_{k}(t)$ encapsulated by Eq. \eqref{single-game-prop}, 
 one can see that for any fixed $i_n$, the eigenvalue corresponding to $|v_0 \rangle^{\otimes n-1} \otimes |v_{i_n} \rangle$ cannot
 be smaller than that corresponding to any other $|v_{i_1} \rangle \otimes \ldots \otimes |v_{i_n} \rangle$. 
 Therefore we can concentrate only on the vectors $|v_0 \rangle^{\otimes n-1} \otimes |v_{i_n} \rangle$.

Let us compute the eigenvalues corresponding to $|v_0\rangle^{\otimes n-1} \otimes |v_{i_n} \rangle$ 
for $i_n \in \{0, \dots, d-1\}$. To do this, let us fix an input string $\vec{x}_{n-1}$ (say $(0,\dots,0)$) and 
vary over $\vec{y}_{n-1}$, in other words we consider the first row block of
$\tilde{\Phi}^{(n)}_k$ corresponding to the game blocks 
$\tilde{\Phi}^{(1)}_{k}(t)$, with $t=h(\vec{0}_{n-1} + \vec{y}_{n-1})$. 
Denote by $\lambda^{\vec{x}_{n-1}}(i_n, k)$ the number of times the 
game $g{(d-k^{-1} \cdot i_n)}$ appears for this  choice of $\vec{x}_{n-1}$ in the matrix $\tilde{\Phi}^{(n)}_k$. Due to the symmetry of the winning condition, 
$\lambda^{\vec{x}_{n-1}}(i_n, k)$ is independent of the choice of row $\vec{x}_{n-1}$ so we may drop the superscript. 
Moreover, since $\tilde{\Phi}^{(n)}_k$ is
a symmetric matrix, we also have $\lambda^{\vec{x}_{n-1}}(i_n, k) = \lambda^{\vec{y}_{n-1}}(i_n, k)$ for an analogously 
defined $\lambda^{\vec{y}_{n-1}}(i_n, k)$.

Let us define $\Lambda(k) \defeq \max_{i_n} \lambda(i_n, k)$
and let $\mu\defeq d-k^{-1}\cdot i_n$ for 
the value of $i_n$ for which the maximum of $\lambda(i_n, k)$ is achieved. 
Using Eq. \eqref{single-game-prop}, we have that 
\begin{equation}
\de{{\tilde{\Phi}^{(n){\dagger}}_k} \tilde{\Phi}^{(n)}_k } |v_0 \rangle^{\otimes n-1} \otimes |v_{i_n} \rangle = d^2 \lambda^2(i_n, k) |v_0 \rangle^{\otimes n-1} \otimes |v_{i_n} \rangle,
\end{equation}
from which we obtain that $\norm{\tilde{\Phi}^{(n)}_k} = d \Lambda(k)$. 

For prime $d$, multiplication (mod $d$) by $k\neq 0$ maps each game $g(t)$ into a game $g(t'=k\cdot t)$ such that if
$t_1\neq t_2$ then $t'_1 \neq t'_2$. Therefore the maximum number of elementary games of the same type composing
matrix $\Phi^{(n)}_k$ is the same for all $k$, $\Lambda(k) = \Lambda$.
Hence we obtain the following bound
on the quantum value of $NLC_d$ for the uniformly distributed inputs case
\begin{equation}
\label{uni-q-bound}
\omega_q(NLC_d)  \leq  \frac{1}{d}\left(1 + \frac{(d-1) \Lambda}{d^{n-1}} \right).
\end{equation}
We now consider the classical deterministic strategy where Alice outputs $a = \mu \cdot x_n$ independently of her inputs $\vec{x}_{n-1}$
and Bob outputs $b = \mu \cdot y_n$ 
independently of his input $\vec{y}_{n-1}$. Note that for the $d \times d$ blocks described by the 
game $g(\mu)$ all the $d^2$ constraints will be satisfied. 
On the other hand, for the blocks described by $g(t)$ for $t \neq \mu$, only $d$ constraints are satisfied (when $x_n + y_n=0)$.
Therefore the score achieved by this strategy is given by
\begin{equation}
\omega_c(NLC_d) = \frac{d^{n-1}}{d^{2n}}\De{\Lambda d^2 + (d^{n-1} - \Lambda) d},
\end{equation}
which equals the upper bound on the quantum value in Eq. (\ref{uni-q-bound}). 
This completes the proof for uniformly chosen inputs.\vspace{1em}

Now we consider the case of input probability distributions
\begin{equation}
p(\vec{x}_n,\vec{y}_n)=\frac{1}{d^{n+1}} \tilde{p}(\vec{x}_{n-1}+ \vec{y}_{n-1}).
\end{equation}
For this input distribution, the matrix ${\Phi}_{k}^{(n)}$ is still composed of the elementary games $\Phi^{(1)}_{k}(t)$ that can be classically saturated.
The difference is that a weight $ \tilde{p}(\vec{x}_{n-1} + \vec{y}_{n-1})/d^{n+1}$
is now attributed to each $d \times d$ block
\begin{equation}
\De{{\Phi}^{(n)}_k}_{\vec{x}_{n-1}, \vec{y}_{n-1}} = \frac{1}{d^{n+1}} \tilde{p}(\vec{x}_{n-1} + \vec{y}_{n-1}) \Phi^{(1)}_{k}(h(\vec{x}_{n-1}+ \vec{y}_{n-1})).
\end{equation}
This preserves the block-wise circulant structure of ${\Phi^{(n)}_{k}}^{\dagger} \Phi^{(n)}_{k}$
ensuring that these matrices are still diagonal in the basis formed by the tensor product of Fourier vectors. 
As in the case of uniformly distributed inputs, the properties of $\Phi^{(1)}_{k}(t)$ in Eq. \eqref{single-game-prop} 
imply that the maximum eigenvalue corresponds to one 
of the vectors $|v_0\rangle^{\otimes n-1} \otimes |v_{i_n} \rangle$. 

To compute the eigenvalues corresponding to $|v_0\rangle^{\otimes n-1} \otimes |v_{i_n} \rangle$, we now have 
to take into account the number of times a game $g{(d-k^{-1} \cdot i_n)}$ appears in a given row block as well as the respective weights.
Let us denote by $\tilde{\lambda}(i_n, k)$ the weighted sum of the times the game $g{(d-k^{-1} \cdot i_n)}$ appears in a row block, i.e.,
\begin{align}
\tilde{\lambda}(i_n, k)=\sum_{\stackrel{\vec{y}_{n-1} \text{ s.t.}}{ h(\vec{0}_{n-1}+ \vec{y}_{n-1})=d-k^{-1} \cdot i_n}}  \frac{1}{d^{n+1}} \tilde{p}(\vec{0}_{n-1} + \vec{y}_{n-1}).
\end{align}
As before, let us define $\tilde{\Lambda}(k) := \max_{i_n} \tilde{\lambda}(i_n, k)$ and let $\mu=d-k^{-1} \cdot i_n$ for the $i_n$ which achieves  the maximum.
For the game  matrix $\Phi^{(n)}_k$ we have
\begin{equation}
\de{{{\Phi}^{(n)\dagger}_k} {\Phi}^{(n)}_k }|v_0 \rangle^{\otimes n-1} \otimes |v_{i_n} \rangle = d^2 \tilde{\lambda}(i_n, k)^2 |v_0 \rangle^{\otimes n-1} \otimes |v_{i_n} \rangle.
\end{equation}
We therefore obtain that $\| {\Phi}^{(n)}_k \| = d \tilde{\Lambda}(k)$. 

Again, for prime $d$, the maximum of $\tilde{\Lambda}(k)$ is independent of  $k$. Therefore, we obtain the 
following upper bound on the quantum value of a general $NLC_d$ game
\begin{equation}
\label{nlc-q-bound}
\omega_q(NLC_d)  \leq  \frac{1}{d}\De{1 + d^{n+1}(d-1) \tilde{\Lambda}}.
\end{equation}
Consider the classical deterministic strategy where Alice outputs $a = \mu x_n$ independently of $\vec{x}_{n-1}$ and Bob 
outputs $b = \mu y_n$ 
independently of $\vec{y}_{n-1}$. 
Analogously to the uniformly distributed inputs case, the score achieved by this strategy is 
\begin{equation}
\omega_c(NLC_d) = d^{n-1} \De{\tilde{\Lambda} d^2 + \de{\frac{1}{d^{n+1}} - \tilde{\Lambda}}d },
\end{equation}
which again equals the upper bound on the quantum value in Eq.\eqref{nlc-q-bound}.
This completes the proof that quantum strategies cannot outperform classical strategies in the $NLC_d$ game.
\end{proof}

Note that our proof relies on the assumption that the winning constraint function has the form 
$h(\vec{x}_{n-1}+ \vec{y}_{n-1})\cdot (x_n + y_n)$, which seems more restrictive than stated for the binary $NLC$ \cite{NLC}.
Now, let us consider the  3-input and 3-output game $g^{eg}$, \ie $d=3$ and $n=1$, whose winning 
condition is specified by 
\begin{align}\label{gameeg}
 f(x,y)=\begin{cases}
         0\,, \,& \text{ if } x+y=0 \text{ or } x+y=1 \\
          1\,, \,& \text{ if } x+y=2 
        \end{cases}
\end{align}
and the associated game matrices are
$
 \Phi^{eg}_1=\frac{1}{9}\left(\begin{smallmatrix}
1&1&\zeta \\ 1&\zeta&1\\\zeta&1&1
\end{smallmatrix}\right)
$ and $\Phi^{eg}_2={\Phi^{eg}_1}^*$.
An SDP optimization over measurements for the maximally entangled state in dimension 3 shows that the quantum value overcomes the classical
value of the game \eqref{gameeg}. Therefore Theorem \ref{thm-nlc} cannot be extended to an arbitrary function $f(\vec{x}_n+ \vec{y}_n)$.

\section{Discussion and open problems} 
 In this chapter, we have presented an upper bound to the quantum value of linear games. 
 The bound is not tight in general but it is very simple and, as we have shown by examples, it allowed us to derive several 
 results:
 We have used the bound to
 rule out from the quantum set a class of no-signaling boxes that would result in trivialization of communication
 complexity; Also, we have shown 
 that the recently discovered bound on the quantum value of the CHSH-$d$ game, obtained  in Ref. \cite{Bavarian}, 
 can be derived in a simple manner using our bound; And finally, we have
 extended the principle of no-advantage for nonlocal computation to a class of functions with prime $d$ possible values.
 
 Moreover the derived bound is efficiently computable since, for a linear game with $d$ outcomes, it requires 
 the spectral norm of $d$ game matrices, where the size of these matrices grows polynomially with the number of questions in the game (a
 game with $m$ question per player has game matrices of size $m \times m$).

As a future direction, it would be particularly interesting to investigate if one can extend 
 the result of Theorem \ref{thmxorshannon} to the graphs associated to $NLC_d$.
 As a more challenging next step, we point to the generalization of the technique of norm bounds to classes of Bell 
inequalities beyond linear games. Due to its simplicity, it could 
lead to general results of fundamental and practical interest.

\chapter{Multiplayer linear games and device-independent witness of genuine tripartite entanglement}\label{chapternplayer}
\chaptermark{Multiplayer linear games and DIEWs}

In this chapter we present results of Ref. \cite{GMmultiplayer}
\begin{center}
\begin{minipage}{12.5cm}
 \textit{Quantum bounds on multiplayer linear games and device-independent witness of genuine tripartite entanglement }\\
  \textbf{G. Murta}, R. Ramanathan, N. M\'oller, and M. Terra Cunha\\
   \href{http://link.aps.org/doi/10.1103/PhysRevA.93.022305}{\textbf{Phys. Rev. A, \textbf{93}, 022305, (2016)}}.
\end{minipage}
\end{center}
Now we consider the case of linear games with $n$ players. 
We generalize the bound obtained in the previous Chapter \cite{GMxord} to 
the quantum value of an $n$-player game. We extend the examples of the 2-player case, deriving an upper bound to the 
quantum value of a generalization of the CHSH-$d$ game for $n$ players, and also we exclude the possibility of quantum realization of 
multipartite functional boxes that would lead to trivialization of communication complexity in a multipartite scenario.
As our main result for the multipartite scenario, we use the bounds to design
devide-independent witnesses of genuine multipartite entanglement for tripartite systems.

\section{Motivation}
Multipartite scenarios bring fundamental and practical new features.
From the fundamental point of view, 
the possibility of having more parties interacting with each other brings the novelty of different classes of 
correlations (now all the parties can share nonlocal correlations or, else, only a subgroup of the parties can be non-classically correlated),
and therefore we can have a much richer nonlocality structure. 
Moreover, as shown in  Ref. \cite{NoBiPrinciple},
no bipartite principle is sufficient to single out the set of quantum correlations for an arbitrary number of parties, and 
hence the study of the intrinsically multipartite features is necessary.
From the practical point of view, multipartite scenarios
allow for the realization of many cryptographic tasks whose unconditional security cannot be guaranteed in a bipartite scenario \cite{MIP88}.
A remarkable example is the task of bit commitment for which no-go theorems \cite{BCMayers,BCLoChau} 
state the impossibility of having an unconditionally
secure bipartite protocol. This impossibility was circumvented by  Kent \cite{Kent11, Kent12} who proposed the idea of adding multiple
space-like separated 
agents  for each party (what is called a ``relativistic protocol''), allowing unconditional security as long as the agents remain
 separated. 
In Ref. \cite{BCJedrelativistic} a relativistic protocol was proposed, where the commitment can be made arbitrarily
long by the introduction of a rounding procedure. 
Interestingly, the security against classical adversaries is guaranteed by a mapping to the problem of estimating the performance of the players in a
multiplayer game.

Besides its undeniable importance, very few  results are known for  multipartite Bell scenarios. 
We now extend the techniques presented in Chapter \ref{chaptergamesd} \cite{GMxord} and provide an upper bound to the quantum value 
of multiplayer linear games based on game matrices.

\section{An efficiently computable bound to the quantum value of multiplayer linear games}\label{bound}

Our goal in this Section is to bound the performance of players sharing quantum resources in an $n$-player linear game $g_n^{\ell}(G,f,p)$
(Definition \ref{defLGn}).

 A generalization of Lemma \ref{lemmacorre} also holds for multiplayer games, and 
  the average probability of success on the game, $ \omega(g_n^{\ell})$, can be written in terms of the generalized correlators.  
 The \emph{multipartite generalized correlators} $\langle{A_1}_{x_1}^i\ldots {A_n}_{x_n}^j\rangle$ are defined as the 
Fourier transform of the probabilities
 \begin{align}
 \langle{A_1}_{x_1}^i\ldots {A_n}_{x_n}^j\rangle=\sum_{a_1,\ldots,a_n \in G} \bar{\chi}_i(a_1)\ldots \bar{\chi}_j(a_n)P(a_1,\ldots,a_n|x_1,\ldots,x_n),
 \end{align}
and the inverse Fourier transform gives us
  \begin{align}
P(a_1,\ldots,a_n|x_1,\ldots,x_n) =\frac{1}{|G|^n}\sum_{a_1,\ldots,a_n \in G} {\chi}_i(a_1)\ldots {\chi}_j(a_n)\langle{A_1}_{x_1}^i\ldots {A_n}_{x_n}^j\rangle.
 \end{align}
 
 \begin{lemma}\label{lemmacorren}
 Given a particular strategy $\vec{P}(a_1,\ldots,a_n|x_1,\ldots,x_n)$, the average probability of success in an $n$-player linear game
 $g_n^{\ell}(G,f,p)$ can be written as
 \begin{align}\label{eqwcorn}
  \omega(g_n^{\ell}) = \frac{1}{|G|}\de{1+\sum_{x_1,\ldots,x_n}\sum_{k \in G \setminus \DE{e}} {p(x_1,\ldots,x_n)} \chi_k(f(x_1,\ldots,x_n))\langle {A_1}_{x_1}^k\ldots {A_n}_{x_n}^k\rangle}.
 \end{align}
\end{lemma}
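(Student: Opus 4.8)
The plan is to mirror the bipartite argument of Lemma \ref{lemmacorre}, generalizing each step to $n$ parties. Since $\omega(g_n^{\ell})=\sum_{\vec{x}}p(\vec{x})P(a_1+\ldots+a_n=f(\vec{x})|\vec{x})$ by Eq. \eqref{eqwLG}, the heart of the proof is to re-express the single winning probability $P(a_1+\ldots+a_n=f(\vec{x})|\vec{x})$ in terms of the multipartite generalized correlators defined just above the lemma. First I would write this probability as a sum over all output tuples lying on the winning ``diagonal'', parametrizing the last output through the constraint:
\[
P(a_1+\ldots+a_n=f(\vec{x})|\vec{x})=\sum_{a_1,\ldots,a_{n-1}}P\de{a_1,\ldots,a_{n-1},f(\vec{x})-a_1-\ldots-a_{n-1}\,\big|\,\vec{x}}.
\]

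Next I would substitute the inverse Fourier transform, using distinct summation indices $i_1,\ldots,i_n\in G$, and apply the homomorphism property $\chi_{i_n}(f(\vec{x})-\sum_l a_l)=\chi_{i_n}(f(\vec{x}))\prod_{l}\chi_{i_n}(-a_l)$ together with reflexivity $\chi_{i_n}(-a_l)=\bar{\chi}_{i_n}(a_l)$ from Eq. \eqref{eqcharacprop}, so as to factor $\chi_{i_n}(f(\vec{x}))$ out of the free-output sums. Each of the $n-1$ remaining sums then collapses by the orthogonality relation $\sum_{a_l}\chi_{i_l}(a_l)\bar{\chi}_{i_n}(a_l)=|G|\,\delta_{i_l,i_n}$, forcing $i_1=\ldots=i_{n-1}=i_n=:k$ and producing an overall factor $|G|^{n-1}$ that cancels against the $|G|^{-n}$ of the inverse transform. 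This leaves
\[
P(a_1+\ldots+a_n=f(\vec{x})|\vec{x})=\frac{1}{|G|}\sum_{k\in G}\chi_k(f(\vec{x}))\,\langle{A_1}_{x_1}^k\ldots{A_n}_{x_n}^k\rangle.
\]

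To finish, I would isolate the $k=e$ term, noting that the trivial character gives $\chi_e(f(\vec{x}))=1$ while normalization gives $\langle{A_1}_{x_1}^e\ldots{A_n}_{x_n}^e\rangle=1$, so this term contributes exactly $1$; averaging over inputs with $\sum_{\vec{x}}p(\vec{x})=1$ then yields Eq. \eqref{eqwcorn}. I do not expect a genuine obstacle, since the whole computation is driven by the three character identities of Eq. \eqref{eqcharacprop}, precisely as in the bipartite case. The only point demanding care is the bookkeeping with the $n$ independent Fourier indices: one must ensure that the orthogonality relations act on exactly the $n-1$ free outputs and collapse all indices to the single common value $k$, without miscounting the constrained output $a_n$.
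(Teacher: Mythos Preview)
Your proposal is correct and follows essentially the same approach as the paper's proof: parametrize the winning outputs by fixing the last one via the constraint, substitute the inverse Fourier transform, use the homomorphism and orthogonality relations \eqref{eqcharacprop} to collapse all Fourier indices to a single $k$, and then split off the $k=e$ term using normalization. The only cosmetic difference is that the paper writes out the computation explicitly for $n=3$ and states that the general case follows analogously, whereas you carry out the bookkeeping directly for arbitrary $n$.
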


For the particular case of a 3-player linear game $g_3^{\ell}(G,f,p)$, Lemma \ref{lemmacorren} gives us
 \begin{align}\label{eqwcor3}
  \omega(g_3^{\ell}) = \frac{1}{|G|}\de{1+\sum_{x,y,z}\sum_{k \in G \setminus \DE{e}} {p(x,y,z)} \chi_k(f(x,y,z))\corre{A_x^kB_y^kC_z^k}}.
 \end{align}

\begin{proof}[Proof of Lemma \ref{lemmacorren}]
 We present the proof for the case of 3 players. The case of $n$ players 
 follows in an analogous way.
 
Given the probabilities in terms of the generalized correlators:
 \begin{align}
  P(a,b,c|x,y,z)=\frac{1}{|G|^3}\sum_{i,j,k \in G} {\chi}_i(a){\chi}_j(b){\chi}_k(c)\corre{A_x^iB_y^jC_z^k},
 \end{align}
we can proceed to calculate $P(a+b+c=f(x,y,z)|x,y,z)$:
  \begin{align}\label{eqcharacters3}
 \begin{split}
P(a + b +c= f(x&,y,z) | x,y,z) =\\
=&\sum_{a,b} P(a,b, f(x,y,z)-a-b|x,y,z)\\
			=&\sum_{a,b} \frac{1}{|G|^3} \sum_{i,j,k \in G} \chi_{i}(a) \chi_{j}(b) \chi_k(f(x,y)-a-b) \langle A_{x}^{i} B_{y}^{j} C_z^k\rangle \\
			=&\frac{1}{|G|^3} \sum_{i,j,k \in G} \chi_{k}(f(x,y)) \de{\sum_a \chi_{i}(a) \chi_{j}(-a)}\\
			&\;\;\;\;\;\;\;\;\;\;\;\;\;\;\times \de{\sum_b \chi_{j}(b) \chi_{k}(-b)}  \langle A_{x}^{i} B_{y}^{j}C_z^k\rangle \\
			=&\frac{1}{|G|} \sum_{k \in G} \chi_{k}(f(x,y)) \langle A_{x}^{k} B_{y}^{k}C_z^k\rangle 
			\end{split}
			\end{align}
 where we have used the characters' properties \eqref{eqcharacprop}.

The weighted sum over the inputs, gives us
the desired result. 
\end{proof}


Considering a particular quantum strategy given by the 
set of projective measurements $\{M_x^a\}$, $\{M_y^b\}$, $\{M_z^c\}$ performed  {on} the tripartite quantum state $\ket{\psi}$, the 
tripartite correlators correspond to
\begin{align}
 \corre{A_x^iB_y^jC_z^k}=\bra{\psi}A_x^i\otimes B_y^j \otimes C_z^k \ket{\psi},
\end{align}
where, as defined in the previous Chapter:
\begin{align}
 A_x^i=\sum_a\bar{\chi}_i(a)M_x^a,
\end{align}
and analogously for $B_y^j$ and $C_z^k$.

 Motivated by Lemma \ref{lemmacorren}, for tripartite linear games  we can also associate a set of $|G|-1$ (rectangular) matrices
 which carry information about
 the probability distribution with which the referee picks questions and also the winning condition of the game.
 
 \begin{definition}[Multiplayer linear game matrices]\label{defLGmatrices}
Given a linear game $g_3^{\ell}(G,f,p)$ the associated $|G|-1$ game matrices are defined as
 \begin{align}\label{phi3}
 \Phi_k = \sum_{(x,y,z) \in Q}p(x,y,z)\chi_k(f(x,y,z))\ketbra{x}{yz}\;\; \text{for}\;\; k\in G\setminus \DE{e}
\end{align}
{where $\DE{\ket{x}}$, $\DE{\ket{y}}$ and $\DE{\ket{z}}$ form orthonormal basis in $\C^{|\inp_1|}$, $\C^{|\inp_2|}$ and $\C^{|\inp_3|}$ 
respectively, and $Q=\inp_1\times \inp_2 \times \inp_3$.}
\end{definition}

Note that in Definition \ref{defLGmatrices} we have chosen to write the game matrices in terms of the partition  $x|yz$ of 
the inputs. However, we could equally chose any other partition 
$y|xz$ or $z|xy$ and define the respective game matrices in an analogous way. If the winning condition of the game $f(x,y,z)$ is not invariant 
over the permutation of parties, each partition would give rise to different matrices.

 Given all these definitions we are ready to state the main result of this Chapter which generalizes
  the norm bound presented in Chapter \ref{chaptergamesd} for multiplayer linear games. We start with a 3-player game.

\begin{theorem}\label{thmnorm3}
 The quantum value of a tripartite linear game, $g_3^{\ell}(G,f,p)$, where players $A, B$, and $C$ receive respectively questions 
 $x\in \inp_1, y \in \inp_2, z\in \inp_3$ and answer with elements of an
 Abelian group $(G,+)$, is upper bounded by
 \begin{align}\label{norm3}
  \omega_q(g_3^{\ell}) \leq \frac{1}{|G|}\de{1+\sqrt{|\inp_1||\inp_2||\inp_3|}\sum_{k \in G \setminus \DE{e}}\norm{\Phi_k}},
 \end{align}
where  {$\norm{\cdot}$} denotes the maximum singular value of the matrix, $e$ is the identity element of the group G, and
$ \Phi_k$ are the game matrices.
\end{theorem}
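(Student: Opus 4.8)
The plan is to mimic exactly the structure of the bipartite proof of Theorem~\ref{thmLGwqbound}, replacing the two-party vectors $\ket{\alpha_k},\ket{\beta_k}$ by an appropriate tripartite construction. First I would start from the expression for the quantum value obtained by combining Lemma~\ref{lemmacorren} (in its three-player form, Eq.~\eqref{eqwcor3}) with the correlator identity $\corre{A_x^kB_y^kC_z^k}=\bra{\psi}A_x^k\otimes B_y^k\otimes C_z^k\ket{\psi}$, so that
\begin{align}
\omega_q(g_3^{\ell})=\sup_{\ket{\psi},\{M\}}\frac{1}{|G|}\de{1+\sum_{x,y,z}\sum_{k\in G\setminus\{e\}}p(x,y,z)\chi_k(f(x,y,z))\bra{\psi}A_x^k\otimes B_y^k\otimes C_z^k\ket{\psi}}.
\end{align}
The key observation, exactly as in the two-party case, is that the game matrix $\Phi_k$ of Definition~\ref{defLGmatrices} is written in the partition $x|yz$, so I want to group the $B$ and $C$ operators together and view them against the single $A$-index. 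This motivates defining, for the optimal strategy, the unit vectors
\begin{align}\label{vec3strategy}
\begin{split}
\ket{\alpha_k}&=\sum_{x\in\inp_1}\frac{1}{\sqrt{|\inp_1|}}\de{{A_x^k}^{\dagger}\otimes\I_B\otimes\I_C\otimes\I_x}\ket{\psi}\otimes\ket{x},\\
\ket{\beta_k}&=\sum_{y\in\inp_2,z\in\inp_3}\frac{1}{\sqrt{|\inp_2||\inp_3|}}\de{\I_A\otimes B_y^k\otimes C_z^k\otimes\I_{yz}}\ket{\psi}\otimes\ket{yz}.
\end{split}
\end{align}

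Next I would verify that these vectors are normalized (using that the $M$'s are projective measurements summing to the identity, so that $\sum_a {A_x^k}^{\dagger}A_x^k$-type sums collapse correctly under the character orthogonality relations of Eq.~\eqref{eqcharacprop}), and then check by direct substitution that
\begin{align}
\omega_q(g_3^{\ell})=\frac{1}{|G|}\de{1+\sqrt{|\inp_1||\inp_2||\inp_3|}\sum_{k\in G\setminus\{e\}}\bra{\alpha_k}\I_{ABC}\otimes\Phi_k\ket{\beta_k}}.
\end{align}
The factor $\sqrt{|\inp_1||\inp_2||\inp_3|}$ arises precisely from the three normalization denominators. From here the argument is identical to the bipartite case: each term $\bra{\alpha_k}\I_{ABC}\otimes\Phi_k\ket{\beta_k}$ is bounded by $\norm{\I_{ABC}\otimes\Phi_k}=\norm{\Phi_k}$ by the definition of the spectral norm and the fact that $\ket{\alpha_k},\ket{\beta_k}$ are unit vectors, which yields Eq.~\eqref{norm3}.

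The main obstacle I anticipate is the bookkeeping in the normalization check of $\ket{\beta_k}$, since now the $B$ and $C$ operators act on the same global state and appear together in a single tensor index $\ket{yz}$; I must confirm that the cross terms vanish and the diagonal terms give $1$ after summing over $y,z$, which relies on the projectivity of the measurements and on $\chi_k$ being a character of modulus one so that ${B_y^k}^{\dagger}B_y^k$ and ${C_z^k}^{\dagger}C_z^k$ behave like positive operators with the correct normalization. The grouping $x|yz$ is a genuine choice, and I would remark (as the paragraph after Definition~\ref{defLGmatrices} already notes) that any of the three bipartitions yields an analogous bound, so the tightest bound is obtained by taking the minimum over partitions. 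Everything else is a mechanical transcription of the two-party proof, so the conceptual content is entirely in recognizing that the bipartite argument only used a \emph{bipartition} of the parties into ``one side'' versus ``the rest,'' and this structure survives verbatim for $n$ players.
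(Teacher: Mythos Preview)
Your proposal is correct and follows essentially the same argument as the paper: you define the same unit vectors $\ket{\alpha_k},\ket{\beta_k}$ for the bipartition $x|yz$, rewrite the correlator sum as $\sqrt{|\inp_1||\inp_2||\inp_3|}\,\bra{\alpha_k}\I_{ABC}\otimes\Phi_k\ket{\beta_k}$, and bound by the spectral norm. The normalization check you flag as the main obstacle is indeed routine once you use that ${A_x^k}^{\dagger}A_x^k=\sum_a|\chi_k(a)|^2 M_x^a=\I$ (and similarly for $B,C$) for projective measurements, and the paper in fact just asserts the vectors are normalized without spelling this out.
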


\begin{proof}
The proof follows analogously to the 2-player case.
Consider a quantum strategy where measurements $\{M_x^a\}$, $\{M_y^b\}$, $\{M_z^c\}$ are 
 performed on the tripartite quantum state $\ket{\psi}$, hence we have
 \begin{align}\label{swq}
  \omega(g_3^{\ell}) = \frac{1}{|G|}\de{1+\sum_{x,y,z}\sum_{k \in G\setminus \DE{e}}p(x,y,z)\chi_k(f(x,y,z))\bra{\psi}A_x^k\otimes B_y^k \otimes C_z^k\ket{\psi}}.
 \end{align}
And we can define the normalized vectors
 \begin{align}
 \begin{split}
  \ket{\alpha^k}=&\frac{1}{\sqrt{|\inp_1|}}\sum_{x \in \inp_1}{A_x^k}^{\dagger}\otimes \I_{BC} \otimes \I_{\inp_1} \ket{\psi}\ket{x}\\
  \ket{\beta^k}=& \frac{1}{\sqrt{|\inp_2||\inp_3|}}\sum_{x,y \in \inp_2\times \inp_3} \I_{A}\otimes B_y^k \otimes C_z^k  \otimes \I_{\inp_2,\inp_3} \ket{\psi}\ket{y,z}.
 \end{split}
 \end{align}

Now by making use of the game matrices $\Phi_k$ \eqref{phi3} we have the desired result:
 \begin{align}\label{swq3}
  \omega(g_3^{\ell}) &= \frac{1}{|G|}\de{1+\sqrt{|\inp_1||\inp_2||\inp_3|}\sum_{k \in G\setminus \DE{e}}\bra{\alpha^k}\I_{ABC}\otimes \Phi_k \ket{\beta^k}}\nonumber\\
  &\leq  \frac{1}{|G|}\de{1+\sqrt{|\inp_1||\inp_2||\inp_3|}\sum_{k \in G\setminus \DE{e}}\norm{\I_{ABC}\otimes \Phi_k}}\\
  &=  \frac{1}{|G|}\de{1+\sqrt{|\inp_1||\inp_2||\inp_3|}\sum_{k \in G\setminus \DE{e}}\norm{\Phi_k}}\nonumber.
 \end{align}
 \end{proof}

The generalization for $n$-player games is given by the following Theorem.
\begin{theorem}\label{thmnormn}
 Consider an $n$-player linear game,  $g_n^{\ell}(G,f,p)$. Let $S$ be a proper subset
 of the parties, $S \subset \DE{1,\ldots, n}$.
 The quantum value of an $n$-player linear game, $g_n^{\ell}(G,f,p)$, is upper bounded by
 \begin{align}\label{normmulti}
  \omega_q(g_n^{\ell}) \leq \min_{S} \frac{1}{|G|}\de{1+\sqrt{|\inp_1|\ldots |\inp_n|}\sum_{k \in G\setminus \DE{e}}\norm{\Phi^S_k}},
 \end{align}
where $\norm{\Phi^S_k}$ denotes the maximum singular value of matrix $\Phi^S_k$ and the game matrices for partition $S$ are defined as 
\begin{align}
 \Phi^S_k=\sum_{{\vec{x} \in {Q}_S, \vec{y} \in {\inp}_{S^c}}}p(\vec{x},\vec{y})\chi_k(f(\vec{x},\vec{y}))\ketbra{\vec{x}}{\vec{y}}.
\end{align}
$\vec{x} \in {{\inp}_S}$ denotes the vector of inputs of the players that belong to set $S$, and $S^c$ is the complement of $S$.
\end{theorem}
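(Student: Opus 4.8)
The plan is to reduce Theorem \ref{thmnormn} to a single application of Theorem \ref{thmnorm3}-style reasoning, performed once for each choice of bipartition $S$, and then take the minimum over all such partitions. The key observation is that the norm bound does not actually rely on there being exactly three parties: it only uses the fact that the $n$ parties are split into two groups, with one group playing the role of ``Alice'' (the row index of the game matrix) and the complementary group playing the role of ``Bob'' (the column index). So for a fixed proper subset $S\subset\{1,\ldots,n\}$, I would collapse the players indexed by $S$ into a single effective party with input $\vec{x}\in\inp_S$ and the players in $S^c$ into a single effective party with input $\vec{y}\in\inp_{S^c}$.

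First I would write out the average success probability in terms of the multipartite generalized correlators using Lemma \ref{lemmacorren}, giving
\begin{align}
\omega(g_n^{\ell}) = \frac{1}{|G|}\de{1+\sum_{\vec{x},\vec{y}}\sum_{k \in G\setminus\DE{e}} p(\vec{x},\vec{y})\chi_k(f(\vec{x},\vec{y}))\corre{\textstyle\prod_i {A_i}_{x_i}^k}},
\end{align}
where I have grouped the input string according to the partition $S\,|\,S^c$. For a quantum strategy I would then define, exactly as in the tripartite proof, the normalized vectors
\begin{align}
\ket{\alpha^k}&=\frac{1}{\sqrt{|\inp_S|}}\sum_{\vec{x}\in\inp_S}\de{\textstyle\prod_{i\in S}{A_i}_{x_i}^k}^{\dagger}\otimes \I_{S^c}\otimes\I_{\inp_S}\ket{\psi}\ket{\vec{x}},\\
\ket{\beta^k}&=\frac{1}{\sqrt{|\inp_{S^c}|}}\sum_{\vec{y}\in\inp_{S^c}}\I_S\otimes\de{\textstyle\prod_{j\in S^c}{A_j}_{y_j}^k}\otimes\I_{\inp_{S^c}}\ket{\psi}\ket{\vec{y}},
\end{align}
where the products of the (commuting, since they act on distinct tensor factors) observables play the role of the single-party observables in the bipartite argument. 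A short check that these are unit vectors proceeds exactly as in the $2$-player and $3$-player cases, using that $\sum_a M_x^a=\I$ makes $\sum_i(A_x^i)^\dagger A_x^i$ resolve to identity on the relevant marginal.

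With these vectors in hand the success probability becomes $\frac{1}{|G|}(1+\sqrt{|\inp_1|\cdots|\inp_n|}\sum_k\bra{\alpha^k}\I\otimes\Phi_k^S\ket{\beta^k})$, since $|\inp_S||\inp_{S^c}|=|\inp_1|\cdots|\inp_n|$, and then bounding each inner product by the operator norm $\norm{\I\otimes\Phi_k^S}=\norm{\Phi_k^S}$ yields the bound for that particular partition $S$. Because every proper bipartition $S$ gives a valid upper bound on the \emph{same} quantity $\omega_q(g_n^{\ell})$, I may take the tightest one, which produces the $\min_S$ in Eq. \eqref{normmulti}. I do not expect any genuinely hard step here, since the whole argument is a direct lift of Theorem \ref{thmnorm3}; the only point requiring a little care is the bookkeeping of tensor-factor labels and confirming that the grouped observables for the parties in $S$ really do commute and combine into a single operator on $\Hi_S$, so that the Tsirelson-style vectorization goes through verbatim. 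The genuinely multipartite content is therefore not in the inequality itself but in the freedom to optimize over bipartitions, and this freedom is exactly what later makes the bound useful for detecting genuine multipartite entanglement.
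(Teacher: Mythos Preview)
Your proposal is correct and follows essentially the same route as the paper: fix a bipartition $S\,|\,S^c$, group the parties into two effective players, define the Tsirelson-style vectors $\ket{\alpha^k},\ket{\beta^k}$ using the tensor products of the (commuting) single-party observables, rewrite the success probability as an inner product involving $\I\otimes\Phi_k^S$, bound by the spectral norm, and finally minimize over $S$. One minor remark: the normalization of $\ket{\alpha^k}$ and $\ket{\beta^k}$ does not come from $\sum_i (A_x^i)^\dagger A_x^i=\I$ but from the fact that, for projective measurements, each $A_x^k$ is unitary (since $(A_x^k)^\dagger A_x^k=\sum_{a,a'}\chi_k(a)\bar\chi_k(a')M_x^aM_x^{a'}=\sum_a M_x^a=\I$), so the grouped operator $\bigotimes_{i\in S}{A_i}_{x_i}^k$ is unitary as well.
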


\begin{proof}
%
%
Let $S$ be a proper subset  of the parties $S \subset \DE{1,\ldots, n}$, and the associated game matrices be defined as 
 \begin{align}
 \Phi^S_k=\sum_{\vec{x} \in {Q}_S, \vec{y} \in {Q}_{S^c}}p(\vec{x},\vec{y})\chi_k(f(\vec{x},\vec{y}))\ketbra{\vec{x}}{\vec{y}}.
\end{align}
Now by defining the normalized vectors
  \begin{align}
  \begin{split}
  \ket{\alpha^k}=&\frac{1}{\sqrt{|\inp_{S}|}}\sum_{\vec{x} \in {Q}_{S}}\de{\bigotimes_{i \in S}{{A_i}_{x_i}^k}^{\dagger}}\otimes \I_{S^c} \otimes \I_{\inp_S} \ket{\psi}\ket{\vec{x}}\\
  \ket{\beta^k}=& \frac{1}{\sqrt{|\inp_{S^c}|}}\sum_{\vec{y} \in {Q}_{S^c}} \I_{S}\otimes \de{\bigotimes_{i \in S^c}{A_i}_{x_i}^k}  \otimes \I_{\inp_{S^c}} \ket{\psi}\ket{\vec{y}},
 \end{split}
 \end{align}
 where $|\inp_S|=\prod_{i\in S}|\inp_{i}|$, and ${\inp_S}=\inp_{i_1}\times \ldots \times \inp_{i_k}$ for $i_k \in S$,
 we have that 
\begin{align}\label{swqn}
  \omega(g_n^{\ell}) &= \frac{1}{|G|}\de{1+\sqrt{|\inp_1|  \ldots |\inp_n|}\sum_{k \in G\setminus \DE{e}}\bra{\alpha^k}\I_{A_1\ldots A_n}\otimes \Phi^S_k \ket{\beta^k}}\nonumber\\
  &\leq  \frac{1}{|G|}\de{1+\sqrt{|\inp_1|  \ldots  |\inp_n|}\sum_{k\in G\setminus \DE{e}}\norm{\I_{A_1\ldots A_n}\otimes \Phi^S_k}}\\
  &=  \frac{1}{|G|}\de{1+\sqrt{|\inp_1|  \ldots  |\inp_n|}\sum_{k\in G\setminus \DE{e}}\norm{\Phi^S_k}}\nonumber.
 \end{align}
 
 By the construction of the proof we see that for all subset $S$ we have a valid upper bound to the quantum value. 
 \end{proof}

In Theorem \ref{thmnormn} each partition $S$ of the set of parties provides an upper bound to the quantum value, the minimum in Eq. \eqref{normmulti}
selects the most restrictive one. 
In Definition \ref{defLGmatrices} we have chosen $S=\DE{1}$ for the 3-player game, but  writing 
the game matrices
with $S=\DE{2}$ or $S=\DE{3}$
can lead to tighter bounds than the 
one derived from Eq. \eqref{phi3}. 

%

\subsection*{The computational complexity of our bound}

Theorem \ref{thmnormn} states an upper bound to the quantum value of $n$-player linear games in terms of the spectral norm of the game matrices $\Phi_k^S$,
whose dimension depends on the number of players and the number of questions per player. 
Given an $n$-player linear game $g_n^{\ell}(G,f,p)$ with $m$ questions per player and $d$ possible outcomes, the game
matrices have dimension $m^n$ (where the number of rows and columns depends on the subset $S$ chosen to construct the matrix).
The singular value decomposition of these matrices has time complexity at most\footnote{In Ref. \cite{SVDcomplex}, an algorithm  
for finding  the singular values of an $k \times l$ matrix, $l \leq k$, in time $T(k,l)=\mathcal{O}(2kl^2+2l^3)$ is presented.
Therefore, for the worst case of $S$ containing $n/2$ elements,  $\Phi_k^S$ is an $m^{n/2}\times m^{n/2}$ matrix, 
and $T(n,m,d)=\mathcal{O}(dm^{\frac{3}{2}n}) $.} 
$T(n,m)=\mathcal{O}(m^{\frac{3}{2}n})$ \cite{SVDcomplex}.
Therefore, for a particular subset $S$, an upper bound to the quantum value of game $g_n^{\ell}(G,f,p)$ can be obtained with time complexity
$T(n,m,d)=\mathcal{O}(dm^{\frac{3}{2}n})$. So we see that the complexity increases exponentially with the number of players. Moreover
if we want to obtain the smallest of the upper bounds we would have to run the algorithm an exponential number of times, since
there are $2^{n-1}$ possible subsets\footnote{Note that $\Phi^S_k=(\Phi^{S^c}_k)^T$ which therefore leads to the same bound. } $S$.
Nevertheless, for particular problems (as for example the $n$-player CHSH-$d$ game that we are going to discuss in the next Section)
the bound may be easily calculated analytically by using the symmetries of the game matrix, without the need to perform an explicit numerical calculation.

\section{\texorpdfstring{$n$}{n}-player CHSH-\texorpdfstring{$d$}{d} game}\label{chsh}

In Section \ref{secapplicationsxord} we considered a $d$-output generalization of the CHSH game, for $d$ prime or power of a prime. 
Here we generalize this game for $n$ players following an expression first introduced by Svetlichny \cite{Svetlichny} 
in the context of detecting genuine multipartite nonlocality. 

\begin{definition}The $d$-input and $d$-output per player, $n$-player CHSH game, the {CHSH${_n}$-$d$} game, for $d$ prime or a power of prime, is a
linear game with the winning condition {given by}
\begin{align}\label{eqwinchshnd}
 a_1 + \ldots + a_n= \sum_{i<j} x_i \cdot x_j
 \end{align}
where addition and multiplication are operations defined over the  {finite} field $\mathbb{F}_d$.
\end{definition}

In order to exemplify, let us consider the CHSH${_3}$-$3$ game, where the inputs and outputs are elements
of $\mathbb{Z}_3$, $a,b,c,x,y,z \in \DE{0,1,2}$, and $+,\cdot$ are sum and multiplication modulo 3. The winning condition \eqref{eqwinchshnd}
reduces to
\begin{align}\label{eqwinchsh33}
 a + b+c= x \cdot y+x\cdot z +y\cdot z.
 \end{align}
The game matrix $\Phi_1$ for the  CHSH${_3}$-$3$ is then given by
\begin{align}
  \Phi_1=\sum_{x,y,z=0}^{2} \frac{1}{27} \zeta^{x\cdot y +x\cdot z+y \cdot z}\ketbra{x}{yz},
\end{align}
which is explicitly written as
\begin{align}
  \Phi_1=\frac{1}{27} \begin{bmatrix} 
1&1&1&1&\zeta&\zeta^2&1&\zeta^2&\zeta \\
1 &\zeta &\zeta^2 &\zeta&1& \zeta^2&\zeta^2&\zeta^2&\zeta^2 \\
1 & \zeta^2 &\zeta&\zeta^2&\zeta^2  &\zeta^2&\zeta&\zeta^2&1 \end{bmatrix}.
\end{align}
where $\zeta=e^{2\pi \mathrm{i}/3}$ is a 3rd root of unity.

Now we use Theorem \ref{thmnormn} to prove an  upper bound on the performance of quantum players in the CHSH${_n}$-$d$ game. 
\begin{theorem}\label{thmchshd}
 The quantum value of the CHSH${_n}$-$d$ game, for $d$  prime or a power of a prime, obeys
 \begin{align}\label{boundchsh}
\omega_q(\text{CHSH$_n$-d})\leq \frac{1}{d}+\frac{d-1}{d\sqrt{d}}  .
 \end{align}
\end{theorem}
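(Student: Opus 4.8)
The plan is to apply the general multiplayer norm bound, Theorem \ref{thmnormn}, to the specific game matrices of the CHSH$_n$-$d$ game, exactly mirroring the two-player proof of the bound \eqref{Bavarian-bound-app}. First I would fix a bipartition $S$ of the $n$ players and write down the corresponding game matrix
\begin{align}
 \Phi_k^S=\sum_{\vec{x}}\frac{1}{d^n}\chi_k\de{\sum_{i<j}x_i\cdot x_j}\ketbra{\vec{x}_S}{\vec{x}_{S^c}},
\end{align}
where the inputs are uniformly distributed, $p(\vec{x})=1/d^n$, and $\chi_k(a)=\zeta^{ka}$ with $\zeta=\exp(2\pi \mathrm{i}/d)$. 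The key computation is then to evaluate ${\Phi_k^S}^{\dagger}\Phi_k^S$ and show it is proportional to the identity, so that every singular value of $\Phi_k^S$ equals the same value $\norm{\Phi_k^S}$. This is the exact analogue of the step in the CHSH-$d$ proof where one finds $\Phi_k^{\dagger}\Phi_k=\I/d^3$.

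The main work is the character-orthogonality calculation. I would choose a convenient bipartition (for instance isolating a single player, $S=\{1\}$, or splitting the players into two halves) and expand ${\Phi_k^S}^{\dagger}\Phi_k^S$ as a double sum over input strings. The winning function $\sum_{i<j}x_i\cdot x_j$ is a quadratic form, and when one takes the difference $f(\vec{x})-f(\vec{x}')$ appearing in ${\Phi_k^S}^{\dagger}\Phi_k^S$, the summation over the players in $S$ should collapse via the orthogonality relation $\sum_{x}\chi_k(x\cdot c)=d\,\delta_{c,0}$ (valid for prime or prime-power $d$ over $\mathbb{F}_d$, where multiplication by a nonzero element is a bijection of the field). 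I expect this forces the off-diagonal blocks to vanish and leaves a diagonal matrix $\propto\I$. The upshot should be a relation of the form ${\Phi_k^S}^{\dagger}\Phi_k^S=\frac{1}{d^{n+1}}\I$, giving
\begin{align}
 \norm{\Phi_k^S}=\frac{1}{\sqrt{d^{n+1}}}\;\;\forall\,k\in\{1,\ldots,d-1\}.
\end{align}

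Finally I would substitute this common norm into the bound \eqref{normmulti} of Theorem \ref{thmnormn}. With $|\inp_i|=d$ for all $i$, the prefactor is $\sqrt{|\inp_1|\cdots|\inp_n|}=\sqrt{d^n}$, and summing over the $d-1$ nonidentity characters yields
\begin{align}
 \omega_q(\text{CHSH}_n\text{-}d)\leq\frac{1}{d}\de{1+\sqrt{d^n}\,(d-1)\,\frac{1}{\sqrt{d^{n+1}}}}=\frac{1}{d}+\frac{d-1}{d\sqrt{d}},
\end{align}
which is precisely \eqref{boundchsh}; note the $d^n$ factors cancel, so the bound is independent of the number of players $n$, matching the two-player result. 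The main obstacle I anticipate is verifying cleanly that the block structure of ${\Phi_k^S}^{\dagger}\Phi_k^S$ really diagonalizes for the quadratic form $\sum_{i<j}x_i x_j$ across an arbitrary bipartition: the cross terms coupling $S$ and $S^c$ need careful bookkeeping, and one must genuinely use the field structure (primality or prime-power of $d$) to guarantee the relevant character sums vanish. Once that orthogonality is established the rest is a direct substitution.
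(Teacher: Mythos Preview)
Your approach is exactly the paper's: apply Theorem \ref{thmnormn} with the single-player partition $S=\{1\}$, compute the game matrices, use character orthogonality over $\mathbb{F}_d$ to find $\norm{\Phi_k^S}=1/\sqrt{d^{n+1}}$ for every $k$, and substitute. One small correction: with $S=\{1\}$ the matrix $\Phi_k^S$ is $d\times d^{n-1}$, so for $n>2$ it is ${\Phi_k^S}{\Phi_k^S}^{\dagger}$ (the $d\times d$ product) that equals $\frac{1}{d^{n+1}}\I_d$, not ${\Phi_k^S}^{\dagger}\Phi_k^S$, which has rank $d<d^{n-1}$ and hence cannot be proportional to the identity; either product gives the same spectral norm, but computing the smaller one is what makes the calculation clean and removes the ``arbitrary bipartition'' bookkeeping you were worried about.
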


\begin{proof}
The proof follows from a direct calculation of $\Phi_k^S{\Phi_k^S}^{\dag}$ with the partition $S=\DE{1}$.
The game matrices associated to partition $S=\DE{1}$ are the following:
\begin{align}
 \Phi_k^{A_1}=\sum_{x_1,\ldots,x_n}\frac{1}{d^n}\chi_k(\sum_{j>i}x_i \cdot x_j)\ketbra{x_1}{x_2 \ldots x_n}.
\end{align}

By making use of the characters 
 relations, Eq. \eqref{eqcharacprop}, we have
 \begin{align}
 \begin{split}
   \Phi_k^{A_1}{\Phi_k^{A_1}}^{\dag}  &=\frac{1}{d^{2n}}\sum_{x_1,\ldots,x_n}\sum_{x'_1,\ldots,x'_n}\chi_k(\sum_{j>i}x_i \cdot x_j)\\
   &\;\;\;\; \;\;\;\;\;\;\;\;\;\;\;\;\;\;\;\times \bar{\chi}_k(\sum_{j>i}x'_i \cdot x'_j)\ket{x_1}\braket{x_2 \ldots x_n}{x'_2 \ldots x'_n}\bra{x'_1}\\ 
           & =\frac{1}{d^{2n}}\sum_{x_1,\ldots,x_n}\sum_{x'_1}\prod_{j>1}\chi_k(x_1 \cdot x_j)\bar{\chi}_k(x'_1 \cdot x_j)\ketbra{x_1}{x'_1}\\ 
           &=\sum_{x_1}\frac{1}{d^{n+1}}\ketbra{x_1}{x_1}\\
           &=\frac{1}{d^{n+1}}\I_d,
           \end{split}
 \end{align}
 where in the second step we have used the fact that
 $\chi_k(x_i \cdot x_j)\bar{\chi}_k(x_i \cdot x_j)=1$. 
 Therefore, $\norm{\Phi_k^{A_1}}=\frac{1}{\sqrt{d^{n+1}}}$ for all $k$. Finally, by applying Theorem \ref{thmnormn} we obtain
 \begin{align}
\omega_q(\text{CHSH$_n$-$d$})\leq \frac{1}{d}+\frac{d-1}{d\sqrt{d}}.
 \end{align}
\end{proof}

Interestingly these bounds are independent of the number of parties showing that by increasing the number of
players the performance is still limited.
Note that in order to derive Theorem \ref{thmchshd} we have used $S=\DE{1}$. For the 3-player case all the other possible partitions of the players
would lead to the same result since the winning condition of the game, Eq. \eqref{eqwinchshnd}, is invariant under the permutation of parties. However 
for games with $n>3$ players it is possible that exploring partitions with more players $S=\DE{1,\ldots, r}$ can lead to a better bound.

\section{No quantum realization of non-trivial multiparty functional boxes}
In Ref. \cite{vanDam} it was shown that the possibility of existence of strong correlations known as PR-boxes \cite{prbox} 
would lead to the trivialization of communication complexity. As Alice and Bob would
be able to compute any distributed Boolean function with only one bit of communication, by sharing a sufficient number of PR-boxes. 
{Therefore, the belief that communication complexity is not trivial (\ie, 
some distributed functions require more than one bit of communication in order to be computed)
is viewed as a principle that should be respected by nature.}

As we discussed in Section \ref{secapplicationsxord}, this result was later generalized \cite{PRd} to the so-called 
functional boxes, 
\ie  a $d$-output generalization of PR-boxes for  
$a+b = f(x,y)$ in $\mathbb{Z}_d$ arithmetic, with d prime and $f$ any non-additively separable function
(\ie $f(x,y)\neq f_1(x)+f_2(y)$).
Any functional box which cannot be simulated classically would also lead to trivialization of 
communication complexity \cite{PRd}. 
{Furthermore, a generalization to multiparty
communication complexity scenarios for binary outcome  was considered in Ref. \cite{BP05}. In the multipartite problem, $n$ parties are each 
given an input $x_i$ and must compute a function $f(\vec{x})$ of their joint inputs with as little communication as possible. 
If the parties share a sufficient number of 
boxes
with input-output
relation satisfying $\bigoplus_{i} a_i = \prod_i x_i$, then any $n$-party communication complexity
problem can be solved with only $n-1$ bits of communication (from $n-1$ parties to the first party who then computes the function) \cite{BP05}.

In Appendix \ref{A-CC}, we analyze  the task of computing a multipartite function with $d$ possible values, and show 
that multipartite functional boxes associated to non-trivial  \xor-$d$ games with  uniformly distributed inputs, for $d$ prime, 
lead to a trivialization of communication complexity
in this scenario.
Here we use Theorem \ref{thmnormn} to show that an $n$-party functional box  which maximally 
saturates these games 
cannot be realized in quantum theory.

%

\begin{theorem}\label{stotalfunc}
 For an $n$-player \xor-$d$ game $g_n^{\oplus_d}$, with $m$ questions per player and uniform input distribution
 $p(\vec{x})=1/m^n$, $\omega_q(g_n^{\oplus_d})=1$ iff $\omega_c(g_n^{\oplus_d})=1$.
\end{theorem}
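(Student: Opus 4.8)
The plan is to derive everything from the norm bound of Theorem \ref{thmnormn}, applied to the $n$ single-party partitions $S=\{i\}$, $i=1,\dots,n$, and then to decode the saturation condition into a statement about the winning function $f$. The easy direction is immediate: a deterministic classical strategy is a special (product) quantum strategy, so $\omega_c(g_n^{\oplus_d})\le\omega_q(g_n^{\oplus_d})\le 1$, whence $\omega_c=1$ forces $\omega_q=1$. All the work lies in the converse, which generalizes the bipartite Theorem \ref{comm-comp-lem}.

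First I would fix a party $i$ and examine the game matrix $\Phi_k^{\{i\}}$, an $m\times m^{n-1}$ matrix each of whose entries has modulus $p(\vec{x})=1/m^n$ (since $|\chi_k|=1$). The key estimate is a trace bound on its maximal singular value:
\begin{align}
\norm{\Phi_k^{\{i\}}}^2=\lambda_{\max}\de{\Phi_k^{\{i\}}(\Phi_k^{\{i\}})^{\dagger}}\le \Tr\de{\Phi_k^{\{i\}}(\Phi_k^{\{i\}})^{\dagger}}=\sum_{x_i,\vec{y}}\frac{1}{m^{2n}}=\frac{1}{m^n},
\end{align}
with equality precisely when $\Phi_k^{\{i\}}(\Phi_k^{\{i\}})^{\dagger}$ has rank one. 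Substituting $\norm{\Phi_k^{\{i\}}}\le 1/\sqrt{m^n}$ into Theorem \ref{thmnormn} for $S=\{i\}$ gives $\omega_q\le \frac{1}{d}\de{1+\sqrt{m^n}(d-1)/\sqrt{m^n}}=1$. Hence $\omega_q=1$ forces the bound to be saturated for every $i$, so each norm inequality above is tight; in particular $\Phi_1^{\{i\}}$ has rank one for every $i$.

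Next I would translate rank one into additive separability of $f$ across the cut $\{i\}\,|\,\{i\}^c$. Because all entries of $\Phi_1^{\{i\}}$ share the modulus $1/m^n$, a rank-one factorization forces the phases to factorize, i.e.\ $\chi_1(f(\vec{x}))=e^{\mathrm{i}\theta_{x_i}}e^{\mathrm{i}\phi_{\vec{x}_{-i}}}$; using that $\chi_1(a)=\zeta^a$ is injective on $\mathbb{Z}_d$ I can read off $f(\vec{x})=g_i(x_i)+h_i(\vec{x}_{-i})\pmod{d}$ for suitable functions $g_i,h_i$. Running this for all $i$ says that every single-variable discrete difference of $f$ is independent of the remaining variables, so all mixed second differences of $f$ vanish; the standard finite-difference (inclusion--exclusion) expansion then yields full additive separability $f(\vec{x})=c+\sum_i f_i(x_i)$. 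Finally, the deterministic strategy in which player $i$ answers $f_i(x_i)$ (the constant $c$ absorbed into one player's output) satisfies the winning relation on every input, so $\omega_c=1$, completing the converse.

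I expect the main obstacle to be this last combinatorial step: passing from the $n$ single-cut separations $f=g_i(x_i)+h_i(\vec{x}_{-i})$ to genuine full separability $f=\sum_i f_i(x_i)+c$. Carrying out the mixed-difference argument over $\mathbb{Z}_d$ requires careful bookkeeping of the additive constants, as well as of the phases in the rank-one decoding (the only place where injectivity of the nontrivial character $\chi_1$ is invoked). The norm-and-trace part, by contrast, is a direct multipartite transcription of the bipartite argument, and its only new ingredient is the observation that the same trace bound $\norm{\Phi_k^{\{i\}}}\le 1/\sqrt{m^n}$ holds for every single-party cut.
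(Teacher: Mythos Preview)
Your proof is correct and follows essentially the same route as the paper's: saturate the norm bound of Theorem~\ref{thmnormn} on each single-party cut $S=\{i\}$, deduce that every $\Phi_1^{\{i\}}$ has rank one (the paper reaches the same conclusion via an explicit eigenvector analysis rather than your trace inequality), and decode this as additive separability of $f$ across each cut. The step you flag as the main obstacle is in fact immediate by telescoping, $f(\vec{x})=f(\vec{0})+\sum_{i=1}^{n}\bigl[f(0,\dots,0,x_i,x_{i+1},\dots,x_n)-f(0,\dots,0,0,x_{i+1},\dots,x_n)\bigr]$, each bracket depending only on $x_i$ by the cut-$i$ relation, so no inclusion--exclusion or mixed-difference bookkeeping is needed.
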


\begin{proof} 
We start by proving the result for 3 players. We first chose the partition $S=\DE{1}$ to write the game matrices.

The constraint that the input distribution is $p(x,y,z)=1/m^3$ for all $x,y,z$ implies $\norm{\Phi^A_k}\leq 1/\sqrt{m^3}$ since $\Phi_k^A{\Phi_k^{A}}^{\dagger}$ is an
$m \times m$ matrix with the absolute value of all elements $\leq 1/m^4$ (and then $\norm{\Phi^A_k{\Phi_k^{A}}^{\dagger}} \leq 1/m^3$).
For the particular case of 3-player \xor-$d$ games, the bound \eqref{norm3} is given by
\begin{align}
  \omega_q(g_n^{\oplus_d})\leq \frac{1}{d}\De{1+\sqrt{m^3}\sum_{k=1}^{d-1}\norm{\Phi_k^A}},
\end{align}
where
\begin{align}
 \Phi^A_1=\sum_{x,y,z}\frac{1}{m^3} \zeta^{f(x,y,z)} \ketbra{x}{yz}.
\end{align}
So we see that 
\begin{align}  
 \omega_q(g^{\oplus_d})=1 \Rightarrow \norm{\Phi_k^A}= \frac{1}{\sqrt{m^3}}\; \forall k.
\end{align}

Let $\ket{\lambda^A}=(\lambda^A_0,\ldots,\lambda^A_{m-1})$ be the eigenvector corresponding to the maximum eigenvalue 
$1/{m^3}$ of $\Phi_1^A{\Phi_1^A}^{\dagger}$.
Consider $ \lambda^A_j=|\lambda^A_j|\zeta^{\theta^A_j}$ and assume $|\lambda^A_0| \geq |\lambda^A_1| \geq \ldots \geq |\lambda^A_{m-1}|$.
Then we have
\begin{align}
\Phi^A_1 {\Phi_1^A}^{\dagger} =& \frac{1}{m^6} \sum_{x,y,z} \sum_{x',y',z'}\zeta^{f(x,y,z)-f(x',y',z')}\ket{x}\braket{yz}{y'z'}\bra{x'}\\
 =& \frac{1}{m^6} \sum_{x,x',y,z}\zeta^{f(x,y,z)-f(x',y,z)}\ketbra{x}{x'} \nonumber
\end{align}
and
\begin{align}
\Phi^A_1 {\Phi^A_1}^{\dagger}\ket{\lambda^A} =&  \frac{1}{m^6} \sum_{x,x',y,z,i}\zeta^{f(x,y,z)-f(x',y,z)+\theta^A_i}|\lambda^A_i|\ket{x}\braket{x'}{i}\\
 =& \frac{1}{m^6} \sum_{x,y,z,i}\zeta^{f(x,y,z)-f(i,y,z)+\theta^A_i}|\lambda^A_i|\ket{x} \nonumber.
\end{align}

By analyzing the first component of the eigenvalue equation ${\Phi^A_1}{\Phi^A_1}^{\dagger}\ket{\lambda^A}=1/m^3 \ket{\lambda^A}$ we have
\begin{align}
 \De{\Phi^A_1{\Phi^A_1}^{\dagger}\ket{\lambda^A}}_1 =& \frac{1}{m^6} \sum_{y,z,i}\zeta^{f(0,y,z)-f(i,y,z)+\theta^A_i}|\lambda^A_i|=\frac{1}{m^3} |\lambda^A_0| \zeta^{\theta^A_0}.
\end{align}
In order to satisfy this equation we need to have 
\begin{subequations}
\begin{align}
 |\lambda^A_i|&=|\lambda^A_0| \; \; \forall \;\;i\\
\zeta^{f(0,y,z)-f(i,y,z)+\theta^A_i}&= \zeta^{\theta^A_0}  \; \; \forall \;\;i, y, z.
\end{align}
\end{subequations}
The equations for the other components of the eigenvalue equation imply: 
\begin{subequations}
\begin{align}\label{sthetaA}
 f(x,y,z)-f(x',y,z)=\theta^A_{x}-\theta^A_{x'} \; \forall\; y,z,
\end{align}
where the operations are modulo $d$.

We can do the same argument for the 
other partitions, $S=\DE{2}$ and $S=\DE{3}$, and the hypothesis of $\omega_q(g_n^{\oplus_d})=1$ implies
that $\rank (\Phi^S_1)=1$ for all $S$, and 
\begin{align}
f(x,y,z)-f(x,y',z)&=\theta^B_{y}-\theta^B_{y'}\; \forall\; x,z\label{sthetaB}\\
f(x,y,z)-f(x,y,z')&=\theta^C_{z}-\theta^C_{z'}\; \forall\; x,y\label{sthetaC}.
\end{align}
\end{subequations}

By the relations \eqref{sthetaA}, \eqref{sthetaB} and \eqref{sthetaC} we deduce that
\begin{align}
 f(x,y,z)&=(\theta^A_{x}-\theta^A_{0})+f(0,y,z)\nonumber\\
 &=(\theta^A_{x}-\theta^A_{0})+(\theta^B_{y}-\theta^B_{0})+f(0,0,z)\\
 &=(\theta^A_{x}-\theta^A_{0})+(\theta^B_{y}-\theta^B_{0})+(\theta^C_{z}-\theta^C_{0})+f(0,0,0).\nonumber
\end{align}
Now, consider $a_0, b_0, c_0$ such that $a_0\oplus_d b_0 \oplus_d c_0=f(0,0,0)$, the classical strategy
\begin{align}
 a&=a_0+(\theta^A_{x}-\theta^A_{0}) \nonumber\\
  b&=b_0+(\theta^B_{y}-\theta^B_{0})\\
   c&=c_0+(\theta^C_{z}-\theta^C_{0})\nonumber
\end{align}
wins the game with probability 1.\vspace{1em}

The proof for an $n$-player game follows in the same way.
Considering the partition $S=\DE{A_1}$.
The constraint of equally distributed inputs implies $\Phi^S_k{\Phi^S_k}^{\dagger}$ is an
$m \times m$ matrix with the absolute value of all elements equal to $1/m^{n+1}$ (and then $\norm{\Phi^S_k{\Phi^S_k}^{\dagger}} \leq 1/m^n$).
Now considering the bound
\begin{align}
  \omega_q \leq \frac{1}{d}\De{1+\sqrt{m^n}\sum_{k=1}^{d-1}\norm{\Phi^S_k}},
\end{align}
we see that  $\omega_q(g^{\oplus_d})=1$ requires $\norm{\Phi^S_k}= 1/\sqrt{m^n}$ for all $k$.

Following the same argument as  for the 3-player game, we conclude that in order to satisfy $\norm{\Phi^S_k}= 1/\sqrt{m^n}$ all 
the rows of the 
game matrix have to be proportional to each other and then
\begin{align}\label{sthetaA1}
 f(x_1,x_2,\ldots,x_n)-f({x'}_1,x_2,\ldots,x_n)=\theta^{A_1}_{x_1}-\theta^{A_1}_{{x'}_1} \;\; \forall\;\; x_2,\ldots x_n.
\end{align}
Running the analysis over the partitions $S={A_i}$ for $i={2,\ldots,n}$ we can specify a classical strategy,
given by $a_{x_i}=a_{0_i}+(\theta^{A_i}_{x_i}-\theta^{A_i}_{0_i})\; \forall i$, that wins the 
game with probability 1. 
\end{proof}

Note that the no-signaling boxes that win some of these $n$-player {\xor-$d$} games 
with probability one correspond to nontrivial functional boxes, 
hence our bound excludes the possibility of {quantum} realization of functional boxes that trivialize
communication complexity in the multiparty scenario.


\section{Device independent witnesses of genuine tripartite entanglement}\label{secdiew}
We now present a systematic way to derive device-independent witnesses for
genuine tripartite entanglement.

The characterization of entanglement is a very challenging task (see Appendix \ref{secentang}).
For bipartite systems, positive maps which are not completely positive constitute a powerful tool for generating
 simple operational criteria for detecting entanglement in mixed states \cite{HorodeckiCPmaps}.
The most celebrated example is the Peres-Horodecki criterion \cite{PeresPPT} (Proposition \ref{PropPPT}), also known as PPT criterion.
The characterization of multipartite entanglement, however, is much more challenging since inequivalent forms of entanglement appear.
{When we are considering correlations among many parties it can happen that all of the parties share quantum correlations, 
in which case we say that 
we have \emph{genuine multipartite entanglement}, or it can be the case that the state is composed by quantum correlations 
only between subsets of the parties, in which case the 
state of the system is said to be biseparable\footnote{This concept can be refined to many levels which is denoted $k$-separability. 
An $n$-partite state
is  $k$-separable (see \cite{ksep} for more detailed definition) if it can be write as a convex combination of states that are products of $k$ subspaces:
\begin{align*}
 \rho_{k-\text{sep}}=\sum_ip_i\ketbra{\psi^i_{k-\text{sep}}}{\psi^i_{k-\text{sep}}}
\end{align*}
where $\ket{\psi^i_{k-\text{sep}}}=\ket{\psi_{i_1}}\otimes\ket{\psi_{i_2}}\otimes \ldots \otimes \ket{\psi_{i_k}}$.
An $n$-partite state is separable if it is $n$-separable.
}.
Formally, a biseparable state of three parties  is a state that can be decomposed into the form
\begin{align}\label{biseparable}
 \rho_{\bi}=\sum_{i}\de{p_A^i\rho_A^i \otimes \rho_{BC}^i+p_B^i\rho_B^i \otimes \rho_{AC}^i+p_C^i\rho_C^i \otimes \rho_{AB}^i},
\end{align}
where {$p^i_j \geq 0$, $\sum_i (p_A^i+p_B^i+p_C^i)=1$}.
If a tripartite quantum state cannot be written as Eq. \eqref{biseparable}, it is
said to be {genuinely tripartite entangled}. }

For the detection of genuine multipartite entanglement there is no known direct criteria  like the PPT-criterion.
For this reason, the development of device-independent entanglement witnesses (DIEW) for genuine multipartite entanglement brings together
with all the advantage of distinguishing the type of entanglement in a
multipartite system,
the possibility of performing this task in
a scenario where we do not have full trust in our devices, which is of extreme importance for cryptographic tasks.

Device independent witnesses of genuine multipartite entanglement were introduced in Ref. \cite{DIEW}. The idea of a DIEW for 
genuine multipartite entanglement is to find the maximal value that a biseparable quantum
state (Eq. \eqref{biseparable} for 3 parties) can achieve in a Bell expression (which in general is in between the classical and the quantum value
of this expression). Therefore, in a Bell experiment, if a quantum state overtakes the biseparable bound we can assert that this state
is genuinely multipartite entangled.

It is important to note that a DIEW is a weaker condition than the Svetilichny inequalities \cite{Svetlichny}.
Svetlichny inequalities were introduced in multipartite Bell scenarios in order
to detect the existence of genuine multipartite nonlocality (see Section \ref{secNLn}). The violation of a Svetlichny inequality guarantees that even if some
parties are allowed to perform a joint strategy (which includes a global quantum measurement in their systems)
they are not able to simulate the exhibited multipartite correlations.
The violation of a DIEW guarantees that 
the parties share a genuinely multipartite entangled state
but not necessarily that they are able to exhibit genuine multipartite 
nonlocality.

Concerning previous works in the subject of DIEWs of genuine multipartite entanglement: A tripartite 3-input 2-output 
inequality which is able to detect genuine
tripartite entanglement in a noisy three-qubit GHZ state, $\rho(V)=V\ketbra{GHZ}{GHZ}+(1-V) \frac{\I}{8}$, for parameter $V> 2/3$,
was presented in Ref. \cite{DIEW}.
This result was later \cite{DIEWPalVertesi} generalized
for multisetting Bell inequalities that in the limit of infinitely many inputs
are able to detect genuine tripartite entanglement for $\rho(V)$ with parameter as low as $2/\pi$, 
which is the limiting value for which there exist a local model
for the noisy GHZ state for full-correlation Bell inequalities \cite{DIEWPalVertesi,DIEW}. Other examples of DIEWs with binary outcomes can be found 
in Ref. \cite{egDIEW}.


Let us consider that Alice, Bob, and Charlie are playing a 3-player linear game, $g_3^{\ell}(G,f,p)$, and they have access to the 
shared biseparable quantum state:
\begin{align}\label{eqbisepC}
 \ket{\psi_{\bi}}=\ket{\psi_{AB}} \otimes \ket{\psi_{C}}.
\end{align}
In that case Alice and Bob can take advantage of a quantum strategy using their shared entangled state $\ket{\psi_{AB}}$. 
However, the best Charlie can do is to apply a classical (deterministic) strategy, since he shares no resources with Alice and Bob.
For this case, their probability of success will be given by
\begin{align}
  \omega(g_3^{\ell}) \leq  \frac{1}{|G|}\de{ 1 + \sum_{x,y,z}\sum_{k\neq\DE{e}}p(x,y,z)
    \chi_k(f(x,y,z))\bar{\chi}_k(c_z) \bra{\psi_{AB}}A_x^k\otimes B_y^k\ket{\psi_{AB}}}, 
 \end{align}
 where $\DE{c_z}$ represents the deterministic strategy performed by Charlie, where upon receiving input $z$ he outputs $c_z$.

 Now we have essentially a bipartite expression to evaluate, and by making use of the
 theorem \ref{thmLGwqbound} we can bound the performance of the players sharing a state biseparable with respect to the partition $AB|C$.
 We denote by  $\omega_{\bi}^C(g_3^{\ell})$ the maximum value the players can achieve in the game $g_3^{\ell}(G,f,p)$ when they share a 
 state of the form \eqref{eqbisepC}:
   \begin{align}\label{normbic}
  \omega_{\bi}^C(g_3^{\ell}) \leq \max_{\DE{c_z}} \frac{1}{|G|}\de{1+\sqrt{|\inp_A||\inp_B|}\sum_{k \in G \setminus \DE{e}}\norm{\Phi^{\bi}_k(c_z)}},
 \end{align}
where
 \begin{align}
\Phi^{\bi}_k(c_z)=\sum_{x,y} \de{\sum_z p(x,y,z)\chi_k(f(x,y,z)-c_z)}\ketbra{x}{y}.
\end{align}

Let us denote by $\omega_{\bi}(g_3^{\ell})$ the maximum probability of success in the game $g_3^{\ell}(G,f,p)$ that can be achieved
with a biseparable state. In Eq.~\eqref{normbic} we have derived an upper 
bound for $\omega_{\bi}^C$, which is the performance  when the players share a quantum state
biseparable with respect to the partition $AB|C$. 
In the case of Bell expressions which are invariant under the permutation of the parties, it is sufficient to consider Eq.~\eqref{normbic} {in order to bound $\omega_{\bi}$}.
For general Bell inequalities, {an upper bound on }the biseparable bound, that holds for any state of the form given by Eq.~\eqref{biseparable}, 
can be obtained by taking the maximum 
over all  bipartitions{, since
\begin{align}
  \omega_{\bi}=\max_{X} \omega_{\bi}^X,
\end{align}
where $X \in\DE{A,B,C}$.}

 In general we have
 \begin{align}
  \omega_c \leq \omega_{\bi} \leq \omega_q
 \end{align}
 and then, for games where the strict relation $\omega_{\bi} < \omega_q$ holds, by violating the 
 biseparable bound $\omega_{\bi}$ we can certify in a device-independent way that Alice, Bob and Charlie
 share a genuine tripartite entangled quantum state. 

By using our norm bounds, Eq. \eqref{normbic}, we have a simple way to upper bound the biseparable value of a Bell expression. 
Hence, our techniques of bounding the quantum value of linear games open the possibility of exploring higher dimensional
device-independent witnesses of genuine tripartite entanglement.

We now exemplify the method by deriving a DIEW from a 3-input 3-output tripartite Bell expression.

\subsection*{Example:}

Inspired by the Mermin's inequality \cite{Mermin} for the GHZ paradox, we consider the following game that we denote $g_3^{\text{Mermin}}$:
A referee picks questions $x, y, z \in \DE{0,1,2}$ with the promise that $x+ y + z =0$. 
The players are supposed to give answers $a,b,c \in \DE{0,1,2}$ in order to satisfy
\begin{align}\label{eqgameghz}
g_3^{\text{Mermin}}: a + b + c=x\cdot y \cdot z\,, \;\;\; \text{s.t.}\;\;\; x+ y + z=0
\end{align}
where the operations $+$, $\cdot$ are sum and multiplication modulo 3.

Substituting the winning condition of the game, Eq. \eqref{eqgameghz}, into Eq. \eqref{normbic} (note that the constraints of the game
are invariant under the 
permutation of the parties), allows us to derive 
\begin{align}
 \omega_{\bi} \leq 0.896,
\end{align}
with RHS approximated up to the third decimal.

On the other hand, the $GHZ_3$ state defined as
\begin{align}
 \ket{GHZ_3}=\frac{\ket{000}+\ket{111}+\ket{222}}{\sqrt{3}},
\end{align}
can win the game $g_3^{\text{Mermin}}$ with probability $1$. 

Consequently, we have a device-independent witness 
of genuine tripartite entanglement: $\omega_{\bi}(g_3^{\text{Mermin}}) \leq 0.896$.

The explicit measurements that lead the $GHZ_3$ to reach value one in the $g_3^{\text{Mermin}}$ game
are presented in Appendix \ref{A-proofs}. However, an important property of these measurements is that they
define traceless `observables' (by the relation $ A_x^i=\sum_a\bar{\chi}_i(a)M_x^a$):
 \begin{align}
  \Tr\de{A_x^i\otimes B_y^j \otimes C_z^k}=0.
 \end{align}
Therefore, we can easily calculate the success probability one can achieve with a noisy $GHZ_3$ state
\begin{align}
 \tilde{\rho}(V)=V\ketbra{GHZ_3}{GHZ_3}+(1-V) \frac{\I}{27},
\end{align}
when
 these measurements are performed:
 \begin{align}
  \omega(\tilde{\rho}(V))& = \frac{1}{3}\de{1+\sum_{x,y,z}\sum_{k=1}^2 {p(x,y,z)} \zeta^{k\cdot f(x,y,z)}\Tr \de{\tilde{\rho}(V)(A_x^k\otimes B_y^k\otimes C_z^k)}}\nonumber\\
		&= \frac{1}{3}+\frac{V}{3}\sum_{x,y,z}\sum_{k=1}^2 {p(x,y,z)} \zeta^{k\cdot f(x,y,z)} \Tr \de{\ketbra{GHZ_3}{GHZ_3}A_x^k\otimes B_y^k\otimes C_z^k}\nonumber\\
		&=\frac{1-V}{3}+V\omega(\ketbra{GHZ_3}{GHZ_3})\\
		&=\frac{1+2V}{3}. \nonumber
 \end{align}
Hence, by using the optimal measurements for the $GHZ_3$ state we are able to witness genuine 
 multipartite entanglement on noisy $GHZ_3$ state for parameter $V> 0.85$.

This example also stresses the difference between genuine multipartite entanglement and genuine multipartite nonlocality.
Since the Svetlichny bound
\cite{Svetlichny} for this game is $1$, it cannot be used as a witness of genuine 
tripartite nonlocality, despite being a good witness for  genuine tripartite entanglement.

The 3-input 2-output witness presented in Ref. \cite{DIEW} can detect genuine multipartite entanglement in $\tilde{\rho}(V)$ for $V>
0.81$, however the two-outcome DIEW involves the calculation of 18 expected values whereas for the
$g_3^{\text{Mermin}}$ game, only 9 expected values are involved.


\section{Discussion and open problems} 
In this Chapter we have extended the bound presented in Chapter \ref{chaptergamesd} to 
$n$-player linear games. 
As for the bipartite case, the bound derived for multiplayer  games is
not tight in general, nevertheless we could apply it to derive non-trivial results.
It would be interesting to characterize the classes of linear games for which the norm bound is tight. This could lead us to 
 understand 
a bit more about the structure of the set of quantum correlations. In addition, the characterization of multiplayer games with no quantum
advantage is interesting to highlight the limitations of quantum theory in the multipartite scenario.
For multipartite Bell scenarios it is known that the quantum set of correlations contains facets \cite{GYNI}, it would be interesting
to investigate if a multiplayer linear game can constitute a facet of the quantum set.

By using Theorem \ref{thmnormn}, we have proved upper bounds to the quantum value of a multipartite generalization of the CHSH-$d$ game, for 
any number of players. 
Also, we have shown that boxes that trivialize communication complexity in the multipartite scenario cannot be realized in 
quantum theory. An important question for further investigation would be to 
analyze the
relation between the bounds on linear games (bipartite and multipartite)
we presented here and the communication complexity of the associated functions.



Finally, we  {have presented} a systematic way to derive device independent witnesses of genuine tripartite entanglement.
The method is very general and can be applied to any tripartite linear game in order to derive a  DIEW of genuine tripartite entanglement with many 
inputs and outputs.
We  {exhibited} an example were a DIEW involving only $9$ expected values is able to detect genuine tripartite entanglement
in a  {noisy} $GHZ_3$ state.
It remains  {an} open point  {whether these} DIEWs with $d$ outcomes are optimal in terms of the number of inputs 
to detect genuine tripartite
entanglement of $d$-dimensional systems.
The search for optimal witnesses with few inputs per player can lead to feasible  {applications} and experiments.


It is important to stress that the bound derived in Theorem \ref{thmnormn} involves the norm of matrices, which is an object with 
 an intrinsic bipartite structure. 
 A possible future direction would be to  {explore the use of tensors, which have} a natural multipartite 
 structure, in order to describe the games.

\chapter*{Final remarks}\label{conclu}
\addcontentsline{toc}{chapter}{Final remarks}
\markboth{Final remarks}{}

In this Thesis we have studied nonlocality focusing on the quantum value of a Bell expression.
In particular, we took the approach of stressing how difficult is the problem of determining the quantum value of a Bell expression 
by situating it in the framework of 
computational complexity. 
In this framework, the main result of this Thesis can be summarized as: \textit{efficiently computable upper bounds to the 
quantum value of linear games (a particular class of Bell inequalities).}

Those bounds are based on the spectral norm of some matrices associated to the games. 
The major advantage of our bounds is that they are easy to compute analytically for games with a 
small number of inputs and they can 
be implemented numerically by efficient algorithms. The drawback, however, is that the bounds are not tight in general
and the quality of approximation
is not known. Besides that, we could explore the bounds  deriving several non-trivial results. 

For the case of {\xor} games, the SDP characterization of the quantum value, due to Tsirelson's theorem, allowed us
to derive necessary and sufficient conditions 
for a  game to have no quantum advantage. This led to the characterization of a new family of graphs for which the 
Shannon capacity is equal to the independence number.

Concerning linear games with more outputs we could easily re-derive a recently discovered upper bound to the quantum value of the
CHSH-$d$ game, and we also extended it to the case of $n$ players, the CHSH$_n$-$d$ game. Also, we have shown that the norm bounds can exclude 
the existence of some boxes that would lead to trivialization of communication complexity in the bipartite and the multipartite case.

Furthermore, we defined an extension of the no-advantage for nonlocal computation
principle, introduced in Ref.~\cite{NLC}, to functions with $d$ prime possible values. And finally, as the main outcome of exploring the norm bounds in the 
multipartite scenario, we presented
a systematic way of deriving device-independent witnesses of genuine multipartite entanglement for tripartite systems.

In conclusion, we had a glance in the intricacy of quantum nonlocality by phrasing the problem of finding the maximal 
quantum violation of a Bell inequality in the framework of computational complexity. At the end of the day, I hope 
I have convinced the reader that finding simple bounds to this problem is a good game to play,
or, at least, 
I hope the reader enjoyed it!

\part*{Appendix}
\addcontentsline{toc}{part}{Appendix}
\appendix


\chapter{Quantum Mechanics}\label{Aquantum}
In this Appendix we give an overview of Quantum theory, introducing the main concepts and properties  
discussed along the text.
For an excellent introduction to the quantum theory formalism with an information theoretic approach the reader is referred to the book of Nielsen and Chuang
\cite{N&C}.

\section{A few concepts and definitions}

In the study of nonlocality we usually adopt a minimalistic view of quantum theory: We do not exploit 
particular interactions among systems, or how is the dynamics of the systems and so on, other than that,
we only capture the fundamental features
and consequences of the mathematical formulation of quantum theory. 

\subsection{Concepts and axioms}
Every quantum system has an associated complex Hilbert space $\Hi$ (\ie a complete vector space with inner product). The system is mathematically described by a 
\emph{quantum state}, which is an operator acting on $\Hi$.
The quantum state contains all the information necessary to predict the statistics of the results of measurements performed in the system.

\begin{subdefinition}{definition}\label{defstates}
\begin{definition}[Quantum state]
A quantum state (also called density operator) is an operator $\rho \in D(\Hi)$ with the following properties:
\begin{enumerate}[(i)]
 \item $\rho \geq 0$,
 \item $\Tr \rho =1$. 
\end{enumerate}
We denote the set of positive trace-1 operators acting on $\Hi$ by $D(\Hi)$.
\end{definition}

A very important particular case of quantum states are the \emph{pure states}. 

\begin{definition}[Pure state]
A pure state  is a quantum state of rank 1, which means that they are one-dimensional
projections\footnote{Note that the two vectors $\ket{\psi}$
and $e^{i\phi}\ket{\psi}$ give rise to the same density operator, therefore pure quantum states are defined up
to a global phase.}, $\rho=\ketbra{\psi}{\psi}$.
A pure state can be represented by a vector $\ket{\psi} \in \Hi$.
\end{definition}
\end{subdefinition}

If a state is not pure we call it a \emph{mixed state}.\vspace{1em}

Given that we know the quantum state of a physical system, quantum theory tells us how to predict the statistics of results of any experiment
performed in
the system. An experiment is a question that we are asking about our system and each experiment has its set of expected outcomes 
(\eg when tossing a coin we have two possible outcomes: head or tail). In the quantum theory formalism, the experiments
are described as following:

\begin{subdefinition}{definition}
\begin{definition}[Quantum measurements]
A quantum measurement is described by a `Positive Operator-Valued Measure' POVM, \ie a set of $m$ operators acting on $\Hi$, $\DE{M_i}_{i=1}^{m}$ 
such that
\begin{enumerate}[(i)]
 \item $M_i\geq0\;\;\forall\; i=\DE{1,\ldots,m}$,
 \item $\sum_{i=1}^{m}M_i=\I$,
\end{enumerate}
The $M_i$'s are denoted elements of POVM and $m$ is the number of possible outcomes of the measurement.
\end{definition}

A particular case of POVM are the \emph{Projective measurements}.

\begin{definition}[Projective measurements]
 A projective measurement is the particular case of measurements where the all the POVM elements, $\DE{\Pi_i}_{i=1}^{m}$, are orthogonal projectors 
 \ie 
 \begin{enumerate}
  \item $\Pi_i \Pi_j=\delta_{i,j}\Pi_i$,
  \item $\sum_{i=1}^{m} \Pi_i=\I$.
 \end{enumerate}
\end{definition}
\end{subdefinition}
\vspace{1em}

The recipe for obtaining the probabilities of outcomes of experiments given these mathematical objects is state by Born's rule, introduced
by Max Born in \cite{Bornrule}.

\begin{definition}[Born rule]
If an $m$-outcome measurement, described by the POVM set $\DE{M_i}_{i=1}^{m}$, is performed in a system described by a quantum state $\rho$, then the
probability of obtaining the outcome $k$ is given by
\begin{subequations}
\begin{align}
 P(k)=\Tr \de{M_k \rho}.
\end{align}
And for the particular case of projective measurement  $\DE{\Pi_i}_{i=1}^{m}$ in a pure state $\ket{\psi}$ the probability of 
obtaining the outcome $k$ is given by
\begin{align}
 P(k)= \bra{\psi}\Pi_k\ket{\psi}.
\end{align}
\end{subequations}
\end{definition}

\subsection{Composite systems}

When dealing with more than one system, or systems with many degrees of freedom, we have to establish a way to describe these systems in the 
formalism of quantum theory.

Given two single systems, $A$ and $B$, with respective associated Hilbert spaces $\Hi_A$ and $\Hi_B$, the composite system $AB$ has an 
associated Hilbert space:
\begin{align}\label{eqHiAB}
 \Hi_{AB}=\Hi_A \otimes \Hi_B.
\end{align}

Now, a very interesting phenomenon arises from this mathematical structure: the quantum states associated with system $AB$ are still described 
by Definition \ref{defstates} and hence every positive operator of trace one acting on $\Hi_{AB}$ is an allowed quantum state. But \emph{not} all 
quantum states of $AB$ have the \emph{product} structure:
\begin{align*}
 \rho_{AB}=\sigma_A\otimes \sigma_B \;\; \text{where}\;\; \sigma_A \in D(\Hi_A)\;\; \text{and}\;\; \sigma_B \in D(\Hi_B),
\end{align*}
and hence we have a rich structure arising from multipartite systems.

 A quantum state of a bipartite system $AB$ is a \emph{separable state} if it can be approximated by a state of the form 
 \begin{align} \label{eqsepapp}
  \sigma_{AB}=\sum_i q(i)\sigma_A^i \otimes \sigma_B^i,
 \end{align}
 where $q(i)$ is a probability distribution, and $\sigma_A^i \in D(\Hi_A)$, $\sigma_B^i \in D(\Hi_B)$ are 
 quantum states of single systems. Formally:

\begin{definition}[Separable states]\label{defsep}
 A quantum state $\rho_{AB}$ is separable if for every $\epsilon >0$, there exist $N(\epsilon) \in \N$ such that
  \begin{align} \label{eqsep}
\NORM{ \rho_{AB}-\sum_{i=1}^{N} q(i)\sigma_A^i \otimes \sigma_B^i}_1<\epsilon
 \end{align}
 where $\norm{X}_1=\frac{1}{2}\Tr |X|$ is the trace norm.
 \end{definition}
 
  Given a system with Hilbert space $\Hi_{AB}$, the set of all separable states  is denoted $SEP$, and it is
 a closed convex set. For finite dimensional systems, $\Hi_{AB}=\C^d \otimes \C^d$, Carathéodory's theorem
 guarantees that every separable state can be written as a convex combination of at most $d^2+1$ pure separable states (\ie states of the form
 $\ketbra{\psi_A}{\psi_A}\otimes\ketbra{\psi_B}{\psi_B}$), and therefore every finite dimensional separable state has the for of Eq. \eqref{eqsepapp}.

If a quantum state is not separable, \ie if it  cannot be approximated by a separable decomposition \eqref{eqsep}, it is called an \emph{entangled state}.
Entanglement is in the core of the novelties brought by quantum theory. It gives rise to totally new phenomena with no classical analogue.
As a remarkable example: the existence of entangled states gives rise to the quantum nonlocality.

\subsection*{Reduced state}

When in possession of a multipartite system we might be interested in describing only part of it (or only few degrees of freedom). For that task 
we have the concept of \emph{reduced state}.

\begin{definition}[Reduced state]
Given a bipartite system in state $\rho_{AB}$, the reduced state of system $A$ is given by
 \begin{align}\label{eqredstate}
 \rho_A=\Tr_B \rho_{AB},
\end{align}
and analogously for the reduced state of system $B$. $\Tr_B$ denotes partial trace\footnote{The partial trace is a linear map, 
$\Tr_B:D(\Hi_{AB})\longrightarrow D(\Hi_A)$, defined by
\begin{align*}
 \Tr_B (\ketbra{a_1}{a_2}\otimes \ketbra{b_1}{b_2})= \ketbra{a_1}{a_2}\Tr (\ketbra{b_1}{b_2}),
\end{align*}
and extended by linearity.} with respect to subsystem $B$.
$\rho_A$ is a positive operator with trace one acting on $\Hi_A$, and hence is a quantum state of system $A$.
\end{definition}

The reduced state is the best description we can give for a subsystem, in case we ignore completely what is happening with the other subsystems.
It is sufficient to describe the statistics of all local measurements (measurements performed only in the subsystem). If we have a composite
system $AB$ and we want to perform a POVM $\DE{{M_A}_i}_{i=1}^m$ in the system $A$, this is equivalent to performing the 
measurement $\DE{{M_A}_i\otimes \I}_{i=1}^m$ in the system $AB$, and the probability of getting an outcome $k$ is given by
\begin{align}
 p(k)=\Tr_{AB} \De{\de{{M_A}_k\otimes \I}\rho_{AB}}=\Tr_{A} \de{{M_A}_k \rho_A}.
\end{align}

A very interesting fact is that the complete knowledge of the state of the parties do not allow us to recover the global state of the system, as in general
\begin{align}
 \rho_{AB} \neq \rho_A \otimes \rho_B.
\end{align}
For a separable state of the form $\rho_{AB}=\sum_i q(i)\sigma_A^i \otimes \sigma_B^i$ we have that
\begin{align}\label{eqpartialsep}
 \rho_A= \Tr_B \rho_AB= \sum_i q(i)\sigma_A^i,
\end{align}
but the reduced state \eqref{eqpartialsep} does not uniquely recover the global state $\rho_{AB}=\sum_i q(i)\sigma_A^i \otimes \sigma_B^i$, and
it could even correspond to the reduced state of a global entangled state.

\section{Entanglement theory}\label{secentang}

We have presented the definition of an entangled state as the one which is not separable. However given a quantum state, 
checking whether or not it
satisfies Eq. \eqref{eqsep} is not an easy problem. 
In this section we briefly discuss entanglement detection and quantification.

\subsection{Entanglement criteria}

In general, given a finite dimensional quantum state, it is hard to conclude whether or not it can be written as
 Eq. \eqref{eqsepapp}.
For pure states, however, the situation is much simpler, and we just have to look at the rank of the reduced state.
\begin{proposition}\label{proppureentang}
 A pure state $\ket{\psi}_{AB}$ is separable iff $\;\rank (\rho_A) =1$.
\end{proposition}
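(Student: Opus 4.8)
The plan is to prove both directions of the equivalence in Proposition~\ref{proppureentang}, using the Schmidt decomposition as the central tool. First I would recall that any pure bipartite state $\ket{\psi}_{AB} \in \Hi_A \otimes \Hi_B$ admits a Schmidt decomposition $\ket{\psi}_{AB} = \sum_{i=1}^{r} \sqrt{p_i}\, \ket{e_i}_A \otimes \ket{f_i}_B$, where $\DE{\ket{e_i}_A}$ and $\DE{\ket{f_i}_B}$ are orthonormal sets in $\Hi_A$ and $\Hi_B$ respectively, the coefficients satisfy $p_i > 0$ with $\sum_i p_i = 1$, and $r$ is the Schmidt rank. The key observation linking this to the statement is that the reduced state is diagonal in the Schmidt basis: tracing out system $B$ gives $\rho_A = \Tr_B \ketbra{\psi}{\psi} = \sum_{i=1}^{r} p_i \ketbra{e_i}{e_i}$, so that $\rank(\rho_A) = r$, the Schmidt rank.

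With this in hand, the proof splits cleanly. For the ``if'' direction, suppose $\rank(\rho_A) = 1$, which by the above means the Schmidt rank is $r = 1$. Then the decomposition collapses to a single term $\ket{\psi}_{AB} = \ket{e_1}_A \otimes \ket{f_1}_B$ (absorbing the phase, since $p_1 = 1$), which is manifestly a product state and hence separable according to Definition~\ref{defsep}. For the ``only if'' direction, I would argue the contrapositive: suppose $\rank(\rho_A) = r > 1$. Then the Schmidt decomposition contains at least two nonzero terms, and I claim such a state cannot be written as a product $\ket{\phi}_A \otimes \ket{\chi}_B$. Indeed, if it were a product, tracing out $B$ would yield $\rho_A = \ketbra{\phi}{\phi}$, a rank-one operator, contradicting $r > 1$. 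Since a pure state is separable precisely when it is a product vector (a convex combination of pure product states that is itself pure must be a single product term), the state is entangled.

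The main obstacle, such as it is, lies not in the algebra but in justifying that for \emph{pure} states, separability in the sense of Definition~\ref{defsep} reduces exactly to being a product vector $\ket{\phi}_A \otimes \ket{\chi}_B$. This requires the observation that a pure state which is an approximating limit of separable mixtures $\sum_i q(i)\, \sigma_A^i \otimes \sigma_B^i$ must itself be an extremal point of the separable set, and the extremal points of $SEP$ are exactly the pure product states. I would handle this by noting that pure states are extremal in the full set of density operators $D(\Hi_{AB})$, so if $\ketbra{\psi}{\psi}$ lies in the closed convex set $SEP$, extremality forces it to coincide with an extremal point of $SEP$, namely a pure product state. Once this reduction is established, the Schmidt-rank computation finishes the argument immediately. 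The entire proof is therefore short, with the Schmidt decomposition doing essentially all the work and the only genuine care needed being the passage from the abstract separability definition to the product-vector characterization.
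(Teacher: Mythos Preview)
Your proof is correct and rests on the same core observation as the paper's: a pure separable state must be a product state, and a product state has a rank-one marginal. The paper gives only a one-line sketch (that a rank-one $\rho_{AB}$, if separable, must factor as a tensor product of rank-one operators), whereas you make the argument explicit through the Schmidt decomposition and carefully justify the reduction ``pure separable $\Leftrightarrow$ product'' via extremality; this is more thorough but not a genuinely different route.
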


Proposition \ref{proppureentang} follows from the fact that $\rho_{AB}=\ketbra{\psi_{AB}}{\psi_{AB}}$ is a $\rank$ one operator, 
and if it is separable it has to be written
as the tensor product of $\rank$ one operators in $D(\Hi_A)$ and $D(\Hi_B)$.

For mixed states the situation is way harder, since a reduced state with $\rank$ greater than one leads us to no
conclusion. Nevertheless it is possible to derive simple criteria that gives sufficient conditions
for a state to be entangled. One of the most remarkable of these criteria was introduced by Peres in Ref. \cite{PeresPPT}. Let us consider the 
\emph{partial transposition} map, which is a linear map 
\begin{align}
 T\otimes I: D(\Hi_{AB})&\longrightarrow D(\Hi_{AB})\\
 \rho_{AB}&\longmapsto \rho_{AB}^{\Gamma}, \nonumber
\end{align}
where $\rho_{AB}^{\Gamma}$ denotes the partial transposition of state $\rho_{AB}$.
The action of $T\otimes I$ is to transpose the matrix $\rho_{AB}$ with respect to subsystem $A$
($T\otimes I(\ketbra{ab}{a'b'})=\ketbra{a'b}{ab'}$ and extended by linearity).

Let us look at the application of this map to a separable state \eqref{eqsepapp}:
\begin{align}
 \rho_{AB}^{\Gamma}=\sum_i q(i)(\sigma_A^i)^T \otimes \sigma_B^i,
\end{align}
where $T$ denotes transposition. Since transposition is a trace preserve positive map we have that $(\sigma_A^i)^T$ are quantum states, 
hence $ \rho_{AB}^{\Gamma}$ is also a quantum state (\ie it is a positive operator with trace one). 

So we have  seen that the application of the 
partial transposition map into a separable state results in a positive operator. But that is not true for every quantum state! Let us consider
the maximally entangled 2-qubit state $\Phi=\ketbra{\Phi}{\Phi}$, $\ket{\Phi}=1/\sqrt{2}(\ket{00}+\ket{11})$. A direct calculation shows that
\begin{align}
 \Phi^{\Gamma} \ngeq 0.
\end{align}
And we have just demonstrated our first entanglement criteria:

\begin{proposition}[The PPT criteria]\label{PropPPT}
If $\rho_{AB}^{\Gamma} \ngeq 0$ then $\rho_{AB}$ is entangled. 
\end{proposition}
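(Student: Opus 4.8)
The final statement to prove is the PPT criterion (Proposition \ref{PropPPT}): if $\rho_{AB}^{\Gamma} \ngeq 0$ then $\rho_{AB}$ is entangled. My plan is to prove the contrapositive, which is the natural and cleanest route: I would show that if $\rho_{AB}$ is separable, then its partial transpose $\rho_{AB}^{\Gamma}$ is necessarily a positive operator. The desired implication then follows immediately by logical contraposition.

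First I would recall that partial transposition $T\otimes I$ is a linear map, and that ordinary matrix transposition $T$ is a positive, trace-preserving map: for any quantum state $\sigma$, the transpose $\sigma^{T}$ is again a positive operator of unit trace (its eigenvalues are unchanged since a matrix and its transpose share the same spectrum). This is the key elementary fact I would lean on. Then, starting from a finite-dimensional separable state written as a finite convex combination $\rho_{AB}=\sum_i q(i)\,\sigma_A^i \otimes \sigma_B^i$ (which, as noted in the excerpt just before Definition \ref{defsep}, is available for finite-dimensional separable states via Carathéodory's theorem), I would apply $T\otimes I$ term by term using linearity to obtain
\begin{align}
 \rho_{AB}^{\Gamma}=\sum_i q(i)\,(\sigma_A^i)^{T} \otimes \sigma_B^i.
\end{align}
Each factor $(\sigma_A^i)^{T}$ is a legitimate state by positivity of transposition, hence each $(\sigma_A^i)^{T}\otimes \sigma_B^i\geq 0$, and since $q(i)\geq 0$ the whole convex combination is a sum of positive operators, so $\rho_{AB}^{\Gamma}\geq 0$. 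This establishes the contrapositive.

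To complete the logical structure I would then state the contrapositive conclusion: since separability forces $\rho_{AB}^{\Gamma}\geq 0$, any state with $\rho_{AB}^{\Gamma}\ngeq 0$ cannot be separable, i.e.\ it is entangled. I expect the main (and indeed only mild) obstacle to be the verification that $(\sigma_A^i)^{T}$ remains positive semidefinite; I would dispatch this simply by noting that $\sigma^{T}$ has the same eigenvalues as $\sigma$, so all its eigenvalues remain nonnegative. A secondary point worth a sentence is the restriction to the finite-dimensional setting where the finite separable decomposition is guaranteed; for the infinite-dimensional case one would approximate via Definition \ref{defsep} and use continuity of the partial transpose together with closedness of the positive cone, but since the statement is presented right after the finite-dimensional discussion I would confine the argument to that case. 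Finally, I would illustrate the converse direction is genuinely one-way by exhibiting the explicit example already suggested in the excerpt: for the maximally entangled two-qubit state $\ket{\Phi}=\tfrac{1}{\sqrt{2}}(\ket{00}+\ket{11})$ one computes directly that $\Phi^{\Gamma}$ has a negative eigenvalue, showing the criterion does detect entanglement in at least this case.
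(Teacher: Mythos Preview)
Your proposal is correct and follows essentially the same approach as the paper: the paper proves the contrapositive by applying partial transposition term by term to a separable decomposition $\rho_{AB}=\sum_i q(i)\,\sigma_A^i\otimes\sigma_B^i$, uses that transposition is a positive trace-preserving map so each $(\sigma_A^i)^T$ is again a state, and concludes $\rho_{AB}^{\Gamma}\geq 0$. Even your illustrative example with $\ket{\Phi}=\tfrac{1}{\sqrt{2}}(\ket{00}+\ket{11})$ mirrors the paper's presentation.
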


In Ref. \cite{HorodeckiCPmaps} the Horodeccy showed that the PPT criteria (also know as Peres-Horodecki criteria) is
a necessary and sufficient condition for systems with Hilbert space
$\C^2 \otimes \C^2$ or $\C^2 \otimes \C^3$, but only a sufficient condition for higher dimensional systems.

An interesting feature of the transposition is that it is a positive map that is not completely positive\footnote{A 
completely positive map is a positive map $\Lambda$ such that any trivial extension $\Lambda \otimes \I$ is also a positive map.}.
In Ref. \cite{HorodeckiCPmaps} it was shown that this is the main property  that leads to an entanglement criteria. 
The same argument as above, applied to a generic  map, implies that every separable state should remain positive under the application of a 
generic positive but not completely positive linear map. Hence each non-completely positive map can give rise to an entanglement criteria (a 
sufficient condition for a state to be entangled). But the breakthrough of Ref. \cite{HorodeckiCPmaps} comes
with the proof that actually the formalism of positive but not 
completely positive maps can fully characterize the set of bipartite entangled states.

\begin{theorem}[\cite{HorodeckiCPmaps}]
A quantum state $\rho_{AB} \in D(\Hi_A \otimes \Hi_B)$ is separable iff
\begin{align}
 \Lambda \otimes \I (\rho_{AB}) \geq 0
\end{align}
for any positive map $\Lambda: D(\Hi_A)\longrightarrow D(\Hi_B)$. 
\end{theorem}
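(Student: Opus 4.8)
The statement to prove is the Horodecki characterization theorem: a bipartite state $\rho_{AB}$ is separable if and only if $\Lambda \otimes \I(\rho_{AB}) \geq 0$ for every positive map $\Lambda$. The plan is to establish the two directions separately, as the ``only if'' part is elementary while the ``if'' part carries the real content.

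First I would dispatch the easy direction. If $\rho_{AB}$ is separable, then by Definition \ref{defsep} (working in finite dimension where the approximation becomes an exact finite sum) we may write $\rho_{AB}=\sum_i q(i)\sigma_A^i \otimes \sigma_B^i$. Applying an arbitrary positive map $\Lambda$ to the first factor gives $\Lambda \otimes \I(\rho_{AB})=\sum_i q(i)\, \Lambda(\sigma_A^i) \otimes \sigma_B^i$. Since $\Lambda$ is positive each $\Lambda(\sigma_A^i)\geq 0$, and a nonnegative combination of tensor products of positive operators is positive, so the whole expression is $\geq 0$. This is exactly the argument already used in the excerpt for the transposition map, now run with a generic $\Lambda$.

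The hard direction is the converse: if $\rho_{AB}$ is entangled, I must exhibit a positive map $\Lambda$ that detects it, i.e.\ with $\Lambda \otimes \I(\rho_{AB}) \ngeq 0$. The approach I would take is via duality between entanglement witnesses and positive maps (the Jamio{\l}kowski--Choi isomorphism). The key steps, in order, are: (1) invoke the fact that $SEP$ is a closed convex set together with the Hahn--Banach separation theorem to produce a Hermitian \emph{entanglement witness} $W$ such that $\Tr(W\rho_{AB})<0$ while $\Tr(W\sigma)\geq 0$ for all separable $\sigma$; (2) show that the witness condition $\Tr(W\, \ketbra{\psi_A}{\psi_A}\otimes\ketbra{\psi_B}{\psi_B})\geq 0$ for all product vectors is precisely what makes the operator $W$ \emph{block-positive}; (3) apply the Choi--Jamio{\l}kowski isomorphism, which sends a linear map $\Lambda$ to the operator $(\Lambda\otimes\I)(\ketbra{\Phi^+}{\Phi^+})$ and, crucially, sends \emph{positive} maps exactly to \emph{block-positive} operators; (4) let $\Lambda$ be the positive map corresponding to $W$ under this isomorphism and translate $\Tr(W\rho_{AB})<0$ into the statement $\Lambda\otimes\I(\rho_{AB})\ngeq 0$ by pairing with an appropriate maximally entangled vector.

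I expect step (3), the identification of positive maps with block-positive operators under the Choi isomorphism, to be the main obstacle, since it is the conceptual heart of the theorem and requires care in tracking which tensor factor is transposed and in relating the ordering on maps to the ordering on their Choi matrices. Steps (1) and (2) are standard convex-analytic facts about witnesses, and step (4) is a direct computation once the correspondence is set up. I would note that the theorem is stated for arbitrary (possibly infinite-dimensional) $\Hi_A,\Hi_B$, but the clean proof lives in finite dimension; for the finite-dimensional case Carath\'eodory's theorem (already cited after Definition \ref{defsep}) guarantees separable states are honest finite convex combinations, so the separation argument and the Choi isomorphism apply without the approximation subtleties, and I would either restrict to that case or remark that the general case follows by the corresponding reference \cite{HorodeckiCPmaps}.
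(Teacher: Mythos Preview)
The paper does not actually prove this theorem: it is stated with a citation to \cite{HorodeckiCPmaps} and left without proof. The surrounding text only sketches the easy ``only if'' direction (the same argument as for partial transposition, applied to a generic positive map) and then attributes the converse --- that positive maps suffice to detect every entangled state --- entirely to the reference.

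Your outline is correct and is in fact the standard route taken in \cite{HorodeckiCPmaps}: Hahn--Banach separation of the entangled state from the closed convex set $SEP$ yields an entanglement witness $W$, block-positivity of $W$ is equivalent to positivity of the associated map under the Jamio{\l}kowski isomorphism, and the witness condition $\Tr(W\rho_{AB})<0$ translates into $(\Lambda\otimes\I)(\rho_{AB})\ngeq 0$. Your identification of step (3) as the crux is accurate. One small caution on step (4): the translation from $\Tr(W\rho_{AB})<0$ to a negative eigenvalue of $(\Lambda\otimes\I)(\rho_{AB})$ is cleanest if you use the adjoint map $\Lambda^*$ or an auxiliary partial transpose in the Jamio{\l}kowski correspondence; keep track of which variant you use so the indices line up. Your remark about restricting to finite dimension is appropriate and matches the level of rigor in this appendix.
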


\subsection{Entanglement quantification}

Let us start by the two most common operational ways to quantify entanglement: \emph{distillable entanglement}, $E_D$, and \emph{entanglement cost}, $E_C$.
In the operational paradigm the unit of bipartite entanglement is the maximally entangled two qubit state,  $\ket{\psi^-}=\frac{1}{2}(\ket{01}-\ket{10})$.
And the operations that are considered free are the \emph{local operations and classical communication} (LOCC).
Hence the $E_D$ and $E_C$ are defined in terms of how many resources, $\ket{\psi^-}$ states, one need or one can obtain out of a state $\rho$, when only
LOCC operations are applied.

The paradigm of LOCC as free operations in entanglement theory is justified by the fact that entanglement cannot be created if only local operations and 
classical communication are available. LOCC operations can only generate a separable states \eqref{eqsep}. Conversely, any
separable state can be created using LOCC.

Let us represent a general protocol (\ie a set of maps/operations that take quantum states into quantum states) by $\Lambda$.

\begin{definition}[Distillable entanglement]
A distillation protocol is an LOCC
 map that takes a certain number $n$ of copies of a state $\rho$ and turn it into another number, $r_dn$, of copies of $\ket{\psi^-}$:
 \begin{align}
  \rho^{\otimes n} \xrightarrow{\Lambda_{LOCC}} \Lambda_{LOCC}(\rho^{\otimes n})\approx \ketbra{\psi^-}{\psi^-}^{\otimes r_dn}
 \end{align}
where $\approx$ means that the states are asymptotically close in trace distance:
\begin{align}
\norm{\Lambda_{LOCC}(\rho^{\otimes n})-\ketbra{\psi^-}{\psi^-}^{\otimes r_dn}}\xrightarrow{n \longrightarrow \infty}0.
\end{align}
The distillable entanglement is defined as the best rate at which one can distil singlets out of the state $\rho$:
 \begin{align}
  E_D=\sup_{\Lambda_{LOCC}} r_d .
 \end{align}
\end{definition}

In a similar way we have the cost of entanglement.
\begin{definition}[Entanglement cost]
A creation protocol is a
 map that takes a certain number $r_cn$ of copies of the singlet state $\ket{\psi^-}$ and turn in into  $n$ copies of the state $\rho$:
 \begin{align}
  \ketbra{\psi^-}{\psi^-}^{\otimes r_cn} \xrightarrow{\Lambda_{LOCC}} \Lambda_{LOCC}(\ketbra{\psi^-}{\psi^-}^{\otimes r_cn})\approx \rho^{\otimes n}
 \end{align}
where $\approx$ means that the states are asymptotically close in trace distance:
\begin{align}
\norm{\Lambda_{LOCC}(\ketbra{\psi^-}{\psi^-}^{\otimes r_cn})-\rho^{\otimes n}}\xrightarrow{n \longrightarrow \infty}0.
\end{align}
The entanglement cost is defined as the best rate at which one can generate the state $\rho$:
 \begin{align}
  E_C=\inf_{\Lambda_{LOCC}} r_c .
 \end{align}
\end{definition}

In general $E_D \leq E_C$, with equality carrying a meaning of reversibility. For pure states $E_D = E_C$, but for mixed states the strict 
relation can hold $E_D < E_C$. The entanglement cost is always strictly positive for an entangled state, however the distillable entanglement 
can be zero even though
the state is entangled. There exist entangled states which cannot be distillable \cite{PPTbound}, these are called \emph{bound entangled} states. 
In particular it is known that states which do not violate the PPT criteria (Proposition \ref{PropPPT}), \ie states which are positive under partial transposition
(that we call PPT states), cannot be distilled \cite{PPTbound}. Whether these are the only non-distillable states, or if there exists NPT states (states which 
are not positive under partial transposition) for which $E_D=0$, is 
one of the big open problems in quantum information theory.
\vspace{1em}

\subsection{Multipartite entanglement}\label{Amultipartiteentang}

Multipartite systems have a much richer structure them the bipartite ones. 
Already in the tripartite case we have example of two inequivalent maximally entangled states. The states:
\begin{align}
 \ket{W}&=\frac{1}{\sqrt{3}}(\ket{001}+\ket{010}+\ket{100})\\
 \ket{GHZ}&=\frac{1}{\sqrt{2}}(\ket{000}+\ket{111})
\end{align}
cannot be taken one into the other by any SLOCC protocol (\ie an LOCC protocol that 
has a probability of success)\cite{WGHZ}, whereas in the bipartite case, all the maximally entangled states are equivalent.

The generalization of separability for multipartite systems is straightforward. A finite dimensional\footnote{Separability of
infinite dimensional systems are defined analogously to Eq. \eqref{eqsep}.} $n$-partite state is \emph{separable} if it can be written as:
\begin{align}
 \rho=\sum_i p_i \rho_{A_1}^{(i)}\otimes \rho_{A_2}^{(i)}\otimes \ldots \otimes \rho_{A_n}^{(i)}\;\;\st \;\;\;\; p_i\geq 0\;,\; \sum_ip_i=1.
\end{align}
But when it concerns entanglement in a multipartite system the situation is not so simple. It can be the case 
that all the parties share quantum correlations, in which case the state is said to be \emph{genuinely
multipartite entangled} (GME), or else, we can have a combination of states where only  subsets of the parties share entanglement, which is
called a \emph{biseparable state}.

\begin{definition}[Biseparable state]
Let $S\subset\DE{1,\ldots, n}$. An $n$-partite state $\rho_{\bi}$ is biseparable if it can be decomposed into the form:
\begin{align}\label{eqbiseparable}
 \rho_{\bi}=\sum_{S\neq \emptyset}\sum_{i_S} p_{i_S} \rho_{A_S}^{(i_S)}\otimes \rho_{A_{S^C}}^{(i_S)},
\end{align}
where $\sum_{S\neq \emptyset}\sum_{i_S} p_{i_S}=1$, $p_{i_S}\geq 0$, and $S^C$ is the complement of $S$.
\end{definition}
A state which is not biseparable is genuinely multipartite entangled.
One can also define intermediate classes between biseparable and fully separable states, 
which are basically characterized by the number of subsystems that share entanglement.

In the tripartite case we have three possible bipartitions of the set of parties:
$A|BC$, $B|CA$ and $C|AB$, and a general biseparable state is written as:
\begin{align}\label{Aeqbiseparable}
 \rho_{\bi}=\sum_{i_A} p_{i_A} \rho_{A}^{(i_A)}\otimes \rho_{BC}^{(i_A)}+ \sum_{i_B} p_{i_B} \rho_{B}^{(i_B)}\otimes \rho_{CA}^{(i_B)}+\sum_{i_C} p_{i_C} \rho_{C}^{(i_C)}\otimes \rho_{AB}^{(i_C)}.
\end{align}

In order to explore a multipartite scenario in its full extent, one is usually interested in having a GME state. 
Detection of genuine multipartite entanglement is a fruitful area of research, but, for this task,
there is no such direct criteria like the PPT-criteria. A connexion between 
positive maps and witnesses of genuine multipartite entanglement was established in Ref. \cite{gmePositMaps}. There, a framework 
to construct witnesses of genuine multipartite entanglement from positive maps was derived. Other criteria 
to detect genuine multipartite entanglement were proposed in Refs. \cite{Marcus1,Marcus2}. And a device-independent
witness of GME, based on Bell inequalities
was proposed in \cite{DIEW}.
A systematic method to derive device-independent witnesses of genuine tripartite entanglement
was presented in Chapter \ref{chapternplayer}.


\chapter{State dependent bounds}\label{chapterstatebound}

In this Appendix we briefly describe the main results of Ref. \cite{GMstatebound}:
\begin{center}
\begin{minipage}{12.5cm}
\item \textit{Bounds on quantum nonlocality via partial transposition}\\
   K. Horodecki and \textbf{G. Murta}\\
 \href{http://link.aps.org/doi/10.1103/PhysRevA.92.010301}{\textbf{Phys. Rev. A, \textbf{92}, 010301, (2015)}}.
\end{minipage}
\end{center}

The quantitative study of quantum nonlocality has two opposite approaches: One is to ask, for a fixed Bell scenario, 
what is the highest violation one can obtain 
optimizing over all possible quantum resources 
(states and measurements);
Another is to fix the quantum state, or a class of states, and ask what is the best one can achieve using this state as a nonlocal resource, 
\ie optimizing over all Bell scenarios. 
In this thesis we have focused on the first approach, deriving bounds on the quantum value of a particular classes of Bell inequalities: the linear 
games.
In Ref. \cite{GMstatebound}   we have taken the opposite direction and asked 
\textit{`How much nonlocality one can extract from a particular quantum state?'}.


\section{Bound on single copy nonlocality}

In this section we consider a standard Bell scenario where Alice and Bob share a single copy
of a quantum state  $\rho_{AB}$ and perform local measurements on it, and we want to bound, for an  arbitrary Bell inequality, 
the violation Alice and Bob can achieve by using  $\rho_{AB}$.

Some previous results in this direction include the seminal work of Werner \cite{Werner} showing that some entangled quantum 
states cannot violate any Bell inequality\footnote{The result of Werner \cite{Werner} concerns only projective measurements.
It was extended for POVMs in Ref. \cite{Barrett}.}.
Another general result shows that typically the violation of correlation Bell inequalities by multipartite qudit states is very 
small \cite{Drumond2012}. And also an hierarchy of semidefinite programs that allows one to bound the violation achievable
by a PPT state was developed in Ref. \cite{hierarchyPPT}. 

Our goal is to derive a state dependent bound for the violation of a particular Bell inequality. 
In order to achieve this goal, we explore the link between two concepts: the level of violation of a Bell inequality by a quantum state 
and discrimination between two states by means of a restricted classes of operations.
Note that since only entangled quantum states can exhibit nonlocality, Bell inequalities can be viewed as particular cases of entanglement 
witnesses \cite{Hyllus-Bell-wit,Terhal-Bell-wit}. Moreover, the test of a  Bell inequality can be seen as the application of a 
separable operator to the quantum state.
Therefore, we can say that a Bell inequality is an entanglement witness which only involves a restricted class of operations, the 
separable ones.

\begin{definition}\label{defsepoperator}
A separable operation is a quantum operation that can be written in the form: 
\begin{align}\label{eqsepoperator}
\Lambda_{sep}(\rho)= \sum_i K_i \rho K^{\dagger}_i  \st \;\;K_i={K_A}_i\otimes {K_B}_i.
\end{align}
\end{definition}

The class of separable operations includes the LOCC operations, and they are a subset of a larger class called $PPT$ operations.

\begin{definition}
A PPT operation is a quantum operation that can be written in the form: 
\begin{align}\label{eqPPToperator}
\Lambda_{PPT}(\rho)= \sum_i K_i \rho K^{\dagger}_i  \st \;\;(K^{\dagger}_iK_i)^{\Gamma}\geq 0.
\end{align}
\end{definition}

It is easier to impose the constraint that an operator is PPT compared to imposing separability. For this reason, PPT operations are used many times to upper
bound results concerning separable and LOCC operations\cite{Duan-discrim}.

The first surprising result concerning distinguishability of quantum states by a restricted class of operations 
was obtained in Ref. \cite{Bennett-nlwe}, where the authors
showed the existence of a set of separable orthogonal states which cannot be perfectly distinguishable by any sequence of LOCC operations.
Later, it was shown that there exist pairs of states which are hardly distinguishable from each other by means of LOCC, 
although being almost orthogonal. This gave rise to the \emph{quantum data hiding}, which is the task of hiding classical bits in a
quantum state \cite{hiding-prl,hiding-ieee,WernerHide}. 
In Ref. \cite{karol-PhD} it is shown that there exist even entangled states containing a bit of private key, which are almost indistinguishable 
by LOCC operations from some separable 
(insecure) states. This fact has been shown recently to rule them out as a potential resource for swapping of a private key, 
in the so called quantum key repeaters \cite{BCHW-swapping}. \vspace{1em}

Given a Bell expression $\mathcal{S}$, we denote by  $\mathcal{S}_c$, $\mathcal{S}_q$ and $\mathcal{S}_{NS}$ respectively the classical, quantum
and no-signaling value of this expression. 
For a particular bipartite  state $\rho_{AB}$ and local POVMs $\DE{M_x^a}$ and $\{M_y^b\}$, we represent the corresponding box
by $\boxp \equiv\DE{\Tr (M_x^a\otimes M_y^b\, \rho_{AB})}$. The value of the Bell expression $\mathcal{S}$ for these particular POVMs and 
state is 
denoted 
\begin{align}
\mathbf{S}(\rho_{AB})= \Tr \mathbf{S}\rho_{AB},
\end{align}
where
\begin{align}
 \mathbf{S} = \sum_{a,b,x,y} s^{a,b}_{x,y} M_x^a\otimes M_y^b
\end{align}
is the Bell operator for POVMs $\DE{M_x^a}$ and $\{M_y^b\}$.
Note that the Bell operator $\mathbf{S}$ is a separable operator
(Definition \ref{defsepoperator}), so, intuitively, we expect that if a given state is hardly distinguishable from some separable one by means of separable operations,
it cannot exhibit
large violation in any Bell scenario, or else, one could use the  
procedure of checking the violation of a Bell inequality to discriminate between 
these two states. We now make this idea quantitative and state our first result, relating the value of a Bell expression 
on two bipartite states to the their
distinguishability by means of $PPT$ operations.

\begin{lemma} \label{thm:2-state-disc}
Given two states $\rho,\sigma \in D(\C^{d}\otimes\C^{d})$, a Bell expression $\mathcal{S}=\DE{s^{a,b}_{x,y}}$
and a set of  POVMs $\DE{ M_x^a},\{M_y^b\}$, it holds that:
\begin{align}
|\mathbf{S}(\rho) - \mathbf{S}(\sigma)| \leq \norm{\mathbf{S}^{\Gamma}}_{\infty} \norm{\rho^{\Gamma} - \sigma^{\Gamma}}_1.
\end{align}
where $\norm{\cdot}_1$ denotes the trace norm, $\norm{X}_{\infty}$ is the largest eigenvalue
in modulus of operator $X$ (which is equivalent to the spectral norm), and  $\Gamma$ denotes partial transposition.
\end{lemma}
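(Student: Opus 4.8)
The plan is to prove the inequality by exploiting the cyclic property of the trace together with the fact that partial transposition is an involution that preserves the trace of products. First I would write the difference of Bell values using the Bell operator $\mathbf{S}$ defined above, so that
\begin{align}
\mathbf{S}(\rho) - \mathbf{S}(\sigma) = \Tr\de{\mathbf{S}\rho} - \Tr\de{\mathbf{S}\sigma} = \Tr\De{\mathbf{S}(\rho - \sigma)}.
\end{align}
The key observation is that for any two operators $X,Y$ acting on $\C^d\otimes\C^d$ one has the identity $\Tr(XY) = \Tr(X^{\Gamma}Y^{\Gamma})$, which follows directly from the definition of partial transposition on a product basis and extends by linearity. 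Applying this with $X=\mathbf{S}$ and $Y=\rho-\sigma$ turns the expression into $\Tr\De{\mathbf{S}^{\Gamma}(\rho^{\Gamma}-\sigma^{\Gamma})}$.

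Next I would bound this trace using the standard duality between the operator (spectral) norm and the trace norm, namely the matrix Hölder inequality $|\Tr(AB)| \leq \norm{A}_{\infty}\norm{B}_1$. Setting $A = \mathbf{S}^{\Gamma}$ and $B = \rho^{\Gamma}-\sigma^{\Gamma}$ yields
\begin{align}
|\mathbf{S}(\rho) - \mathbf{S}(\sigma)| = \left|\Tr\De{\mathbf{S}^{\Gamma}(\rho^{\Gamma}-\sigma^{\Gamma})}\right| \leq \norm{\mathbf{S}^{\Gamma}}_{\infty}\norm{\rho^{\Gamma}-\sigma^{\Gamma}}_1,
\end{align}
which is exactly the claimed bound. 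Since $\rho^{\Gamma}-\sigma^{\Gamma} = (\rho-\sigma)^{\Gamma}$ is Hermitian (as $\rho,\sigma$ are Hermitian and partial transposition preserves Hermiticity), the trace norm on the right is well defined and the Hölder step is legitimate.

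The only subtle point I would take care to verify is the trace identity $\Tr(XY)=\Tr(X^{\Gamma}Y^{\Gamma})$: on basis elements $\ketbra{ab}{a'b'}$ and $\ketbra{cd}{c'd'}$ one checks that transposing the first subsystem in both factors permutes the matrix indices consistently so that the overall trace is unchanged, and then linearity extends it to arbitrary $X,Y$. I expect this bookkeeping to be the main (though still routine) obstacle, since everything else is an immediate application of standard norm duality. Note that no positivity or separability assumption on $\mathbf{S}$ is actually needed for this particular inequality; those properties only enter later when one wishes to interpret $\norm{\rho^{\Gamma}-\sigma^{\Gamma}}_1$ as a distinguishability measure under PPT operations and optimize over the separable state $\sigma$.
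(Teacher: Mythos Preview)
Your proof is correct and follows essentially the same route as the paper: write the difference as $\Tr[\mathbf{S}(\rho-\sigma)]$, invoke the identity $\Tr(XY)=\Tr(X^{\Gamma}Y^{\Gamma})$, and then apply the H\"older inequality between the spectral and trace norms. The only cosmetic difference is that the paper inserts the intermediate step $|\Tr(\cdot)|\leq \Tr|\cdot|$ before using $\norm{XY}_1\leq\norm{X}_\infty\norm{Y}_1$, whereas you quote the combined bound $|\Tr(AB)|\leq\norm{A}_\infty\norm{B}_1$ directly.
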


Note that $\mathbf{S}^{\Gamma}$ is
also a Bell operator, since partial transposition maps the POVMs $\{ M_x^a\otimes M_y^b\}$ into another set of allowed measurements 
$\{ M_x^a\otimes (M_y^b)^T\}$, 
and  $\norm{\mathbf{S}^{\Gamma}}_{\infty}$ is nothing but the largest quantum value of the Bell expression $\mathbf{S}$ given the particular 
measurements $\{ M_x^a\otimes (M_y^b)^T\}$.
The second term on the RHS represents the distinguishability
of these two states by  means of PPT operations. Calculating distinguishability by separable operations is a hard problem since 
there is no simple description for the set of separable operations. A relaxation of this problem is to consider the set 
of PPT operations which is much simpler to characterize \cite{RainsPPToperations}.
A weaker form of Lemma \ref{thm:2-state-disc}, concerning distinguishability under
general measurements, was similarly derived in Ref. \cite{BrunnerVertesiTh1}.

\begin{proof} \begin{align}
|\mathbf{S}(\rho) - \mathbf{S}(\sigma)| = & |\sum_{a,b,x,y} \Tr s^{a,b}_{x,y}  M_x^a\otimes M_y^b (\rho - \sigma)|  \nonumber \\
 =&|\sum_{a,b,x,y} \Tr s^{a,b}_{x,y} M_{x}^a\otimes (M_{y}^b)^T (\rho - \sigma)^{\Gamma}|  \nonumber\\
=& | \Tr\, \mathbf{S}^{\Gamma}(\rho^{\Gamma} -\sigma^{\Gamma}) | \\
\leq & \Tr\, |\mathbf{S}^{\Gamma}(\rho^{\Gamma} -\sigma^{\Gamma}) | \nonumber \\
\leq &\norm{\mathbf{S}^{\Gamma}}_{\infty}\norm{\rho^{\Gamma} - \sigma^{\Gamma}}_1.\nonumber
\end{align}
In the first and second equality we used the linearity of the trace, then the identity $\Tr XY = \Tr X^{\Gamma}Y^{\Gamma}$. 
In the fourth step we used the triangle inequality, and the last step follows from
H\"{o}lder's inequality for $p-$norms, which states that $\norm{XY}_1 \leq \norm{X}_{\infty} \norm{Y}_1$.
\end{proof}

As we have discussed in Chapter \ref{chapterNL}, separable states can only generate 
local boxes. Therefore, we have that
\begin{align}
\mathbf{S}(\sigma_{AB}) \leq \mathcal{S}_c \;\; \forall \; \sigma_{AB} \in SEP.
\end{align}
Given this observation, we can derive the main result of this Section.

\begin{theorem} \label{thm:C-Q-bound}
  For any bipartite Bell expression $\mathcal{S}$, and state $\rho$, it holds that:
\begin{align}\label{eq:C-Q-bound}
\mathcal{S}(\rho) \leq \mathcal{S}_c + \mathcal{S}_q\inf_{\sigma \in SEP} \norm{\rho^{\Gamma}  - \sigma}_1 .
\end{align}
where
\begin{align}
\mathcal{S}(\rho) := \sup_{\DE{ M_x^a}, \{M_y^b\}} \sum_{a,b,x,y}s^{a,b}_{x,y}\Tr ( M_x^a\otimes M_y^b\, \rho),
\end{align}
with supremum taken over all POVMs $\DE{ M_x^a}$ and $\{M_y^b\}$.  
\end{theorem}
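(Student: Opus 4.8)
The plan is to combine the previously established Lemma~\ref{thm:2-state-disc} with the fundamental fact, recalled just before the theorem, that separable states produce only local boxes. The final inequality \eqref{eq:C-Q-bound} should emerge by choosing, in Lemma~\ref{thm:2-state-disc}, the two states to be $\rho$ itself and a cleverly chosen separable state $\sigma$, and then taking an infimum. The key observation making this work is that $\Gamma$ is an involution on states (applying partial transposition twice returns the original operator), so that quantities written in terms of $\rho^\Gamma$ can be converted back to statements about $\rho$.

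First I would fix an arbitrary pair of POVMs $\DE{M_x^a}$ and $\{M_y^b\}$ and the associated Bell operator $\mathbf{S}$. Let $\sigma$ be any separable state. The natural move is to apply Lemma~\ref{thm:2-state-disc} not to $\rho$ and $\sigma$ directly, but in a way that exploits the involution property of $\Gamma$. Concretely, I would invoke the lemma with the second state chosen so that its partial transpose is $\sigma$; since $\sigma$ is separable (hence PPT), the operator $\sigma^\Gamma$ is itself a legitimate state, and applying the lemma to the pair $(\rho, \sigma^\Gamma)$ yields
\begin{align}
|\mathbf{S}(\rho) - \mathbf{S}(\sigma^\Gamma)| \leq \norm{\mathbf{S}^\Gamma}_\infty \, \norm{\rho^\Gamma - (\sigma^\Gamma)^\Gamma}_1 = \norm{\mathbf{S}^\Gamma}_\infty \, \norm{\rho^\Gamma - \sigma}_1,
\end{align}
using that $(\sigma^\Gamma)^\Gamma = \sigma$. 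Now I would control the two pieces on the right. Since $\sigma$ is separable and therefore PPT, $\sigma^\Gamma$ is a separable state, so $\mathbf{S}(\sigma^\Gamma) \leq \mathcal{S}_c$ by the local-box property. For the prefactor, $\mathbf{S}^\Gamma$ is again a Bell operator (as noted after Lemma~\ref{thm:2-state-disc}, partial transposition sends $\{M_y^b\}$ to the valid measurement $\{(M_y^b)^T\}$), so its spectral norm $\norm{\mathbf{S}^\Gamma}_\infty$ is bounded by the maximal quantum value $\mathcal{S}_q$.

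Putting these together gives $\mathbf{S}(\rho) \leq \mathbf{S}(\sigma^\Gamma) + |\mathbf{S}(\rho)-\mathbf{S}(\sigma^\Gamma)| \leq \mathcal{S}_c + \mathcal{S}_q \norm{\rho^\Gamma - \sigma}_1$ for every fixed choice of measurements and every separable $\sigma$. I would then take the supremum over all POVMs on the left-hand side — which turns $\mathbf{S}(\rho)$ into $\mathcal{S}(\rho)$ while the right-hand side stays a valid upper bound because $\mathcal{S}_c$ and $\mathcal{S}_q$ are measurement-independent — and finally take the infimum over $\sigma \in SEP$ on the right, producing exactly \eqref{eq:C-Q-bound}.

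The main obstacle, I expect, is bookkeeping the direction of the partial transposition correctly and justifying that $\sigma^\Gamma$ ranges over (at least) all separable states as $\sigma$ ranges over $SEP$; this is what lets the infimum over $\sigma$ in the lemma's distance term be rewritten as the infimum $\inf_{\sigma\in SEP}\norm{\rho^\Gamma - \sigma}_1$ appearing in the theorem. One must verify that $SEP$ is closed under $\Gamma$ (it is, since transposition of a single system preserves separability), so that optimizing over $\sigma^\Gamma$ with $\sigma$ separable is the same as optimizing over all separable states. The remaining subtlety is the interchange of the supremum over measurements with the bound: because the inequality holds for each fixed measurement with the same right-hand side once the infimum over $\sigma$ is taken, the supremum passes through cleanly, but I would state this step explicitly rather than leave it implicit.
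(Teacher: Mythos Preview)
Your proposal is correct and follows essentially the same route as the paper's proof: apply Lemma~\ref{thm:2-state-disc}, use that separable states yield values at most $\mathcal{S}_c$, bound $\norm{\mathbf{S}^\Gamma}_\infty$ by $\mathcal{S}_q$, then take the supremum over measurements and the infimum over separable states. The only cosmetic difference is that the paper applies the lemma to the pair $(\rho,\sigma)$ and obtains $\norm{\rho^\Gamma - \sigma^\Gamma}_1$, leaving the use of $\Gamma$-invariance of $SEP$ implicit when passing to the infimum, whereas you apply it to $(\rho,\sigma^\Gamma)$ and make this closure property explicit from the outset; your bookkeeping is in fact slightly more careful on this point.
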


\begin{proof}
By substituting any separable state  $\sigma$ in Lemma \ref{thm:2-state-disc}, and using the fact that
 $\mathbf{S}(\sigma_{AB}) \leq \mathcal{S}_c \;\; \forall \; \sigma_{AB} \in SEP$
 we have: 
\begin{align}  
 \mathbf{S}(\rho) \leq \mathcal{S}_c + \norm{\mathbf{S}^{\Gamma}}_{\infty}\norm{\rho^{\Gamma}  - \sigma^{\Gamma}}_1.
 \end{align}
Now taking supremum over POVMs $\{ M_x^a\}$, $\{M_y^b\}$
 on both sides, and infimum over all separable states $\sigma$, we have the desired result .
Note that $\mathcal{S}_q = \sup_{\rho}\mathcal{S}(\rho)$, and   
that  $\norm{\mathbf{S}^{\Gamma}}_{\infty}$ is upper
bounded by $\mathcal{S}_q$.
\end{proof}

Theorem \ref{thm:C-Q-bound} shows that, given a Bell inequality $\mathcal{S}$, the best violation one can achieve with a particular quantum state $\rho$ 
 cannot outperform the classical bound by the quantum value of the Bell inequality shrunk 
by a factor reporting the distinguishability of state $\rho$ from the set of separable states by means of $PPT$ operations.
Since the bound \eqref{eq:C-Q-bound} only depends on the distance to the set of separable states, every state which is 
distinguishable from a separable state by the same content $\epsilon$ will exhibit the same limitations in a Bell scenario.
Hence we can 
define the sets 
\begin{align} \label{D-epsilon}
D(\epsilon) : = \{ \rho : \exists_{\sigma \in SEP} \,\, \norm{\rho^{\Gamma} - \sigma}\leq \epsilon\}.
\end{align}
Observe that $D(\epsilon)$ is a convex set, which includes the set of separable states $SEP$ for any $\epsilon >0$.
Due to Theorem \ref{thm:C-Q-bound}, we have the following dependence (see Fig. \ref{fig:d-epsilon}):
\begin{align} 
\sup_{\rho \in D(\epsilon)} \mathcal{S}(\rho) \leq \mathcal{S}_c + \epsilon \mathcal{S}_q.
\end{align}

\begin{figure}[h]
\begin{center}
 \includegraphics[scale=0.4]{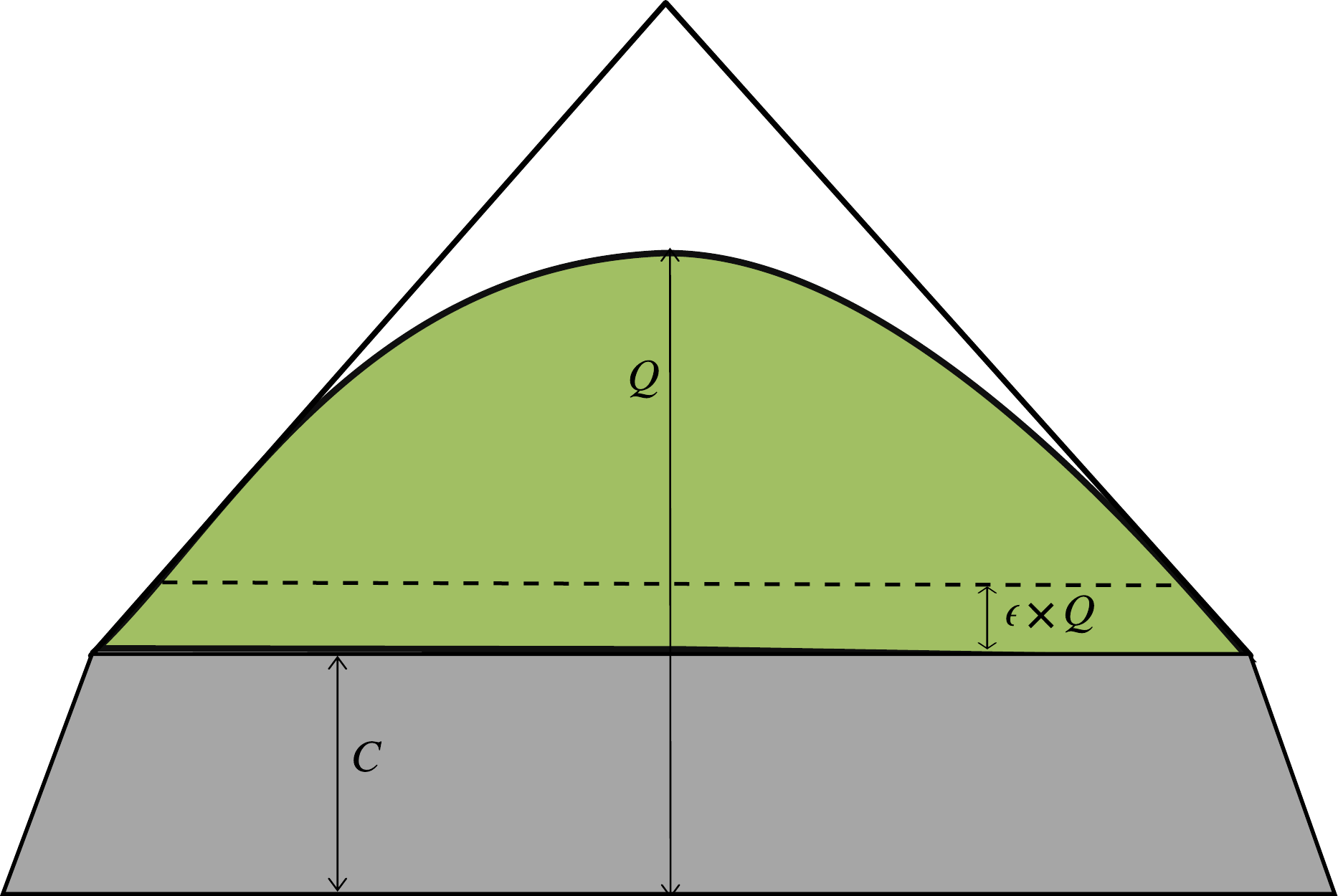}
 \caption{For states in $D(\epsilon)$, Eq. \eqref{D-epsilon}, the 
	violation of a Bell inequality $\mathcal{S}$ is limited by its quantum value $\mathcal{S}_q$ shrunk by $\epsilon$ (dashed line).
		}
		\label{fig:d-epsilon}
 \end{center}
 \end{figure}

\section{Bound on the asymptotic scenario} 

In the previous section we have considered the standard Bell scenario, where a single copy of a bipartite state 
is shared between Alice and Bob. 
However, if the state $\rho_{AB}$ is distillable, as it was noted by Peres \cite{Peres1996}, a pre-processing of many copies of a state by 
LOCC, before the Bell test, could lead to the violation of a Bell inequality, even for states that have local model 
in the standard scenario.
Here we quantify the asymptotic nonlocality by defining the 
asymptotic relative entropy of nonlocality 
and applying methods of Ref. \cite{BCHW-swapping} to bound it. In the first step, we will bound this quantity
by a function of the relative entropy distance under restrictive measurements introduced by Piani in Ref. \cite{Pianirelentropy}.

In Ref. \cite{vDamGrunwaldGill} a  nonlocality quantifier, based on the relative entropy, was introduced. 
The relative entropy between two probability distributions, $P$ and $Q$, is defined as $D(P||Q) \equiv \sum_i P(i) \log \frac{P(i)}{Q(i)}$.
The \emph{statistical strength of nonlocality} defined in Ref. \cite{vDamGrunwaldGill} captures quantitatively how 
``similar''  a given probability
distribution is to a local one: Given
a box $\vec{P}(ab|xy)$, where for fixed $x,y$ we have distribution $P_{xy}(ab|xy)$, its nonlocality is quantified by
\begin{align}
\mathcal{N}(\vec{P}) = \sup_{\{p(x,y)\}} \inf_{P_L \in \mathcal{L}} \sum_{x,y} p(x,y) D(P_{xy}(ab|xy) || P_L(ab|x,y)),
\end{align}
where infimum in the above is taken over all local boxes, $\vec{P}(ab|xy)\in \mathcal{L}$, for the particular scenario and $D(P||Q)$  
is the relative entropy.

Now, we are interested in quantifying  how much nonlocality $\mathcal{N}$ one can obtain from $n$ copies of a given 
state $\rho_{AB}$, per number of copies, 
in the asymptotic limit, after processing it by LOCC. For that we introduce the \emph{asymptotic relative entropy of nonlocality}.

{\definition For a bipartite state $\rho_{AB}$ its asymptotic relative entropy of nonlocality, $R(\rho_{AB})$, is given by:
\begin{align}
R(\rho_{AB}) \equiv \varlimsup_{n \rightarrow \infty}\frac{1}{n} \sup_{\Lambda \in LOCC} \sup_{\DE{M_{xy}}} \mathcal{N}(\{\Tr M_{xy}\Lambda(\rho_{AB}^{\otimes n})\}),
\end{align}
where $\varlimsup$ is the supremum limit, and $\{M_{xy}\}$ denote local POVMs $M_{xy}=M_x^a \otimes M_y^b$.
}

The asymptotic relative entropy of nonlocality captures the idea of an optimization over all Bell scenarios after a pre-processing of many copies
of a quantum state $\rho_{AB}$ by LOCC operations. 

We want to set bounds for the nonlocality attainable in the asymptotic scenario.
In order to state the bound,
we will need a well known entanglement measure called \emph{relative entropy of entanglement} \cite{DefER}:
\begin{align}
E_r(\rho) = \inf_{\sigma \in SEP} S(\rho||\sigma),
\end{align}
where $S(\rho||\sigma) = \Tr\rho\log \rho - \Tr\rho\log \sigma$ is the quantum relative entropy.




Now we are ready to state the main result of this Section.
\begin{theorem}\label{thmboundasymp}
 For any bipartite state it holds that
\begin{align}
R(\rho_{AB}) \leq  E_r(\rho_{AB}). 
\label{eq:RT}
\end{align}
For $\rho_{AB}$ a PPT state, it holds that
\begin{align}
R(\rho_{AB}) \leq \min\DE{E_r(\rho_{AB}), E_r(\rho_{AB}^{\Gamma})}.
\end{align}
\label{thm:R-T}
\end{theorem}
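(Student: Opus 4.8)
The plan is to bound the asymptotic relative entropy of nonlocality $R(\rho_{AB})$ by chaining together two quantities: first relating the nonlocality quantifier $\mathcal{N}$ of a box produced by local measurements to a relative-entropy distance of the underlying state from the separable set under restricted measurements, and then using the fact that the relative entropy of entanglement is an upper bound on such restricted measured relative entropies and is additive/regularizable in the right way. Concretely, I would follow the strategy of Ref.~\cite{BCHW-swapping}: for a fixed LOCC pre-processing $\Lambda$ and a fixed local POVM family $\{M_{xy}\}$, the box $\{\Tr M_{xy}\Lambda(\rho_{AB}^{\otimes n})\}$ has a nonlocality $\mathcal{N}$ that is, by definition of $\mathcal{N}$ as an infimum of relative entropy over local boxes, bounded above by the relative entropy between the measured statistics of $\Lambda(\rho_{AB}^{\otimes n})$ and the measured statistics of any separable state $\sigma$ fed through the same measurements. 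This is because a separable state produces a local box, so it is a feasible point in the infimum defining $\mathcal{N}$.

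First I would make precise the inequality
\begin{align}
\mathcal{N}(\{\Tr M_{xy}\Lambda(\rho^{\otimes n})\}) \leq \sup_{\{p(x,y)\}}\sum_{x,y}p(x,y)\, D\!\left(P^{\rho}_{xy}\,\middle\|\,P^{\sigma}_{xy}\right),
\end{align}
valid for every separable $\sigma$, and then recognize the right-hand side as (a restricted form of) Piani's measured relative entropy between $\Lambda(\rho^{\otimes n})$ and $\sigma$. The monotonicity of relative entropy under the quantum operation $\Lambda$ (data processing) and under taking measurement statistics then lets me pass from $\Lambda(\rho^{\otimes n})$ back to $\rho^{\otimes n}$, giving a bound by $\inf_{\sigma}S(\rho^{\otimes n}\|\sigma) = E_r(\rho^{\otimes n})$. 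Dividing by $n$ and taking the limit superior, together with the known subadditivity $E_r(\rho^{\otimes n})\leq n\,E_r(\rho)$, yields $R(\rho_{AB})\leq E_r(\rho_{AB})$, which is \eqref{eq:RT}.

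For the PPT part I would exploit the partial transposition symmetry already used in Lemma~\ref{thm:2-state-disc}. The key observation is that if $\rho$ is PPT then $\rho^{\Gamma}$ is again a legitimate state, and the Bell test on $\rho$ can be mirrored by a Bell test on $\rho^{\Gamma}$ because partial transposition maps the local POVM elements $M_x^a\otimes M_y^b$ to $M_x^a\otimes (M_y^b)^T$, another valid local measurement, and maps separable states to separable states. Thus the entire argument bounding $R$ by $E_r$ can be re-run with $\rho^{\Gamma}$ in place of $\rho$, giving $R(\rho_{AB})\leq E_r(\rho_{AB}^{\Gamma})$; combining with the generic bound produces the minimum. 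The main obstacle I anticipate is the first step: carefully justifying that the supremum over LOCC operations and the supremum over local measurements can be absorbed into (and bounded by) the measured relative entropy to the separable set, and then controlling the regularization $\tfrac{1}{n}E_r(\rho^{\otimes n})$ in the limit. The data-processing/monotonicity properties do the heavy lifting, but one must be attentive that $\mathcal{N}$ involves an inner infimum over \emph{local boxes} rather than over separable states directly, so the reduction relies precisely on the fact that measured separable states are a subset of the local boxes — this inclusion is what makes the relative-entropy comparison legitimate, and it is where I would spend most of the care.
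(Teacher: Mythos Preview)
The paper does not actually prove this theorem here---it explicitly refers the reader to Ref.~\cite{GMstatebound}---but the surrounding text outlines the intended route: bound $R$ by Piani's relative entropy of entanglement under restricted (local) measurements \cite{Pianirelentropy}, then bound that by $E_r$ via the methods of \cite{BCHW-swapping}. Your sketch follows exactly this line (separable states yield local boxes, so they are feasible in the infimum defining $\mathcal{N}$; data processing and LOCC monotonicity push back to $E_r(\rho^{\otimes n})\leq nE_r(\rho)$; for PPT states the partial-transpose symmetry of local measurements and of the separable set lets the argument be rerun on $\rho^{\Gamma}$), so your proposal matches the paper's indicated approach.
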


The upper bound $R(\rho_{AB}) \leq E_r(\rho_{AB})$ gives meaningful results only when the state is close to
separable states under global operations. More important is the second bound, which can give 
meaningful results even for some states which are highly distinguishable from separable states by global operations, but cannot be
distinguished if only PPT operations are allowed.
We refer the reader to Ref. \cite{GMstatebound} for the proof of Theorem \ref{thmboundasymp}.

We can also extend Theorem \ref{thmboundasymp} to asymptotic scenarios  where the parties 
can perform a `filtering'
operation (a non-trace-preserving map) before the Bell test: the so called hidden nonlocality scenario, introduced by Popescu in Ref. \cite{hiddenPopescu}.
Popescu showed that by performing a `filtering'
operation  it is possible to obtain a much larger violation of a Bell inequality on the resulting state. 
However, we note that, in order to quantify nonlocality in this scenario, it is 
important to take into account the probability of obtaining the  `filtered' result. 
For this reason, we propose to consider
the \emph{asymptotic relative entropy of hidden-nonlocality}, $R_H(\rho_{AB})$, defined as follows:
\begin{align}
R_H(\rho_{AB})  \equiv  \varlimsup_{n \rightarrow \infty}\frac{1}{n} 
\sup_{\Lambda \in LOCC}  \sup_{\DE{M_{xy}}} \sup_{F_0} p^{F_0} \mathcal{N}(\{\Tr M_{xy}F_0(\Lambda(\rho_{AB}^{\otimes n}))\}).
\end{align}
Where we can see a filtering process $F_0$ as an operation that takes the state $\Lambda(\rho_{AB}^{\otimes n})$ 
to a flag form
\begin{align}
F(\rho)=\sum_i \ketbra{i}{i}\otimes F_{i} \rho F_{i}^{\dagger},
\end{align}
 and later  erasures all other results except the  ``good'' 
one, that leads to the highest violation of the Bell inequality. The probability that the filter results in the desired outcome
is given by
\begin{align}
 p^{F_0}= \Tr F_{0}\Lambda(\rho_{AB}^{\otimes n})F_0^{\dagger}.
\end{align}
We can have the same bound for $R_H$, as for $R$.

\begin{theorem}\label{thmboundasympH}
For any bipartite state $\rho_{AB}$ it holds that
\begin{align}
R_H(\rho_{AB}) \leq  E_r(\rho_{AB}).
\end{align}

For a bipartite PPT state $\rho_{AB}$ it holds that
\begin{align}
R_H(\rho_{AB}) \leq \min  \DE{E_r(\rho_{AB}),E_r(\rho_{AB}^{\Gamma})}.
\end{align}
\label{ThmRH}
\end{theorem}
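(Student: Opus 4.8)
The plan is to reduce the hidden-nonlocality statement to the ordinary asymptotic case already settled in Theorem \ref{thmboundasymp}, exploiting the fact that a filtering operation followed by erasure is itself a (non-trace-preserving) local operation whose probabilistic weight we carry along explicitly. First I would observe that the flag-form filter $F(\rho)=\sum_i\ketbra{i}{i}\otimes F_i\rho F_i^{\dagger}$ is a local operation, and that post-selecting on the ``good'' outcome $F_0$ yields the subnormalized state $F_0\bigl(\Lambda(\rho_{AB}^{\otimes n})\bigr)$ with success probability $p^{F_0}=\Tr F_0\Lambda(\rho_{AB}^{\otimes n})F_0^{\dagger}$. The key point is that the quantity being optimized, $p^{F_0}\,\mathcal{N}(\{\Tr M_{xy}F_0(\Lambda(\rho_{AB}^{\otimes n}))\})$, already carries the probabilistic weight, so I expect it to behave monotonically under the entanglement measure in exactly the way $\mathcal{N}$ itself does for the unfiltered state.

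The central step is to argue that the combined map $\rho\mapsto F_0(\Lambda(\rho))$, viewed as a single LOCC-composable transformation, cannot increase the relative entropy of entanglement beyond what is already accounted for. Concretely, I would invoke monotonicity (non-increase) of $E_r$ under local operations together with the fact that probabilistic filtering can only decrease $E_r$ on average: averaging over the filter outcomes, $\sum_i p^{F_i}E_r(\sigma_i)\le E_r(\Lambda(\rho_{AB}^{\otimes n}))\le E_r(\rho_{AB}^{\otimes n})=nE_r(\rho_{AB})$, where $\sigma_i$ is the normalized state conditioned on outcome $i$ and the last equality uses additivity of $E_r$ on tensor powers in the relevant asymptotic sense. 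Since the weighted nonlocality of the selected branch is bounded by $p^{F_0}E_r(\sigma_0)$, which is one nonnegative term in that average, the supremum over $F_0$ is bounded by $nE_r(\rho_{AB})$. Dividing by $n$ and taking the supremum limit then gives $R_H(\rho_{AB})\le E_r(\rho_{AB})$, reproducing the bound of Theorem \ref{thmboundasymp}.

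For the PPT case I would run the identical argument twice: once directly, and once after partial transposition, using that PPT operations preserve the PPT property and that for a PPT state both $E_r(\rho_{AB})$ and $E_r(\rho_{AB}^{\Gamma})$ furnish valid upper bounds on the statistical strength extractable by the restricted class of operations. Combining the two yields $R_H(\rho_{AB})\le\min\{E_r(\rho_{AB}),E_r(\rho_{AB}^{\Gamma})\}$, exactly as in Theorem \ref{ThmRH}. The subtlety here is that the filtering must be compatible with the PPT structure; I would either restrict attention to filters whose Kraus operators satisfy the PPT-preserving condition, or appeal to the fact that the relative-entropy bound under PPT operations dominates the LOCC one regardless.

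The main obstacle I anticipate is making the monotonicity-under-filtering step fully rigorous, since $E_r$ is defined for trace-one states whereas $F_0(\Lambda(\rho))$ is subnormalized, and the relative entropy of nonlocality $\mathcal{N}$ is itself defined through an optimization over local boxes and input distributions. The delicate part is verifying that the probabilistic weight $p^{F_0}$ combines correctly with the conditional relative entropy so that the weighted average telescopes into a bound by $E_r$ of the pre-filter state; this is precisely the content borrowed from the methods of Ref. \cite{BCHW-swapping}, and I would lean on their treatment of probabilistic operations and the relative-entropy distance under restricted measurements of Ref. \cite{Pianirelentropy} rather than re-deriving it. Once that weighting lemma is in hand, the passage from $R$ to $R_H$ is essentially formal.
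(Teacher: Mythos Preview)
The paper does not actually contain a proof of this theorem. Appendix~\ref{chapterstatebound} is explicitly a summary of results from Ref.~\cite{GMstatebound}; for the closely related Theorem~\ref{thmboundasymp} the paper states ``We refer the reader to Ref.~\cite{GMstatebound} for the proof,'' and Theorem~\ref{ThmRH} is then stated without any proof at all, preceded only by the remark ``We can have the same bound for $R_H$, as for $R$.'' So there is no in-paper argument to compare against.

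That said, your proposed strategy matches the one the paper gestures at: reduce the hidden-nonlocality case to the unfiltered one by absorbing the filter into the LOCC pre-processing and carrying the success probability as a weight. The chain
\[
p^{F_0}\,\mathcal{N}(\sigma_0)\;\le\; p^{F_0}\,E_r(\sigma_0)\;\le\;\sum_i p^{F_i}E_r(\sigma_i)\;\le\;E_r\bigl(\Lambda(\rho_{AB}^{\otimes n})\bigr)\;\le\;E_r(\rho_{AB}^{\otimes n})\;\le\;n\,E_r(\rho_{AB})
\]
is the right skeleton; the first inequality is the single-copy content behind Theorem~\ref{thmboundasymp} (via Piani's restricted-measurement relative entropy~\cite{Pianirelentropy}), the middle inequalities are LOCC monotonicity of $E_r$ on average, and the last is subadditivity. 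One caution: you wrote $E_r(\rho_{AB}^{\otimes n})=nE_r(\rho_{AB})$, but $E_r$ is not additive in general---only the inequality $\le$ holds, which is all you need. For the PPT branch your sketch is vaguer; the actual mechanism in Ref.~\cite{GMstatebound} routes through the restricted relative entropy and the fact that partial transposition commutes with the relevant class of operations, rather than ``restricting to PPT-preserving filters,'' so that part of your outline would need to be tightened if you were writing a full proof.
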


\section{Examples}

In Ref. \cite{pptkey}, it was shown that the general form of a quantum state from which it is possible to extract one bit of classical private key,
a \emph{private bit}, is given by
\begin{align*}
\gamma_{X}=\frac{1}{2}[\ketbra{00}{00}\otimes \sqrt{XX^{\dagger}}+ \ketbra{00}{11}\otimes X +  \nonumber\\
\ketbra{11}{00}\otimes X^{\dagger} +  \ketbra{11}{11}\otimes\sqrt{X^{\dagger}X} ] ,
\end{align*}
where $X$ is an arbitrary operator with trace norm 1 acting on $\mathbb{C}^{d_s}\otimes \mathbb{C}^{d_s}$. . 

Consider a private state 
defined by $X = \frac{1}{d^2}\sum_{i,j=0}^{d-1} \ketbra{ij}{ji}$ being the (normalized) swap operator. 
Then for the CHSH inequality we have the following bound \cite{karol-PhD}:
\begin{align}
Q_{CHSH}(\gamma_{X}) \leq 2 + \frac{\sqrt{2} + 1}{{2}\sqrt{2}{d}}.
\label{eq:1-example}
\end{align}

In \cite{PrivateNonlocality} it was shown that all perfect private states violate the CHSH inequality.
With our techniques we can bound this violation and also analyze PPT approximate private states.

Consider the following PPT approximate private state:
\begin{align}
\rho_{p} =(1-p) \gamma_{X} + \frac{p}{2}[\ketbra{01}{01}\otimes \sqrt{YY^{\dagger}} + \ketbra{10}{10}\otimes \sqrt{Y^{\dagger}Y}] 
\label{eq:ppt-1example}
\end{align}
with $X= \frac{1}{d_s \sqrt{d_s}} \sum_{i,j=0}^{d_s-1} u_{ij}\ketbra{ij}{ji}$ and $Y = \sqrt{d_s}X^{\Gamma}$, 
$|u_{ij}| = \frac{1}{d_s}$, and $p=\frac{1}{\sqrt{d_s}+1}$. 
By Theorem \ref{thm:C-Q-bound}, we have
\begin{align}
Q_{\mathcal{S}}(\rho_{p}) \leq \mathcal{S}_c + \mathcal{S}_q \frac{1}{\sqrt{d_s}}.
\label{eq:ppt-example}
\end{align}

In Ref. \cite{pptkey}, it was shown that there exists a family of states 
invariant under partial transposition for which the distillable key can be made arbitrarily close to one, $K_D \rightarrow 1$.
By applying Theorem \ref{thm:C-Q-bound} to this family of states we obtain the following proposition.

\begin{proposition}
 There exist bipartite states $\rho \in B(\C^2\otimes \C^2 \otimes (\C^{d^k}\otimes\C^{d^k})^{\otimes m})$ with $d=m^2$, $k=m$ satisfying
$K_D(\rho^{\Gamma}\otimes\rho) \rightarrow 1$ with increasing $m$, such that: 
\begin{align}
Q_{\mathcal{S}}(\rho\otimes\rho^{\Gamma})\leq \mathcal{S}_c +\frac{ \mathcal{S}_q}{2^{m-1}}.
\label{eq:prop-example}
\end{align}
\label{prop:example}
\end{proposition}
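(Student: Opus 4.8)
\textbf{Proof proposal for Proposition \ref{prop:example}.}

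The plan is to combine two ingredients: the single-copy state-dependent bound of Theorem \ref{thm:C-Q-bound}, which limits the violation of any Bell inequality in terms of the $PPT$-distinguishability of the state from the separable set, and the key-repeater family of Ref. \cite{pptkey} whose distillable key under the tensoring $\rho^{\Gamma}\otimes\rho$ approaches one. First I would recall the explicit construction of the invariant states $\rho \in B(\C^2\otimes \C^2 \otimes (\C^{d^k}\otimes\C^{d^k})^{\otimes m})$ with $d=m^2$, $k=m$ from Ref. \cite{pptkey}, and verify the claimed asymptotic $K_D(\rho^{\Gamma}\otimes\rho)\rightarrow 1$ as $m\rightarrow\infty$, which is a property established there. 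The key structural feature I would extract is that these states, though highly distillable-key-containing after the partial transposition trick, remain extremely close to the separable set \emph{under the restricted class of $PPT$ operations}, precisely because they are built to be nearly indistinguishable from insecure (separable) states by local means. This is the same mechanism underlying the data-hiding and key-swapping analyses cited in the state-dependent chapter.

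The central step is to estimate $\inf_{\sigma\in SEP}\norm{(\rho\otimes\rho^{\Gamma})^{\Gamma}-\sigma}_1$ and show it is bounded by $1/2^{m-1}$. Here I would apply Theorem \ref{thm:C-Q-bound} to the state $\rho\otimes\rho^{\Gamma}$, giving
\begin{align}
Q_{\mathcal{S}}(\rho\otimes\rho^{\Gamma})\leq \mathcal{S}_c+\mathcal{S}_q\inf_{\sigma\in SEP}\NORM{(\rho\otimes\rho^{\Gamma})^{\Gamma}-\sigma}_1.
\end{align}
The task then reduces to controlling the $PPT$-distance of the partially transposed tensor state from the separable set. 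I would exploit the identity $(\rho\otimes\rho^{\Gamma})^{\Gamma}=\rho^{\Gamma}\otimes\rho$ together with the specific flag-and-swap structure of the private states, bounding the trace-norm distance by a quantity that decays geometrically in $m$. The parameters $d=m^2$ and $k=m$ are chosen exactly so that this decay rate is $2^{-(m-1)}$; I would track the $1/\sqrt{d_s}$-type factors appearing in the single private-bit estimate \eqref{eq:ppt-example} and verify that the $m$-fold tensor structure converts the polynomial suppression into the stated exponential one.

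The hard part will be the distinguishability estimate: establishing rigorously that $\inf_{\sigma\in SEP}\norm{\rho^{\Gamma}\otimes\rho-\sigma}_1\leq 1/2^{m-1}$ while \emph{simultaneously} maintaining $K_D(\rho^{\Gamma}\otimes\rho)\rightarrow 1$. These two requirements pull in opposite directions in spirit — large distillable key suggests the state is a useful (hence entangled, hence potentially nonlocal) resource, whereas small $PPT$-distance to $SEP$ forces the nonlocal violation to vanish. Reconciling them is exactly the phenomenon that these states exhibit, and the bookkeeping must show that the near-separability is with respect to the weaker $PPT$/separable-operation distinguishability rather than global distinguishability. I would lean on the detailed norm computations of Ref. \cite{pptkey} and Ref. \cite{karol-PhD} to furnish the geometric bound, and then the conclusion \eqref{eq:prop-example} follows immediately by substitution, since $\mathcal{S}_c$ and $\mathcal{S}_q$ are fixed constants of the chosen Bell expression. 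A final check would confirm that the dimensions of the local systems match the claimed Hilbert space $\C^2\otimes\C^2\otimes(\C^{d^k}\otimes\C^{d^k})^{\otimes m}$, so that the application of Theorem \ref{thm:C-Q-bound} is legitimate.
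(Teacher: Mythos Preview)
Your proposal is correct and follows exactly the route the paper indicates: the paper's own ``proof'' is the single sentence preceding the proposition, namely that one takes the partial-transpose-invariant family of states from Ref.~\cite{pptkey} with $K_D\rightarrow 1$ and applies Theorem~\ref{thm:C-Q-bound}, with the $1/2^{m-1}$ distance estimate imported from that reference. Your identification of the key inequality $Q_{\mathcal{S}}(\rho\otimes\rho^{\Gamma})\leq \mathcal{S}_c+\mathcal{S}_q\inf_{\sigma\in SEP}\norm{(\rho\otimes\rho^{\Gamma})^{\Gamma}-\sigma}_1$ via the identity $(\rho\otimes\rho^{\Gamma})^{\Gamma}=\rho^{\Gamma}\otimes\rho$, together with the remark that the separable-distance bound must be extracted from the explicit shield construction in Refs.~\cite{pptkey,karol-PhD}, is precisely what is needed and matches the paper's intent.
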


Proposition \ref{prop:example} shows that for the class of states in consideration, although the rate of  
distillable key can be made
arbitrarily close to one by increasing the dimension of the systems, the possibility of violating any Bell inequality is bounded by an amount
vanishing  
with the dimension of the system. 
For a bipartite Bell inequality with $n$ inputs and $k$ outputs it holds that
$ \mathcal{S}_q \leq \mathcal{S}_c \times \min \DE{n,k}$,
up to some universal constant independent of the parameters of the scenario \cite{Palazuelos2011}. 
Therefore, Theorem \ref{thm:C-Q-bound} ensures that for any fixed Bell scenario, as we wish to increase the key rate obtained from the exhibited family
of states,  
the possibility of observing a violation of a Bell inequality vanishes.\vspace{1em}

%

An application of Theorems \ref{thmboundasymp} and \ref{thmboundasympH}  follows from the fact 
that the relative entropy of entanglement, $E_r$, is asymptotically continuous\footnote{A function $f:D(\C^d)\longrightarrow \R$ is asymptotically continuous if
\begin{align}
 |f(\rho_1)-f(\rho_2)|\leq K \norm{\rho_1-\rho_2}_1 \log d+g(\norm{\rho_1-\rho_2}_1)
\end{align}
where $K$ is a constant and $g(\epsilon)\xrightarrow{\epsilon \longrightarrow 0}0$.
}\cite{Asympcontinuous}. Generally, for
$\rho_{\epsilon} \in D(\epsilon)$, and $\epsilon < \frac{1}{2}$, we have:
\begin{align*}
R_{(H)}(\rho_{\epsilon}) \leq 4\epsilon \log d + 2 h(\epsilon)
\end{align*}
Hence, if $\epsilon$ decreases with $d$ faster 
than $\frac{1}{\log d}$,  the asymptotic relative entropy of nonlocality vanishes with increasing dimension
The family of states shown in Eq. \eqref{eq:ppt-1example} have this property. 

\section{Discussion and open problems} 

We have presented bounds on the quantum nonlocality of a bipartite state, both, in the single copy case,
as well as in the asymptotic  scenarios. 
Although we use partial transposition techniques for obtaining nontrivial results, our method is based on the concept of
state discrimination via a restricted classes of operations: the separable ones. 

As future directions exploring the bounds: For the single copy scenario, instead of discrimination
from separable states, a refinement would be to consider the distance from states admitting a local hidden-variable model,
which could possibly lead to tight\-er bounds; 
For the asymptotic scenarios, it would be interesting to find nontrivial upper bounds on the asymptotic relative entropy of (hidden) 
nonlocality of NPT states. Note that, for these states, we only have $E_R(\rho)$ as an upper bound, which is small only for the cases
where the state is almost indistinguishable from a separable state under global operations.

In Refs. \cite{HHHLO:unco-pbit, HLLO-PRL} it was shown that one can launch quantum key distribution (QKD) protocols 
based on shared private bits. However, if we are interested in device-independent quantum key distribution (DI QKD) 
protocols we need to have a  violate of some Bell inequality.
A DI QKD protocol
is based on some Bell inequality $\mathcal{S}$,
and admits some level of violation, say $\epsilon_v$, below which it aborts.
Now, due to Eqs. \eqref{eq:ppt-example} and \eqref{eq:prop-example} there are (approximate) private bits, which exhibit violation of inequality
$\mathcal{S}$ only up to $\epsilon' < \epsilon_v$, and hence will be aborted. This rules out such states from usage in this particular
DI QKD protocol. Moreover, every realization of a DI QKD has inevitable errors. In such a case, the level of violation
$\epsilon'$ can be even below the precision of the experiment.
An interesting question for further investigation is the difference in terms of key rates between QKD and 
 DI QKD protocols.

It is worth noting, that our results are strongly related to the so called {\it Peres conjecture} \cite{Peres-conjecture}, 
recently disproved in \cite{Vertesi-Brunner-disproving}. Namely, we have asked a quantitative rephrasing of the 
original question posed by Asher Peres: how much one can violate a Bell inequality with PPT states? We have shown that,
for {\it certain} PPT states, the level of violation, both for single copy as well as in terms of the relative entropy of (hidden) nonlocality in the asymptotic cases, 
can be negligible. 
Notably, as we showed in the examples, even some states containing privacy admit such limited nonlocality content.

%
%
%
%
%
%
%
%
%
%
%
%

\chapter{A group of facts about groups}\label{A-groups}

In Chapters \ref{chaptergamesd} and \ref{chapternplayer} we 
have discussed a class of games associated to finite Abelian groups. Here we state some results and properties of finite Abelian groups. For more details
the reader in referred to the book \cite{FourierTerras}.

\section{Some definitions}

\begin{Adefinition}[Group]
A group $(G,\circ)$ is a set of elements $a \in G$  with the associated operation $\circ$ satisfying
\begin{enumerate}[(i)]
 \item $a\circ b\in G$, $\forall a,b \in G$,
 \item $\exists \; e$ s.t. $a\circ e=a ,\, \forall a \in G$,
  \item $\forall \; a, \exists \; a^{-1}$ s.t. $a\circ a^{-1}=a^{-1}\circ a=e$,
   \item Associativity holds: $\forall a,b,c\,\in G$
  \begin{align*}
   a\circ (b\circ c)&=(a\circ b)\circ c\\
     \end{align*}
\end{enumerate}
\end{Adefinition}

\begin{Adefinition}[Abelian group]
An Abelian group $(G,+)$ is a group with associated operation denoted by $+$ that, additionally to properties $(i)$-$(iv)$, also satisfies commutativity:
\begin{align}
a+ b=b+ a\;\;\forall a,b \in G .
\end{align}
\end{Adefinition}

\begin{Adefinition}[Cyclic group]
 A group is called cyclic if it can be generated by a single element $g \in G$, called the \emph{generator} of the group. 
\end{Adefinition}
A finite cyclic group is a group with the property that all the elements
can be obtained by starting with the generator $g$ and applying the group operation to it many times,
\ie $G=\DE{g, (g\circ g), ((g\circ g)\circ g),\ldots}$. This makes it clear that all finite cyclic groups are Abelian groups.


\begin{Adefinition}[Homomorphism]
 An homomorphism of a group $(G,\circ )$ into a group $(H,\circ )$ is a function
 $f: G \longrightarrow H$ such that
 \begin{align}
  f(a\circ b)=f(a)\circ f(b) \;\; \forall \; a,b\in G
 \end{align}
\end{Adefinition}

A remarkable example of these concepts is given by the cyclic group  $\mathbb{Z}_{d}$, for which:
\begin{itemize}
 \item $G=\DE{0,\ldots,d-1}$.
 \item The associated operation $+$ is the sum modulo $d$.
 \item The identity element is $e=0$.
 \item The inverse element of $a$ is $a^{-1}=-a=d-a$.
 \item $g=1$ is a generator\footnote{A cyclic group can have more than one generator. For example, for prime $d$, every  
 element of $\mathbb{Z}_d$ is a generator. } for all $d$.
 \item An homomorphism from $\mathbb{Z}_{d}$ to the group $\mathbb{T}$  of unitary complex numbers with the product operation is given by
 \begin{align}
  a\longmapsto e^{\frac{2\pi \mathrm{i}a}{d}}\;\;\forall\; a\in \DE{0,\ldots,d-1} .
 \end{align}
\end{itemize}

\section{The characters of an Abelian group}

\begin{Adefinition}\label{Adefcharacter}
 A character $\chi_j$ of a finite Abelian group $(G,+)$ is a group homomorphism from $(G,+)$ to 
 the group $\mathbb{T}$  of unitary complex numbers with the product operation:
 \begin{align}
  \chi_j: a \mapsto \chi_j(a).
 \end{align}
\end{Adefinition}

By Definition \ref{Adefcharacter} we have that  the characters of an Abelian group are complex numbers
satisfying the properties:
\begin{align}\label{Aeqcharacprop}
\begin{cases}
\text{Homomorphism:} & \chi_j(a+b)=\chi_j(a)\chi_j(b) \; \; \forall a,b \in G,\\
\text{Reflexivity:} & \bar{\chi}_j(a)=\chi_j(-a),\\
\text{Orthogonality:} & \sum_{a \in G}\chi_i(a)\bar{\chi}_j(a)=|G|\,\delta_{i,j}.
\end{cases}
\end{align}

For the cyclic group $\mathbb{Z}_{d}$, 
the characters are the $d$ roots of unit 
\begin{align} 
\chi_j(a)=\zeta^{ja} \;\; \text{for}\;\;j\in\DE{0,\ldots,d-1},
\end{align}
where $\zeta=e^{2\pi \mathrm{i}/d}$.

A very interesting result is the called \emph{fundamental theorem of finite Abelian groups} (see Ref. \cite{FourierTerras}), 
which states that any finite Abelian group $(G,+)$ can be 
seen as a direct product of cyclic groups
\begin{equation}
(G,+) \cong \mathbb{Z}_{d_1} \times \mathbb{Z}_{d_2} \times \dots \times \mathbb{Z}_{d_r},
\end{equation}
where every element $x \in G$ is as a $r$-tuple $(x_1, \dots, x_r)$ with $x_i \in \mathbb{Z}_{d_i}$ and the operation in $(G,+)$ is 
given by
\begin{align}
 (x_1, \dots, x_r)+(y_1, \dots, y_r)=(x_1+y_1,\ldots,x_r+y_r).
\end{align}

With this characterization, the characters $\chi_j$ of the Abelian group $(G,+)$, can be written as the product of the characters of the 
cyclic groups that compose $(G,+)$:
\begin{equation}
\chi_{k}(a) = \prod_{j=1}^{r} \chi_{k_j}(a_j),
\end{equation} 
where $\chi_{k_j}(a_j) = e^{\frac{2 \pi \mathrm{i} k_j a_j}{d_j}}$ is a $d_j$-th root of unity, and $a_j \in \mathbb{Z}_{d_j}$. 

\section{The Fourier transform over finite Abelian groups}\label{A-Fourier}

A Fourier transform can be seen as a change of basis, 
where we pass from a description of a function $f$ in terms of a set of variables 
to a description in terms of different variables, but keeping all the information about $f$. 
When we have a function of the
elements of a finite Abelian group, the Fourier transform can be expressed in terms of the characters of the group.

Let $f$ be a complex valued function on the finite Abelian group $(G,+)$
\begin{align*}
 f: (G,+) \longrightarrow \C.
\end{align*}
The Fourier transform of 
$f$ is defined as :
\begin{align}
 \hat{f}(\chi_i)=\sum_{a\in G}\bar{\chi}_i(a)f(a),
\end{align}
with the inverse  given by
\begin{align}
 f(a)=\frac{1}{|G|}\sum_{j\in G}\chi_i(a)\hat{f}(\chi_i).
\end{align}

For non-Abelian groups the characters are more subtle, and the Fourier transform uses the irreducible representations of the group.
For the reader interested in a further
reading about Fourier transform on groups we refer to Ref. \cite{FourierTerras}.

\section{Finite Fields}\label{A-finitefield}

Fields are sets which have more structure than an Abelian group. While for an Abelian group $(G,+)$ we only have the associated sum operation $+$, 
a field $\mathbb{F}_d$ has two associated
operations $+$ and $\cdot$ satisfying the properties specified bellow.

\begin{Adefinition}[Finite Field]
A finite field $\mathbb{F}_d$ is a set of $d$ elements with the operations sum $+$ and multiplication $\cdot$ such that
\begin{enumerate}[(i)]
 \item $a+b$, $a \cdot b$ $\in \mathbb{F}_d$, $\forall a,b \in \mathbb{F}_d$,
 \item $\exists \; 0$ s.t. $a+0=a ,\, \forall a \in \mathbb{F}_d$,
  \item $\exists \; 1$ s.t. $a\cdot 1=a ,\, \forall a \in \mathbb{F}_d$,
  \item $\forall \; a, \exists \; -a$ s.t. $a+(-a)=0$,
  \item $\forall \; b \neq 0, \exists \; b^{-1}$ s.t. $b\cdot b^{-1}=1$,
  \item Associativity: $\forall a,b,c\,\in \mathbb{F}_d$
  \begin{align*}
   a+(b+c)&=(a+b)+c\\
   a\cdot (b\cdot c)&= (a \cdot b) \cdot c
  \end{align*}
\item Commutativity: $\forall a,b\,\in \mathbb{F}_d$
\begin{align*}
   a+b&=b+a\\
   a\cdot b&=  b \cdot a
  \end{align*}
 \item Distributivity: $\forall a,b,c\,\in \mathbb{F}_d$
 \begin{align}
  a\cdot (b+c)= a \cdot b + a \cdot c.
 \end{align}
\end{enumerate}
\end{Adefinition}

Finite fields can only be defined for a set of $d$ prime or a power of prime elements.
For $d$ prime all the conditions $(i)$-$(viii)$ can be satisfied by arithmetic modulo $d$.
For $d=p^r$ the arithmetic operations\footnote{See \url{https://en.wikipedia.org/wiki/Finite_field_arithmetic}.} can be defined by addition and multiplication of polynomials of degree $< r$ over $\mathbb{Z}_p$, $\mathbb{Z}_p\De{X}$. 
In order to construct a field $\mathbb{F}_{p^r}$ one can start by choosing an irreducible polynomial of degree $r$ over  $\mathbb{Z}_p$, this polynomial will
define the zero of the field  by the so called quotient. 

As an example for the field $\mathbb{F}_d$ with $d=2^3$ we can pick the polynomial $X^3+X+1 \in  \mathbb{Z}_2\De{X}$ from 
which we can obtain the relation 
\begin{align}\label{spirred}
X^3+X+1 =0 \Rightarrow X^3=X+1.
\end{align}
Now the elements of the field can be represented by strings $(a,b,c)$, with $a,b,c \in \DE{0,1}$, and we can associate each string with
the polynomial $aX^2+bX+c$. Given that, addition and multiplication will be taken as addition and multiplication of polynomials reduced by  the 
relation \eqref{spirred}. 


\chapter{Functional boxes and multipartite communication complexity}\label{A-CC}
\chaptermark{Communication complexity}

Here we discuss functional boxes and the communication complexity task in the multipartite scenario. These results were also
presented in Ref. \cite{GMmultiplayer}.

We start by defining the $PR_n$-$d$ boxes, a generalization of $PR$ boxes \cite{prbox} for $n$ parties and prime $d$ outputs.

\begin{Adefinition}
 \begin{align}
  PR_n\text{-}d(\vec{a}|\vec{x})=\begin{cases}
                                          {\frac{1}{d^{n-1}}}, \; \text{if} \; a_1+ \ldots + a_n=x_1 \cdot \ldots \cdot x_n\\
                                          0, \; \text{otherwise}
                                         \end{cases}
 \end{align}
 where $d$ is prime and sum $+$ and multiplication $\cdot$ are operations modulo $d$.
\end{Adefinition}

A multipartite communication complexity scenario consists of $n$ parties, denoted $A_1, \ldots, A_n$, where  each party $A_i$
receives an input $x_i$, and their goal is to exchange the least number of classical messages in order to compute the value of a global function
of their inputs $f(x_1,\ldots,x_n)$.

In what follows we prove that if the $n$ parties have access to a sufficient large amount of $PR_n$-$d$ boxes, they can compute any
function $f(x_1,\ldots,x_n)$ with only $n-1$ dits of communication: where each player $A_{i\neq 1}$ communicate one dit to player $A_1$ which
them computes the function.

\begin{Atheorem}\label{sPRnd}
 If $n$ parties are allowed to share an arbitrary number of $PR_n$-$d$ boxes, any $n$-partite communication
 complexity problem can be solved with only $n-1$ dits of communication. 
\end{Atheorem}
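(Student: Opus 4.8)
The plan is to reduce the computation of an arbitrary function $f:\mathbb{F}_d^n\to\mathbb{F}_d$ to the computation of products of the inputs, which is precisely the resource that a $PR_n$-$d$ box provides in a distributed (secret-shared) fashion. The first step is to invoke the polynomial representation of functions over a finite field (see Appendix \ref{A-finitefield}): since $d$ is prime, $\mathbb{F}_d=\mathbb{Z}_d$ is a field, and every function $f:\mathbb{F}_d^n\to\mathbb{F}_d$ can be written as a polynomial with each variable appearing to degree at most $d-1$,
\begin{align}
 f(x_1,\ldots,x_n)=\sum_{\vec{k}} c_{\vec{k}}\,\prod_{i=1}^n x_i^{k_i},
\end{align}
where $\vec{k}=(k_1,\ldots,k_n)$ ranges over multi-indices with $0\le k_i\le d-1$ and the coefficients $c_{\vec{k}}\in\mathbb{F}_d$ are public, depending only on $f$. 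This is valid because the $d^n$ monomials $\prod_i x_i^{k_i}$ span the space of all such functions (by interpolation) and, being $d^n$ in number, form a basis of the $d^n$-dimensional $\mathbb{F}_d$-space of functions.

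The key observation is that each monomial is a product of quantities, each of which is locally available to a single party: party $A_i$ can compute $y_i:=x_i^{k_i}$ on its own. The plan is then to dedicate one $PR_n$-$d$ box to each monomial $\vec{k}$. For that box, party $A_i$ feeds in the input $y_i=x_i^{k_i}$; in particular $y_i=1$, the multiplicative identity, whenever $k_i=0$, so that absent variables do not alter the product. By definition of the box, the returned outputs $a_i^{(\vec{k})}$ are random subject to $\sum_{i} a_i^{(\vec{k})}=\prod_i y_i=\prod_i x_i^{k_i}$, so the parties jointly hold an additive secret-sharing of the monomial's value. Crucially, querying the boxes requires no communication, since they are no-signaling resources.

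Summing over monomials with the public coefficients, each party $A_i$ then computes locally the single dit
\begin{align}
 s_i:=\sum_{\vec{k}} c_{\vec{k}}\, a_i^{(\vec{k})},
\end{align}
and by linearity $\sum_{i} s_i=\sum_{\vec{k}} c_{\vec{k}}\prod_i x_i^{k_i}=f(\vec{x})$. Thus, after the communication-free box queries, the parties hold an additive sharing of $f(\vec{x})$ itself. The protocol concludes by having each of $A_2,\ldots,A_n$ send its dit $s_i$ to $A_1$, who then outputs $\sum_i s_i=f(\vec{x})$; this amounts to exactly $n-1$ dits of communication, as claimed.

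The step requiring the most care is the first one: establishing the polynomial representation over $\mathbb{F}_d$ and verifying that monomials depending on only a subset of the variables are handled correctly by inputting the identity $1$ for the remaining parties. The primality of $d$ enters precisely here, guaranteeing that $\mathbb{F}_d$ is a field so that the local powers $x_i^{k_i}$ and the additive shares live in the same arithmetic as the box relation $\sum_i a_i=\prod_i x_i$. Finally, I would note that the number of boxes consumed equals the number of nonzero monomials, which may be as large as $d^n$; this is immaterial, since the statement only constrains the communication cost, which remains $n-1$ dits independently of $f$ and of how many boxes are used.
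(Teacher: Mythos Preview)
Your proof is correct and follows essentially the same route as the paper: represent $f$ as a multivariate polynomial over $\mathbb{F}_d$, use one $PR_n$-$d$ box per monomial with each party locally computing its factor, linearly combine the outputs, and have $n-1$ parties send one dit each. The only minor difference is that the paper allows each party to hold an input \emph{string} $\vec{x}_i\in\mathbb{Z}_d^{m_i}$ rather than a single dit, so that party $A_i$ inputs $\prod_j (x_{i,j})^{k_{i,j}}$ into each box; your write-up treats $f:\mathbb{F}_d^n\to\mathbb{F}_d$, but the extension to arbitrary-length inputs is immediate and uses exactly the idea you already exploit for absent variables.
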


\begin{proof}
Our proof is a straightforward generalization of Ref. \cite{PRd}.
We prove for the case $n=3$. The proof for general $n$ follows directly.

Let us consider that Alice receives input string  $\vec{x}\in\mathbb{Z}_d^{m_1}$, Bob
receives $\vec{y}\in \mathbb{Z}_d^{m_2}$ and Charlie receives $\vec{z}\in  \mathbb{Z}_d^{m_3}$.
We start by observing that any function 
$f(\vec{x},\vec{y},\vec{z})$, $f: \mathbb{Z}_d^{m_1} \times  \mathbb{Z}_d^{m_2} \times  \mathbb{Z}_d^{m_3}\rightarrow \mathbb{Z}_d$,
can be written as a multivariate polynomial with degree at most $d-1$ in each variable $x_i, y_j$ and $z_k$
\begin{align}
f(\vec{x},\vec{y},\vec{z})=\sum_{\vec{\alpha},\vec{\beta},\vec{\gamma}}\mu_{\vec{\alpha},\vec{\beta},\vec{\gamma}}\; {\vec{x}}^{\,\vec{\alpha}} \, \vec{y}^{\,\vec{\beta}} \, \vec{z}^{\,\vec{\gamma}} , 
\end{align}
where ${\vec{x}}^{\,\vec{\alpha}}=\prod_{i=1}^{m_1}x_i^{\alpha_i}$, ${\vec{y}}^{\,\vec{\beta}}=\prod_{j=1}^{m_2}y_j^{\beta_j}$, 
${\vec{z}}^{\,\vec{\gamma}}= \prod_{k=1}^{m_3}z_k^{\gamma_k}$, $\mu_{\vec{\alpha},\vec{\beta},\vec{\gamma}} \in \mathbb{Z}_d$, and
$\vec{\alpha}=(\alpha_1,\ldots,\alpha_{m_1}) \in \mathbb{Z}_d^{m_1}$. And analogously for $\vec{\beta}$ and $\vec{\gamma}$.
Now if the players have access to $r=d^{m_1   m_2  m_3}$ $PR_3$-$d$ boxes they can execute the following protocol in
order to compute $f(\vec{x},\vec{y},\vec{z})$:
\begin{enumerate}
 \item For each $(\vec{\alpha},\vec{\beta},\vec{\gamma})$ the players picks one $PR_3$-$d$,
 \item Alice inputs ${\vec{x}}^{\,\vec{\alpha}}=\prod_{i=1}^{m_1}x_i^{\alpha_i}$. Bob inputs ${\vec{y}}^{\,\vec{\beta}}=\prod_{j=1}^{m_2}y_j^{\beta_j}$. 
 Charlie inputs
 ${\vec{z}}^{\,\vec{\gamma}}=\prod_{k=1}^{m_3}z_k^{\gamma_k}$. And they get respectively the outputs $a_{\vec{\alpha}}$, $b_{\vec{\beta}}$ and $c_{\vec{\gamma}}$.
 \item Bob sets $b=\sum_{\vec{\alpha}}\sum_{\vec{\beta}}\sum_{\vec{\gamma}}\mu_{\vec{\alpha},\vec{\beta},\vec{\gamma}}b_{\vec{\beta}}$ and send $b$ to Alice.
 Charlie sets $c=\sum_{\vec{\alpha}}\sum_{\vec{\beta}}\sum_{\vec{\gamma}}\mu_{\vec{\alpha},\vec{\beta},\vec{\gamma}}c_{\vec{\gamma}}$ and send $c$ to Alice.
  \item Alice sets $a=\sum_{\vec{\alpha}}\sum_{\vec{\beta}}\sum_{\vec{\gamma}}\mu_{\vec{\alpha},\vec{\beta},\vec{\gamma}}a_{\vec{\alpha}}$ and she
 computes $f(\vec{x},\vec{y},\vec{z})=a+b+c$.
\end{enumerate}
where only 2 dits were communicated in order to compute the function.

The generalization for $n$-party function follows in analogous way, where any function $f:\mathbb{Z}_d^{m_1} \times \ldots \times  \mathbb{Z}_d^{m_n}\rightarrow \mathbb{Z}_d$
will be written as a multivariate polynomial with degree at most $d-1$ in each variable and using an analogous protocol, with $n-1$ parties communicating
only one dit to the first party, the computation of function $f$ will be performed.
\end{proof}

Now we consider a natural generalization of bipartite functional boxes, introduced in Ref. \cite{PRd}, to the multipartite case:

\begin{Adefinition}
 For any function $f: \mathbb{Z}_d \times \ldots \times \mathbb{Z}_d \rightarrow \mathbb{Z}_d$, the multipartite functional box corresponding 
 to $f$ is defined as
 \begin{align}
  P_n^f(\vec{a}|\vec{x})=\begin{cases}
                                          {\frac{1}{d^{n-1}}}, \; \text{if} \; a_1+ \ldots + a_n= f(x_1,\ldots,x_n)\\
                                          0, \; \text{otherwise}
                                         \end{cases}
 \end{align}
\end{Adefinition}

Now we argue that all $n$-partite functional box with $f$ non-additively separable ($f$ is additively separable if 
$f(x_1,\ldots,x_n)=f_1(x_1)+f_2(x_2)+\ldots + f_n(x_n)$) would lead to some kind of trivialization of communication complexity.

\begin{Atheorem} 
 All $P_3^f$ with $f(x,y,z)$ such that there exists a partial derivative of some order equals to $\lambda \cdot x\cdot y\cdot z+g(x)+h(y)+s(z)$ can be 
 used to simulate a $PR_3$-$d$, and then can be used to solve any $3$-partite communication
 complexity problem  with only $2$ dits of communication.
\end{Atheorem}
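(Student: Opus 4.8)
The plan is to reduce the statement to the simulation of a single $PR_3$-$d$ box: once we exhibit a protocol whereby several copies of $P_3^f$ together with local operations reproduce the correlations of $PR_3$-$d$, Theorem \ref{sPRnd} immediately yields that any tripartite communication complexity problem can be solved with $2$ dits of communication. Throughout, the admissible local operations are relabellings of inputs and outputs of each party (who knows only her own input), together with the possibility of feeding several independent copies of the box and combining the received symbols in $\mathbb{Z}_d$; shared randomness is also allowed but will not be needed.

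The central device is that a discrete partial difference of $f$ can itself be realized as a functional box. Writing $\Delta_x f(x,y,z) := f(x+1,y,z)-f(x,y,z)$ (and analogously $\Delta_y$, $\Delta_z$), I would take two independent copies of $P_3^f$ and, for a simulated input $(x,y,z)$, let Alice query the first copy with $x+1$ and the second with $x$, while Bob and Charlie query both copies with $y$ and $z$ respectively. If the parties output the differences $a=a_1-a_2$, $b=b_1-b_2$, $c=c_1-c_2$ of the two received symbols, then
\begin{align}
a+b+c &= (a_1+b_1+c_1)-(a_2+b_2+c_2) \nonumber \\
&= f(x+1,y,z)-f(x,y,z) = \Delta_x f(x,y,z).
\end{align}
Because the two copies are independent and each has uniform marginals, $(a,b,c)$ is uniform on the set $\DE{a+b+c=\Delta_x f(x,y,z)}$, so this is again a genuine functional box, now for $\Delta_x f$. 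Iterating the construction (each further difference at most doubling the number of copies, which is permitted) realizes the functional box of any iterated partial difference $Df$ of $f$. By hypothesis there is such a $D$ with
\begin{align}
Df(x,y,z)=\lambda\, x\, y\, z + g(x)+h(y)+s(z). \label{eq:Df}
\end{align}

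It remains to turn the box for $Df$ into $PR_3$-$d$. The additively separable terms are removed by purely local relabelling: since Alice knows $x$ she replaces her output by $a-g(x)$, and likewise Bob and Charlie subtract $h(y)$ and $s(z)$; this preserves uniformity of the marginals and leaves the constraint $a'+b'+c'=\lambda\, x\, y\, z$. As $d$ is prime and (by the genuine presence of the product term) $\lambda\neq 0$, the element $\lambda$ is invertible in $\mathbb{Z}_d$; each party multiplies her output by $\lambda^{-1}$, a bijection that again preserves uniform marginals, obtaining $a''+b''+c''=x\, y\, z$, which is exactly $PR_3$-$d$. Composing the three local steps with the finite-difference simulation gives the desired simulation of $PR_3$-$d$ from copies of $P_3^f$, and Theorem \ref{sPRnd} closes the argument.

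The main obstacle I anticipate is not the algebra but the bookkeeping needed to justify that every intermediate object is a legitimate box: one must check that the finite-difference combination of two functional boxes yields outputs uniformly distributed over the prescribed coset (so that no signalling is introduced and the marginals stay uniform), and that this property is stable under iteration of $D$ and under the local relabellings. A secondary point to make precise is the meaning of ``partial derivative of some order'' as an iterated unit-shift difference operator over $\mathbb{Z}_d$, together with the implicit assumption $\lambda\neq 0$: if $\lambda=0$ the right-hand side of \eqref{eq:Df} is additively separable and no product structure can be extracted, so the hypothesis must be read as guaranteeing a nonzero coefficient of the $x\, y\, z$ term.
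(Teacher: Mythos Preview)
Your proof is correct and follows essentially the same approach as the paper's: simulate the partial-difference box $\Delta_x f$ from two copies of $P_3^f$, iterate to reach the form $\lambda\,xyz+g(x)+h(y)+s(z)$, then strip the additively separable part and rescale by $\lambda^{-1}$ to obtain $PR_3$-$d$, invoking Theorem~\ref{sPRnd} to conclude. Your writeup is in fact more careful on two points the paper treats lightly: you verify explicitly that the difference construction and the local relabellings preserve the uniform conditional distribution (so the output is again a genuine functional box), and you make the implicit assumption $\lambda\neq 0$ explicit. The only minor divergence is that the paper adds a final shared-randomness step $a_f=a'+k$, $b_f=b'+k$, $c_f=c'-2k$ to ``randomize'' the outputs; as your uniformity argument shows, this step is redundant, since the distribution is already uniform on the correct coset after the local bijections.
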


\begin{proof}
First let us consider 
\begin{align}\label{sfPR3d}
 f(x,y,z)=\lambda \cdot x\cdot y\cdot z+g(x)+h(y)+s(z).
\end{align}
So by using box $P_3^f$ Alice, Bob and Charlie can input $x,y$ and $z$ respectively
and get outputs $a$, $b$ and $c$. Now following Ref. \cite{PRd} Alice sets $a'=\lambda^{-1}(a-g(x))$, Bob sets $b'=\lambda^{-1}(b-h(y))$ and Charlie sets $c'=\lambda^{-1}(c-s(z))$, 
so that we have
\begin{align}
 a'+b'+c'=x\cdot y \cdot z\,.
\end{align}
In order to randomize the results they can randomly chose $k \in \mathbb{Z}_d$ and output $a_f=a'+k$, $b_f=b'+k$ and $c_f=c'-2k$, so that they perfectly
simulate a $PR_3$-$d$ box.

Now for other functions $f$ we can use the method of Ref. \cite{PRd} of applying partial derivatives to the function. 
The partial derivative of $f$ with respect to $x$ is defined as
\begin{align}
 f_x(x,y,z)\equiv f(x+1,y,z)-f(x,y,z)
\end{align}
and it generates a polynomial with the degree in $x$ reduced by 1, while the degree in $y$ and $z$ remains the same or is smaller. And note that 
with two boxes $P_3^f$ we can simulate the box $ f_x(x,y,z)$.

Then if by partial derivatives we can reduce function $f$ to the form \eqref{sfPR3d} we have a protocol using a finite number of boxes $P_3^f$ to simulate
$PR_3$-$d$. By the result of Theorem \ref{sPRnd}, with an arbitrary finite number of $P_3^f$, we can solve  any $3$-partite communication
 complexity problem with only $2$ dits of communication.
 \end{proof}

If a function $f(x,y,z)$ is not additively separable it will contain at least one term involving product of two variables, for example $x^ry^s$ and this box 
can be reduced, by derivatives, into a box of the form $\lambda \cdot x \cdot y+g(x)+h(y)+s(z)$.
Now using the results for the bipartite case \cite{PRd}, if Charlie always 
inputs $z=1$, with only 2 dits of communication they can compute any function of two variables $f(x,y)$.

\chapter{Proofs of some results}\label{A-proofs}

\section{Some proofs on {\xor} games}\label{A-xorgames}
In this Section we prove some of the results stated in Chapters \ref{chaptergames} and \ref{chapterxor}.

%
%
%

\begin{Atheorem}\label{Athmboundxor}
Given a bipartite \xor-game, with $m$ inputs for Alice, $m$ inputs for Bob, and an associated game matrix $\Phi$, the quantum value is upper bounded by
\begin{align}
 \omega_{q}^{\oplus}\leq \frac{1}{2}\de{1+m|| \Phi||}.
\end{align}
\end{Atheorem}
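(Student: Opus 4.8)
The plan is to prove this symmetric ($m_A=m_B=m$) case by Lagrange duality rather than by the operator-norm dilation used for Theorem~\ref{thmboundxor}, exhibiting an explicit feasible point of the dual SDP whose objective value equals $m\norm{\Phi}$. Recall from Theorem~\ref{thmxorSDP} that the quantum bias $\epsilon_q$ is the optimal value of the primal semidefinite program $(\mathcal{P})$, a maximization of $\Tr \Phi_s \mathcal{X}$ over $\mathcal{X}\geq 0$ with unit diagonal, where $\Phi_s = \left(\begin{smallmatrix} 0 & \frac{1}{2}\Phi \\ \frac{1}{2}\Phi^T & 0\end{smallmatrix}\right)$. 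As computed in the proof of Theorem~\ref{thmxoriff} through the Lagrange duality machinery of Section~\ref{secduality}, its dual reads
\begin{align}
(\mathcal{D}):\quad \min\; \sum_{i=1}^{2m} y_i \st \text{Diag}(y) \geq \Phi_s .
\end{align}
By weak duality (Theorem~\ref{thmweakdual}), every feasible point of $(\mathcal{D})$ furnishes an upper bound on the optimal value of the maximization $(\mathcal{P})$, so it suffices to produce one feasible $y$ with $\sum_i y_i = m\norm{\Phi}$.

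First I would propose the constant vector $y_i = \frac{1}{2}\norm{\Phi}$ for $i=1,\dots,2m$, so that $\text{Diag}(y) = \frac{1}{2}\norm{\Phi}\,\I_{2m}$ and $\sum_{i=1}^{2m} y_i = 2m\cdot\frac{1}{2}\norm{\Phi} = m\norm{\Phi}$, exactly the target value. Feasibility then reduces to verifying the matrix inequality $\frac{1}{2}\norm{\Phi}\,\I_{2m} - \Phi_s \geq 0$, equivalently $\norm{\Phi}\,\I_{2m} \geq \left(\begin{smallmatrix} 0 & \Phi \\ \Phi^T & 0\end{smallmatrix}\right)$.

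The key step, and the only substantive one, is the spectral identity for this symmetric dilation: the nonzero eigenvalues of $\left(\begin{smallmatrix} 0 & \Phi \\ \Phi^T & 0\end{smallmatrix}\right)$ are exactly $\pm\sigma_k$, where the $\sigma_k$ are the singular values of the $m\times m$ matrix $\Phi$. I would recall this by noting that for a singular value decomposition $\Phi = U\Sigma V^\dagger$, the pairs of vectors built from the corresponding left and right singular vectors diagonalize the dilation with eigenvalues $\pm\sigma_k$. Consequently its largest eigenvalue equals $\max_k \sigma_k = \norm{\Phi}$, which yields $\norm{\Phi}\,\I_{2m} - \left(\begin{smallmatrix} 0 & \Phi \\ \Phi^T & 0\end{smallmatrix}\right) \geq 0$ and hence the required feasibility of the proposed $y$.

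It then remains only to collect the pieces: weak duality gives $\epsilon_q \leq m\norm{\Phi}$, and converting the bias to the success probability through $\omega_q^{\oplus} = \frac{1}{2}\de{1+\epsilon_q}$ (Eq.~\eqref{bias}) gives $\omega_q^{\oplus} \leq \frac{1}{2}\de{1 + m\norm{\Phi}}$, as claimed. I expect no real obstacle beyond invoking the singular-value structure of the dilation matrix correctly; everything else is bookkeeping with the already-established SDP $(\mathcal{P})$ and its dual $(\mathcal{D})$.
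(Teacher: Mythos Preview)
Your proposal is correct and follows essentially the same route as the paper: exhibit the constant dual vector $y_i=\tfrac{1}{2}\norm{\Phi}$, check feasibility via $\norm{\Phi_s}=\tfrac{1}{2}\norm{\Phi}$, and invoke weak duality. Your treatment is in fact slightly cleaner, since you state the feasible value consistently and spell out the spectral identity for the dilation $\left(\begin{smallmatrix}0&\Phi\\\Phi^T&0\end{smallmatrix}\right)$ rather than just quoting $\norm{\Phi_s}$.
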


\begin{proof}
 We now prove this result using  Lagrange duality. Let us consider the SDP characterization given by Theorem \ref{thmxorSDP}:
 \begin{align}\label{AeqxorSDP}
\epsilon_q =\begin{cases}         
 \max& \Tr{\Phi}_{s} \mathcal{X}	\\
\text{s.t.} &\text{diag}(\mathcal{X}) = |\1 \rangle, \\
         &\mathcal{X} \geq 0.
         \end{cases}
\end{align}
By weak duality (Theorem \ref{thmweakdual}), every feasible solution to the dual Lagrange problem provides an upper bound to the 
 quantum bias $\epsilon_q$. So let us consider the dual problem of \eqref{AeqxorSDP}:
 \begin{align}\label{Aeqxordual}
(\mathcal{D})\begin{cases}         
 \min \;& \sum_{i=1}^{2m}y_i \\
\text{s.t.} &\text{Diag}(y) \geq  {\Phi}_s.
         \end{cases}
\end{align}
Note that $y'_i=\norm{\Phi}$ for all $i$ is a feasible solution to problem $(\mathcal{D})$, since $\norm{{\Phi}_s}={\norm{\Phi}}/{2}$ and then
\begin{align}
 \text{Diag}(y')=\frac{\norm{\Phi}}{2}\I_{2m}\geq {\Phi}_s.
\end{align}
Therefore we have that
\begin{align}
 \epsilon_q=\sum_{i=1}^{2m}y'_i=m\norm{\Phi}
\end{align}

Now
\begin{align}
 \omega_q\leq\frac{1}{2}(1+\epsilon_q)=\frac{1}{2}\de{1+m\norm{\Phi}},
\end{align}
which ends the proof.
\end{proof}

For the case where $|\inp_A|=m_A$ and $|\inp_B|=m_B$, the same analysis results in the upper bound
\begin{align}
 \omega_q\leq\frac{1}{2}\de{1+\frac{m_A+m_B}{2}\norm{\Phi}},
\end{align}
which is worse then the bound given by Theorem \ref{thmboundxor}.
\vspace{1em}

\begin{theorem}\label{AthmAeigen}
The adjacency matrix of an {\xor}-game graph $\mathcal{G}(\Phi)$ 
\begin{align} \label{Aeqadjmatrix}
\begin{split}
\mathcal{A}(\mathcal{G}(\Phi)) =& \I_m \otimes(\proj{\1}-\I_m)\otimes \sigma_X+\half\proj{\mathbf{1}}\otimes\I_m\otimes(\I_2+\sigma_X)	\\
&-\half [D(\proj{\1}\otimes\I_m)D]\otimes(\I_2-\sigma_X)
\end{split}
\end{align}
has the following spectrum
and corresponding degeneracies:
\begin{equation}
\label{Aadj-spec}
\text{spec}(\mathcal{A}(\mathcal{G}(\Phi)))=\begin{cases}
2m-1 & \times  1 	\\
m-1 & \times 2m-2	\\
-1 & \times (m-1)^2 \\
1-m\pm \lambda_z & \times 1 \\
1 &  \times m(m-2) 
\end{cases}.
\end{equation}
where $\lambda_z$ denotes the $m$ singular values of $\tilde{\Phi}$. 
 \end{theorem}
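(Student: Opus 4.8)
The plan is to diagonalize the adjacency matrix $\mathcal{A}(\mathcal{G}(\Phi))$ given in Eq.~\eqref{eqadjmatrix} by exploiting the fact that its three summands are built from commuting operator pieces acting on the three tensor factors $\C^m \otimes \C^m \otimes \C^2$. The key structural observation is that the Pauli-$X$ factor $\sigma_X$ appears in every term, so I would first move to the eigenbasis $\{\ket{+},\ket{-}\}$ of $\sigma_X$ on the last qubit. In the $\ket{+}$ sector ($\sigma_X \to +1$) the third term $-\half[D(\proj{\1}\otimes\I_m)D]\otimes(\I_2-\sigma_X)$ drops out (since $\I_2-\sigma_X$ annihilates $\ket{+}$), while in the $\ket{-}$ sector ($\sigma_X \to -1$) it is the second term $\half\proj{\1}\otimes\I_m\otimes(\I_2+\sigma_X)$ that vanishes and the first term flips sign. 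This splits the problem into two decoupled $m^2\times m^2$ eigenvalue problems, one for each sign sector.

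In the $\ket{+}$ sector the operator reduces to $\I_m\otimes(\proj{\1}-\I_m) + \half\proj{\1}\otimes\I_m\cdot 2 = \I_m\otimes(\proj{\1}-\I_m)+\proj{\1}\otimes\I_m$. Here $\proj{\1}$ is a rank-one projector (up to the normalization $\braket{\1}{\1}=m$), so I would decompose each $\C^m$ factor into the span of $\ket{\1}$ and its orthogonal complement. This yields blocks whose eigenvalues are combinations of the eigenvalues of $\proj{\1}-\I_m$ (namely $m-1$ once and $-1$ with multiplicity $m-1$) and of $\proj{\1}$. The bookkeeping produces the eigenvalues $2m-1$, $m-1$, $-1$, and $1$ with the multiplicities that, together with the $\ket{-}$ sector, must sum to the listed degeneracies. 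The $\ket{-}$ sector is where the matrix $D$ enters nontrivially: there the operator becomes $-\I_m\otimes(\proj{\1}-\I_m) - \half D(\proj{\1}\otimes\I_m)D\cdot 2$, and because $D=\sum_{x,y}\Phi_{xy}\proj{x,y}$ is diagonal with $\pm1$ entries, the conjugation $D(\proj{\1}\otimes\I_m)D$ turns the all-ones vector $\ket{\1}\otimes(\cdot)$ into vectors whose structure is governed precisely by $\tilde\Phi=\sum_{xy}\Phi_{xy}\ketbra{x}{y}$. I expect this to be exactly where the singular values $\lambda_z$ of $\tilde\Phi$ appear, via the eigenvalues $1-m\pm\lambda_z$.

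The main obstacle, and the step I would treat most carefully, is the $\ket{-}$ sector analysis involving $D(\proj{\1}\otimes\I_m)D$. The point is that $D$ correlates the two $\C^m$ factors through the sign pattern $\Phi_{xy}$, so $\proj{\1}$ is no longer a simple product projector after conjugation. I would handle this by writing $D(\ket{\1}\otimes\ket{y}) = \sum_x \Phi_{xy}\ket{x}\otimes\ket{y}$ and recognizing that the relevant subspace is spanned by vectors of the form $\sum_x\Phi_{xy}\ket{x}\otimes\ket{y}$, on which the restricted operator acts essentially as $\tilde\Phi^T\tilde\Phi$ or $\tilde\Phi\tilde\Phi^T$ up to the sign and shift $1-m$. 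The singular value decomposition $\tilde\Phi = U\Sigma V^\dagger$ then diagonalizes this block, giving the $\pm\lambda_z$ splitting for each of the $m$ singular values. Finally I would verify consistency: the trace of $\mathcal{A}(\mathcal{G}(\Phi))$ must be zero (the graph has no self-loops, being triangle-free and simple), the total multiplicity must equal $2m^2$, and the top eigenvalue must equal the regularity degree $2m-1$ established in the preceding proposition, since $\mathcal{G}(\Phi)$ is $(2m-1)$-regular with $\ket{\1}\otimes\ket{\1}\otimes\ket{+}$ as the Perron eigenvector. These consistency checks pin down all remaining multiplicities and confirm the claimed spectrum.
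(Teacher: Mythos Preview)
Your approach is essentially the same as the paper's: split on the $\sigma_X$ eigenbasis into a $\ket{+}$ sector $\mathcal{A}_1=\I_m\otimes(\proj{\1}-\I_m)+\proj{\1}\otimes\I_m$ and a $\ket{-}$ sector $\mathcal{A}_2=-\I_m\otimes(\proj{\1}-\I_m)-D(\proj{\1}\otimes\I_m)D$, diagonalize $\mathcal{A}_1$ via the $\ket{\1}/\ket{\1}^\perp$ decomposition (the paper uses the equivalent Fourier basis), and bring in the SVD of $\tilde\Phi$ for $\mathcal{A}_2$. Two points need tightening. First, a small bookkeeping slip: the $\ket{+}$ sector produces \emph{only} $2m-1$ (once), $m-1$ ($2m-2$ times) and $-1$ ($(m-1)^2$ times); the eigenvalue $1$ does not occur there---it lives entirely in the $\ket{-}$ sector.

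Second, and more substantively, the final step is a genuine gap as stated. After you identify the $2m$ eigenvectors $\ket{\eta_z^\pm}\propto \ket{\lambda_z^A}\otimes\ket{\1}\mp D(\ket{\1}\otimes\ket{\lambda_z^B})$ with eigenvalues $1-m\pm\lambda_z$, you still have $m(m-2)$ unknown eigenvalues of $\mathcal{A}_2$. Trace zero, total multiplicity $2m^2$, and regularity $2m-1$ together give only a single linear constraint on those remaining eigenvalues (their sum must be $m(m-2)$); they do \emph{not} force each one to equal $1$. The paper closes this by verifying the operator identity $\mathcal{A}_2+\sum_z(m+\lambda_z)\proj{\eta_z^+}+\sum_z(m-\lambda_z)\proj{\eta_z^-}=\I$, which shows directly that the orthogonal complement of $\mathrm{span}\{\ket{\eta_z^\pm}\}$ is the $1$-eigenspace. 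An equivalent short argument you could use instead: write $\mathcal{A}_2-\I=-\I_m\otimes\proj{\1}-D(\proj{\1}\otimes\I_m)D$, observe that each summand has rank $m$, so $\mathcal{A}_2-\I$ has rank at most $2m$, hence eigenvalue $1$ occurs with multiplicity at least $m^2-2m=m(m-2)$; together with the $2m$ eigenvalues already found this exhausts the sector.
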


 \begin{proof} 
 We start by nothing that the adjacency matrix, Eq. \eqref{Aeqadjmatrix}, can be written in the form
 \begin{align} \label{Aeqadjdecomp}
\begin{split}
\mathcal{A}(\mathcal{G}(\Phi)) =& \de{\I_m \otimes(\proj{\1}-\I_m)+\proj{\mathbf{1}}\otimes\I_m}\otimes \ketbra{+}{+}	\\
&-\de{ \I_m \otimes(\proj{\1}-\I_m)+[D(\proj{\1}\otimes\I_m)D]}\otimes\ketbra{-}{-},
\end{split}
\end{align}
which allows us to write it as a direct sum
 \begin{align}
\mathcal{A}(\mathcal{G}(\Phi)) = \mathcal{A}_1\oplus \mathcal{A}_2
\end{align}
where
\begin{align}
 \begin{split}
  \mathcal{A}_1&= \de{\I_m \otimes(\proj{\1}-\I_m)+\proj{\mathbf{1}}\otimes\I_m}\\
  \mathcal{A}_2&=\de{ -\I_m \otimes(\proj{\1}-\I_m)-[D(\proj{\1}\otimes\I_m)D]}.
 \end{split}
\end{align}
Therefore we can proceed to diagonalize $\mathcal{A}_1$ and $\mathcal{A}_2$ separately.

Let us start with the eigenvalues of $\mathcal{A}_1$. Note that $\mathcal{A}_1$ contains only the identity and the all 1's matrix $\proj{\1}$, therefore an eigenbasis
for $\mathcal{A}_1$ is formed by the $m^2$ Fourier vectors $\DE{\ket{v_i}\otimes \ket{v_j}}$, where
\begin{align}
|v_{j} \rangle = \left(1, \zeta^j, \zeta^{2j}, \dots, \zeta^{(m-1)j}\right)
\end{align}
with $j \in \{0, \dots, m-1\}$, where $\zeta = \exp{(2\pi I/m)}$. Since $\proj{\1}\ket{v_i}=m \delta_{i,0} \ket{v_i}$ we have that
\begin{itemize}
 \item $\ket{v_0}\otimes \ket{v_0}$ is an eigenvector of $\mathcal{A}_1$ with eigenvalue $2m-1$,
 \item $\ket{v_0}\otimes \ket{v_{i\neq0}}$ and  $\ket{v_{i\neq0}}\otimes \ket{v_0}$  are eigenvectors with eigenvalue $m-1$,
 \item $\ket{v_{i\neq 0}}\otimes \ket{v_{j\neq 0}}$ are eigenvectors  with eigenvalue -1.
\end{itemize}
which completes the diagonalization of $\mathcal{A}_1$.

Now let us proceed to find the eigenvalues of $\mathcal{A}_2$. First, note that,
if $\ket{\lambda^A_z}$ and $\ket{\lambda^B_z}$ are the singular vectors of the game matrix $\Phi$, corresponding to the singular value $\lambda_z$, 
then, by using the relation $\bra{\lambda^A_z}\Phi\ket{\lambda^B_z}=\lambda_z$, we can derive that
\begin{align}
\langle\lambda^A_z|\bra{j}D\ket{j}|\lambda^B_{z'}\rangle=\lambda_z\delta_{z,z'}.	\label{eqn:neat_trick}
\end{align}
By using relation \eqref{eqn:neat_trick} one can verify that the following $2m$ vectors 
\begin{align}
\ket{\eta_z^\pm}=\frac{\ket{\lambda^A_z}\ket{j}\pm D\ket{j}\ket{\lambda^B_z}}{\sqrt{2(m\pm\lambda_z)}},
\end{align}
are eigenvectors of $\mathcal{A}_2$, with respective eigenvalues $1-(m \pm \lambda_z)$.

The remaining eigenvalues of $\mathcal{A}_2$ are all equal to 1. This can be shown by the fact that subtracting 
an appropriate amount of the projector into the eigenvectors $\DE{\ket{\eta_z^\pm}}$ leaves us with identity, \ie the following
equality holds
\begin{align}
 \mathcal{A}_2+\sum_{z=1}^{m} (m+\lambda_z)\ketbra{\eta_z^+}{\eta_z^+}+\sum_{z=1}^{m} (m-\lambda_z)\ketbra{\eta_z^-}{\eta_z^-}=\I,
\end{align}
This completes the proof.
 \end{proof}

\section{On DIEWs}

Bellow we list the projective measurements  $\{M_x^a\}=\DE{ \ketbra{A_x^a}{A_x^a}}$, $\{M_y^b\}=\DE{\ketbra{B_y^b}{B_y^b}}$, 
$\{M_z^c\}=\DE{ \ketbra{C_z^c}{C_z^c}}$ 
that allow the players to win the generalized Mermin game  discussed in Section \ref{secdiew}, Eq. \eqref{eqgameghz},
with probability 1 when they share the $GHZ_3$ state:

 \begin{subequations}
 \label{sproj}
\begin{align}
\ket{A_0^0}=\ket{B_0^0}=\frac{1}{\sqrt{3}}\de{\,1\,,\zeta^{4/3},\,1\,}\;\;&,\;\;\ket{C_0^0}=\frac{1}{\sqrt{3}}\de{\,1\,,\zeta^{1/3},\,1\,}\;, \\
\ket{A_0^1}=\ket{B_0^1}=\frac{1}{\sqrt{3}}\de{\zeta^{2},\zeta^{7/3},\,1\,} \;\;&,\;  \; \ket{C_0^1}=\frac{1}{\sqrt{3}}\de{\zeta^{2},\zeta^{4/3},\,1\,} \;,\\
\ket{A_0^2}=\ket{B_0^2}=\frac{1}{\sqrt{3}}\de{\zeta,\zeta^{1/3},\,1\,}\;\;&,\;\; \ket{C_0^2}=\frac{1}{\sqrt{3}}\de{\zeta,\zeta^{7/3},\,1\,}
\end{align}
\end{subequations}
\begin{subequations}
\begin{align}
\ket{A_1^0}=\ket{B_1^0}=\frac{1}{\sqrt{3}}\de{\zeta^{1/3},\,1\,,\,1\,}\;\;&,\; \;\ket{C_1^0}=\frac{1}{\sqrt{3}}\de{\zeta^{1/3},\zeta^2,,\,1\,}\;,   \\
\ket{A_1^1}=\ket{B_1^1}=\frac{1}{\sqrt{3}}\de{\zeta^{7/3},\zeta,\,1\,} \;\;&,\; \; \ket{C_1^1}=\frac{1}{\sqrt{3}}\de{\zeta^{7/3},\,1\,,\,1\,} \;,   \\
\ket{A_1^2}=\ket{B_1^2}=\frac{1}{\sqrt{3}}\de{\zeta^{4/3},\zeta^{2},\,1\,}\;\;&,\;\; \ket{C_1^2}=\frac{1}{\sqrt{3}}\de{\zeta^{4/3},\zeta,\,1\,}
\end{align}
\end{subequations}
\begin{subequations}
\begin{align}
\ket{A_2^0}=\ket{B_2^0}=\frac{1}{\sqrt{3}}\de{\zeta^{8/3},\zeta^{8/3},1}\;\;&,\; \;\ket{C_2^0}=\frac{1}{\sqrt{3}}\de{\zeta^{8/3},\zeta^{5/3},1}\;,  \\
\ket{A_2^1}=\ket{B_2^1}=\frac{1}{\sqrt{3}}\de{\zeta^{5/3},\zeta^{2/3},1} \;\;&,\; \; \ket{C_2^1}=\frac{1}{\sqrt{3}}\de{\zeta^{5/3},\zeta^{8/3},1} \;,  \\
\ket{A_2^2}=\ket{B_2^2}=\frac{1}{\sqrt{3}}\de{\zeta^{2/3},\zeta^{5/3},1}\;\;&,\;\; \ket{C_2^2}=\frac{1}{\sqrt{3}}\de{\zeta^{2/3},\zeta^{2/3},1}
 \end{align}
 \end{subequations}
where $\zeta=e^{2\pi i/3}$.

\normalem 
\backmatter
\bibliographystyle{myalphaabbr2}
{\small{
\bibliography{biblio.bib}}}

\end{document}